\documentclass[mathpazo]{cicp}
\usepackage{graphicx, subfigure, color, xcolor}
\usepackage[export]{adjustbox}
\usepackage{multirow}

\begin{document}
\title{A network based approach for unbalanced optimal transport on surfaces}

\author[Pan J G et.~al.]
        {Jiangong Pan\affil{1},
        Wei Wan\affil{2}, 
        Yuejin Zhang\affil{1}, 
        Chenlong Bao\affil{3, 4}, 
        and Zuoqiang Shi\affil{3, 4}\comma\corrauth}
\address{\affilnum{1}\ Department of Mathematical Sciences, Tsinghua University, Beijing, 100084, China. \\
        \affilnum{2}\ School of Mathematics and Physics, North China Electric Power University, Beijing, China. \\
        \affilnum{3}\ Yau Mathematical Sciences Center, Tsinghua University, Beijing, 100084, China. \\
        \affilnum{4}\ Yanqi Lake Beijing Institute of Mathematical Sciences and Applications, Beijing, 101408, China.}
\emails{{\tt mathpjg@sina.com} (J.~Pan), 
        {\tt weiwan@ncepu.edu.cn} (W.~Wan),
        {\tt zhangyj19@mails.tsinghua.edu.cn} (Y.~Zhang),
        {\tt clbao@tsinghua.edu.cn} (C.~Bao),
        {\tt zqshi@tsinghua.edu.cn} (Z.~Shi)}

\begin{abstract}
In this paper, we present a neural network approach to address the dynamic unbalanced optimal transport problem on surfaces with point cloud representation. 
For surfaces with point cloud representation, traditional method is difficult to apply due to the difficulty of mesh generating. 
Neural network is easy to implement even for complicate geometry. 
Moreover, instead of solving the original dynamic formulation, we consider the Hamiltonian flow approach, i.e. Karush-Kuhn-Tucker system. 
Based on this approach, we can exploit mathematical structure of the optimal transport to construct the neural network and the loss function can be simplified. Extensive numerical experiments are conducted for surfaces with different geometry. 
We also test the method for point cloud with noise, which shows stability of this method. This method is also easy to generalize to diverse range of problems. 
\end{abstract}

\ams{65K10, 68T05, 68T07}
\keywords{unbalanced optimal transport, Hamiltonian flow, point cloud, neural network.}

\maketitle
\section{Introduction.}\label{Sec 1}
The concept of optimal transport (OT) stands as a foundational cornerstone, offering profound insights across numerous domains, including economics, physics, image processing, and machine learning \cite{villani2021topics}. 
At its essence, OT is concerned with devising the most efficient means of reallocating resources from source to target allocation points while minimizing associated costs. 
While the conventional perspective on the OT problem revolves around achieving equilibrium, wherein the total mass of the source distribution aligns precisely with the target distribution, real-world scenarios often disrupt this equilibrium, resulting in resource distributions that introduce disparities in mass. 
These real-world scenarios have given rise to the development of the unbalanced optimal transport (UOT) problem, an extension designed to address situations characterized by unequal source and target distribution masses \cite{sejourne2023unbalanced}.

In the unbalanced rendition of the OT problem, the central objective of improving efficiency and minimizing transportation costs remains unwavering. 
However, the introduction of varying masses introduces an additional layer of complexity, presenting both challenges and avenues for exploration. 
This multifaceted issue finds applications across a spectrum of fields, including image registration \cite{bonneel2023survey, feydy2017optimal, qin2022rigid}, transformation and generation \cite{chen2021evaluating, li2022arbitrary, lubeck2022neural}, as well as climate modeling \cite{reiersen2022reforestree, sun2023sink, vissio2020evaluating}, style transfer \cite{li2022arbitrary, li2022sliced, zhu2020aot}, and medical imaging \cite{feydy2019fast, feydy2018global, gerber2023optimal}. 
Despite achieving notable numerical successes, the UOT problem grapples with computational constraints, as diverse conditions often pose formidable challenges for conventional methods, which severely limits the applicability of UOT in different scenarios.

OT problem has three different formulation: Monge problem, Kantorovich problem and Benamou-Brenier problem.
In 1781, Monge formulated OT problem as an optimization problem of minimizing the cost functional over all feasible transport plan\cite{monge1781memoire}. 
However, solving the Monge problem directly has proven to be a formidable task, primarily due to its intricate non-convex nature and the absence of minimal solutions. 

To address this challenge, Kantorovich introduce a new formulation by relaxing the transport plan to joint distribution \cite{kantorovich2006translocation}. With this elegant relaxation, OT problem can be formulated as a linear programming problem which can be solved efficiently. Many powerful algorithms have been developed based on Kantorovich formulation, such as Sinkhorn method \cite{villani2021topics} etc.

Benamou and Brenier \cite{benamou2000computational} made a groundbreaking contribution by introducing the dynamic formulation into the optimal transport problem.  
Consider the model over a time interval $T$ and a spatial region $\Omega$. 
Here, $\rho$ represents the density, and $\boldsymbol{v}$ denotes the velocity of the density. 
In the dynamic OT problem, we are concerned with $\rho$, subject to given initial and terminal densities $\rho_{0}$ and $\rho_{1}$, and $(\rho, \boldsymbol{v})$ must satisfy the mass conservation law. 
The primary objective of OT is to minimize the total cost across all feasible pairs of $(\rho, \boldsymbol{v}) \in \mathcal{C}(\rho_{0}, \rho_{1})$. 
The problem can be formally stated as follows:
\begin{align}\label{dyOT}
    \mathcal{W}_2(\rho_{0}, \rho_{1}) = \min_{(\rho, \boldsymbol{v}) \in \mathcal{C}\left(\rho_{0}, \rho_{1}\right)} 
    \int_{T} \int_{\Omega} 
    \frac{1}{2} \rho(t, \boldsymbol{x}) \|\boldsymbol{v}(t, \boldsymbol{x})\|^{2} 
    \mathrm{d} \boldsymbol{x} \mathrm{d} t,
\end{align}
where 
\begin{align}\label{dyOTC}
    \mathcal{C}(\rho_0,\rho_1): = 
    \{(\rho, \boldsymbol{v}): \partial_t \rho + \text{div}(\rho\boldsymbol{v}) = 0, 
    \ \rho(0, \boldsymbol{x}) = \rho_0(\boldsymbol{x}),\ \rho(1, \boldsymbol{x}) = \rho_1(\boldsymbol{x})\}.
\end{align}
It is noteworthy that the resulting PDE predominantly consists of continuity equations, establishing an inherent connection with fluid mechanics. 
This intrinsic link with fluid mechanics broadens the range of potential applications, expanding the scope of the problem to encompass areas such as unnormalized OT \cite{gangbo2019unnormalized, lee2019fast, lee2021generalized} and mean field games \cite{bensoussan2013mean, carmona2018probabilistic, lasry2007mean}. 

Introducing the source term into the continuity equation and modifying the optimization objective to the Wasserstein-Fisher-Rao (WFR) metric eliminates the constraint of mass conservation, thereby leading us to the dynamic UOT problem. 
In recent years, significant progress has been made in the field of UOT and its related areas. 
Important contributions include applications in brain pathology analysis (Gerber et al. \cite{gerber2018exploratory}), multi-species systems (Gallouët et al. \cite{gallouet2019unbalanced}), and theoretical insights into gradient flow (Kondratyev et al. \cite{kondratyev2020convex}). 
Computational progress includes Sato et al.'s UOT algorithm \cite{sato2020fast}, Pham’s \cite{pham2020unbalanced} Sinkhorn algorithm for entropic regularized UOT, and S{\'e}journ{\'e} et al.'s \cite{sejourne2022faster} accelerated Sinkhorn and Frank-Wolfe solvers. 
Chapel et al. \cite{chapel2021unbalanced} linked OT to inverse problems, while Bauer et al. \cite{bauer2022square} connected WFR UOT to shape distances. 
Recent work by Beier et al. \cite{beier2023unbalanced} addressed multi-marginal OT, proving the existence of a unique OT plan and extending the Sinkhorn algorithm. 
These collective efforts demonstrate the increasing importance of UOT in various scientific fields. 
But none of them can be effectively extended to the point cloud problem, which is be the core of this paper.

In this paper, we introduce an innovative neural network approach based on the Hamiltonian flow formulation. 
Comparing with the dynamic formulation, Hamiltonian flow reveals intrinsic connection between the velocity field and the potential function, which reduce the complexity of the neural networks. 
And the loss function can be simplied also. 

The rest of our paper is organized as follows. 
In section \ref{Sec 2}, we give some relevant work on solving UOT problems in recent years.
We generalize the UOT problem to obtain the dynamic manifolds UOT problem, and obtain its equivalent form using KKT conditions in Section \ref{Sec 3}. 
Next, we discretize the problem and establish the corresponding loss function. 
Experimental results and implementation details are given in Section \ref{Sec 5}. 
Finally, the conclusion of this paper is placed by us in Section \ref{Sec 6}.

\section{Related work.}\label{Sec 2}
In recent years, the field of deep learning has showcased extraordinary achievements across a diverse spectrum of computational challenges. 
These achievements span from tasks like image classification \cite{wang2017residual, rawat2017deep, lu2007survey}, speech recognition \cite{chan2016listen, deng2013new, abdel2014convolutional}, natural language processing \cite{goldberg2016primer, goldberg2022neural, collobert2008unified}, numerical approximations of PDE \cite{zang2020weak, yu2018deep, raissi2019physics} and the generation of images \cite{fan2017point, gregor2015draw, taigman2016unsupervised}. 
The success of these applications suggests that we can also use deep learning to solve UOT problems. 
In 2018, Yang et al. \cite{yang2018scalable} proposed a scalable UOT method, combining it with neural generative models, specifically Generative Adversarial Networks. 
Moving forward, in 2020, Lee et al. \cite{lee2020unbalanced} introduced a novel UOT regularizer formulation with linear optimization-variable complexity, effectively addressing challenges in the inverse imaging problem. 
Building upon these innovations, in 2021, Ma et al. \cite{ma2021learning} harnessed UOT distance to quantify discrepancies between predicted density maps and point annotations, highlighting its robustness against spatial perturbations. 
In the same year, Le et al. \cite{le2021unbalanced} employed UOT as a solution to the unbalanced assignment problem in multi-camera tracking, showcasing its versatility. 
Additionally, Fatras et al. \cite{fatras2021unbalanced} enhanced robustness through a mini-batch strategy and UOT on large-scale datasets, advancing its practicality in 2021. 
Progressing to 2022, Cao et al. \cite{cao2022unified} proposed uniPort, a unified single-cell data integration framework, which effectively manages data heterogeneity across datasets using a coupled variational autoencoder and minibatch unbalanced optimal transport. 
This scalability extends to large-scale datasets, further enhancing its utility. In the same year, L{\"u}beck et al. \cite{lubeck2022neural} developed NUBOT, a semi-coupling based formalism addressing mass creation and destruction, along with an algorithmic scheme derived through a cycle-consistent training procedure. 
Venturing into 2023, Shen and his team \cite{shen2023joint} introduced UOTSumm, a pioneering framework for long document summarization. 
Unlike conventional approaches reliant on biased tools like Rouge, UOTSumm adopts UOT principles to directly learn text alignment from summarization data, rendering it adaptable to existing neural abstractive text summarization models. 
Similarly, De et al. \cite{de2023unbalanced} demonstrated the applicability of UOT in object detection in the same year, achieving state-of-the-art results in both average precision and average recall metrics, while also accelerating initial convergence. 
Furthermore, in 2023, Dan and colleagues \cite{dan2023uncertainty} presented Uncertainty-Guided Joint UOT, a comprehensive framework addressing feature distribution alignment and uncertainty posed by noisy training samples through a feature uncertainty estimation mechanism and UOT strategy. 

\section{Dynamic formulation of UOT}\label{Sec 3}
First, we briefly review the dynamic formulation of UOT in Eucleadian space.  Considering a modeling framework that operates over the time interval $[0,1]$ and a spatial domain denoted as $\Omega \subset \mathbb{R}^{d}$. 
Here, $\rho: [0,1] \times \Omega \rightarrow \mathbb{R}^{+}$ defines the density distribution of agent across time $t \in [0,1]$ and space. The velocity field $\boldsymbol{v} = (v_{1}, \cdots, v_{d}): [0,1] \times \Omega \rightarrow \mathbb{R}^{d}$ characterizes the agent's strategy (or control) to affect the evolution of this density. 
Introducing $g: [0,1] \times \Omega \rightarrow \mathbb{R}$, a scalar field, which captures the mass growth and depletion. 
The primary goal of the UOT problem is to effectively reduce the combined costs linked to all permissible pairs of $(\rho, \boldsymbol{v}, g)$ that adhere to the limitations stipulated by $\mathcal{C}(\rho_{0}, \rho_{1})$. 
This complex undertaking is conceptualized through the WFR metric:
\begin{align}\label{dyUOT}
    \mathcal{W}_{WFR}(\rho_{0}, \rho_{1}) = \min_{(\rho, \boldsymbol{v}, g) \in \mathcal{C}\left(\rho_{0}, \rho_{1}\right)} 
    \int_{0}^{1} \int_{\Omega} 
    \frac{1}{2} \rho(t, \boldsymbol{x}) \|\boldsymbol{v}(t, \boldsymbol{x})\|^{2} 
    + \frac{1}{\eta} \rho(t, \boldsymbol{x}) g(t, \boldsymbol{x})^{2}
    \mathrm{d} \boldsymbol{x} \mathrm{d} t,
\end{align}
where
\begin{align}\label{dyUOTC}
    \mathcal{C}(\rho_0,\rho_1): = 
    \{(\rho, \boldsymbol{v}, g): \partial_t \rho 
    + \text{div}(\rho\boldsymbol{v}) = \rho g, 
    \ \rho(0, \boldsymbol{x}) = \rho_0(\boldsymbol{x}),\ \rho(1, \boldsymbol{x}) = \rho_1(\boldsymbol{x})\},
\end{align}
and $\eta$ is a source coefficient which can balance transport and creation/destruction of mass. 
It is not difficult to find that when $g(t, \boldsymbol{x})=0$, problem (\ref{dyUOT}) and (\ref{dyUOTC}) will degenerate into (\ref{dyOT}) and (\ref{dyOTC}) respectively.

In this article, we focus on solving dynamic UOT problems on manifolds. 
Naturally, we need to extend the dynamic UOT to manifolds and call it the dynamic manifolds UOT (MUOT). 
In recent years, some researchers have also paid attention to this aspect. 
Lavenant et al. \cite{lavenant2018dynamical} introduced a groundbreaking methodology that involves interpolating between probability distributions on discrete surfaces, leveraging the principles of optimal transport theory. 
This innovative technique not only maintains the structural integrity of the data but also engenders the development of a Riemannian metric within the domain of probability distributions pertaining to discrete surfaces. 
Yu et al. \cite{yu2023computational} have elegantly demonstrated the profound connection between systems of partial differential equations and optimal conditions for correlated variational forms on intricate manifolds. 
Moreover, they have ingeniously crafted a proximal gradient method tailored for addressing variational mean-field games.

We consider a mathematical model defined over the time interval $[0,1]$, operating within a compact surface $\Gamma \subset \mathbb{R}^3$. 
In this paper, we focus on closed surfaces, i.e., $\partial\Gamma=\emptyset$. Let $\rho: [0,1] \times \Gamma \rightarrow \mathbb{R}^{+}$ denote the density of agents as a function of time $t \in [0,1]$. 
The density values are constrained to the positive real numbers. 
Additionally, let $\boldsymbol{v} = (v_{1}, \cdots, v_{d})\in T(\Gamma): [0,1] \times \Gamma \rightarrow \mathbb{R}^{d}$ represent the velocity field associated with the density. 
The velocity field represents the movement of the mass. 
Here, $g: [0,1] \times \Gamma \rightarrow \mathbb{R}$ represents a scalar field that characterizes the local growth and depletion of mass within the system. 
Our main focus is on understanding the behavior of $(\rho, \boldsymbol{v}, g)$ given $\rho_{0}$ and $\rho_{1}$ densities. 
Then, the dynamic MUOT can be expressed as
\begin{align}\label{dyMUOT}
    \mathcal{W}_{M}(\rho_{0}, \rho_{1}) =\min _{(\rho, \boldsymbol{v}, g) \in \mathcal{C}\left(\rho_{0}, \rho_{1}\right)} 
    \int_{0}^{1} \int_{\Gamma} 
     \frac{1}{2} \rho(t, \boldsymbol{x}) \|\boldsymbol{v}(t, \boldsymbol{x})\|^{2}
    + \frac{1}{\eta} \rho(t, \boldsymbol{x}) g(t, \boldsymbol{x})^2
    \mathrm{d} \sigma \mathrm{d} t,
\end{align}
where 
\begin{align}\label{dyMUOTC}
    \mathcal{C}(\rho_0,\rho_1): =
    \{(\rho, \boldsymbol{v}, g): \partial_t \rho +
    \text{div}_{\Gamma}(\rho\boldsymbol{v}) = \rho g, 
    \ \rho(0, \boldsymbol{x}) = \rho_0(\boldsymbol{x}),\quad\rho(1, \boldsymbol{x}) = \rho_1(\boldsymbol{x})\}, 
\end{align}
and $\eta$ is a source coefficient which can balance transport and creation/destruction of mass. 
Let $\boldsymbol{n}$ denote the outward normal vector of the surface $\Gamma$, and $\mathcal{I}$ be the identity operator, $$P(\boldsymbol{x}) = \mathcal{I} - \boldsymbol{n}(\boldsymbol{x})\boldsymbol{n}(\boldsymbol{x})^\text{T}$$
represents the projection operator of the three-dimensional vector at node $\boldsymbol{x}$ of the manifolds to its tangent component, and the tangential gradient operator, tangential divergence operator at node $\boldsymbol{x}$ are respectively defined $$\nabla_{\Gamma}=P(\boldsymbol{x})\nabla, \quad\text{div}_{\Gamma}=\text{trace}(\nabla_{\Gamma}).$$

For dynamic optimal transport on $\mathbb{R}^d$, it is well known that it can be solved by Hamiltonian flow which is actually Karush-Kuhn-Tucker (KKT) conditions of the dynamic optimal transport. 

Inspired by the methodologies presented in Benamou's work  \cite{benamou2000computational}, we can derive the KKT condition of the UOT problem based on the original variable $(\rho, \boldsymbol{v})$ on manifold. 
\begin{theorem}\label{Th2}
Suppose $(\hat{\rho}, \hat{\boldsymbol{v}}, \hat{g}, \hat{\phi}, \hat{\phi_1}, \hat{\phi_2})$ is a saddle point of (\ref{Lagrangian-fun}), 
\begin{equation}\label{Lagrangian-fun}
    \begin{aligned}
        L(\rho, \boldsymbol{v}, g, \phi, \phi_1, \phi_2) =& \int_{0}^{1} \int_{\Gamma} 
        \frac{1}{2} \rho(t, \boldsymbol{x}) \|\boldsymbol{v}(t, \boldsymbol{x})\|^{2} 
        + \frac{1}{\eta} \rho(t, \boldsymbol{x}) g(t, \boldsymbol{x})^{2}
        \mathrm{d} \sigma \mathrm{d} t \\
        &+ \int_{0}^{1} \int_{\Gamma} 
        \phi(t, \boldsymbol{x})[\partial_{t} \rho(t, \boldsymbol{x}) + \text{div}_{\Gamma}(\rho(t, \boldsymbol{x})\boldsymbol{v}(t, \boldsymbol{x})) - \rho(t, \boldsymbol{x}) g(t, \boldsymbol{x})]
        \mathrm{d} \sigma \mathrm{d} t \\
        &+ \int_{\Gamma}
        \phi_{1}(\boldsymbol{x})(\rho(0, \boldsymbol{x}) - \rho_0(\boldsymbol{x}))
        \mathrm{d} \sigma
        + \int_{\Gamma}
        \phi_{2}(\boldsymbol{x})(\rho(1, \boldsymbol{x}) - \rho_1(\boldsymbol{x}))
        \mathrm{d} \sigma. 
    \end{aligned}
\end{equation}
Then, solution $(\hat{\rho}, \hat{\phi})$ (where $\hat{\boldsymbol{v}}=\nabla_{\Gamma}\hat{\phi}$ and $\hat{g} = \frac{\eta}{2}\hat{\phi}$) of (\ref{dyKKT-MUOT}) is the optimal solution for (\ref{dyMUOT}) and (\ref{dyMUOTC}). And we refer to (\ref{dyKKT-MUOT}) as dynamic KKT-MUOT:
\begin{equation}\label{dyKKT-MUOT}
    \begin{aligned}
        \partial_t \rho + \text{div}_{\Gamma}(\rho\nabla_{\Gamma}\phi) &= \frac{\eta}{2}\rho \phi, \quad(t, \boldsymbol{x})\in(0, 1)\times\Gamma,\\
        \partial_{t}\phi + \frac{1}{2}\|\nabla_{\Gamma}\phi\|^2 &= -\frac{\eta}{4}\phi^2, \quad(t, \boldsymbol{x})\in[0, 1]\times\Gamma,\\
        \rho(0, \boldsymbol{x}) = \rho_0(\boldsymbol{x}), \quad\rho(1, \boldsymbol{x}) &= \rho_1(\boldsymbol{x}), \quad\boldsymbol{x}\in\Gamma.
    \end{aligned}
\end{equation}
\end{theorem}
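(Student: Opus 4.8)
The plan is to derive (\ref{dyKKT-MUOT}) as the first-order stationarity (Euler--Lagrange) conditions for the Lagrangian $L$ in (\ref{Lagrangian-fun}), treating $(\rho,\boldsymbol v,g)$ as primal variables and $(\phi,\phi_1,\phi_2)$ as multipliers, and then to read off optimality directly from the saddle-point property. First I compute the first variation of $L$ with respect to each argument at the saddle point $(\hat\rho,\hat{\boldsymbol v},\hat g,\hat\phi,\hat{\phi_1},\hat{\phi_2})$. Since $\Gamma$ is closed ($\partial\Gamma=\emptyset$), the tangential divergence theorem gives $\int_\Gamma\phi\,\mathrm{div}_\Gamma(\rho\boldsymbol v)\,\mathrm d\sigma=-\int_\Gamma\nabla_\Gamma\phi\cdot\rho\boldsymbol v\,\mathrm d\sigma$ with no boundary term, and I integrate by parts in $t$ wherever $\partial_t\rho$ or $\partial_t\phi$ occurs, keeping the boundary-in-time terms because $\delta\rho$ is free at $t=0,1$.

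Setting $\partial_{\boldsymbol v}L=0$: after the spatial integration by parts the $\boldsymbol v$-dependent part of $L$ is $\int_0^1\!\!\int_\Gamma(\frac12\hat\rho\|\boldsymbol v\|^2-\hat\rho\,\nabla_\Gamma\hat\phi\cdot\boldsymbol v)\,\mathrm d\sigma\,\mathrm dt$; since the admissible $\delta\boldsymbol v$ and both $\hat{\boldsymbol v},\nabla_\Gamma\hat\phi$ are tangential, this forces $\hat\rho(\hat{\boldsymbol v}-\nabla_\Gamma\hat\phi)=0$, hence $\hat{\boldsymbol v}=\nabla_\Gamma\hat\phi$ on $\{\hat\rho>0\}$. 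Setting $\partial_g L=0$: from $\int_0^1\!\!\int_\Gamma(\frac1\eta\hat\rho g^2-\hat\phi\hat\rho g)$ one gets $\frac2\eta\hat\rho\hat g=\hat\rho\hat\phi$, i.e. $\hat g=\frac\eta2\hat\phi$ on $\{\hat\rho>0\}$. Setting $\partial_\phi L=\partial_{\phi_1}L=\partial_{\phi_2}L=0$ reproduces the constraints $\partial_t\hat\rho+\mathrm{div}_\Gamma(\hat\rho\hat{\boldsymbol v})=\hat\rho\hat g$ and $\hat\rho(0,\cdot)=\rho_0$, $\hat\rho(1,\cdot)=\rho_1$; inserting the two relations just obtained turns these into the first and third lines of (\ref{dyKKT-MUOT}). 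Finally $\partial_\rho L=0$: integrating the multiplier term by parts in $t$ and in space, the interior part is $\int_0^1\!\!\int_\Gamma\hat\rho(\frac12\|\hat{\boldsymbol v}\|^2+\frac1\eta\hat g^2-\partial_t\hat\phi-\nabla_\Gamma\hat\phi\cdot\hat{\boldsymbol v}-\hat\phi\hat g)\,\delta\rho$, while the $t$-boundary terms together with the $\phi_1,\phi_2$ contributions only fix $\hat{\phi_1}=\hat\phi(0,\cdot)$, $\hat{\phi_2}=-\hat\phi(1,\cdot)$ and impose nothing more. Setting the interior bracket to zero and substituting $\hat{\boldsymbol v}=\nabla_\Gamma\hat\phi$, $\hat g=\frac\eta2\hat\phi$, the terms collapse after cancellation to $\partial_t\hat\phi+\frac12\|\nabla_\Gamma\hat\phi\|^2=-\frac\eta4\hat\phi^2$, the second line of (\ref{dyKKT-MUOT}).

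For optimality I use the standard fact that the primal component of a Lagrangian saddle point solves the constrained problem: here $\sup_{\phi,\phi_1,\phi_2}L(\rho,\boldsymbol v,g,\phi,\phi_1,\phi_2)$ equals the WFR cost in (\ref{dyMUOT}) if $(\rho,\boldsymbol v,g)\in\mathcal C(\rho_0,\rho_1)$ and $+\infty$ otherwise, so the saddle inequalities force $(\hat\rho,\hat{\boldsymbol v},\hat g)$ to be feasible and to minimize that cost; together with $\hat{\boldsymbol v}=\nabla_\Gamma\hat\phi$ and $\hat g=\frac\eta2\hat\phi$ this is precisely the claimed optimality of $(\hat\rho,\hat\phi)$. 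That the reduction loses nothing — i.e. that (\ref{dyKKT-MUOT}) genuinely characterizes the optimum — follows from convexity: in the momentum/source variables $m=\rho\boldsymbol v$, $h=\rho g$ the objective becomes $\int_0^1\!\!\int_\Gamma(\frac{\|m\|^2}{2\rho}+\frac{h^2}{\eta\rho})$, the convex and lower-semicontinuous perspective of a convex function, and the constraint $\partial_t\rho+\mathrm{div}_\Gamma m=h$ with the endpoint data is affine, so strong duality holds and saddle points of $L$ correspond exactly to KKT points.

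The step I expect to be the main obstacle is the rigorous handling of the degenerate set $\{\hat\rho=0\}$: there $\hat{\boldsymbol v}$ and $\hat g$ are not determined pointwise by the stationarity relations, and the perspective functional is defined only through its lower-semicontinuous envelope, so the identities $\hat{\boldsymbol v}=\nabla_\Gamma\hat\phi$, $\hat g=\frac\eta2\hat\phi$ must be understood in the form $\hat\rho\hat{\boldsymbol v}=\hat\rho\nabla_\Gamma\hat\phi$ and $\hat\rho\hat g=\frac\eta2\hat\rho\hat\phi$, both vanishing on that set. A secondary technical point is justifying the integrations by parts on $\Gamma$ (sufficient regularity of $\hat\rho,\hat\phi$ together with the tangential-calculus identities $\nabla_\Gamma=P\nabla$, $\mathrm{div}_\Gamma=\mathrm{trace}(\nabla_\Gamma)$), which I would either build into the notion of an admissible saddle point or obtain by a mollification/approximation argument.
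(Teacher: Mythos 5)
Your proposal follows the same route as the paper's own proof: integrate the multiplier term by parts in $t$ and over the closed surface $\Gamma$, set the first variations in $\boldsymbol{v}$ and $g$ to zero to get $\hat{\boldsymbol{v}}=\nabla_{\Gamma}\hat{\phi}$ and $\hat{g}=\frac{\eta}{2}\hat{\phi}$, and substitute into the $\rho$-variation to obtain the Hamilton--Jacobi equation $\partial_t\hat{\phi}+\frac{1}{2}\|\nabla_{\Gamma}\hat{\phi}\|^2+\frac{\eta}{4}\hat{\phi}^2=0$. Your write-up is in fact more complete than the paper's, which stops at this stationarity computation and does not recover the constraints from the multiplier variations, does not justify the optimality claim via convexity of the perspective functional and strong duality, and does not discuss the degenerate set $\{\hat{\rho}=0\}$ --- all points you handle correctly (the stray factor of $\hat{\rho}$ in front of your $\rho$-variation bracket is a harmless slip, since you set the bracket itself to zero).
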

\begin{proof}
Performing a simple partition integral over (\ref{Lagrangian-fun}), we get
\begin{equation}\label{Lagrangian-sub-fun}
    \begin{aligned}
        L(\rho, \boldsymbol{v}, g, \phi, \phi_1, \phi_2) =& \int_{0}^{1} \int_{\Gamma} 
        \frac{1}{2} \rho(t, \boldsymbol{x}) \|\boldsymbol{v}(t, \boldsymbol{x})\|^{2} 
        + \frac{1}{\eta} \rho(t, \boldsymbol{x}) g(t, \boldsymbol{x})^{2}
        \mathrm{d} \sigma \mathrm{d} t \\
        &+ \int_{\Gamma} 
        (\phi(1, \boldsymbol{x})\rho(1, \boldsymbol{x}) - \phi(0, \boldsymbol{x})\rho(0, \boldsymbol{x}))
        \mathrm{d} \sigma
        - \int_{0}^{1} \int_{\Gamma} 
        \partial_{t}\phi(t, \boldsymbol{x})\rho(t, \boldsymbol{x})
        \mathrm{d} \sigma \mathrm{d} t \\
        &- \int_{0}^{1} \int_{\Gamma} 
        \nabla_{\Gamma}\phi(t, \boldsymbol{x})\cdot(\rho(t, \boldsymbol{x})\boldsymbol{v}(t, \boldsymbol{x}))
        + \phi(t, \boldsymbol{x})\rho(t, \boldsymbol{x}) g(t, \boldsymbol{x})
        \mathrm{d} \sigma \mathrm{d} t \\
        &+ \int_{\Gamma}
        \phi_{1}(\boldsymbol{x})(\rho(0, \boldsymbol{x}) - \rho_0(\boldsymbol{x}))
        \mathrm{d} \sigma 
        + \int_{\Gamma}
        \phi_{2}(\boldsymbol{x})(\rho(1, \boldsymbol{x}) - \rho_1(\boldsymbol{x}))
        \mathrm{d} \sigma.
    \end{aligned}
\end{equation}
Then computing the saddle point of the above equation, we have 
\begin{equation}\label{KKTC}
    \begin{aligned}
        L_{\rho} = 0 \ \Rightarrow &\ \partial_{t}\phi(t, \boldsymbol{x}) + \nabla_{\Gamma}\phi(t, \boldsymbol{x})\cdot\boldsymbol{v}(t, \boldsymbol{x}) + \phi(t, \boldsymbol{x})g(t, \boldsymbol{x}) - \frac{1}{2}\|\boldsymbol{v}(t, \boldsymbol{x})\|^{2} - \frac{1}{\eta}g(t, \boldsymbol{x})^2=0, \\
        L_{\boldsymbol{v}} = 0 \ \Rightarrow &\ \boldsymbol{v}(t, \boldsymbol{x}) - \nabla_{\Gamma}\phi(t, \boldsymbol{x})= 0, \\
        L_{g} = 0 \ \Rightarrow &\ \frac{2}{\eta}g(t, \boldsymbol{x}) - \phi(t, \boldsymbol{x})= 0.
    \end{aligned}
\end{equation}
Next, we obtain 
\begin{equation}\label{KKTC-end}
    \begin{aligned}
    \partial_{t}\phi + \frac{1}{2}\|\nabla_{\Gamma}\phi\|^2 + \frac{\eta}{4}\phi^2 = 0,
    \end{aligned}
\end{equation}
with $\boldsymbol{v} = \nabla_{\Gamma}\phi$ and $g = \frac{\eta}{2}\phi$. 
\end{proof}
In the subsequent section, we will introduce the neural network approach to solve \eqref{dyKKT-MUOT}.

\section{Neural Network approach}\label{Sec 4}
In this paper, we use a fully connected deep neural network (FCDNN) as an approximation function. 
FCDNN is a type of network in which two adjacent layers of neurons in a network are connected to each other. 
Like common neural networks, it includes neurons, weights, biases, and activation functions. 
Through the use of known data, the weights and biases are constantly trained to obtain the optimal approximation function. Now let's use FCDNN to approximate $\rho$, $\phi$:
\begin{equation}\label{dyKKT-UOT-NN}
    \begin{aligned}
        \rho(t, \boldsymbol{x}) =& \rho_{NN}(t, \boldsymbol{x}; \theta)=\mathcal{N}_{M}\mathcal{N}_{M-1}\mathcal{N}_{M-2}\cdots\mathcal{N}_{1}\mathcal{N}_{0}(\boldsymbol{x}),\\
        \phi(t, \boldsymbol{x}) =& \phi_{NN}(t, \boldsymbol{x}; \zeta)=\mathcal{N}_{M}\mathcal{N}_{M-1}\mathcal{N}_{M-2}\cdots\mathcal{N}_{1}\mathcal{N}_{0}(\boldsymbol{x}),
    \end{aligned}
\end{equation}
where $\boldsymbol{x}$ is the Cartesian coordinate, $M$ is the number of layers of the neural network and $M>0$ and
\begin{equation}\label{sub-FCDNN}
\left\{
\begin{aligned}
    \mathcal{N}_0(\boldsymbol{x})=&\boldsymbol{x},\\
    \mathcal{N}_m(\boldsymbol{x})=&\sigma(W_m\mathcal{N}_{m-1}(\boldsymbol{x})+b_m), \quad for\ 1\leq m \leq M-1,\\
    \mathcal{N}_M(\boldsymbol {x})=&W_M\mathcal{N}_{M-1}(\boldsymbol{x})+b_M. \\
\end{aligned}
\right.
\end{equation}
$W_m, b_m (1\leq m \leq M)$ are parameters to be trained. And $\sigma$ is given activation function. 
In the subsequent numerical experiments, we use $$\sigma(\boldsymbol{x})=\tanh(\boldsymbol{x})$$ for $\phi_{NN}$. 
To ensure that the $\rho$ is non-negative, we use the Softplus function $$\sigma(\boldsymbol{x})=\log{(1+e^{\boldsymbol{x}})}$$ in the output layer of $\rho_{NN}(t, \boldsymbol{x}; \theta)$.

The surface is given by point cloud $\{\boldsymbol{x}_j\}_{j=1}^{N}$. 
Moreover, we assume the associate out normal $\{\boldsymbol{n}_j\}_{j=1}^{N}$ is also given. 
In addition, we need to discretize time by $0=t^0<t^1<\cdots<t^{n-1}<t^n<\cdots<t^{N_{t}}=T$.

With the approximation of FCDNN to $\rho$ and $\phi$, then we need to consider how to update the parameter. 
Before using the optimization algorithm, we need to establish the corresponding optimization objective, that is, the loss function (also known as the objective function). 
For KKT-MUOT problem (\ref{dyKKT-MUOT}), the following loss function is established:
\begin{align}\label{dyMUOT-loss}
    \mathcal{L}_{MUOT}(\theta, \zeta) = 
    \lambda_{c}\mathcal{L}_{c}(\theta, \zeta) +
    \lambda_{hj}\mathcal{L}_{hj}(\zeta) +
    \lambda_{ic}\mathcal{L}_{ic}(\theta),
\end{align}
where
\begin{equation}\label{dyMUOT-subloss}
\begin{aligned}
    \mathcal{L}_{c}(\theta, \zeta) = &
    \frac{1}{N_{c}}\sum\limits_{i=1}^{N_{c}}
    \|\partial_t \rho(t_i, \boldsymbol{x}_i; \theta) + 
    \text{div}_{\Gamma}(\rho(t_i, \boldsymbol{x}_i; \theta)\nabla_{\Gamma}\phi(t_i, \boldsymbol{x}_i; \zeta)) - 
    \frac{\eta}{2}\rho(t_i, \boldsymbol{x}_i; \theta) \phi(t_i, \boldsymbol{x}_i; \zeta)\|^2,\\
    \mathcal{L}_{hj}(\zeta) = &
    \frac{1}{N_{hj}}\sum\limits_{i=1}^{N_{hj}}
    \|\partial_t \phi(t_i, \boldsymbol{x}_i; \zeta) + 
    \|\nabla_{\Gamma}\phi(t_i, \boldsymbol{x}_i; \zeta)\|^2 + 
    \frac{\eta}{4}\phi(t_i, \boldsymbol{x}_i; \zeta)^2\|^2,\\
    \mathcal{L}_{ic}(\theta) =&
    \frac{1}{N_{ic}}\sum\limits_{i=1}^{N_{ic}}
    \left(\|\rho(0, \boldsymbol{x}_i; \theta)-\rho_0(\boldsymbol{x}_i)\|^2 + 
    \|\rho(1, \boldsymbol{x}_i; \theta)-\rho_1(\boldsymbol{x}_i)\|^2 \right),
\end{aligned}
\end{equation}
where $\lambda_{c}$, $\lambda_{hj}$ and $\lambda_{ic}$ are the weighting coefficients for different loss terms; $N_{c}$, $N_{hj}$ and $N_{ic}$ are the numbers of data points for different terms. 
Various differential operators in the loss function can be computationally obtained by automatic differentiation techniques \cite{neidinger2010introduction}. 
In this paper, we use the Adam optimizer \cite{kingma2014adam}, an adaptive algorithm for gradient-based first-order optimization, to optimize the model parameters.

\section{Numerical Experiments}\label{Sec 5}
In this section, we perform a total of five arithmetic cases to verify the effectiveness of the proposed algorithm. 
The calculations are first performed in the most classical Gaussian example in two dimension. 
Then the source term coefficients in the UOT problem are explored in the case of the UOT problem, considering different degrees of source terms, and their impact on the transport scheme. 
Next, we consider the UOT problem on manifolds. 
And we studied the MOT and MUOT problems on various stream forms with different topology. 
Further, we consider adding different degrees of Gaussian noise to the point cloud to verify the robustness of our algorithm. 
Finally, three different shape transport experiments were carried out, by calculating between different patterns, in order to obtain the whole transport scheme. 
In experiment, if Euclidean space is considered, we need the space boundary condition 
$$\rho v\cdot\boldsymbol{n}=0.$$ 

\subsection{Gaussian Examples}\label{Sec 5.1}
In this section, we perform numerical experiments on classical Gaussian examples and mixed Gaussian examples on two dimension (2D) spaces using the proposed algorithm to verify the effectiveness. 

Two classical examples and the corresponding configurations are given in Table \ref{tab:2D-UOT-KKT-UOT}, where Gaussian density function is 
\begin{equation*}
\begin{aligned}
\rho_{G}(\boldsymbol{x}, \mu, \Sigma)=\frac{1}{(2\pi)^{1/2}|\Sigma|^{1/2}}e^{-\frac{1}{2}(\boldsymbol{x}-\mu)^T\Sigma^{-1}(\boldsymbol{x}-\mu)}. 
\end{aligned}
\end{equation*}
The parameters $\lambda_{c}=\lambda_{hj}=\lambda_{ic}=1000$ and $\eta=2$ are used in this test. We use uniform grid to sample $10$ points at time $t$ and $900$ points in space (coordinate $(x, y)$).
\begin{table}[htbp]
    \centering
    \caption{Initial distribution $\rho_{0}(\boldsymbol{x})$, target distribution $\rho_{1}(\boldsymbol{x})$ for 2D KKT-UOT.}
    \label{tab:2D-UOT-KKT-UOT}
    \begin{tabular}{l|cc}
        \hline
        Test & $\rho_{0}(\boldsymbol{x})$ & $\rho_{1}(\boldsymbol{x})$\\
        \hline
        A & $\rho_{G}(\boldsymbol{x}, [0.4, 0.4], 0.01\cdot\mathbf{I})$ & $2\rho_{G}(\boldsymbol{x}, [0.6, 0.6], 0.005\cdot\mathbf{I})$ \\
        B & $\rho_{G}(\boldsymbol{x}, [0.5, 0.5], 0.005\cdot\mathbf{I})$ &  $\rho_{G}(\boldsymbol{x}, [0.3, 0.3], 0.005\cdot\mathbf{I}) + 
        \rho_{G}(\boldsymbol{x}, [0.7, 0.3], 0.005\cdot\mathbf{I})$ \\ 
        & & $ + \rho_{G}(\boldsymbol{x}, [0.7, 0.7], 0.005\cdot\mathbf{I}) + 
        \rho_{G}(\boldsymbol{x}, [0.3, 0.7], 0.005\cdot\mathbf{I})$ \\
        \hline
    \end{tabular}
\end{table}

As shown in Figures \ref{2D KKT-UOT}, our method performs well in simple Gaussian translations. 
In the mixed Gaussian splitting, KKT-OT can also divide as expected.
\begin{figure}[htbp]
    \begin{center}
    \includegraphics[width=2.8cm]{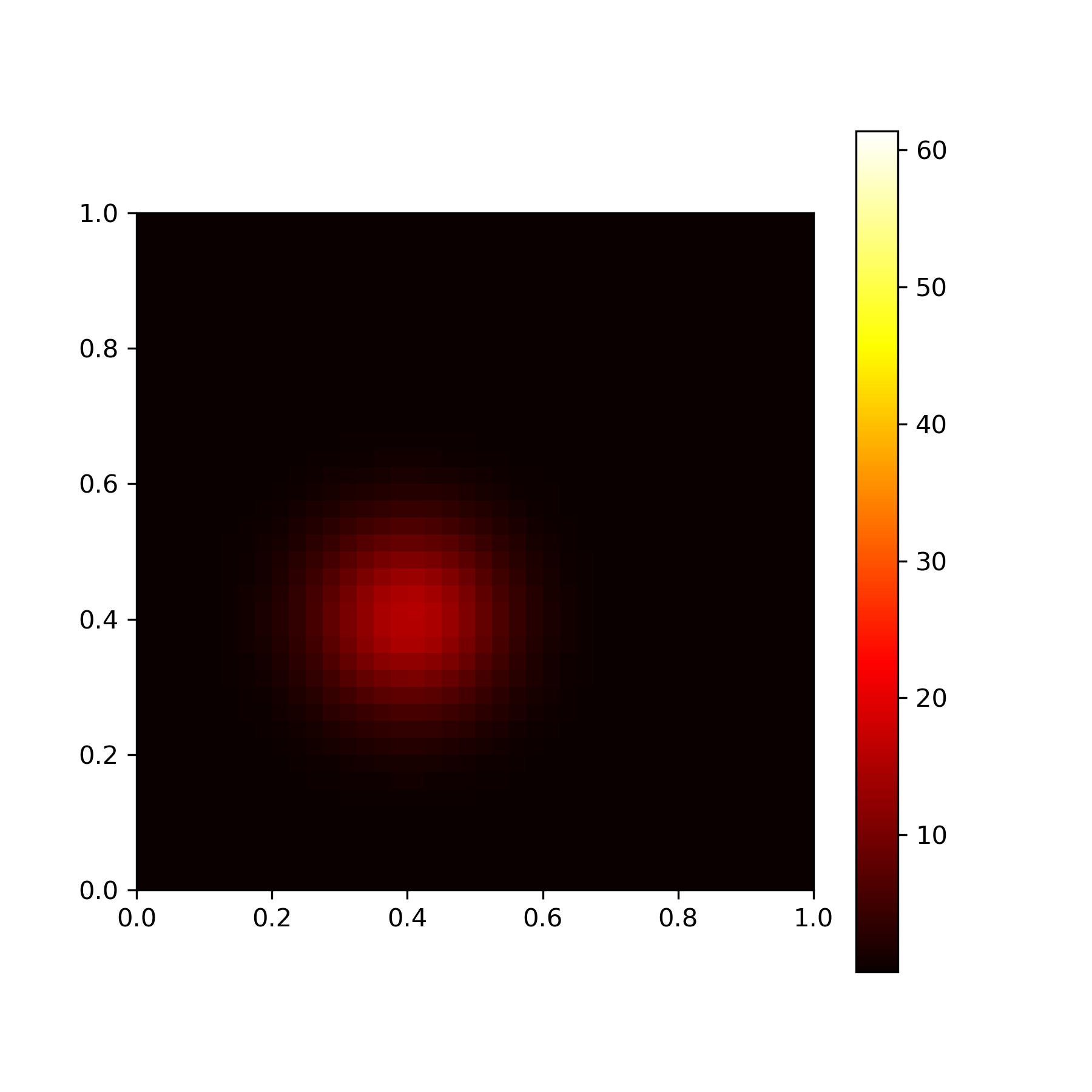}
    \includegraphics[width=2.8cm]{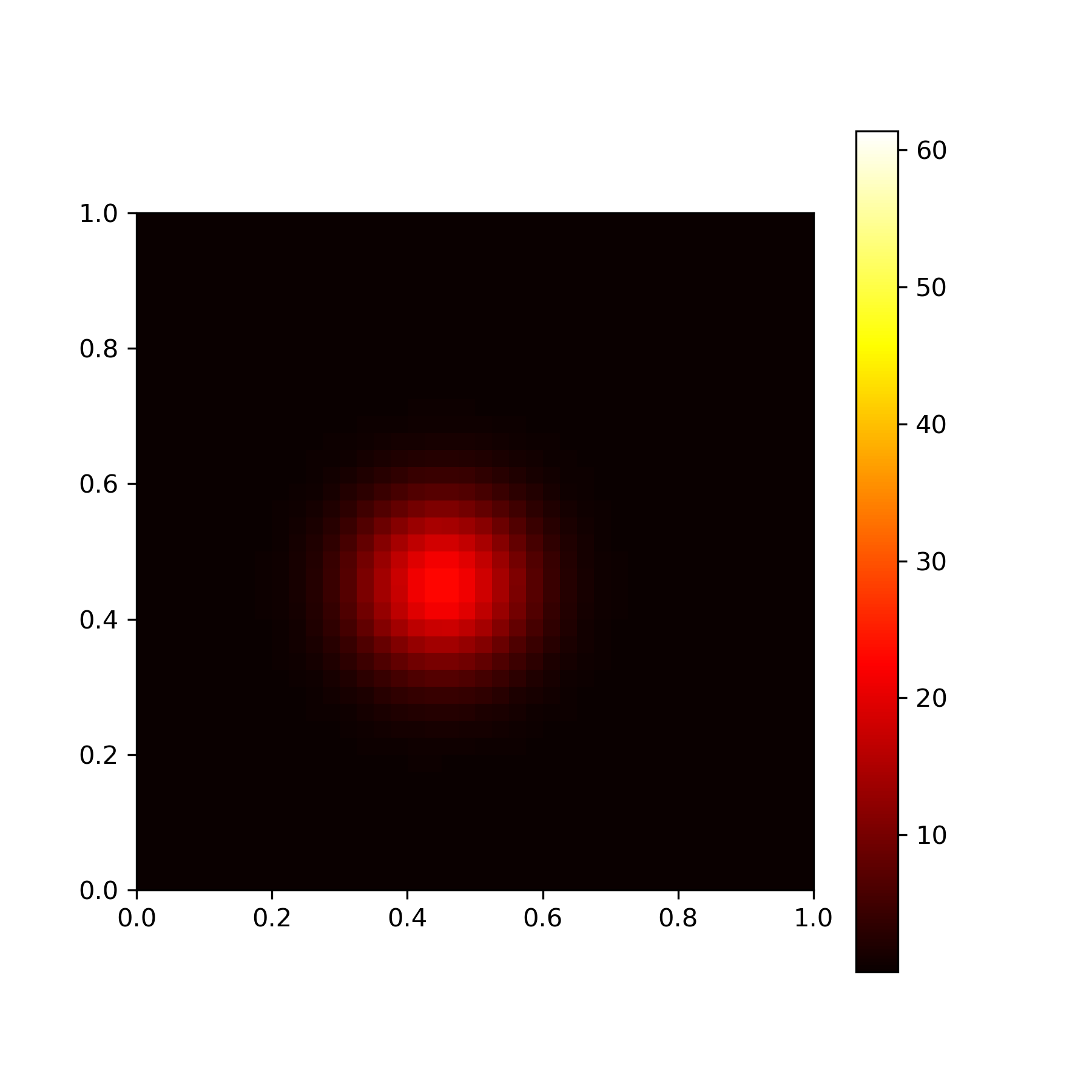}
    \includegraphics[width=2.8cm]{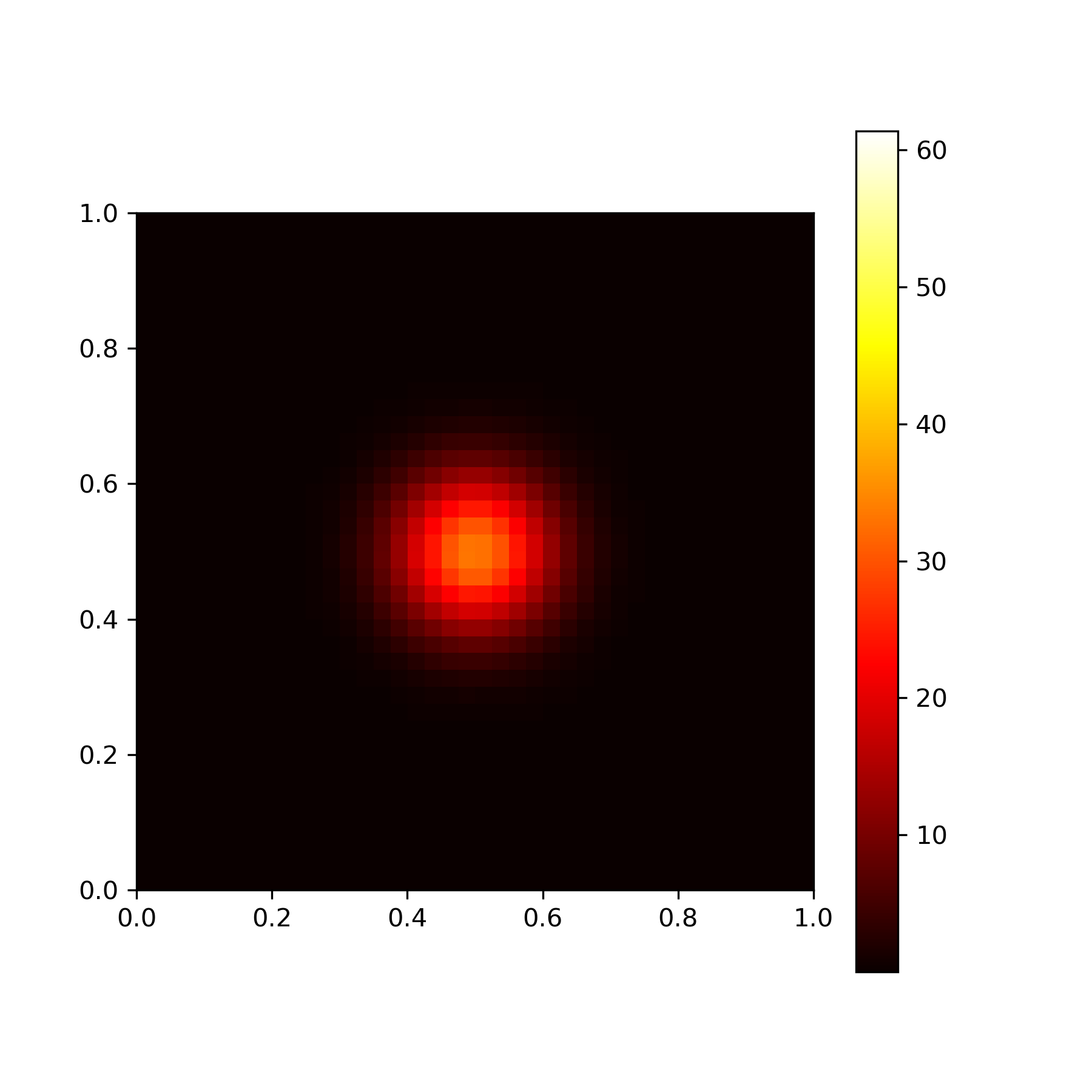}
    \includegraphics[width=2.8cm]{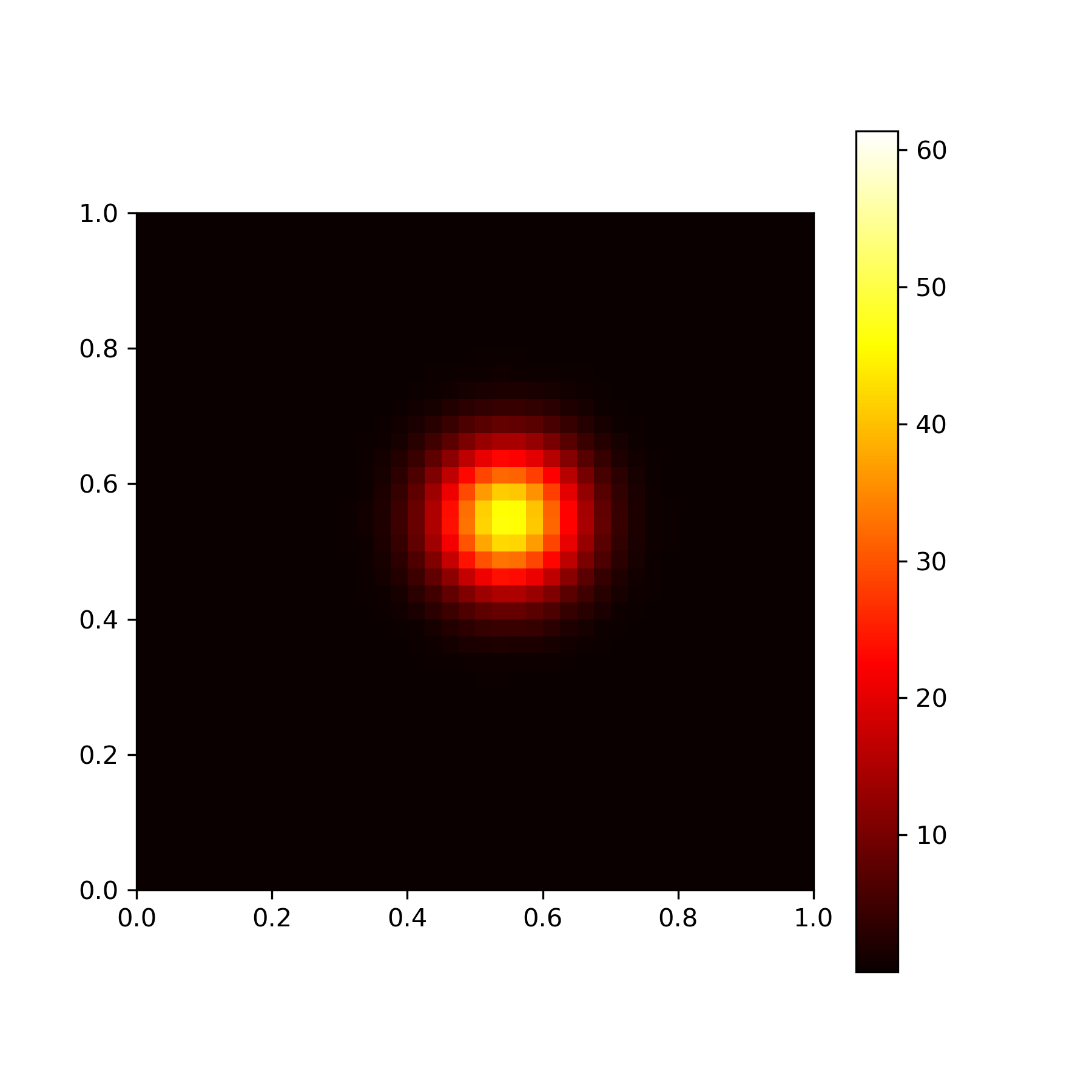}
    \includegraphics[width=2.8cm]{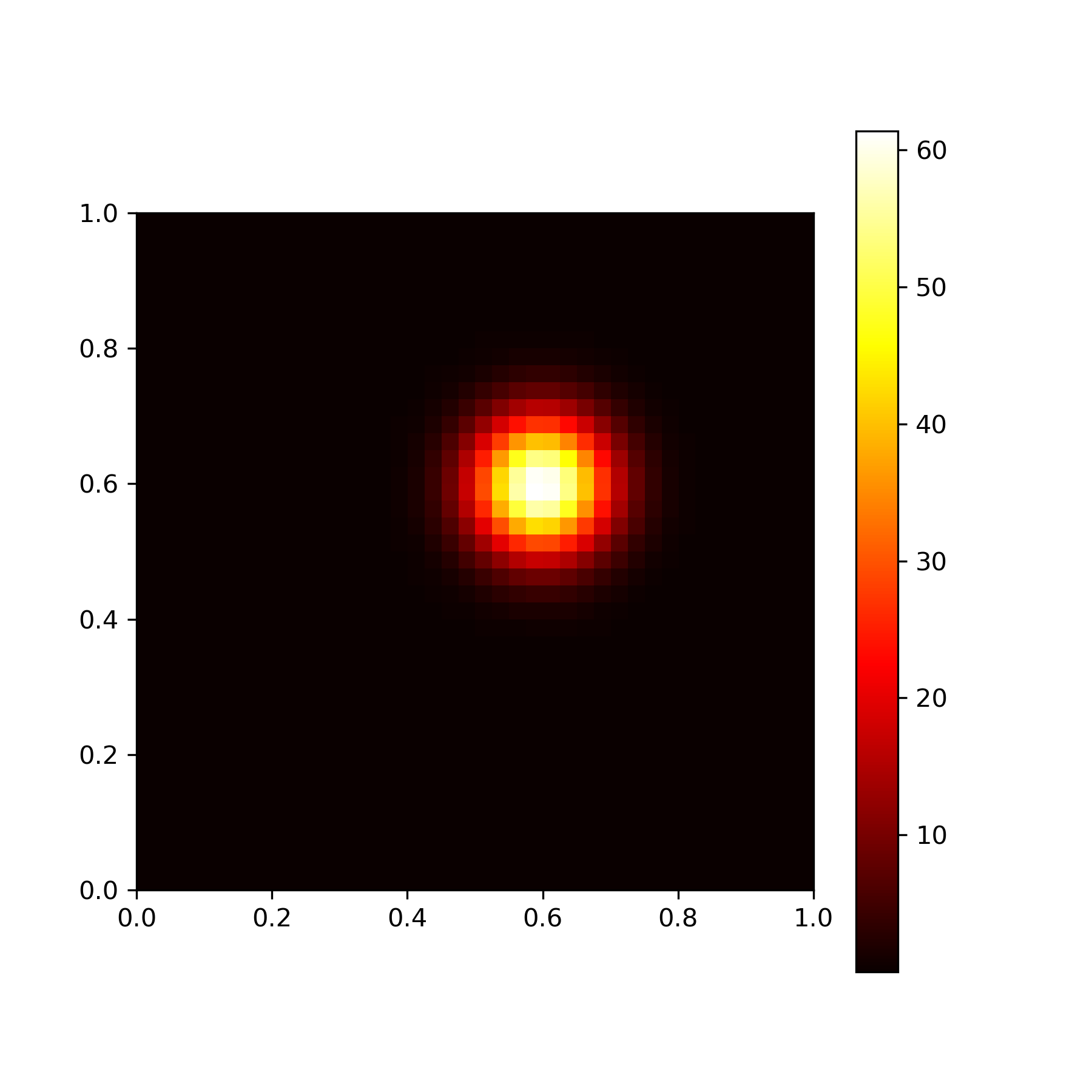}\\
    \vspace{5pt}

    \subfigure[$\rho(0, \boldsymbol{x})$]{
    \includegraphics[width=2.8cm]{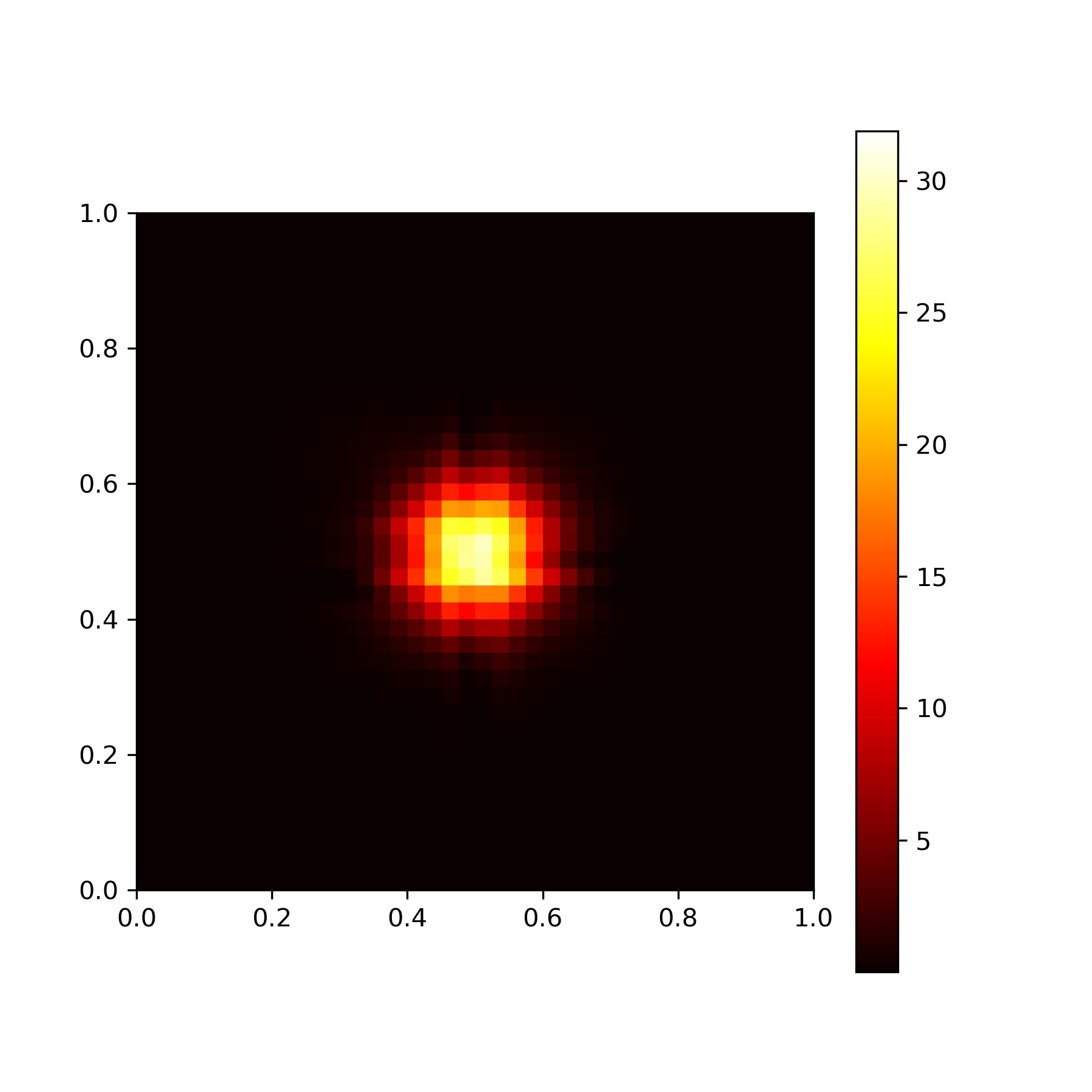}}
    \subfigure[$\rho(0.25, \boldsymbol{x})$]{
    \includegraphics[width=2.8cm]{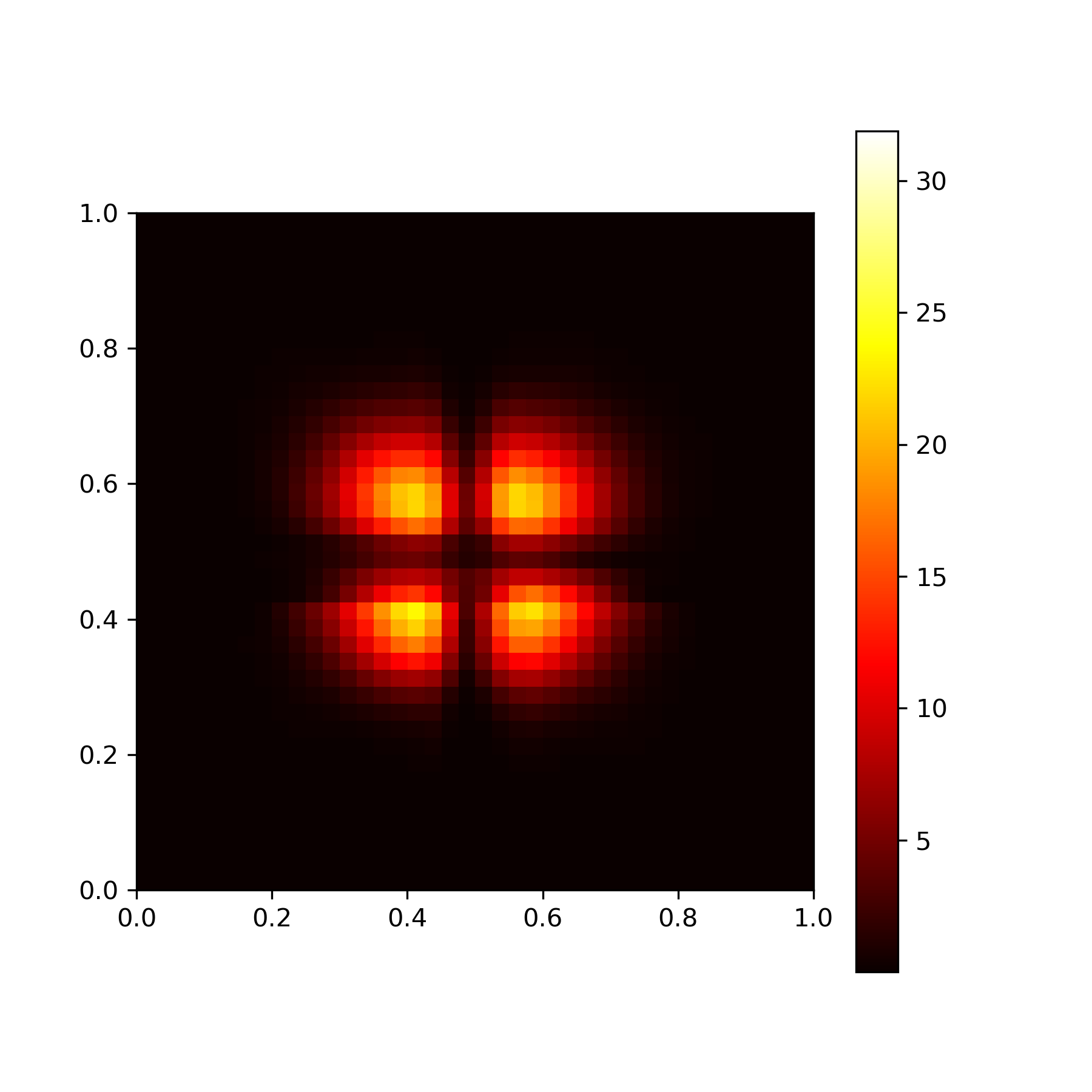}}
    \subfigure[$\rho(0.5, \boldsymbol{x})$]{
    \includegraphics[width=2.8cm]{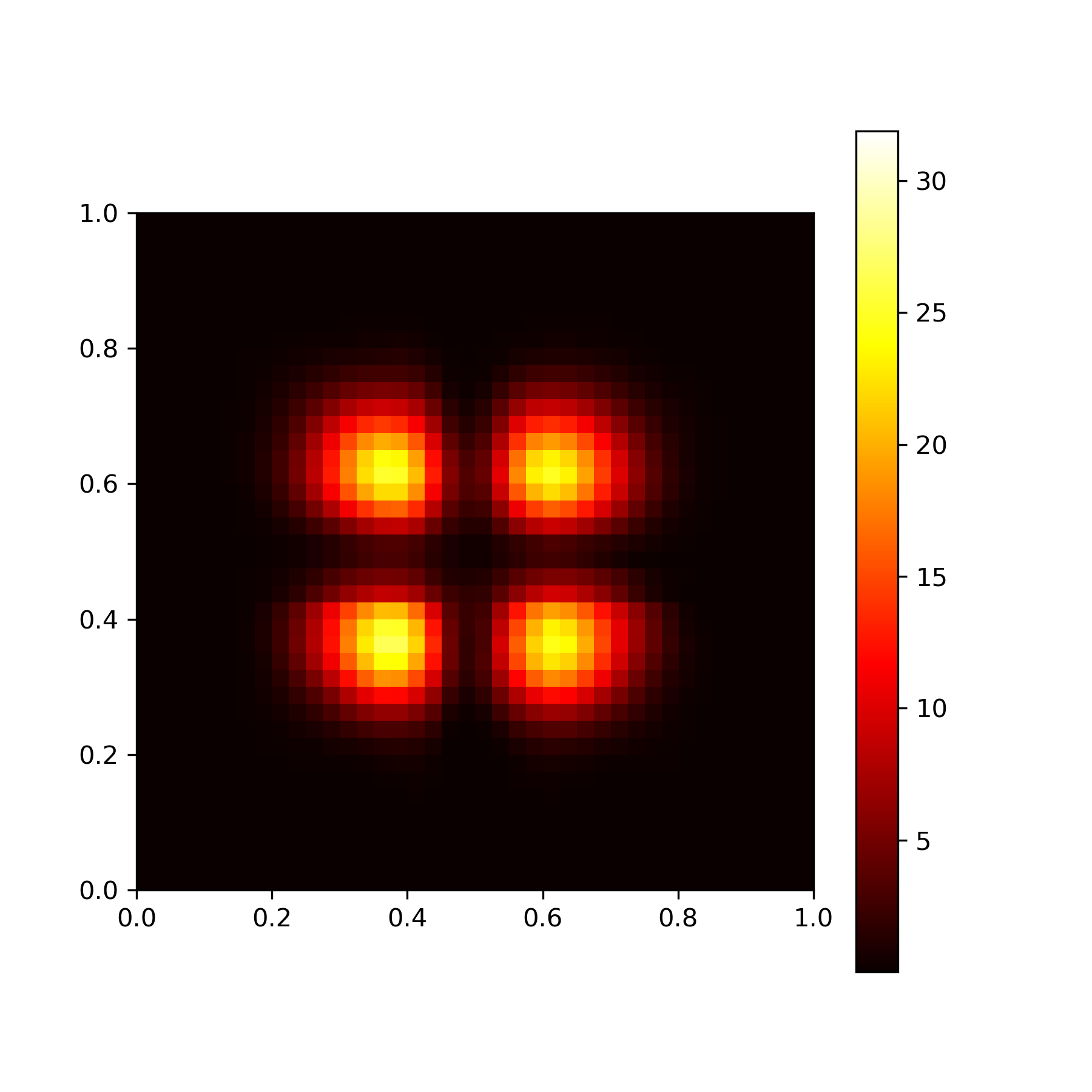}}
    \subfigure[$\rho(0.75, \boldsymbol{x})$]{
    \includegraphics[width=2.8cm]{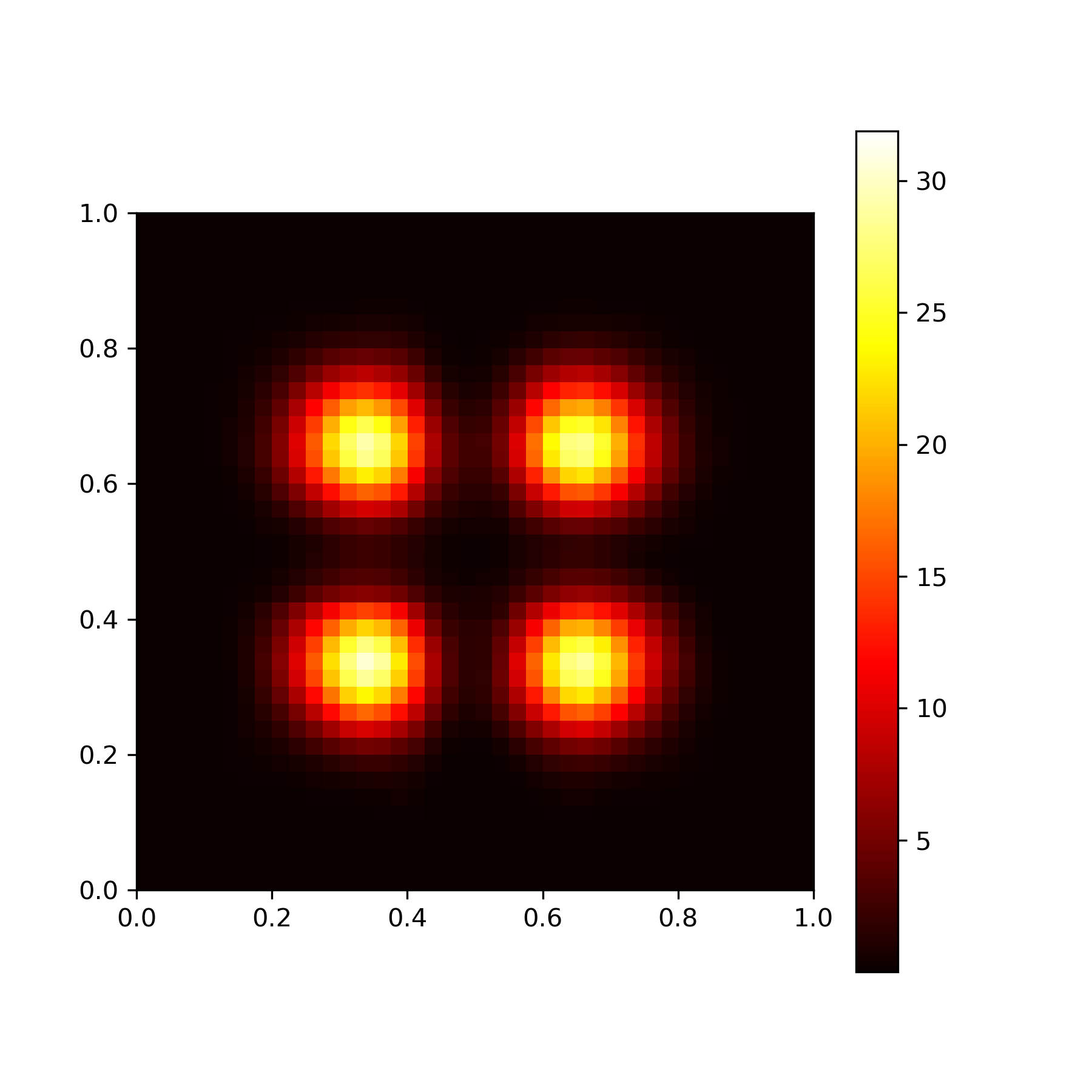}}
    \subfigure[$\rho(1, \boldsymbol{x})$]{
    \includegraphics[width=2.8cm]{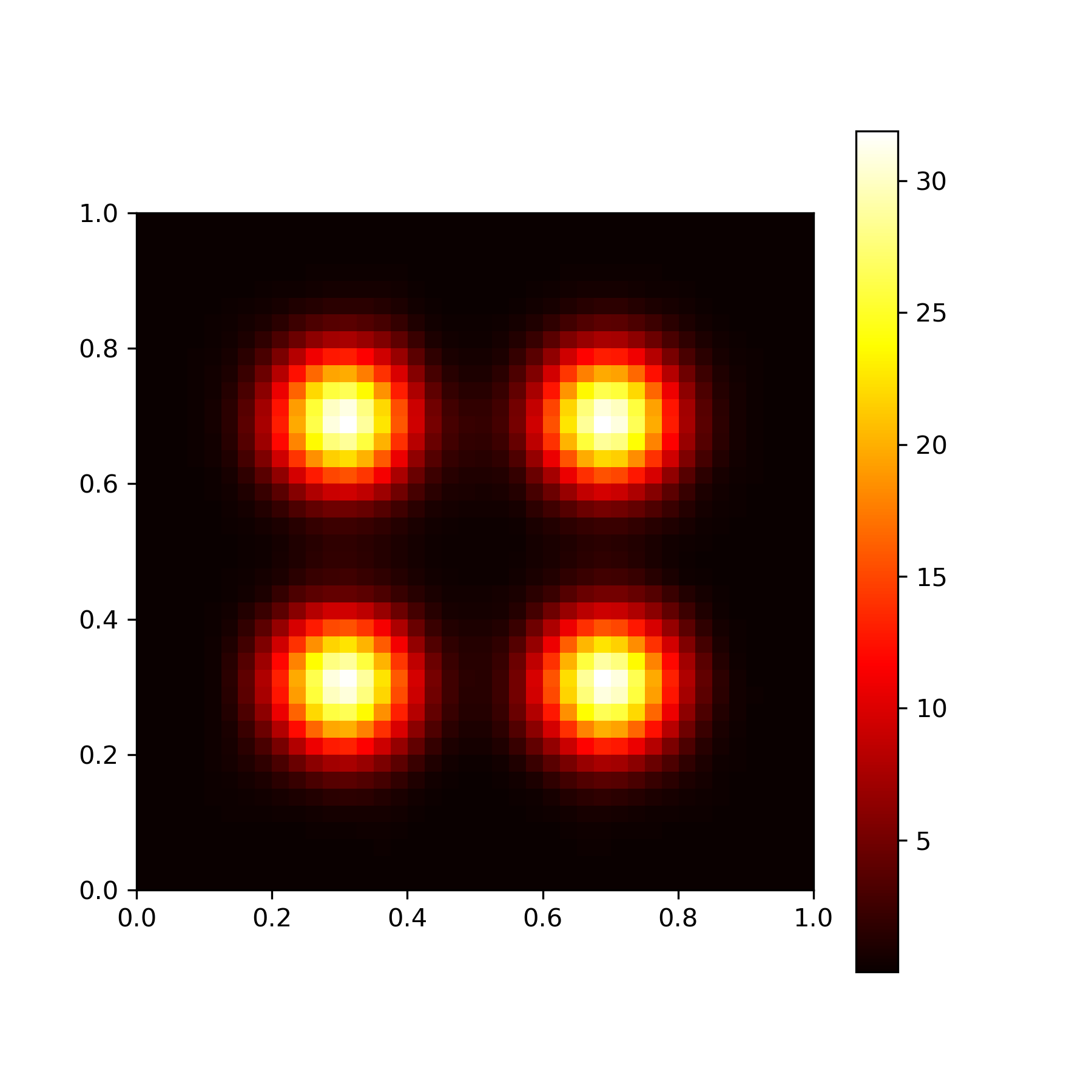}}\\
    
    \caption{2D KKT-UOT test. First row: Simple gaussian transport. Second row: Mixed gaussian split transport.}
    \label{2D KKT-UOT}
    \end{center}
\end{figure}

In addition, we give the values of the residuals of the equations in each loss function for Gaussian examples in Table \ref{tab:Compare-loss}. 
By observing the experimental results, we can know that for simple types of Gaussian shift problems, the residuals of the equations are very small, which means that our algorithm performs very well. 
But in test B, loss is relatively high since test B is more complicated than test A. 
\begin{table}[htbp]
    \centering
    \caption{Loss value for Gaussian examples.}
    \label{tab:Compare-loss}
    \begin{tabular}{l|ccc}
        \hline & \multicolumn{3}{|c}{KKT-OT} \\
        \hline
        Test & $\mathcal{L}_{c}$ & $\mathcal{L}_{hj}$ & $\mathcal{L}_{ic}$ \\
        \hline
        A & 1.50e-2 & 6.89e-3 & 5.16e-3 \\
        B & 3.43e-2 & 5.46e-2 & 3.06e-2 \\
        \hline
    \end{tabular}
\end{table}

In UOT problem (\ref{dyUOT}), there is a parameter $\eta$. 
This coefficient controls the creation or destruction of mass during transport. 
When $\eta\rightarrow 0$, the UOT solution is supposed to converge to the OT solution. 
When $\eta\rightarrow\infty$, there should be no transportation at all. 
In next example, we will investigate effect of $\eta$.

In Test C, we set the parameters $\lambda_{c}=\lambda_{hj}=100$, $\lambda_{ic}=1000$. 
We select different $\eta$ to test, $\eta=10^2, 1, 10^{-6}$. 
The configurations of $\rho_{0}$ and $\rho_{1}$ are in Table \ref{tab:VC-UOT}. 
We also use uniform grid to sample $10$ points at time $t$ and $800$ points in space (coordinate $x$).
\begin{table}[htbp]
    \centering
    \caption{UOT with Variable Coefficients.}
    \label{tab:VC-UOT}
    \begin{tabular}{l|cc}
        \hline
        Test & $\rho_{0}(x)$ & $\rho_{1}(x)$\\
        \hline
        C & $\rho_{G}(x, 0.7, 0.005)$ & $0.3\rho_{G}(x, 0.7, 0.005) + 0.7\rho_{G}(x, 1.3, 0.005)$ \\
        \hline
    \end{tabular}
\end{table}

The results for different $\eta$ are shown in Figure \ref{VC-UOT} and the results fit the expectation very well.  
When $\eta$ is large, the algorithm chooses generation over transport. 
When $\eta$ is equal to $1$, generation and transport are balanced at this point, matching the experimental phenomenon. 
When $\eta$ is small, the algorithm prefers transport and almost no generation and the result almost reduce to OT. 
The study of different $\eta$ shows that our algorithm is effective in computing UOT problems with different generation and transport requirements.
\begin{figure}
    \begin{center}
    \includegraphics[width=2.8cm]{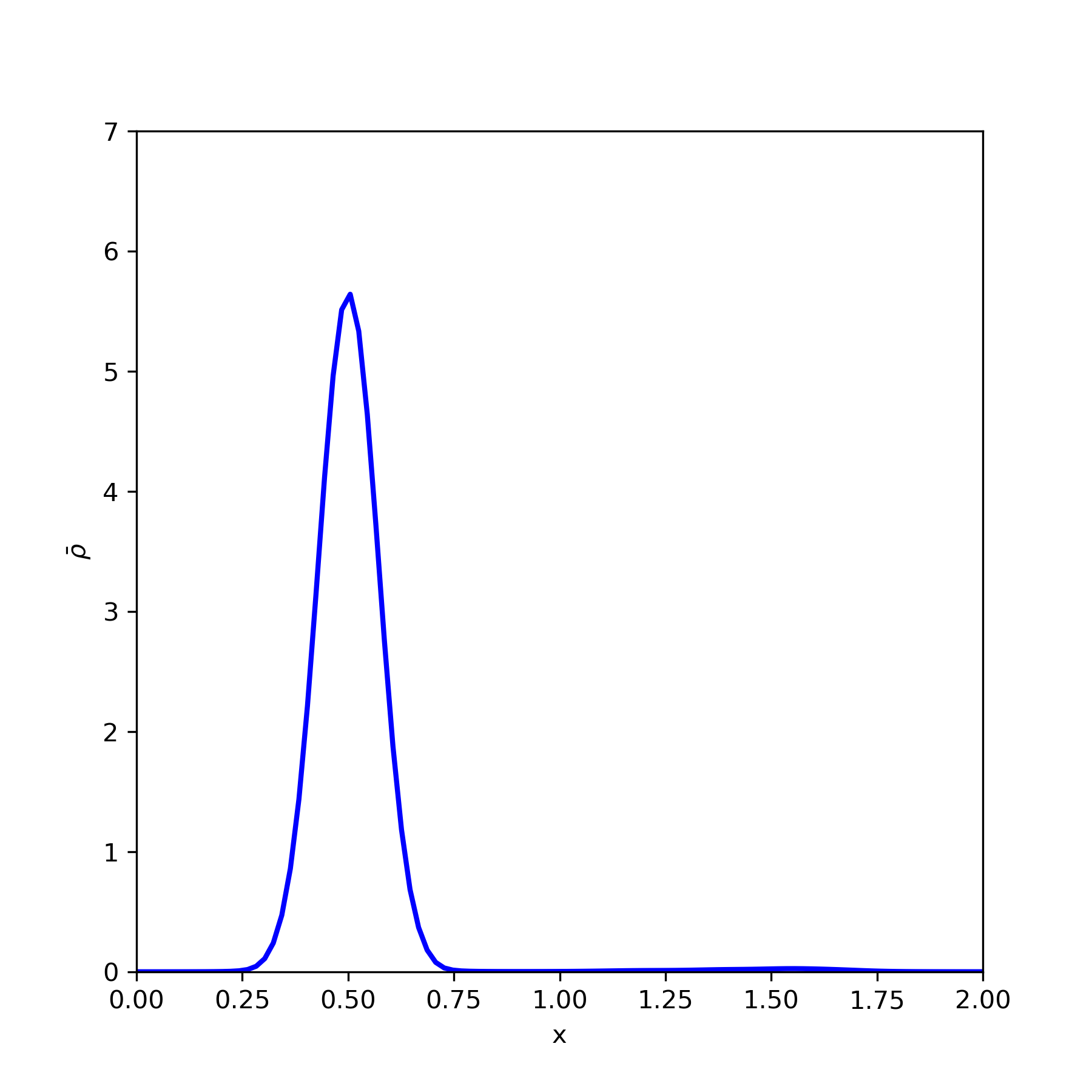}
    \includegraphics[width=2.8cm]{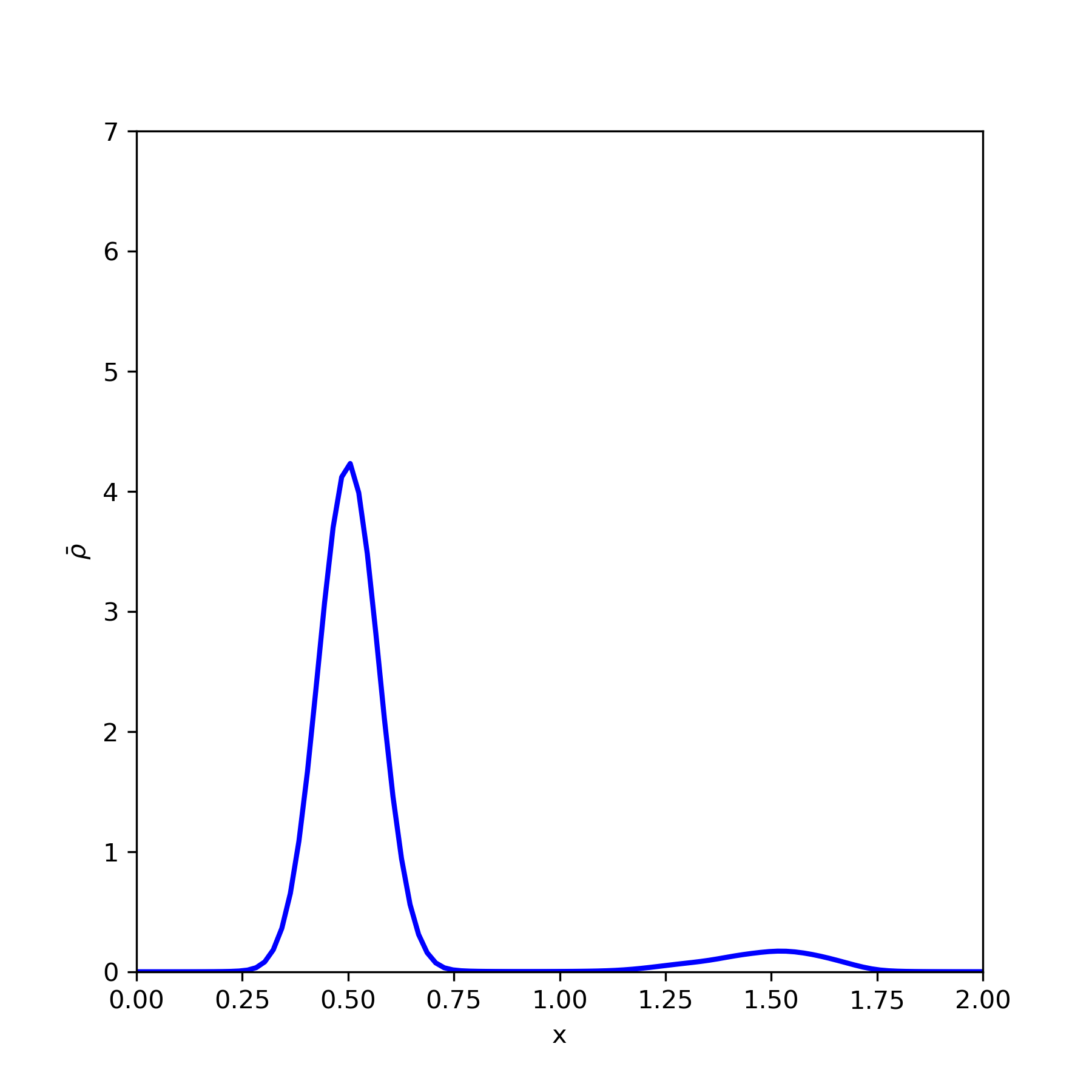}
    \includegraphics[width=2.8cm]{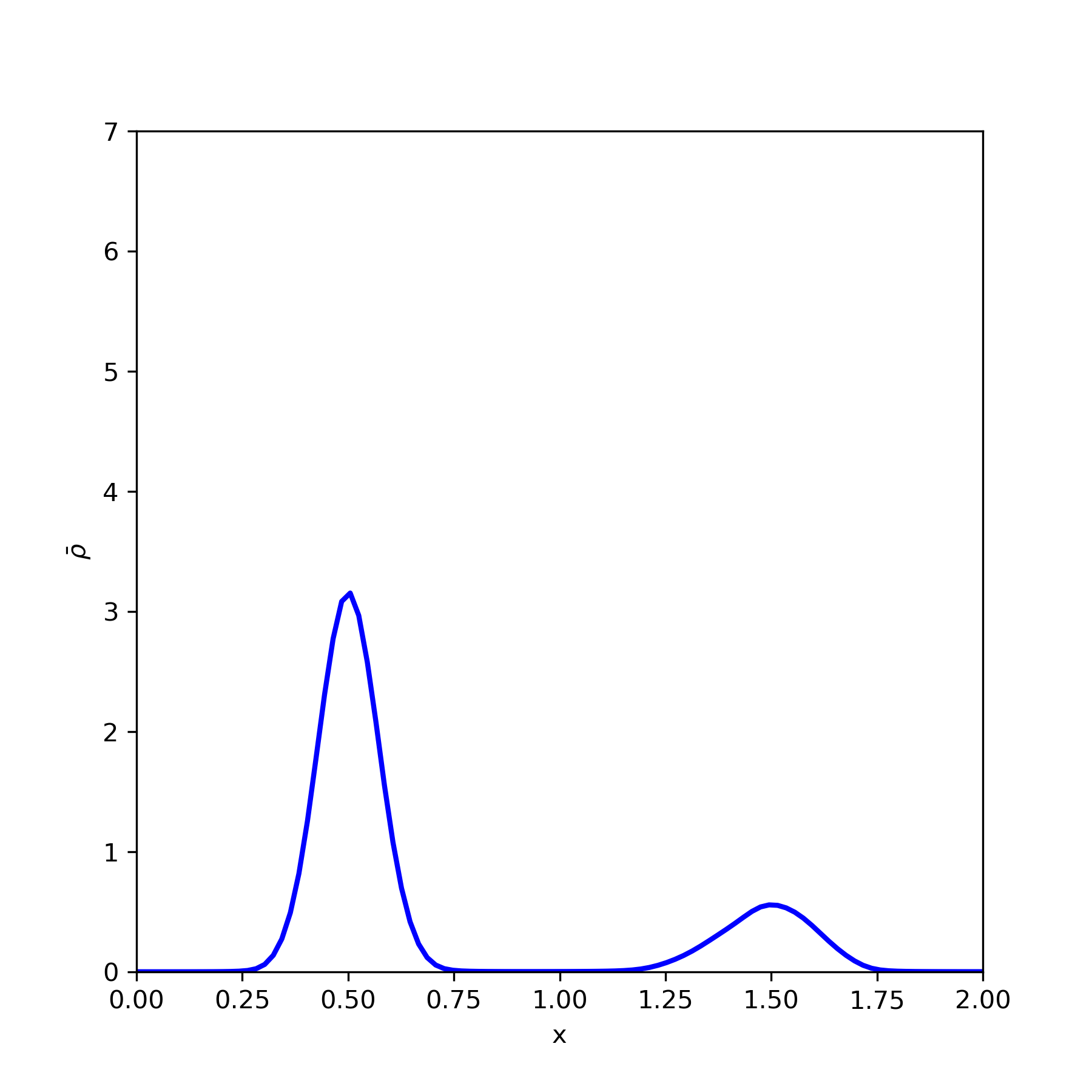}
    \includegraphics[width=2.8cm]{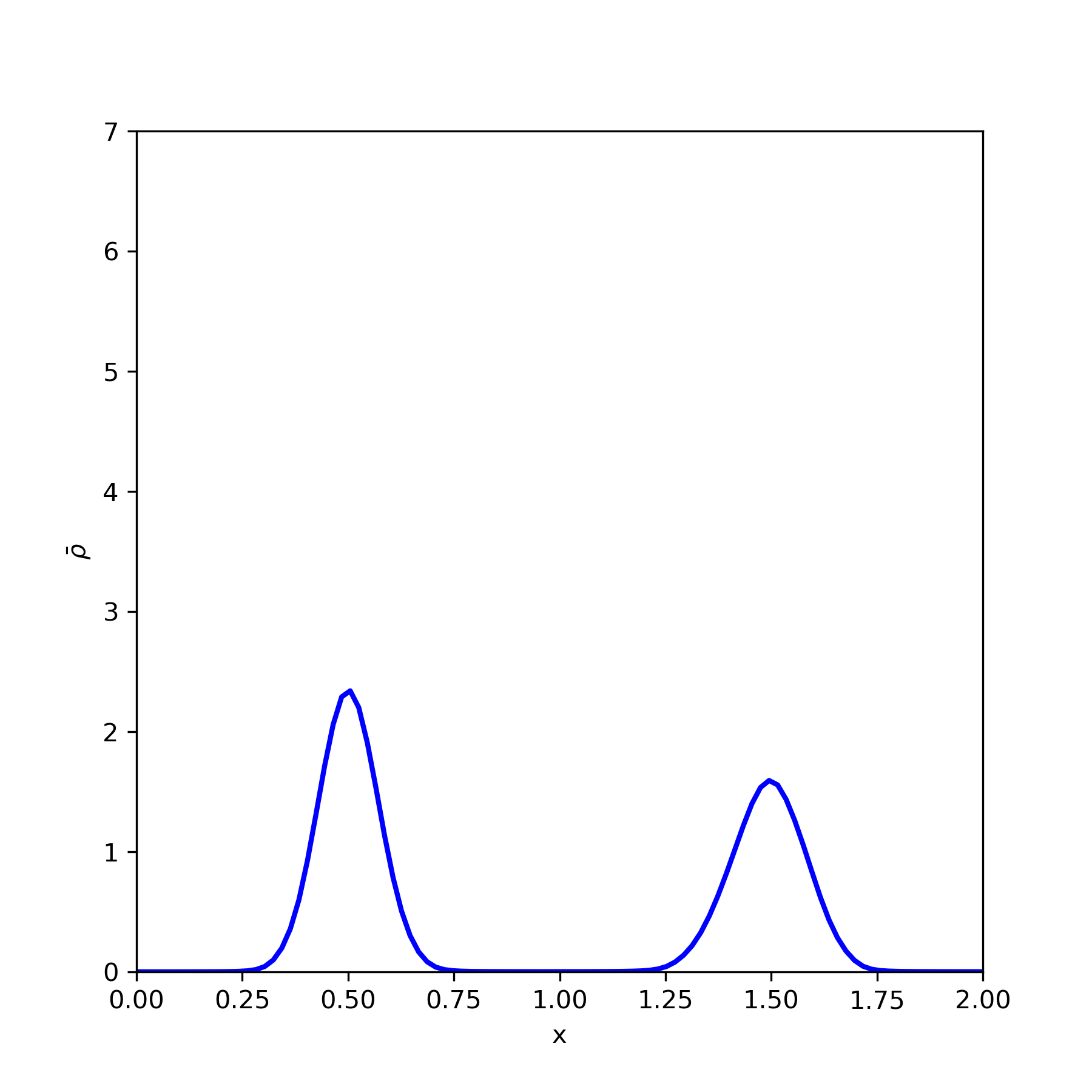}
    \includegraphics[width=2.8cm]{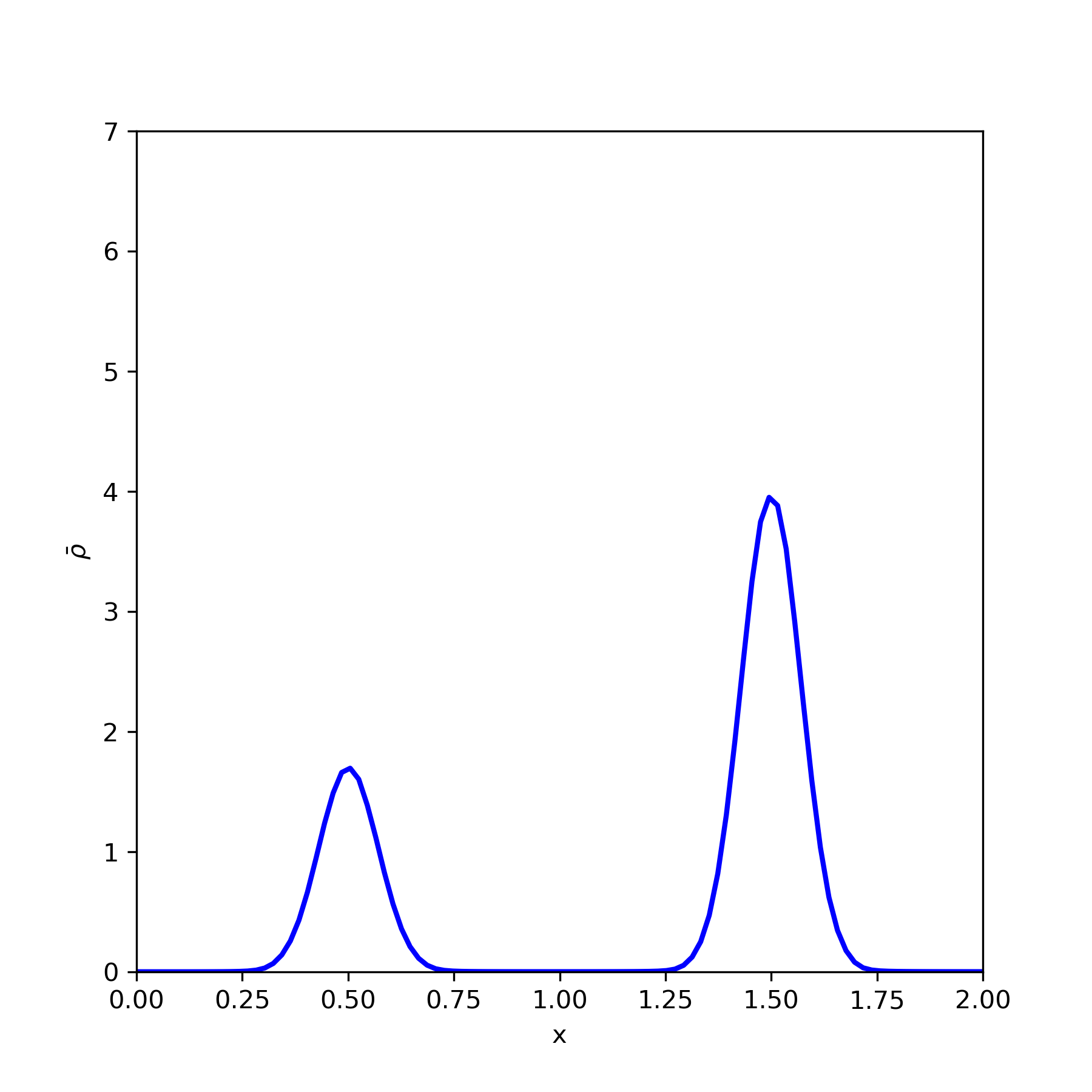}\\
    \vspace{5pt}
    
    \includegraphics[width=2.8cm]{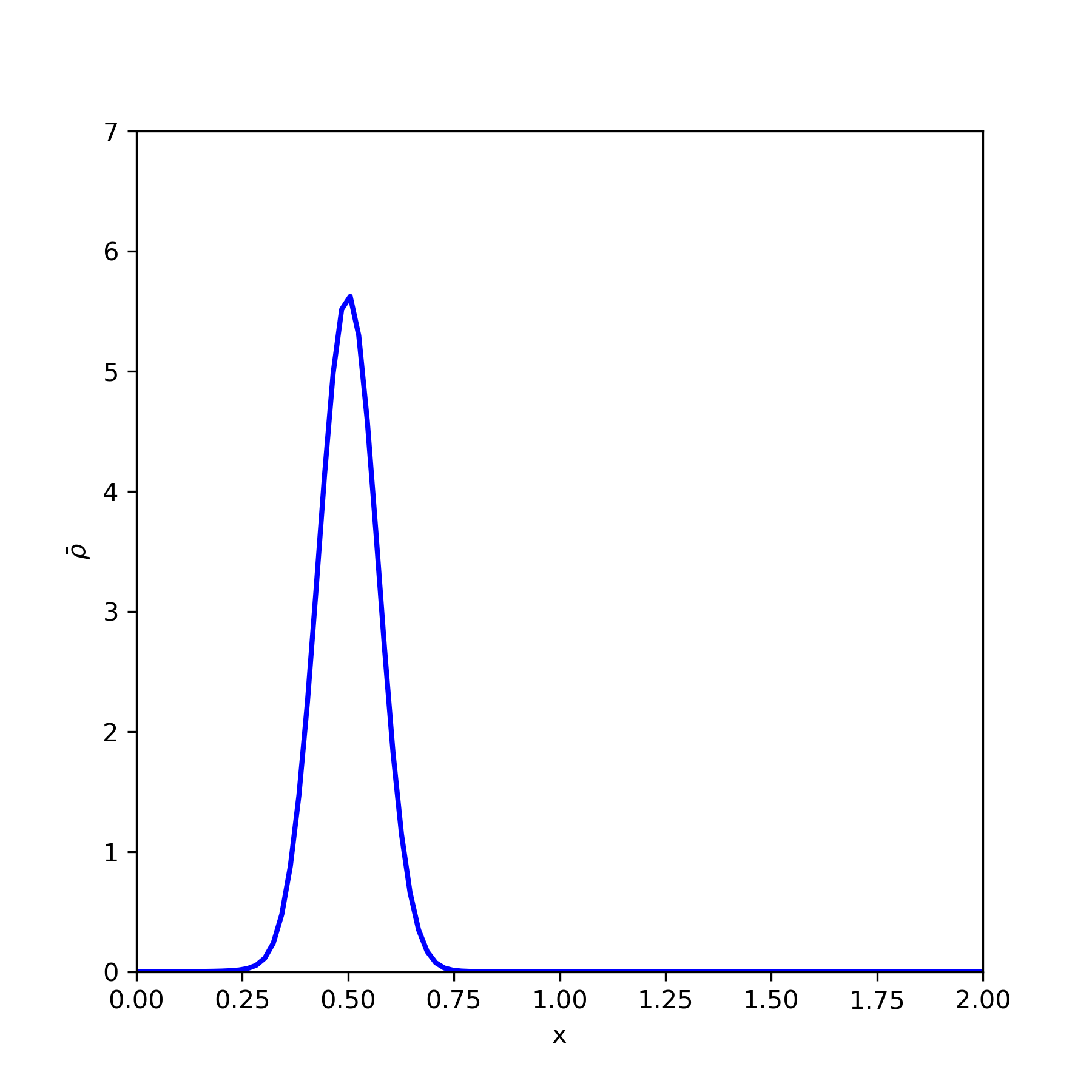}
    \includegraphics[width=2.8cm]{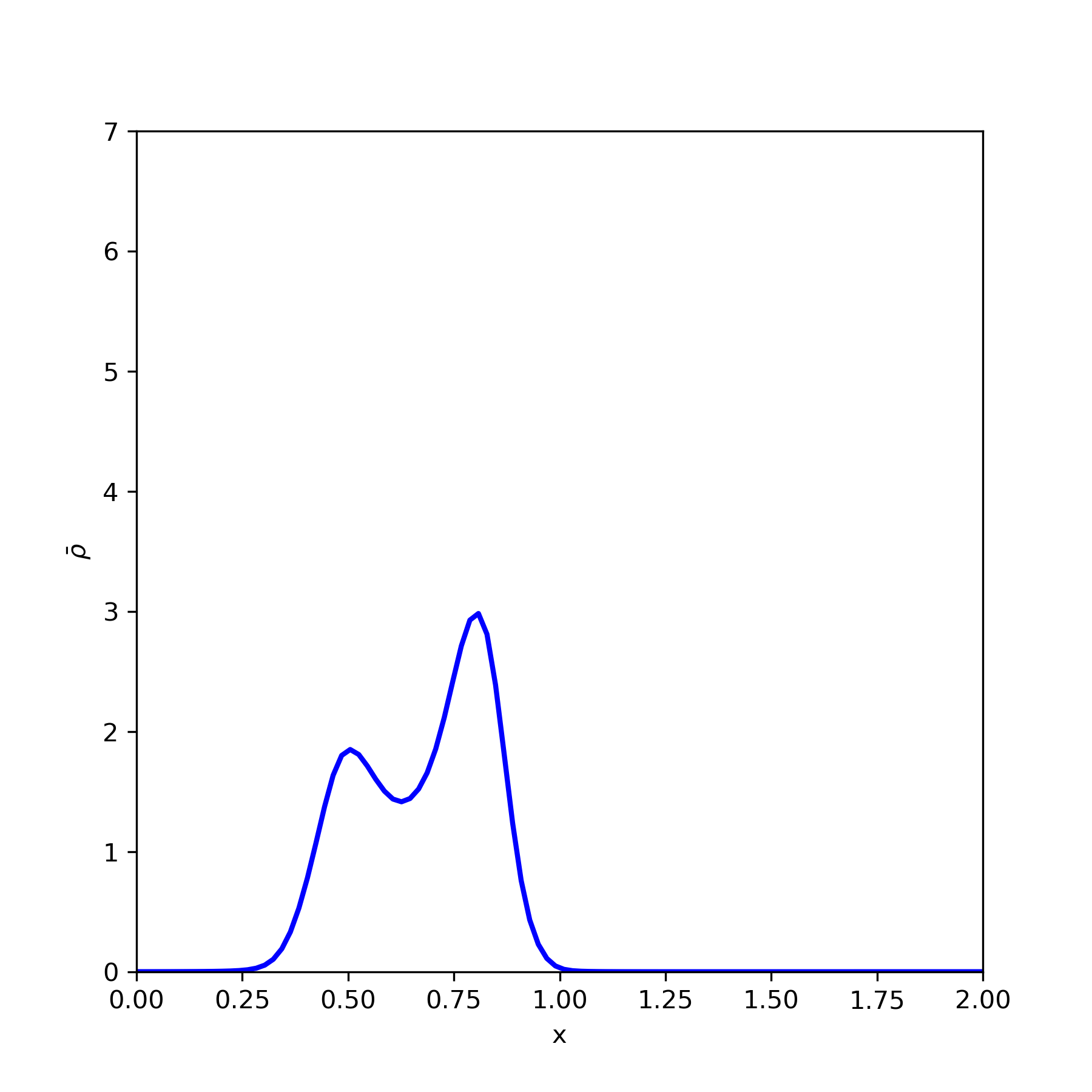}
    \includegraphics[width=2.8cm]{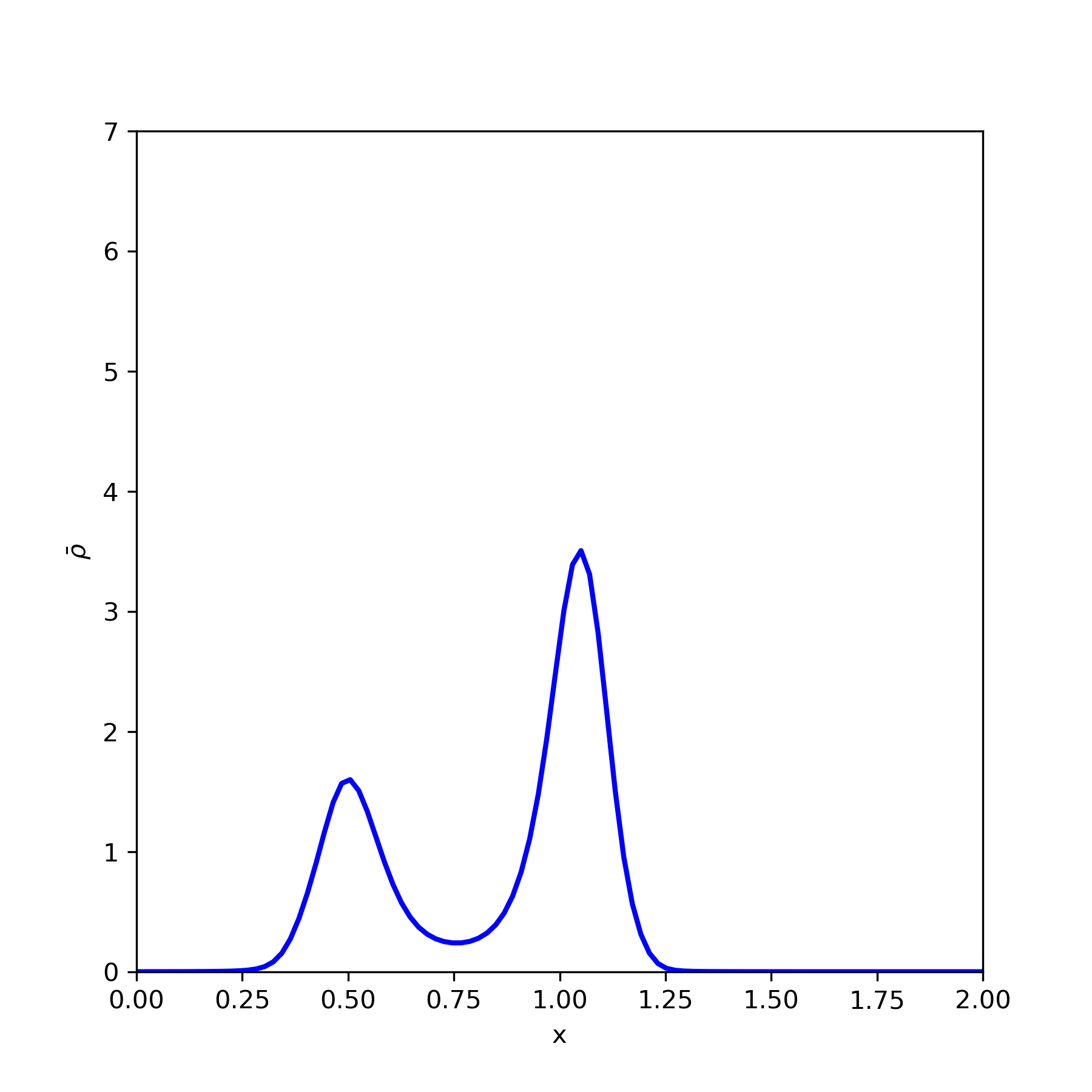}
    \includegraphics[width=2.8cm]{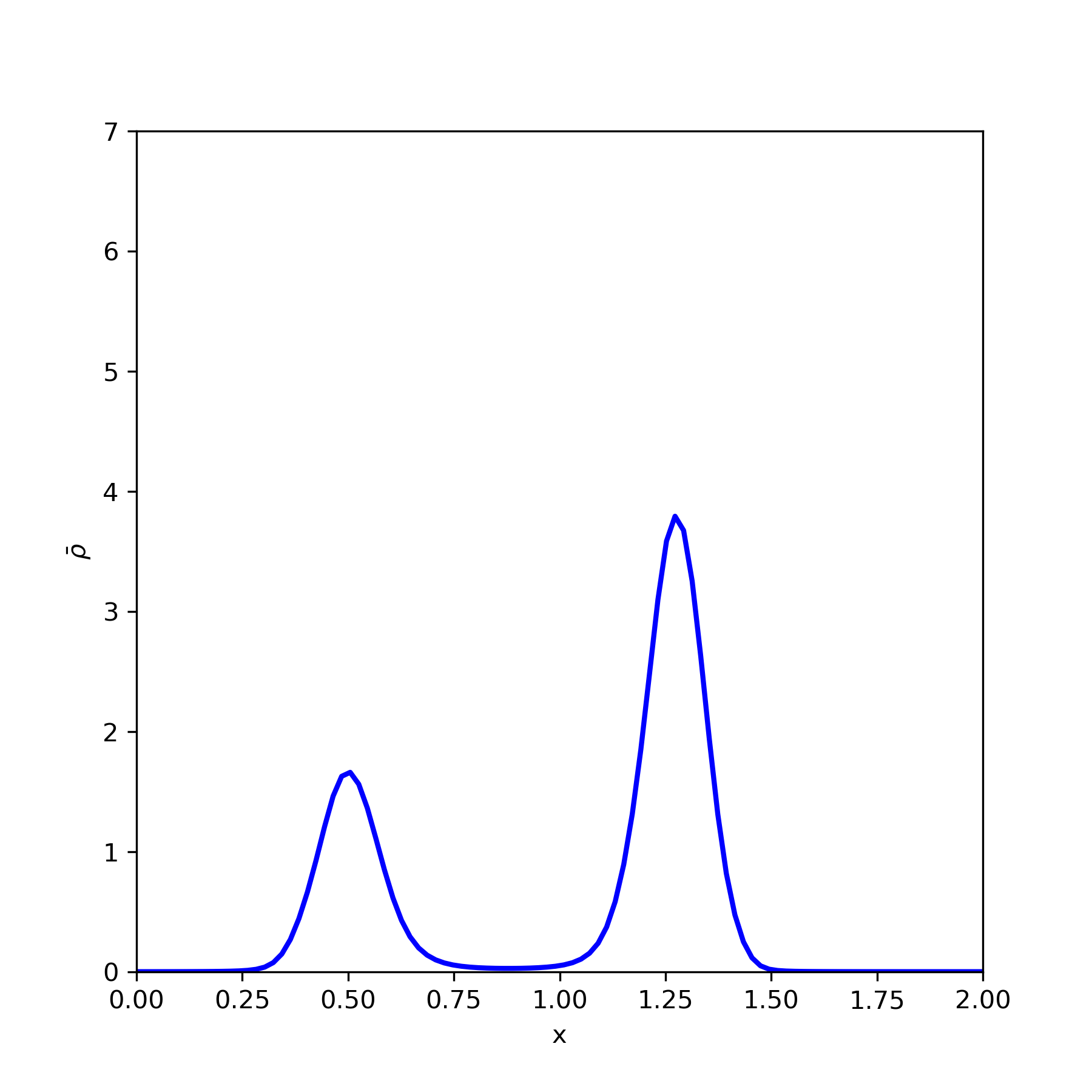}
    \includegraphics[width=2.8cm]{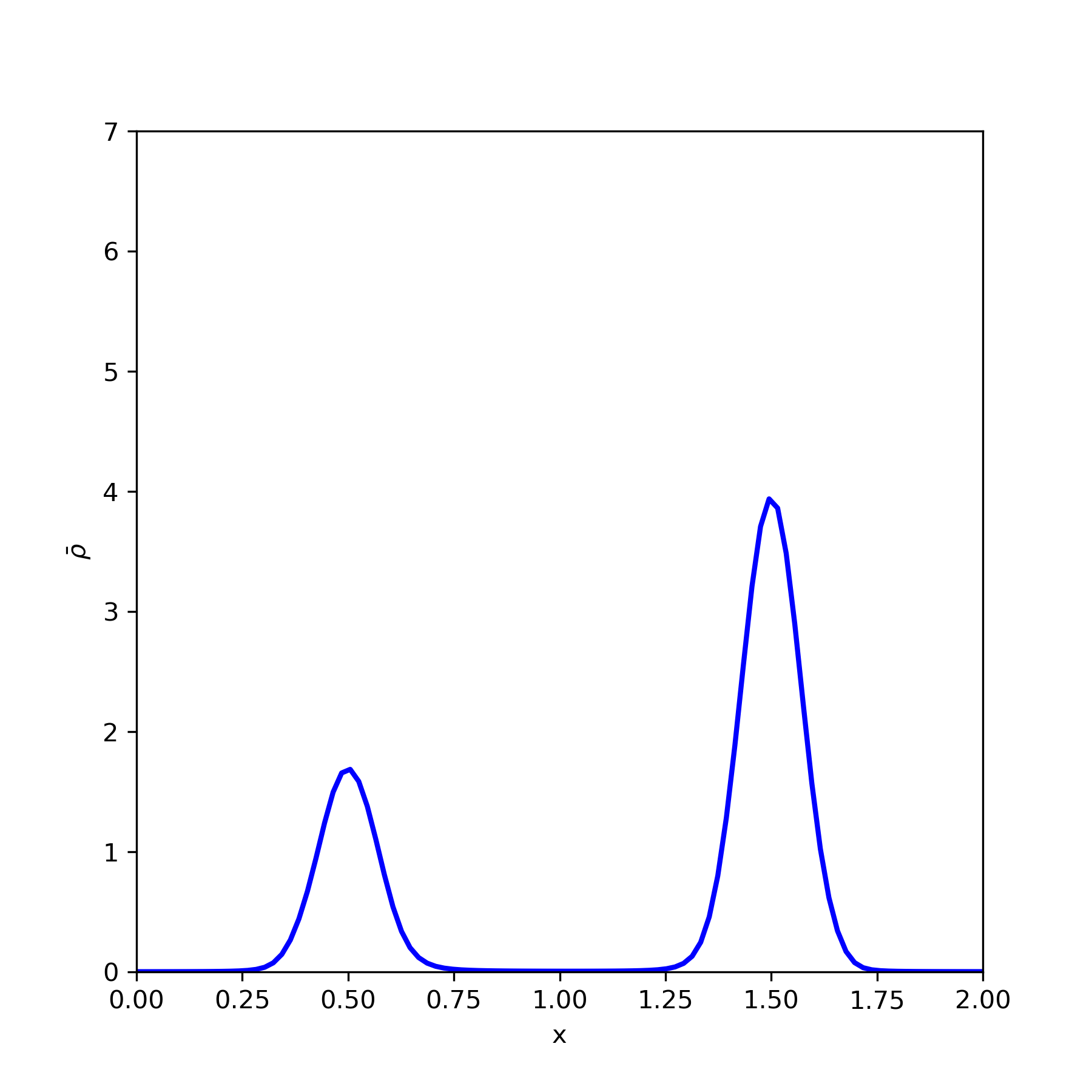}\\
    \vspace{5pt}

    \subfigure[$\rho(0, \boldsymbol{x})$]{
    \includegraphics[width=2.8cm]{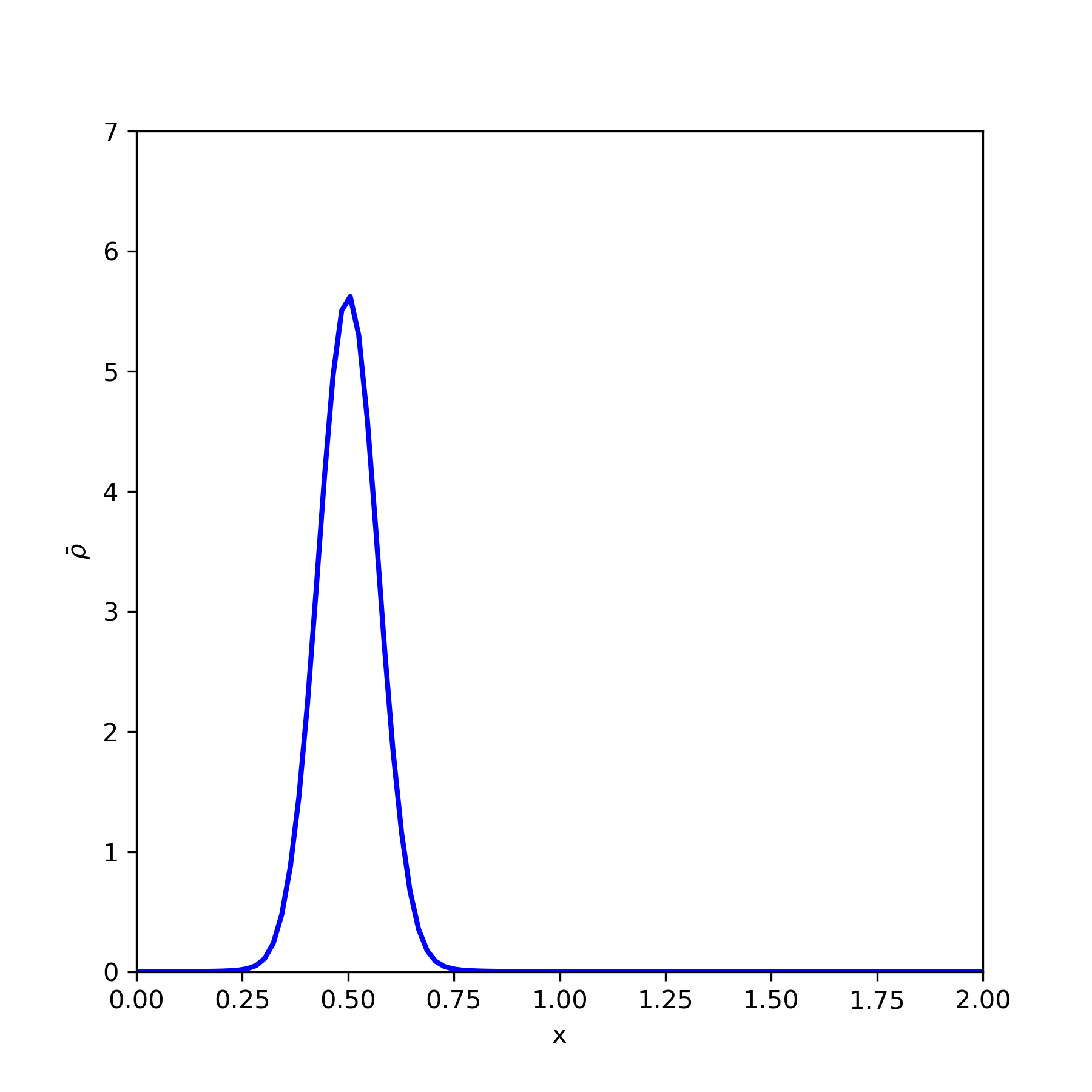}}
    \subfigure[$\rho(0.25, \boldsymbol{x})$]{
    \includegraphics[width=2.8cm]{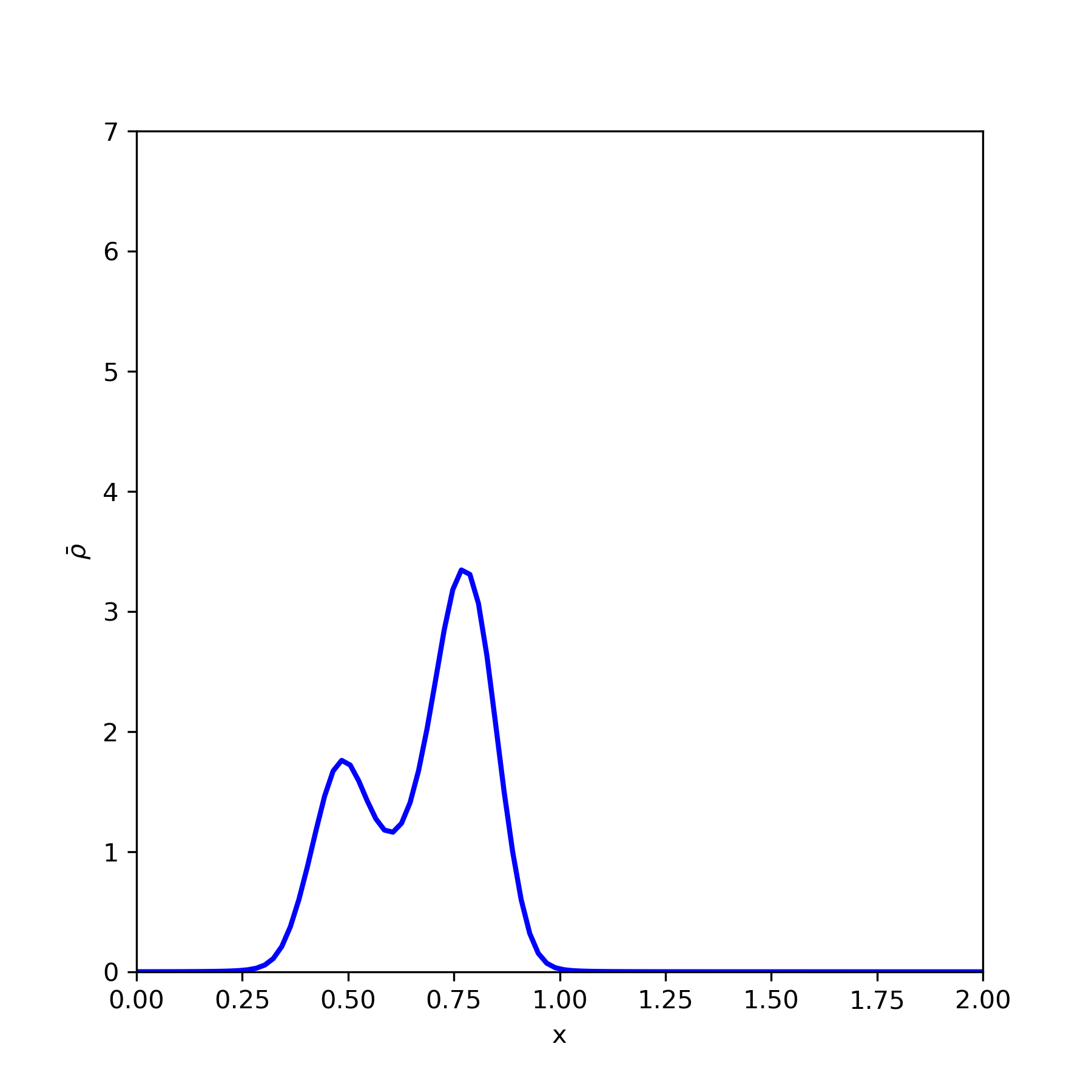}}
    \subfigure[$\rho(0.5, \boldsymbol{x})$]{
    \includegraphics[width=2.8cm]{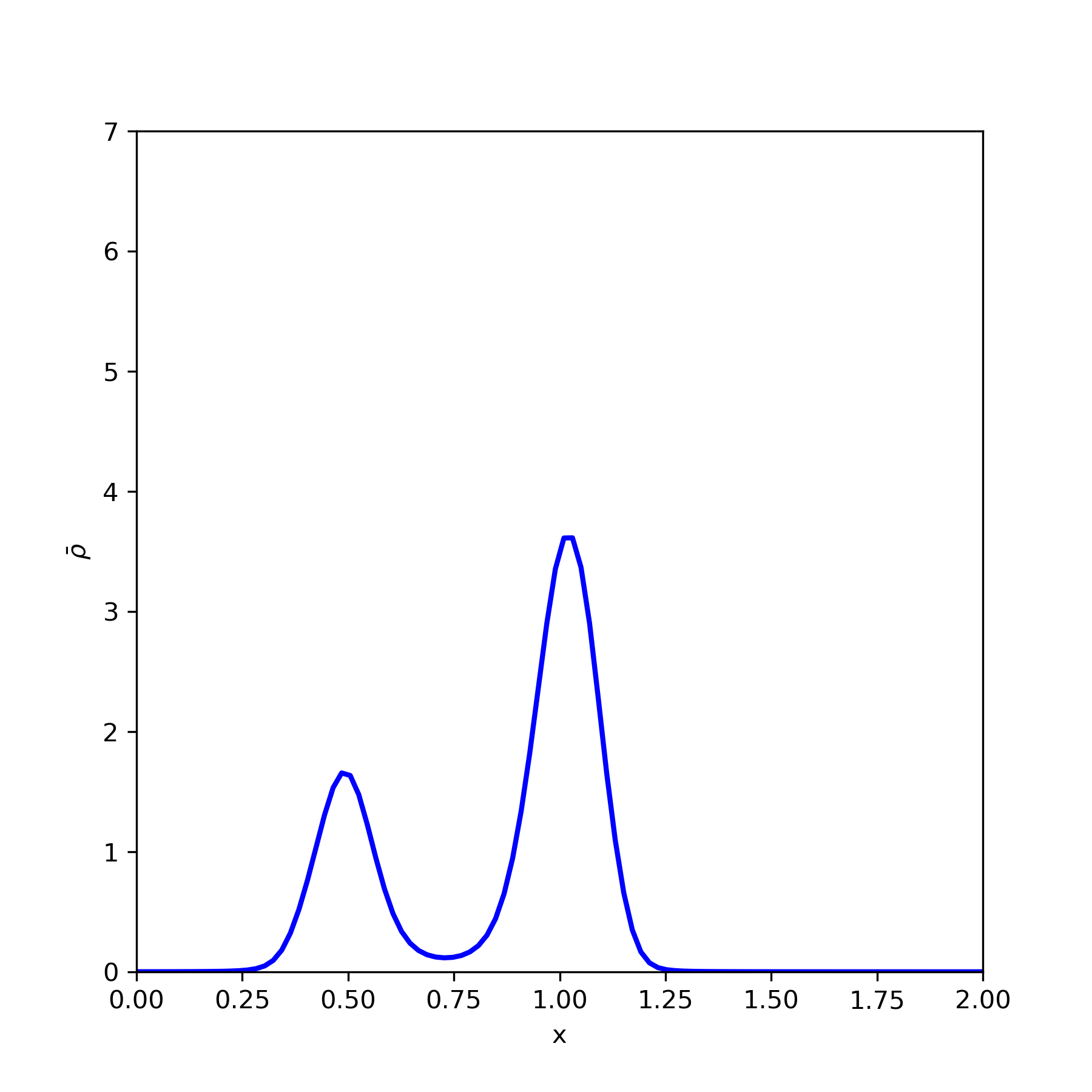}}
    \subfigure[$\rho(0.75, \boldsymbol{x})$]{
    \includegraphics[width=2.8cm]{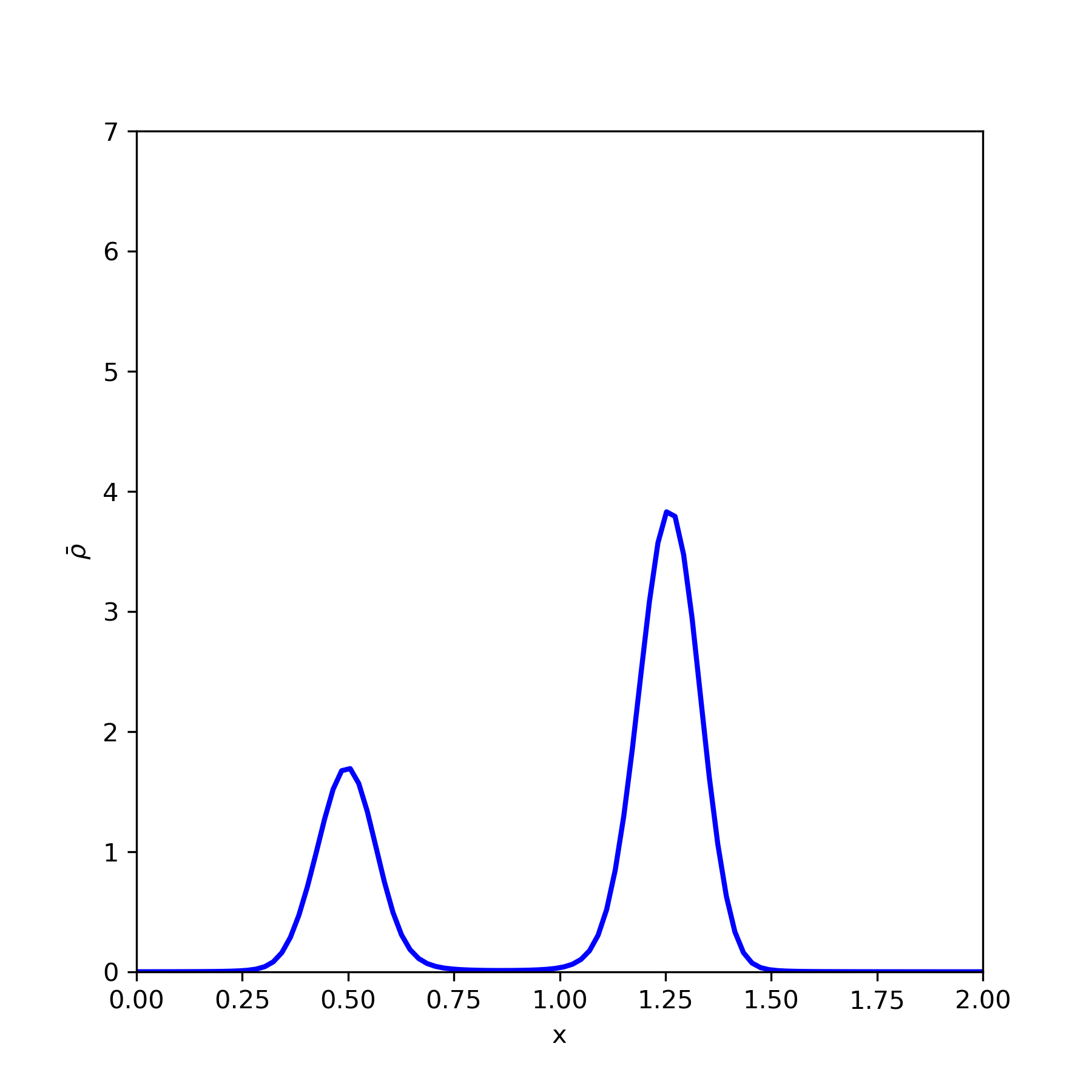}}
    \subfigure[$\rho(1, \boldsymbol{x})$]{
    \includegraphics[width=2.8cm]{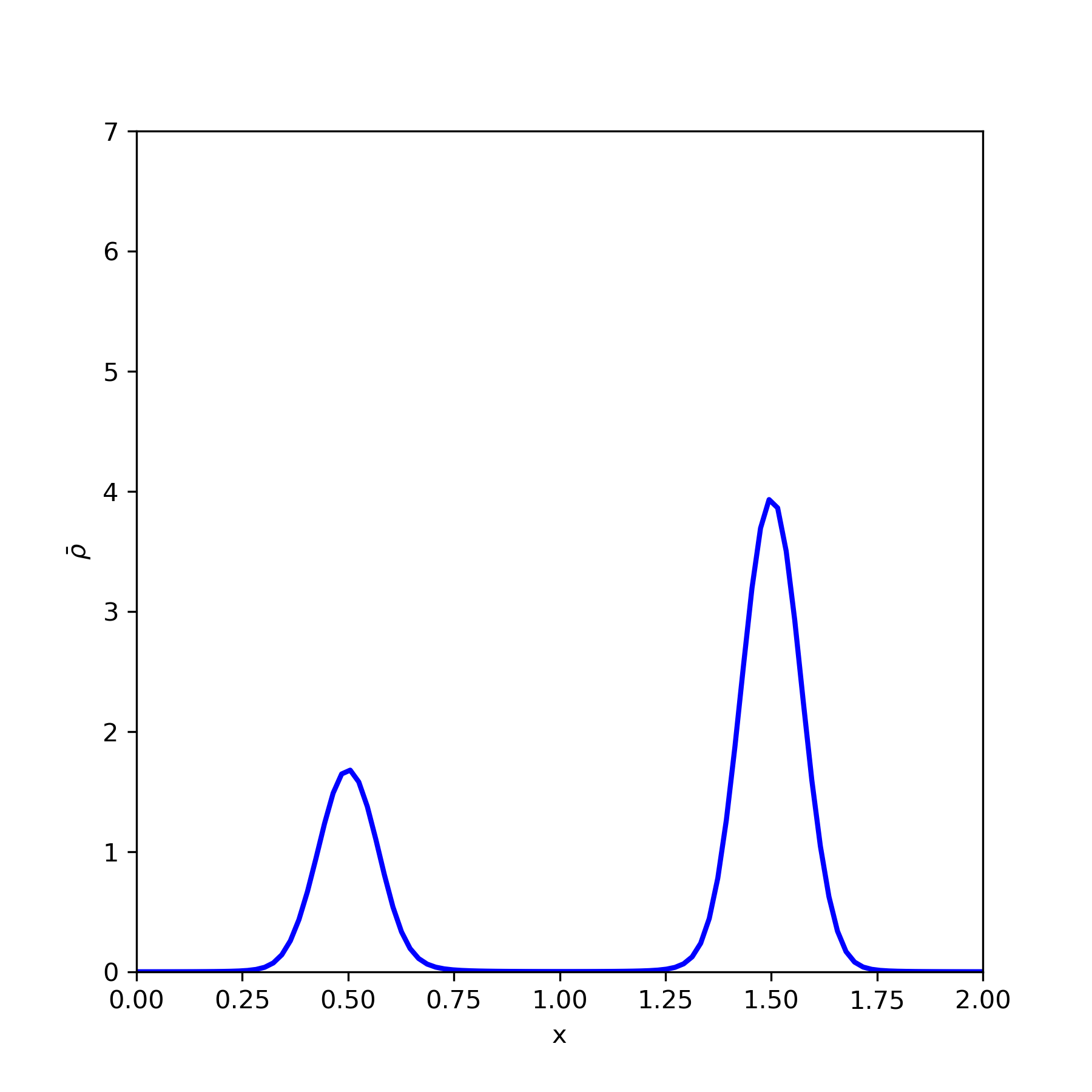}}
    
    \caption{Variable Coefficients UOT: Numerical results for different $\eta$, from top to bottom each line $\eta=10^2, 1, 10^{-6}$.}
    \label{VC-UOT}
    \end{center}
\end{figure}

\subsection{OT on point cloud}\label{Sec 5.2}
In this section, we compute the UOT problems on point cloud.  
We consider both OT problem $g(t, \boldsymbol{x})=0$ and the general UOT problem $g(t, \boldsymbol{x})\neq0$. 

First, we consider the OT problems on simple manifolds as follows:
\begin{itemize}
    \item[-] \textit{Sphere}:
    $$(x-\frac{1}{2})^2 + (y-\frac{1}{2})^2 + (z-\frac{1}{2})^2 = \frac{1}{4}.$$
    \item[-] \textit{Ellipsoid}:
    $$(x-\frac{1}{2})^2 + 3(y-\frac{1}{2})^2 + 6(z-\frac{1}{2})^2 = \frac{1}{4}.$$
    \item[-] \textit{Peanut}:
    $$\left((4(x-\frac{1}{2})-1)^2 + 8(y-\frac{1}{2})^2+8(z-\frac{1}{2})^2\right) \left((4(x-\frac{1}{2})+1)^2 + 8(y-\frac{1}{2})^2 + 8(z-\frac{1}{2})^2\right) = \frac{6}{5}.$$
     \item[-] \textit{Torus}: 
    $$\left(0.3 - \sqrt{(x-\frac{1}{2})^2 + (y-\frac{1}{2})^2}\right)^2 + (z-\frac{1}{2})^2 = \frac{1}{25}.$$
    \item[-] \textit{Opener}: 
    $$\left(3(x-\frac{1}{2})^2(1-5(x-\frac{1}{2})^2) - 5(y-\frac{1}{2})^2\right)^2 + 5(z-\frac{1}{2})^2=\frac{1}{60}.$$
\end{itemize}
As can be found in Table \ref{tab:MOT}, we give the settings of the initial distribution $\rho_{0}(\boldsymbol{x})$ and the terminating distribution $\rho_{1}(\boldsymbol{x})$ on different manifolds, where Gaussian density function on manifolds in \cite{2013Anisotropic} is defined as 
\begin{equation*}
\begin{aligned}
    \hat{\rho}_{G}(\boldsymbol{x}, \mu, \sigma) = ce^{-\frac{\|\mu - \boldsymbol{x}\|^2}{\sigma}},
\end{aligned}
\end{equation*}
and $c$ is a constant. In this article, we consider $c=100$. In addition to this, we choose $\lambda_{c}=\lambda_{hj}=1$, $\lambda_{ic}=1000$. 
And we use uniform point picking to sample $10$ points at time $t$, isosurface method to sample $1158$, $1222$, $1430$, $2120$ and $1410$ points on each surface (coordinate $(x, y, z)$). 
Isosurface method is to generate a large number of mesh nodes according to uniform sampling in three-dimensional space.
Then calculate isosurface according to surface expression. 
And finally extract surface nodes. 
\begin{table}[htbp]
    \centering
    \caption{Initial distribution $\rho_{0}(\boldsymbol{x})$, target distribution $\rho_{1}(\boldsymbol{x})$ for MOT testing.}
    \label{tab:MOT}
    \begin{tabular}{l|cc}
        \hline
        Test & $\rho_{0}(\boldsymbol{x})$ & $\rho_{1}(\boldsymbol{x})$\\
        \hline
        Sphere & $\hat{\rho}_{G}(\boldsymbol{x}, [0.5, 0.5, 0], 0.01\cdot\mathbf{I})$ & $\hat{\rho}_{G}(\boldsymbol{x}, [0.5, 0.5, 1], 0.01\cdot\mathbf{I})$ \\
        Ellipsoid & $\hat{\rho}_{G}(\boldsymbol{x}, [0.5, 0.5, \frac{6+\sqrt{6}}{12}], 0.01\cdot\mathbf{I})$ & $\hat{\rho}_{G}(\boldsymbol{x}, [0.5, 0.5, \frac{6-\sqrt{6}}{12}], 0.01\cdot\mathbf{I})$ \\
        Peanut & $\hat{\rho}_{G}(\boldsymbol{x}, [\frac{2+\sqrt{1+\sqrt{\frac{6}{5}}}}{4}, 0.5, 0.5], 0.01\cdot\mathbf{I})$ & $\hat{\rho}_{G}(\boldsymbol{x}, [\frac{2+\sqrt{1-\sqrt{\frac{6}{5}}}}{4}, 0.5, 0.5], 0.01\cdot\mathbf{I})$ \\
        Torus & $\hat{\rho}_{G}(\boldsymbol{x}, [\frac{5+4\sqrt{\frac{1}{2}}}{10}, \frac{5+4\sqrt{\frac{1}{2}}}{10}, 0.5], 0.01\cdot\mathbf{I})$ & $\hat{\rho}_{G}(\boldsymbol{x}, [\frac{5-4\sqrt{\frac{1}{2}}}{10}, \frac{5-4\sqrt{\frac{1}{2}}}{10}, 0.5], 0.01\cdot\mathbf{I})$ \\
        Opener & $\hat{\rho}_{G}(\boldsymbol{x}, [\frac{1+\sqrt{\frac{1+\sqrt{1+2\sqrt{\frac{1}{15}}}}{10}}}{2}, 0.5, 0.5], 0.01\cdot\mathbf{I})$ & $\hat{\rho}_{G}(\boldsymbol{x}, [0.5, \frac{1-\sqrt{\frac{1+\sqrt{1+2\sqrt{\frac{1}{15}}}}{10}}}{2}, 0.5], 0.01\cdot\mathbf{I})$ \\
        \hline
    \end{tabular}
\end{table}

The results are shown in Figures \ref{MOT}. In these cases, the ground truth is not known, so we can not compare the numerical solutions with the exact ones. However, the neural network based method still is capable to capture the phenomena that the mass is transported along the geodesics, which suggest that our method captures the main character of the solution. The loss given in Table \ref{tab:suface-loss} also verifies the effectiveness of our method. 
\begin{figure}[htbp]
    \begin{center}
    \includegraphics[width=2.8cm]{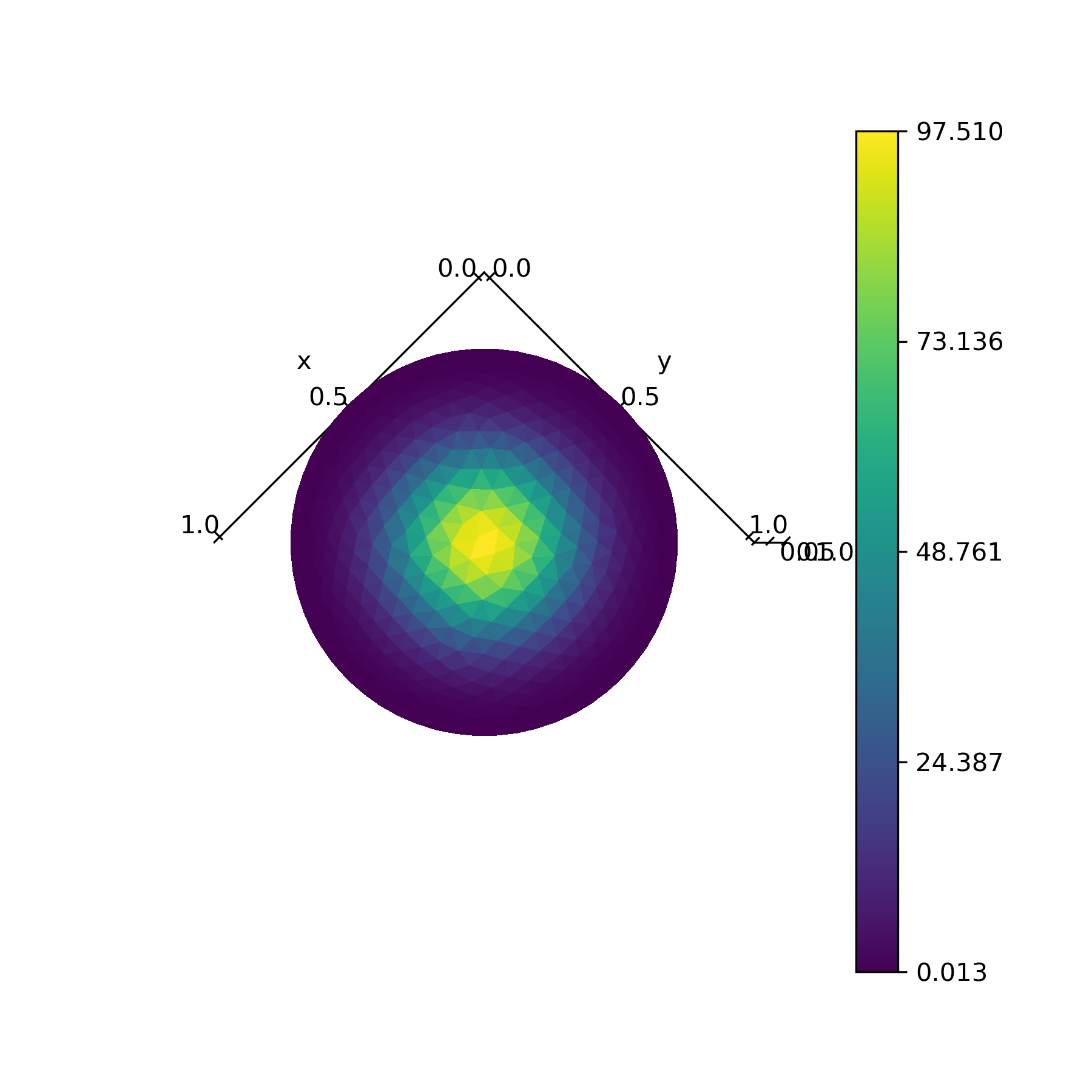}
    \includegraphics[width=2.8cm]{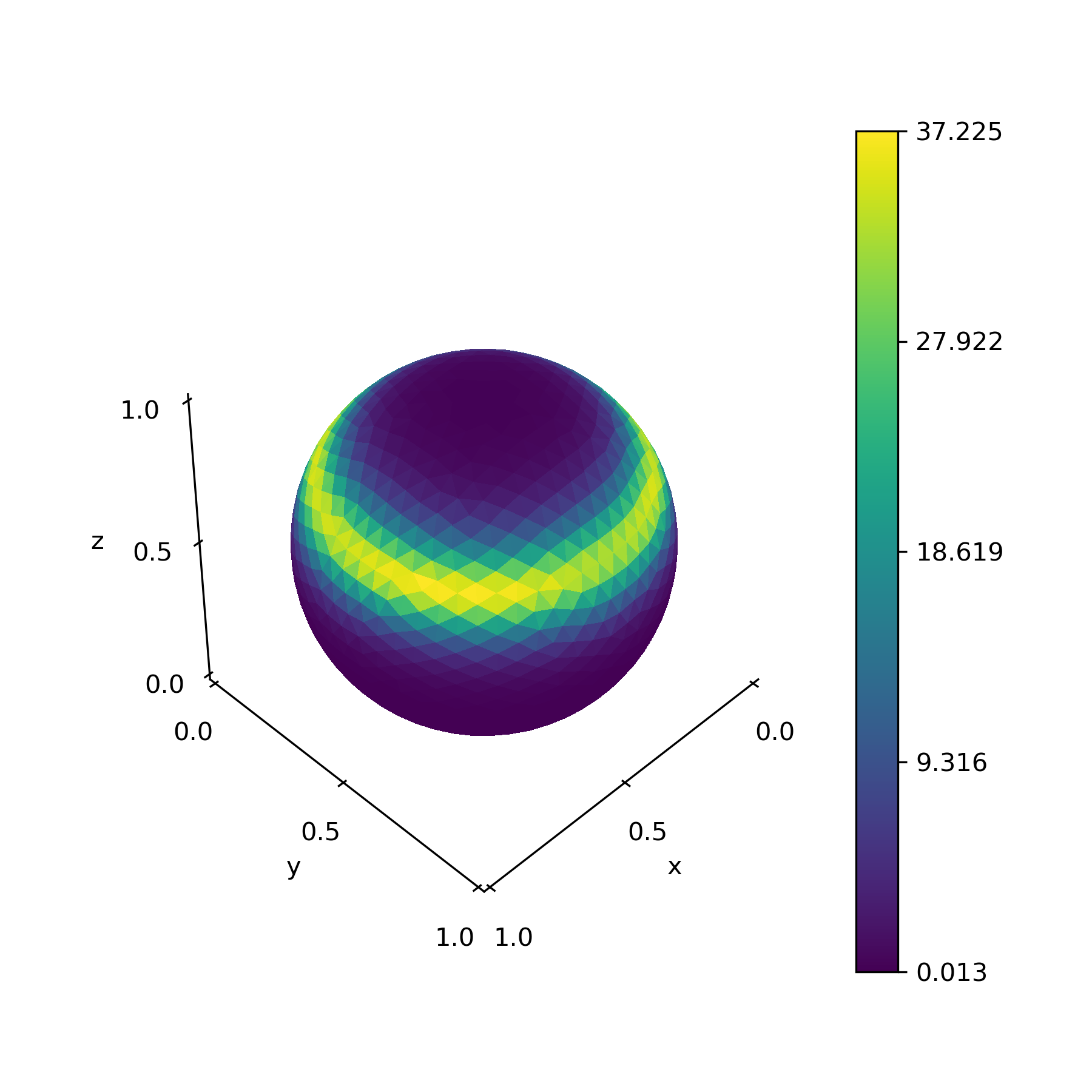}
    \includegraphics[width=2.8cm]{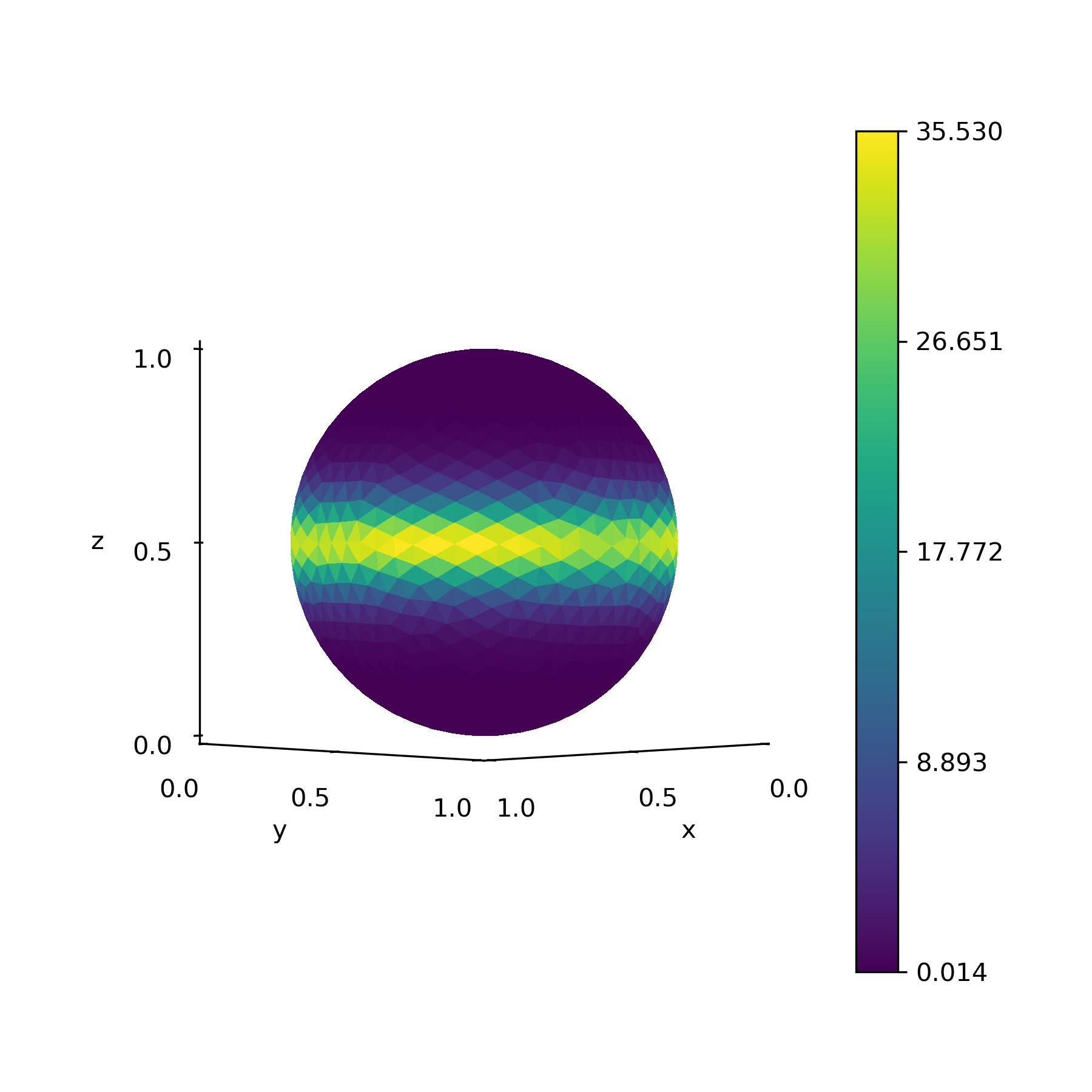}
    \includegraphics[width=2.8cm]{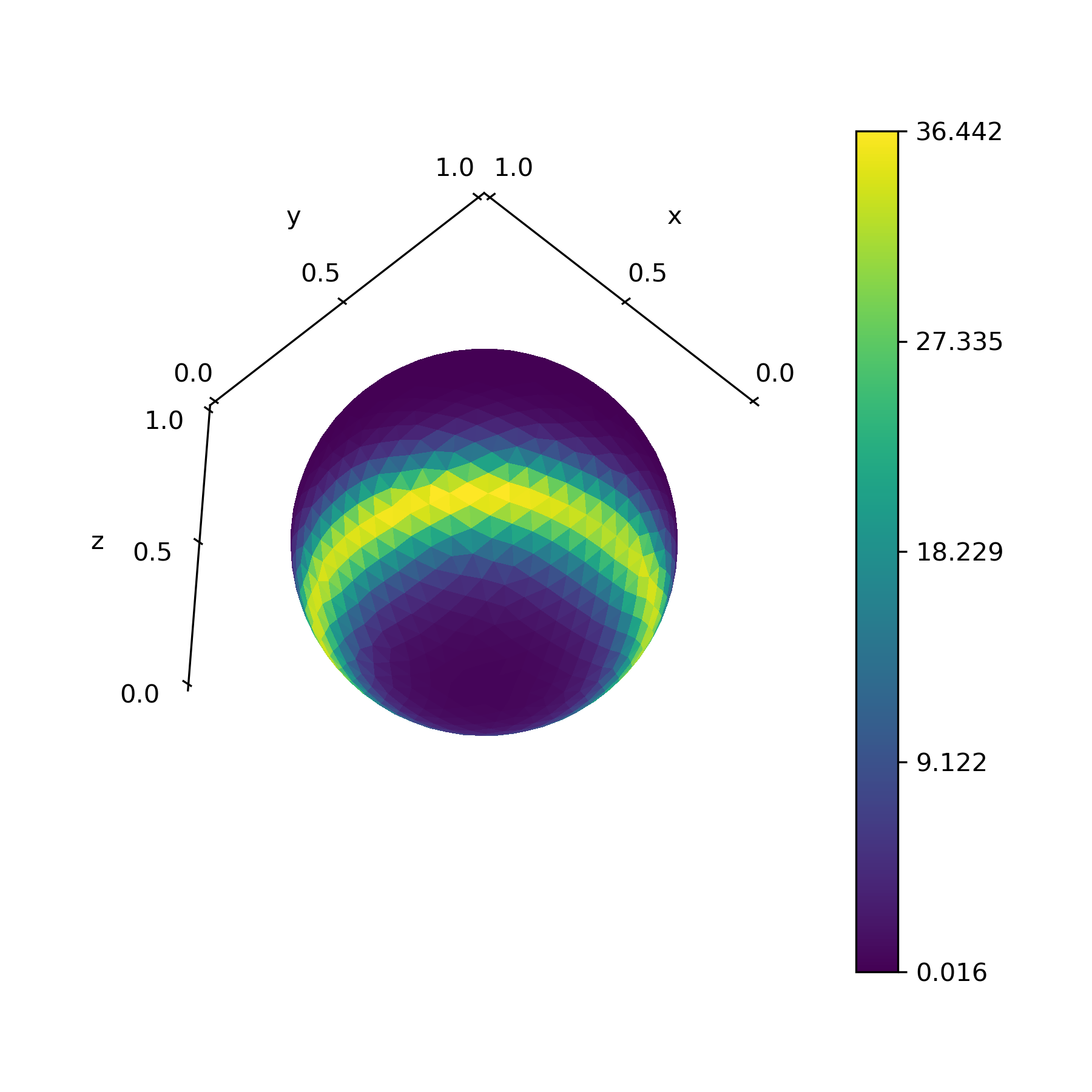}
    \includegraphics[width=2.8cm]{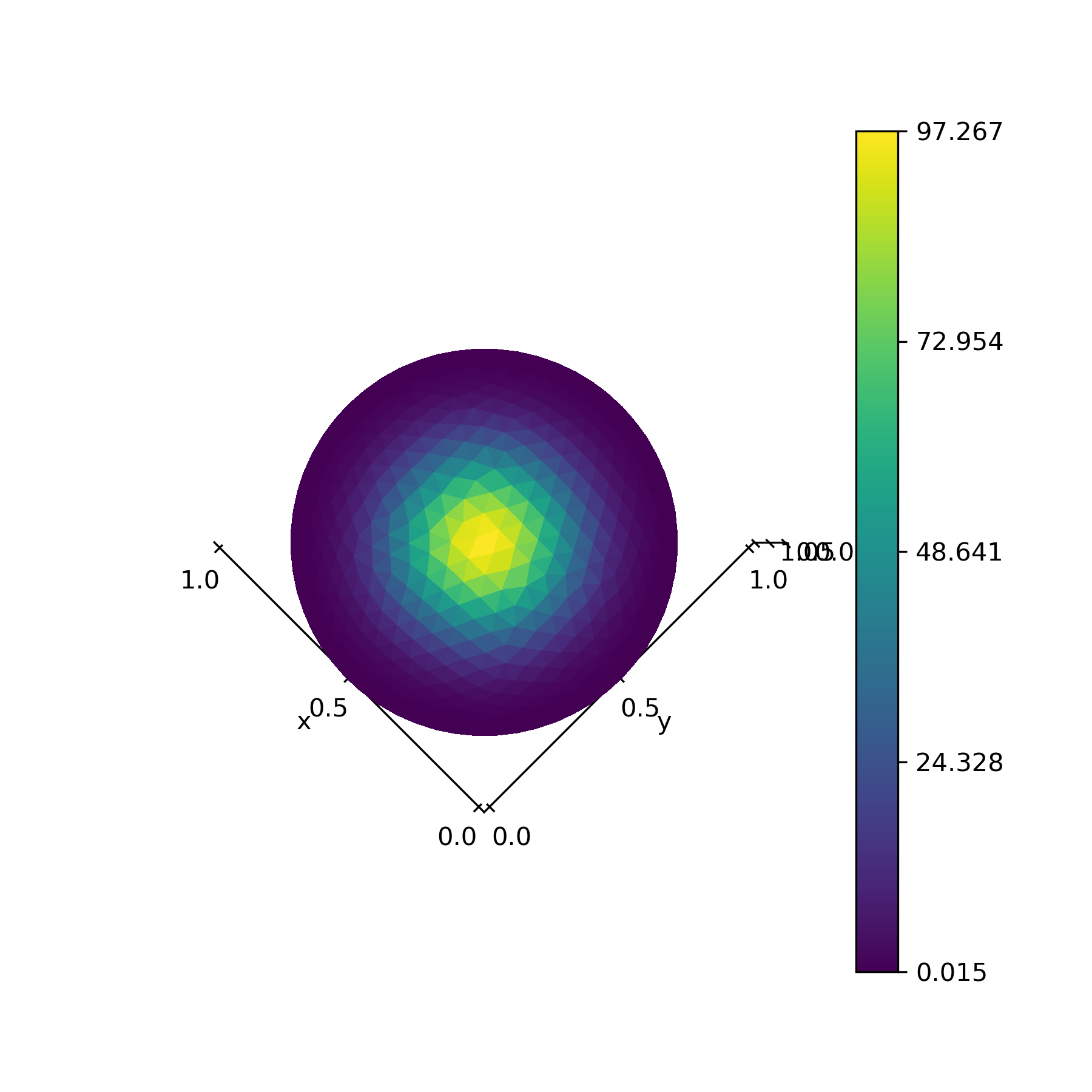}\\
    \vspace{5pt}
    
    \includegraphics[width=2.8cm]{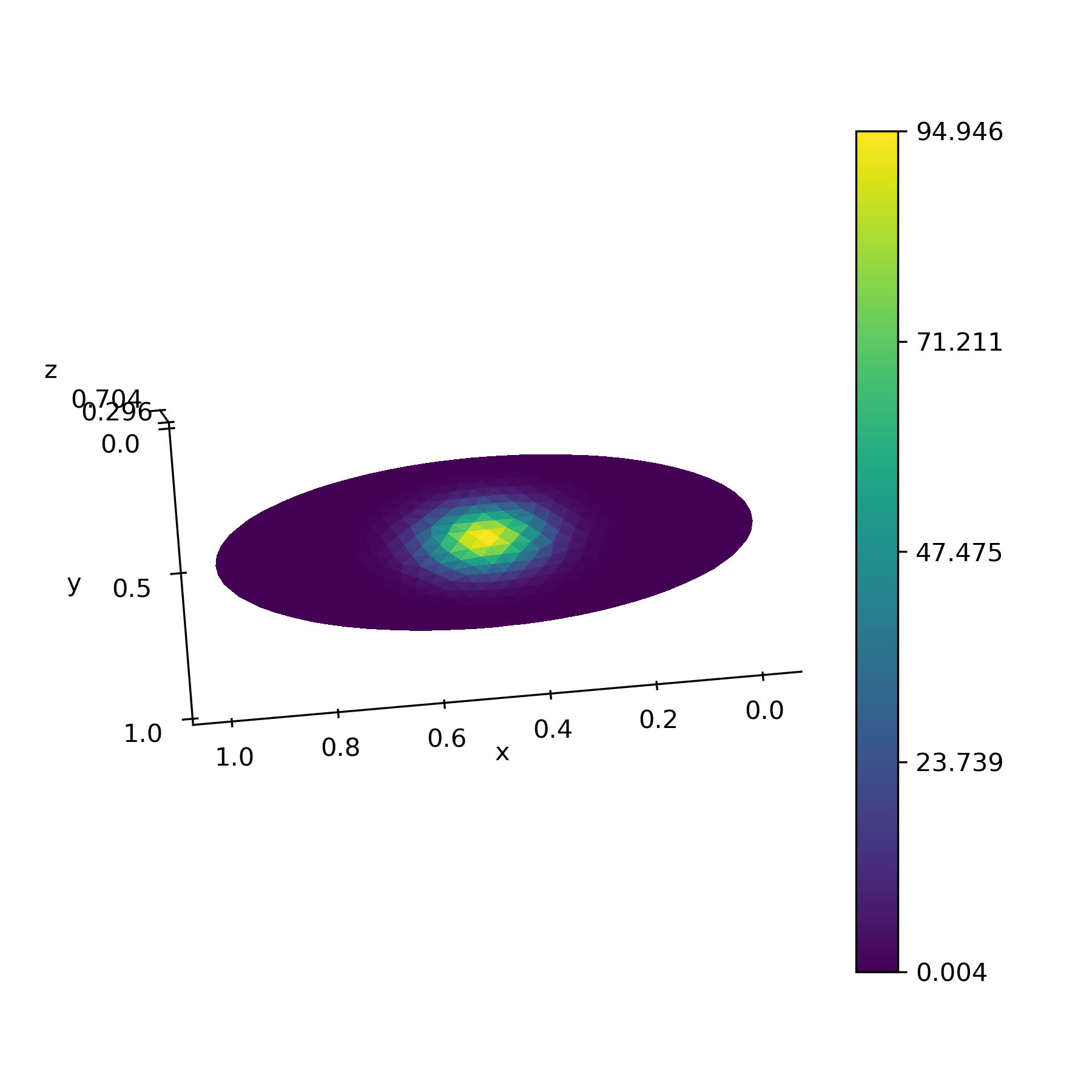}
    \includegraphics[width=2.8cm]{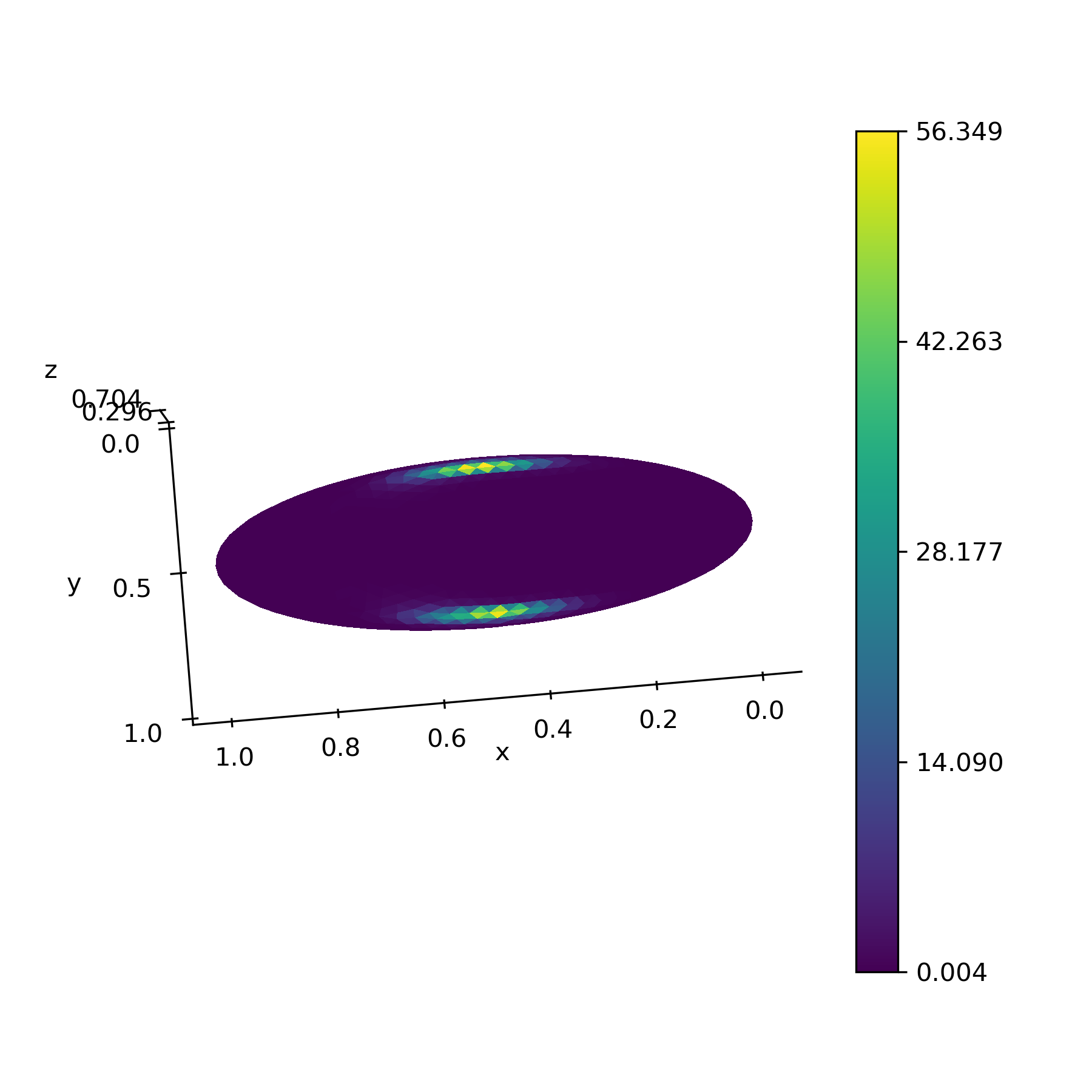}
    \includegraphics[width=2.8cm]{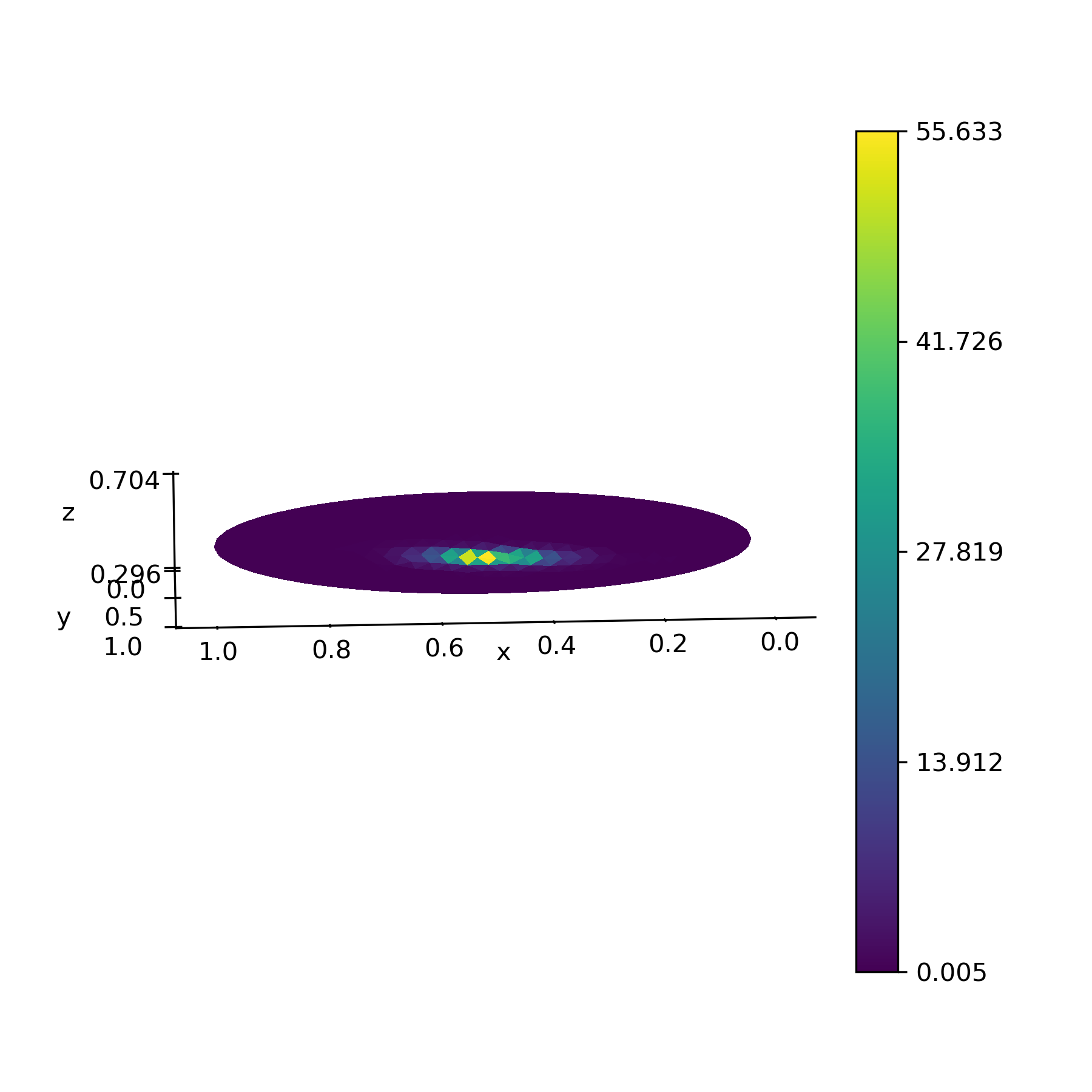}
    \includegraphics[width=2.8cm]{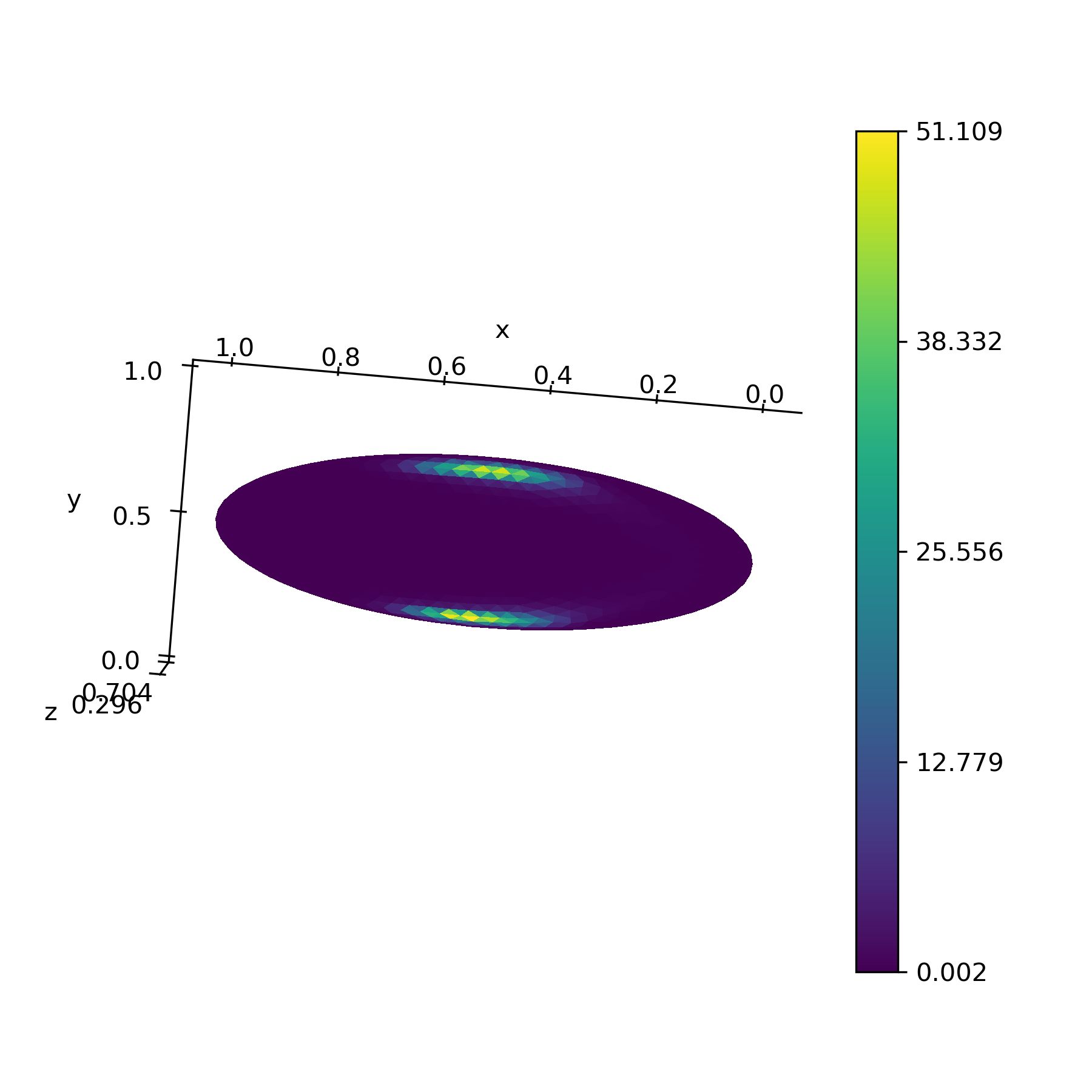}
    \includegraphics[width=2.8cm]{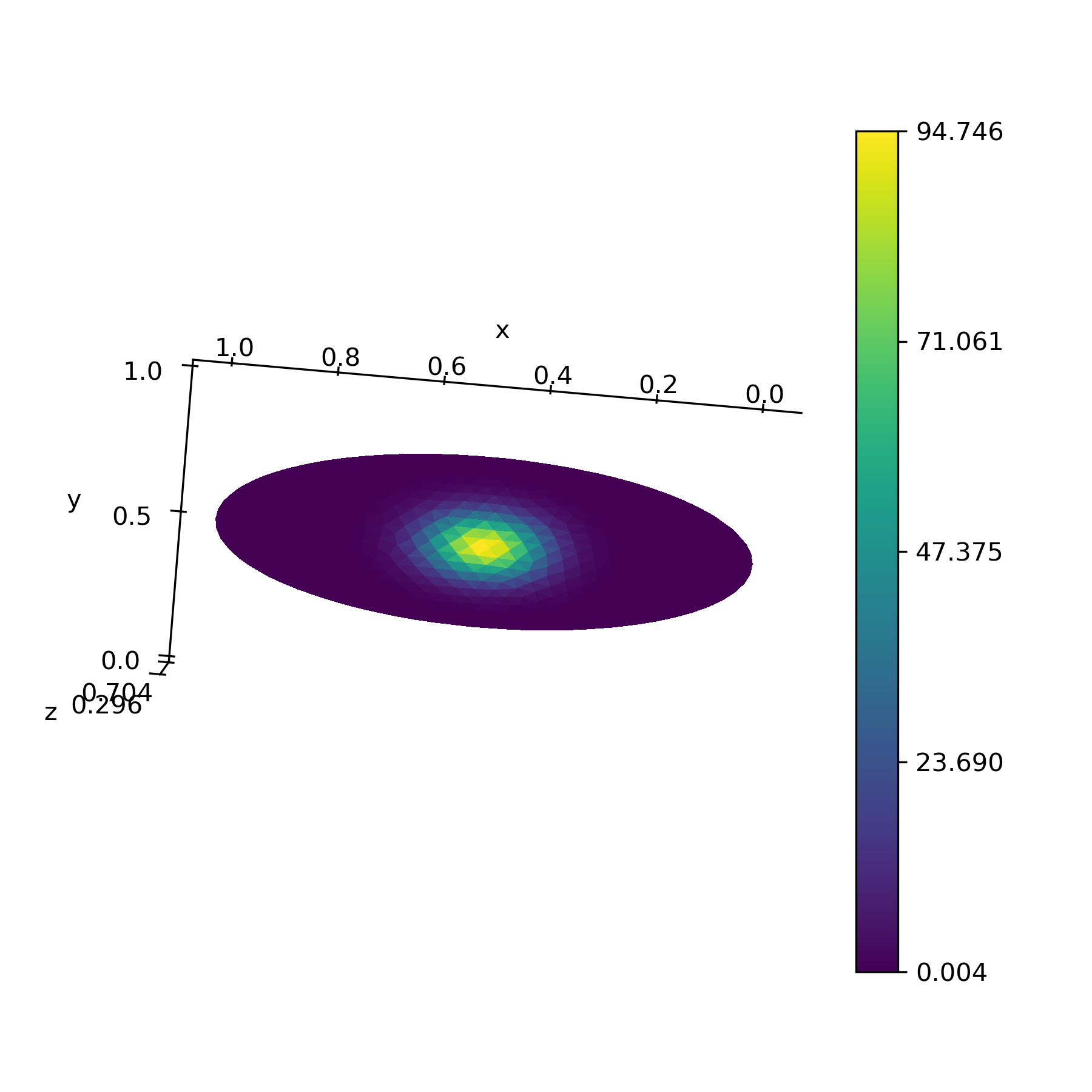}\\
    \vspace{5pt}
    
    \includegraphics[width=2.8cm]{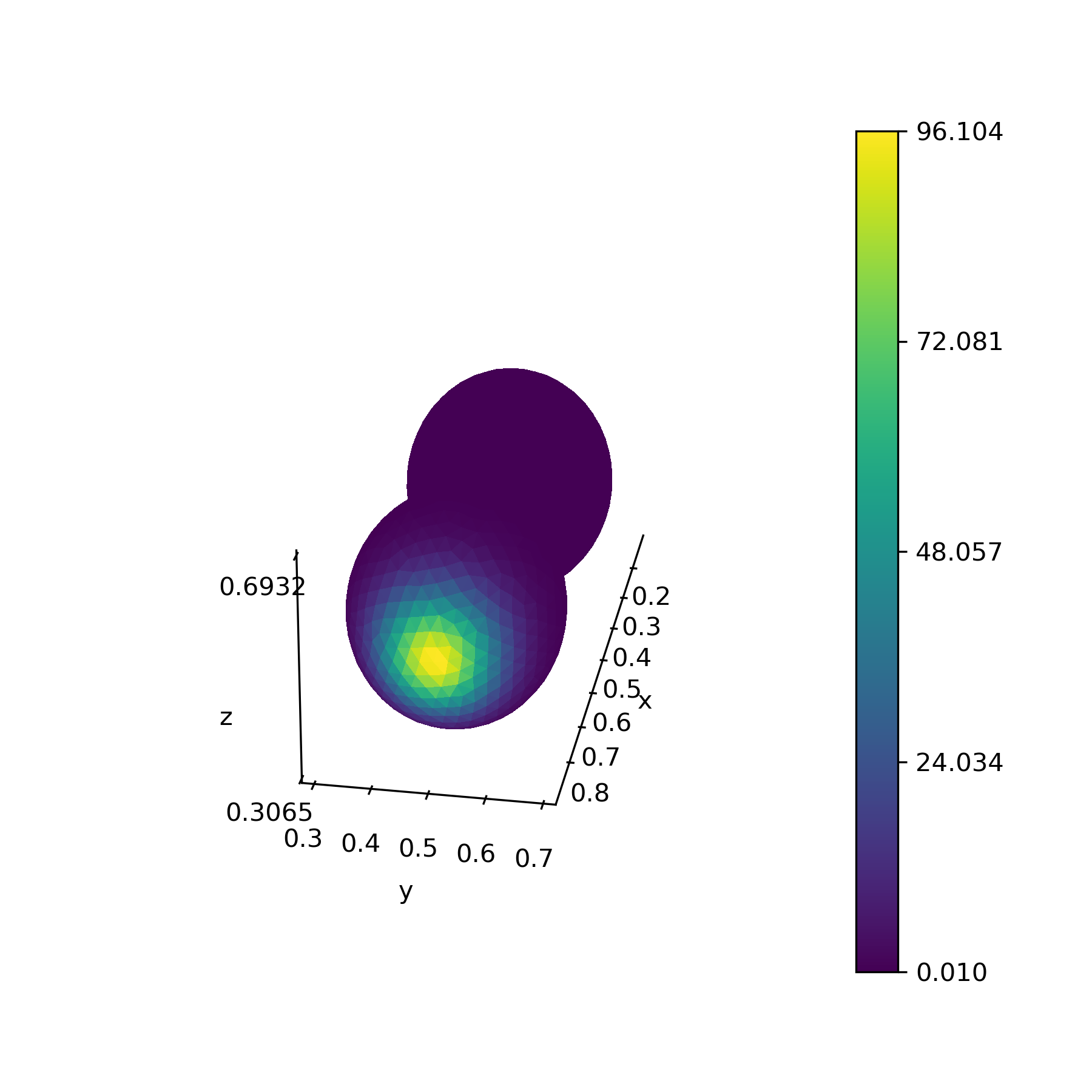}
    \includegraphics[width=2.8cm]{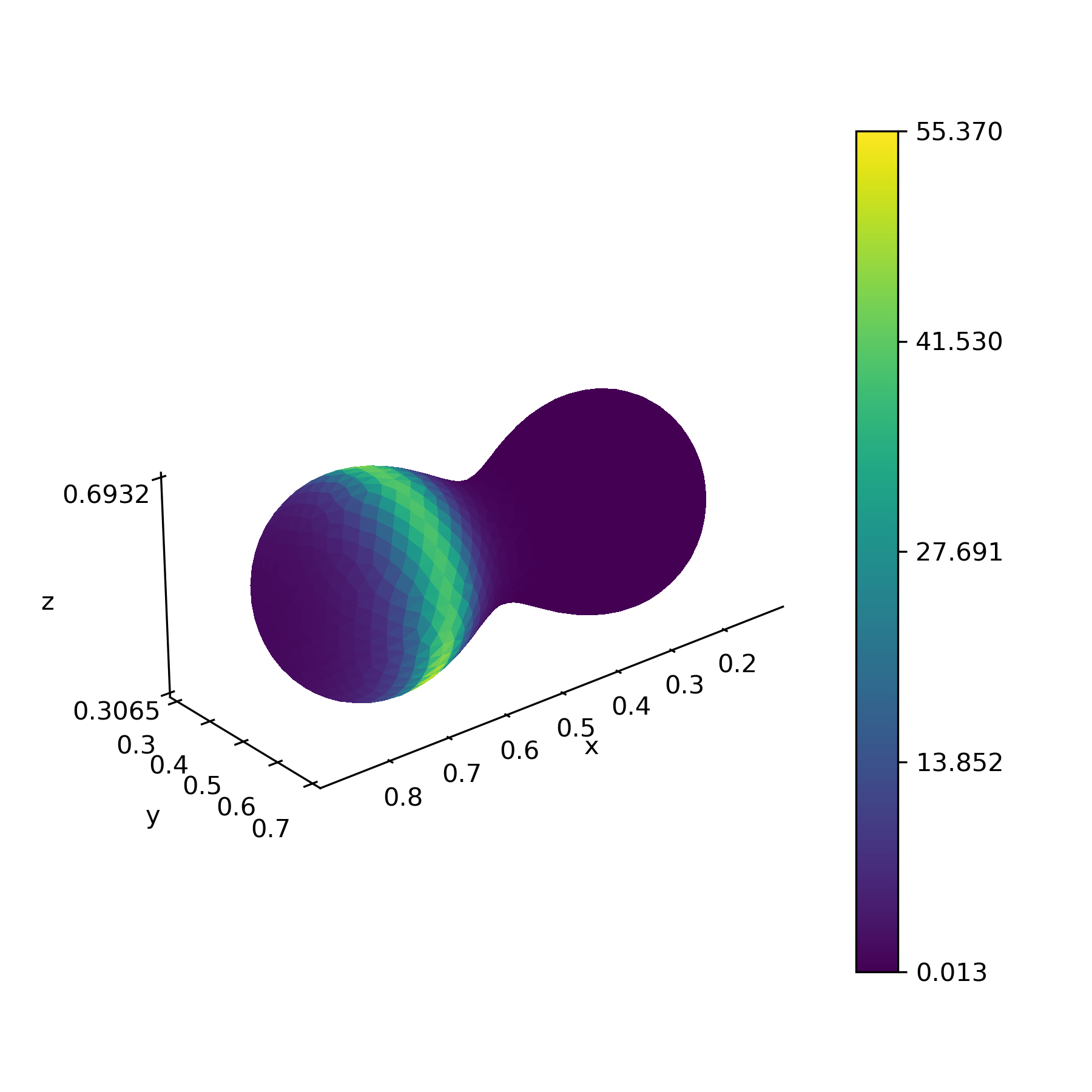}
    \includegraphics[width=2.8cm]{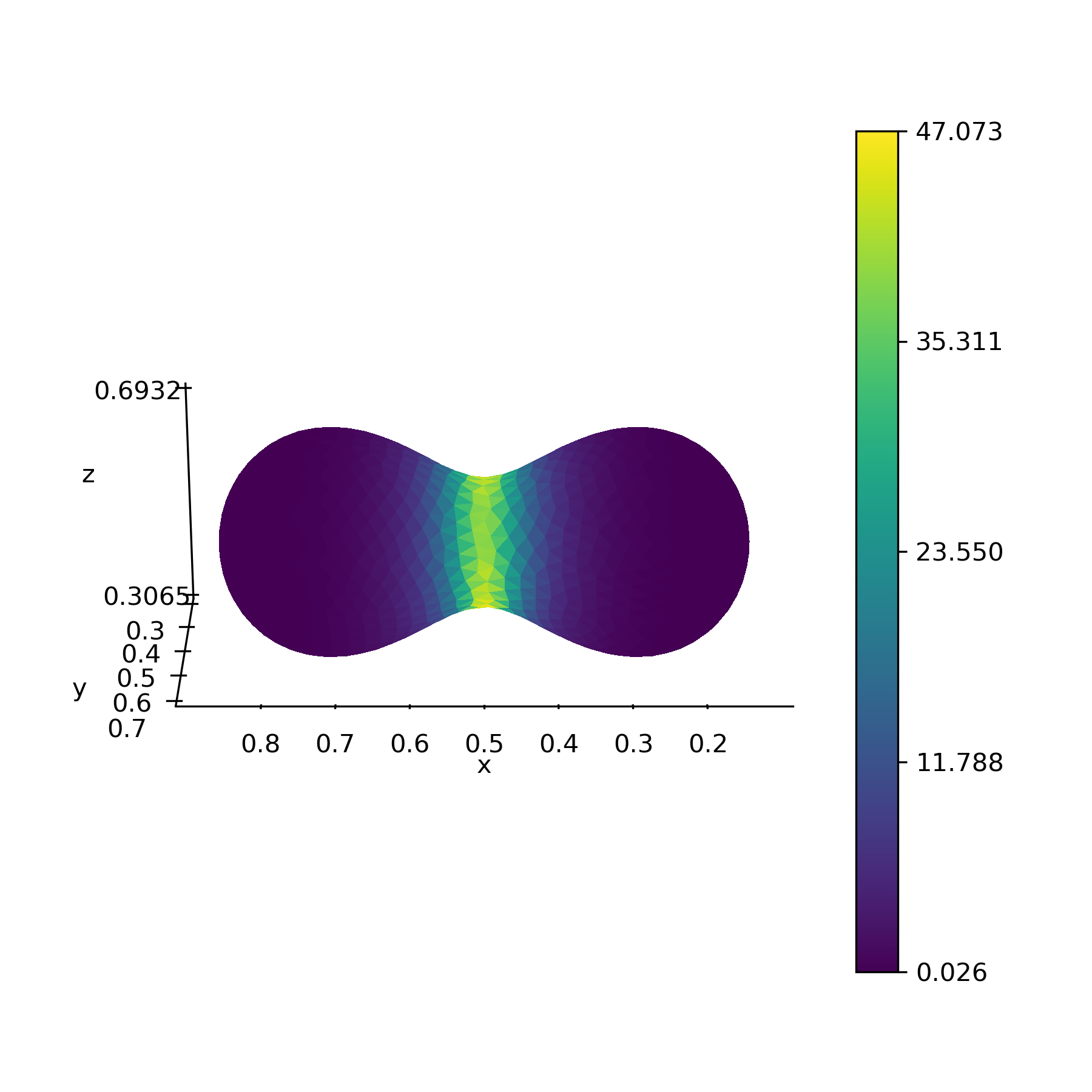}
    \includegraphics[width=2.8cm]{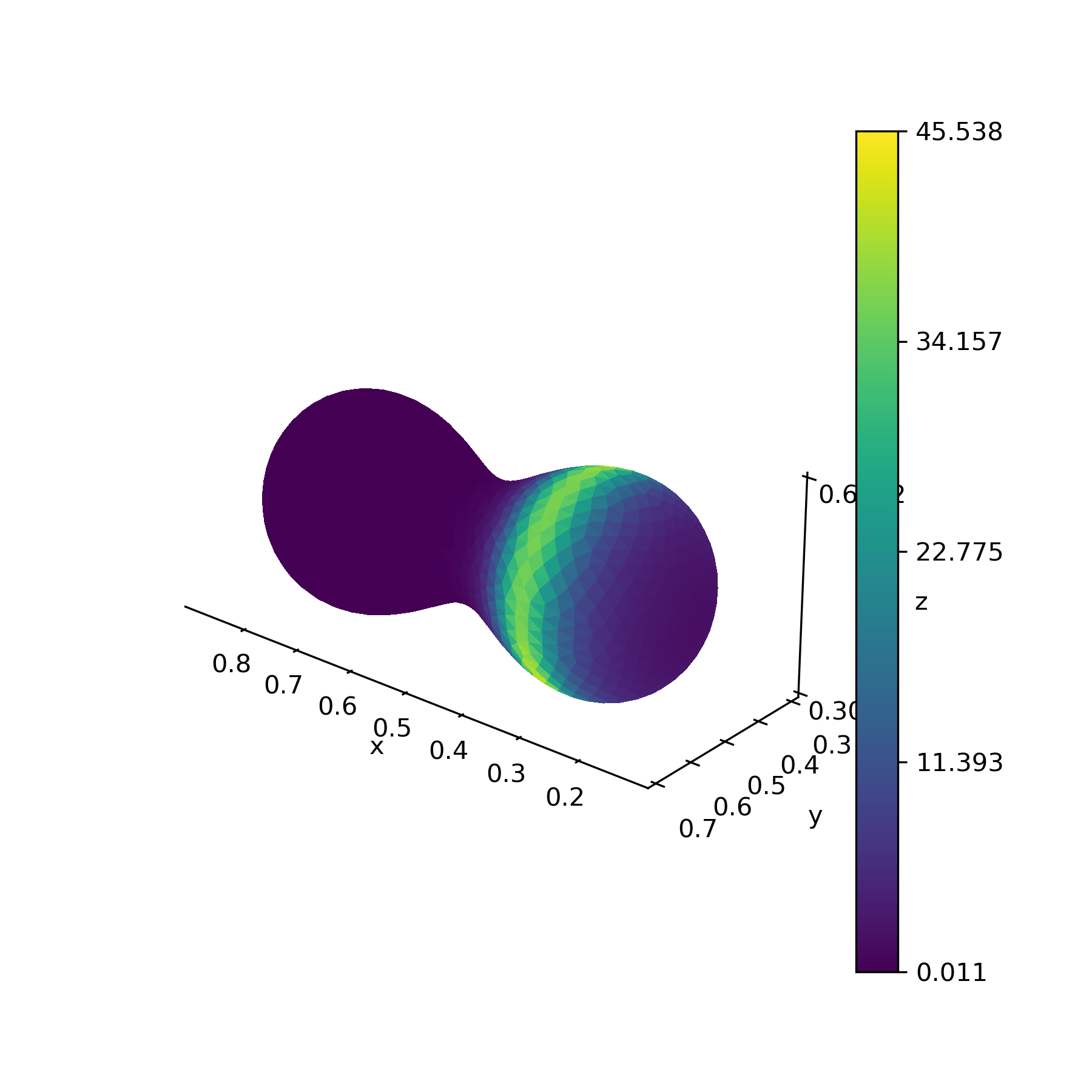}
    \includegraphics[width=2.8cm]{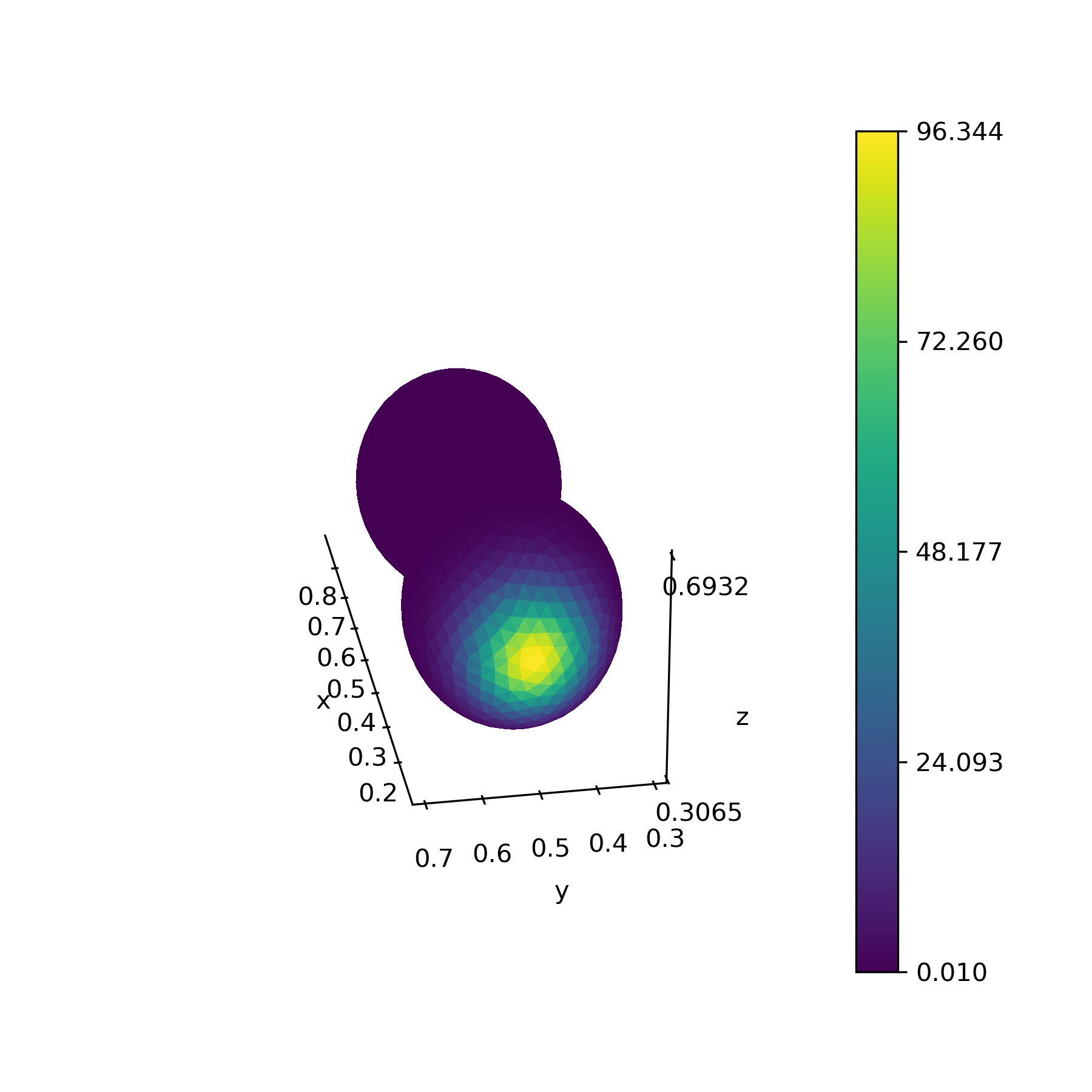}\\
     \vspace{5pt}
     
    \includegraphics[width=2.8cm]{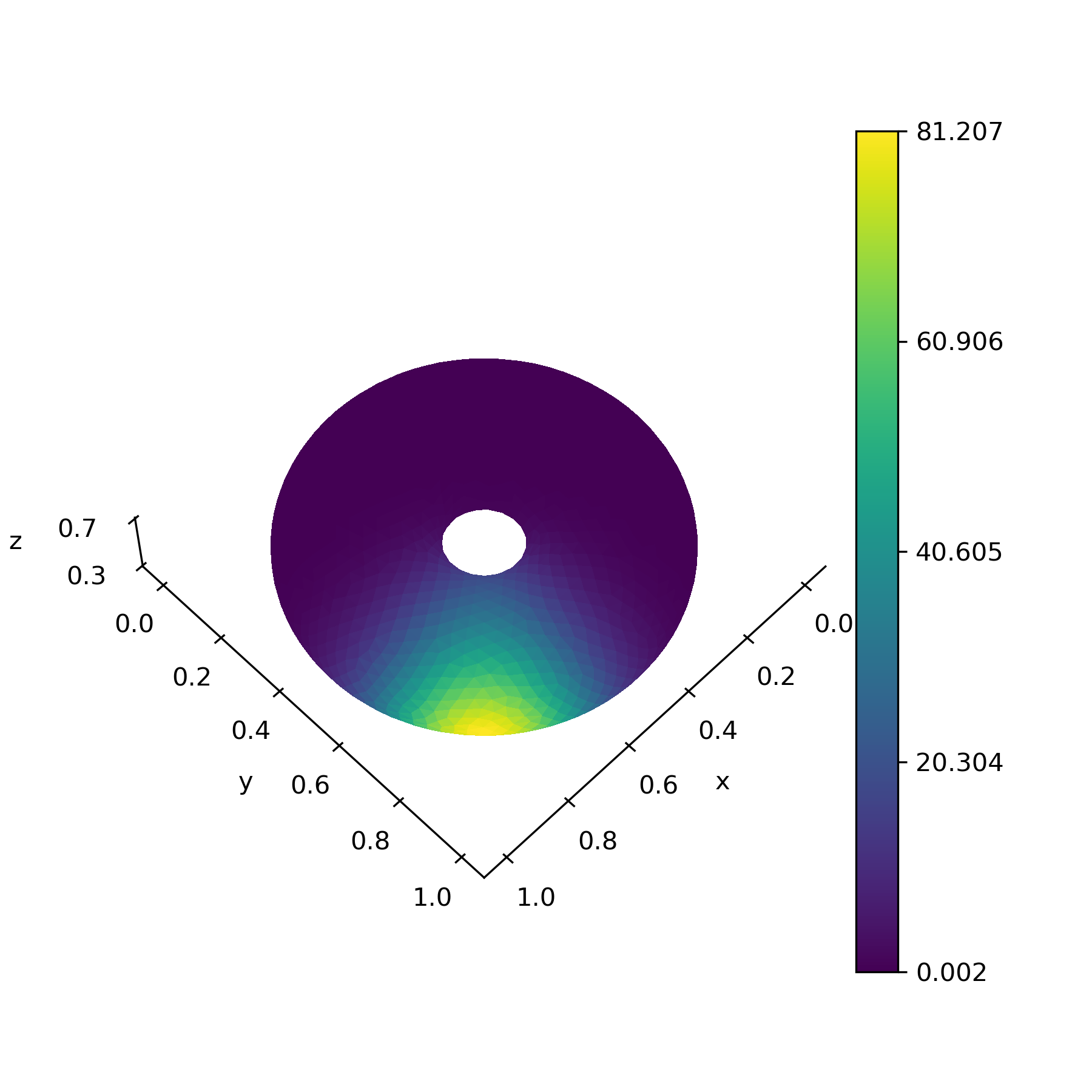}
    \includegraphics[width=2.8cm]{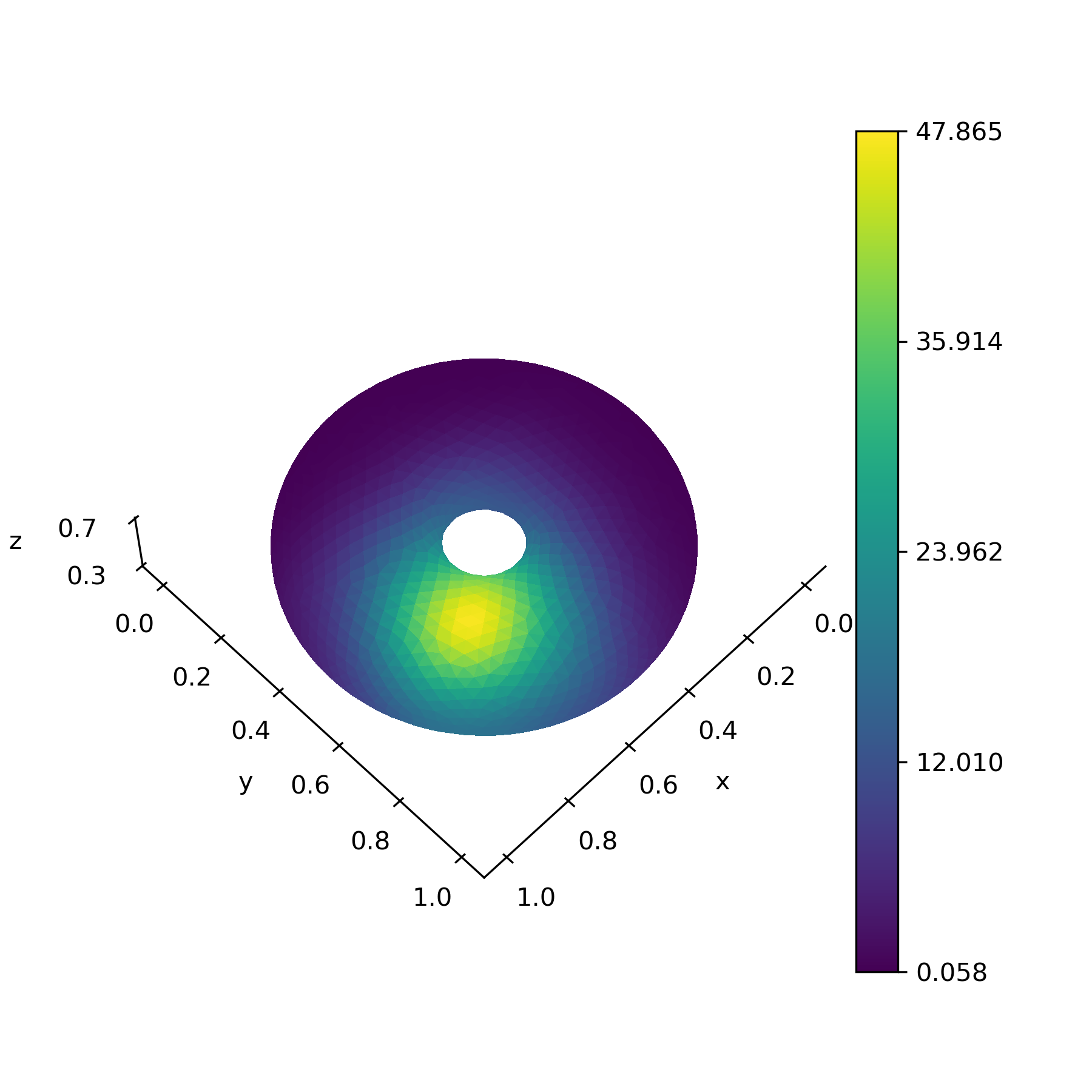}
    \includegraphics[width=2.8cm]{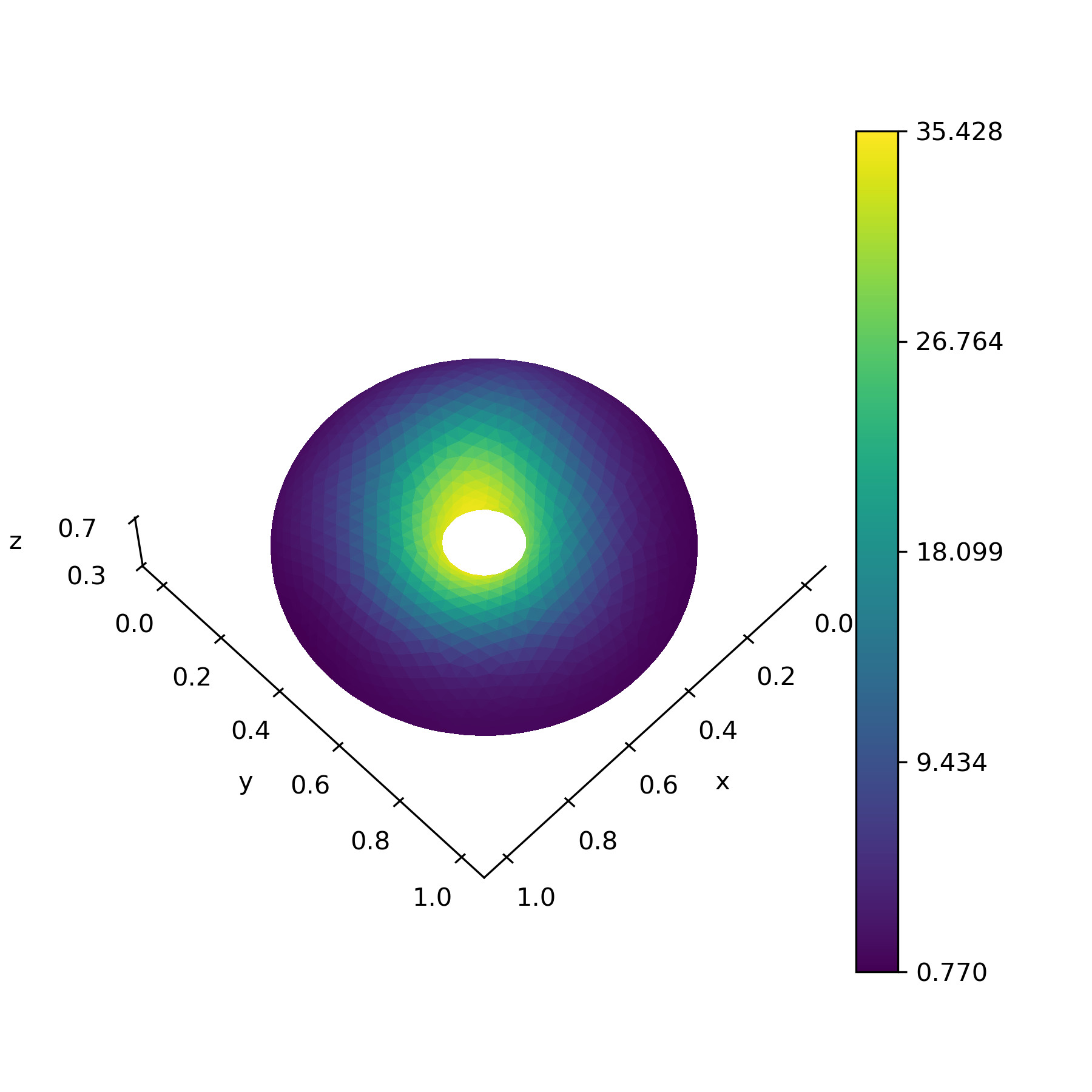}
    \includegraphics[width=2.8cm]{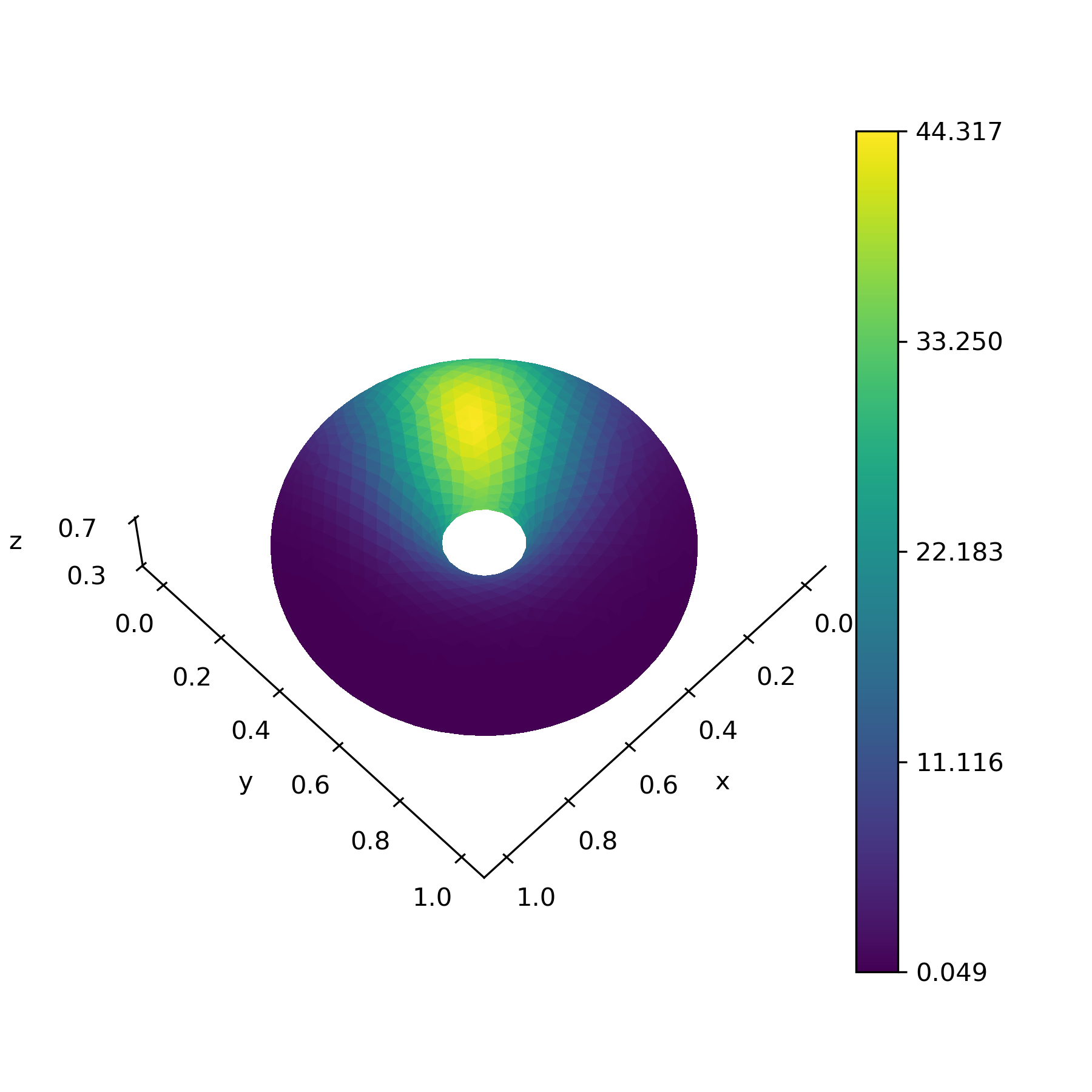}
    \includegraphics[width=2.8cm]{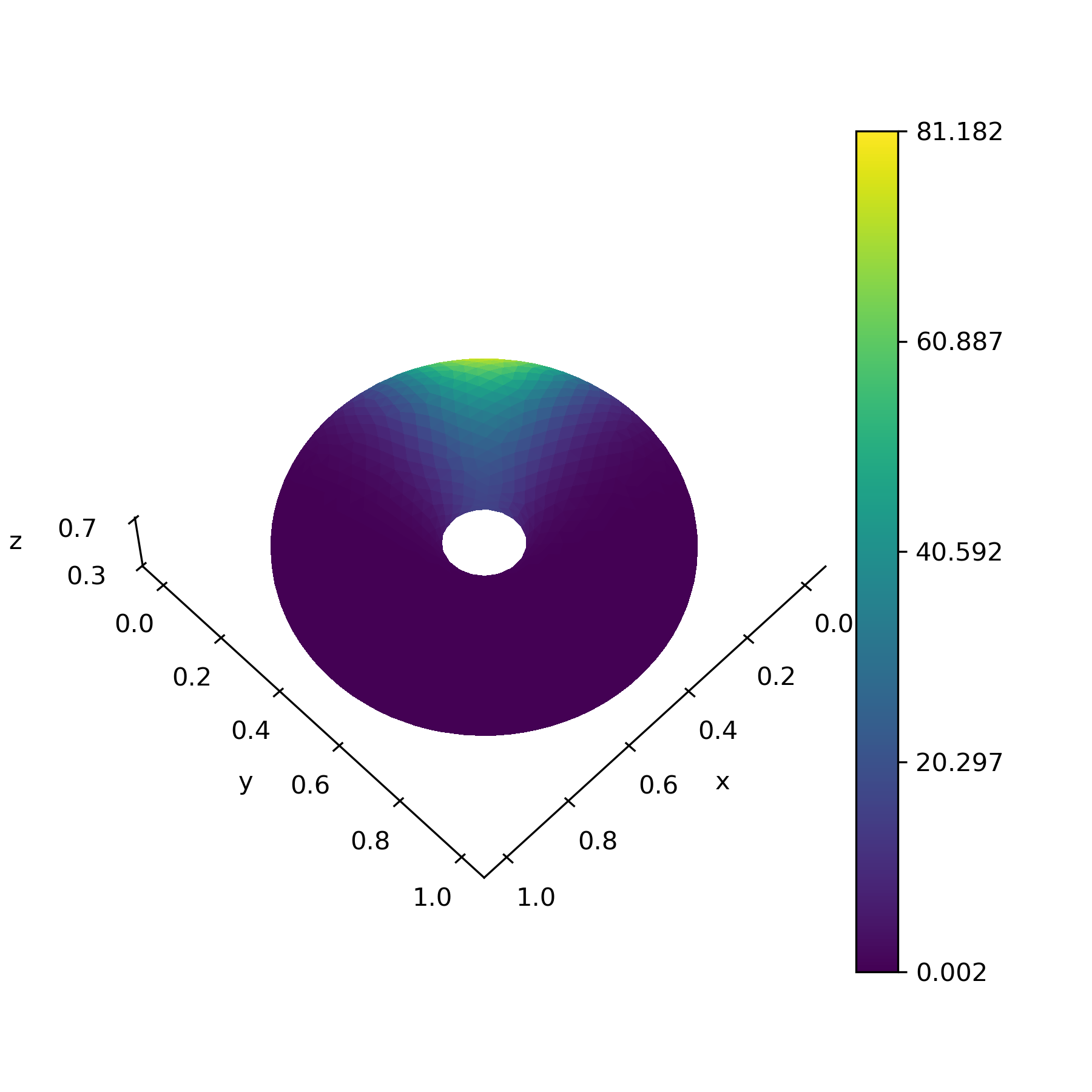}\\
    \vspace{5pt}
    
    \subfigure[$\rho(0, \boldsymbol{x})$]{\includegraphics[width=2.8cm]{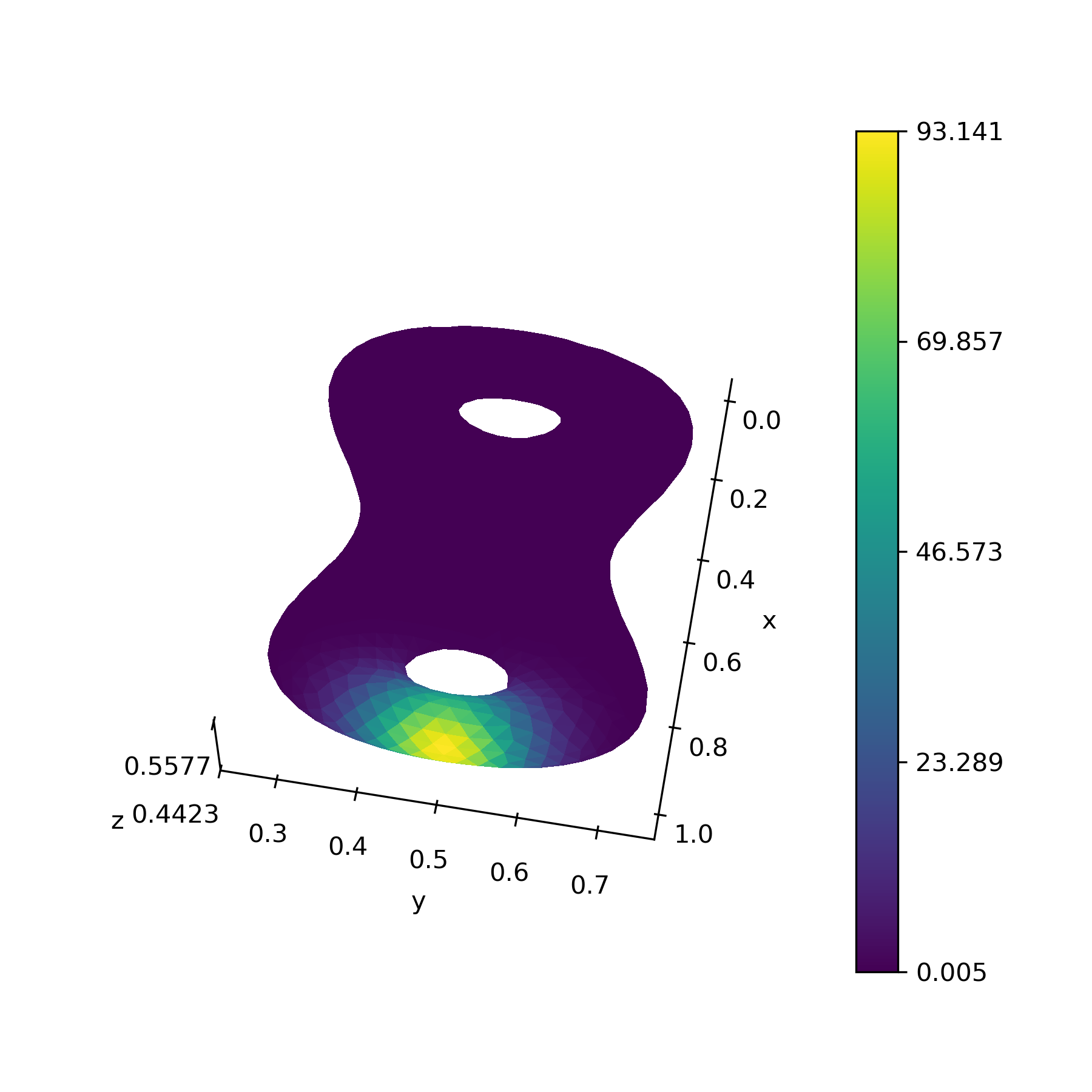}}
    \subfigure[$\rho(0.25, \boldsymbol{x})$]{\includegraphics[width=2.8cm]{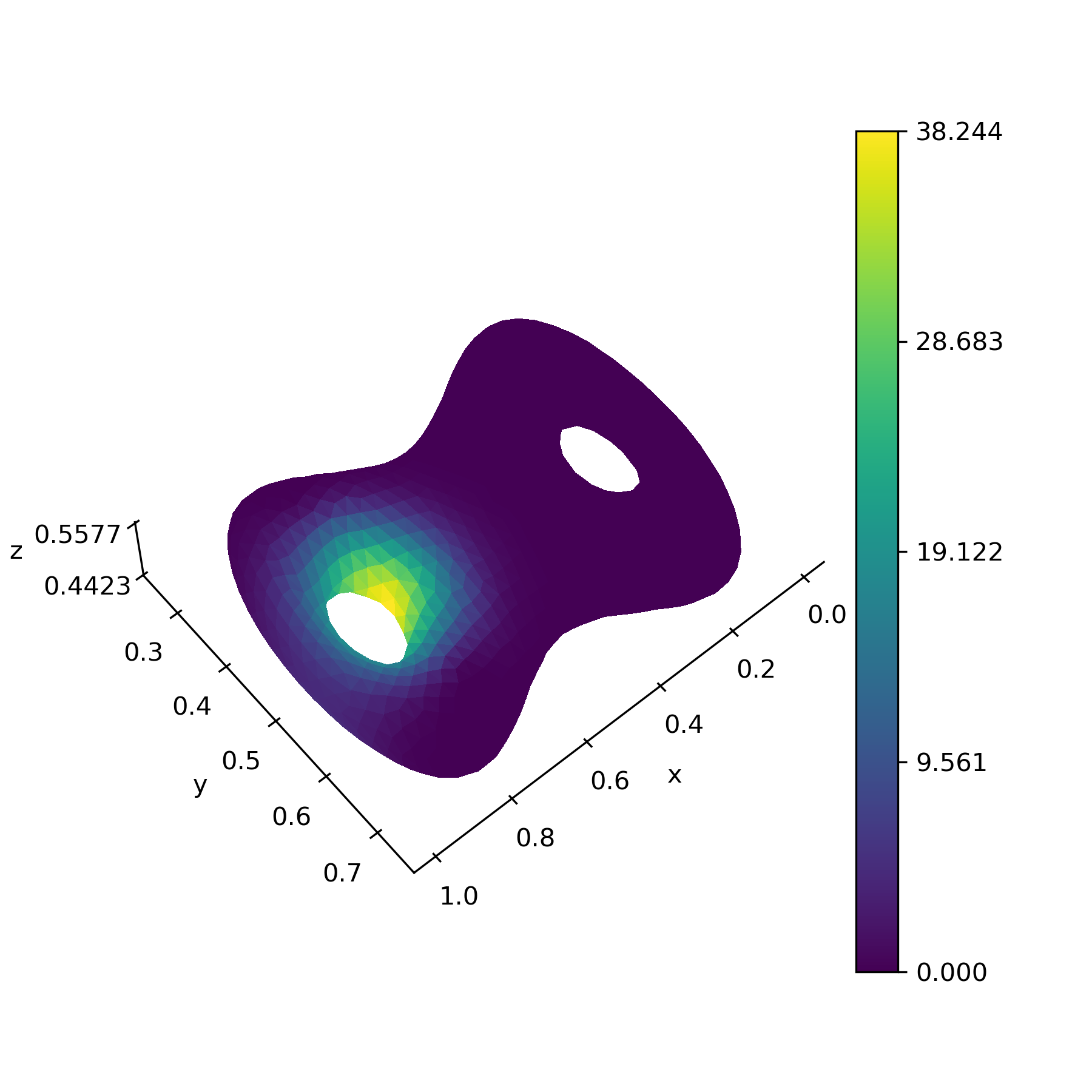}}
    \subfigure[$\rho(0.5, \boldsymbol{x})$]{\includegraphics[width=2.8cm]{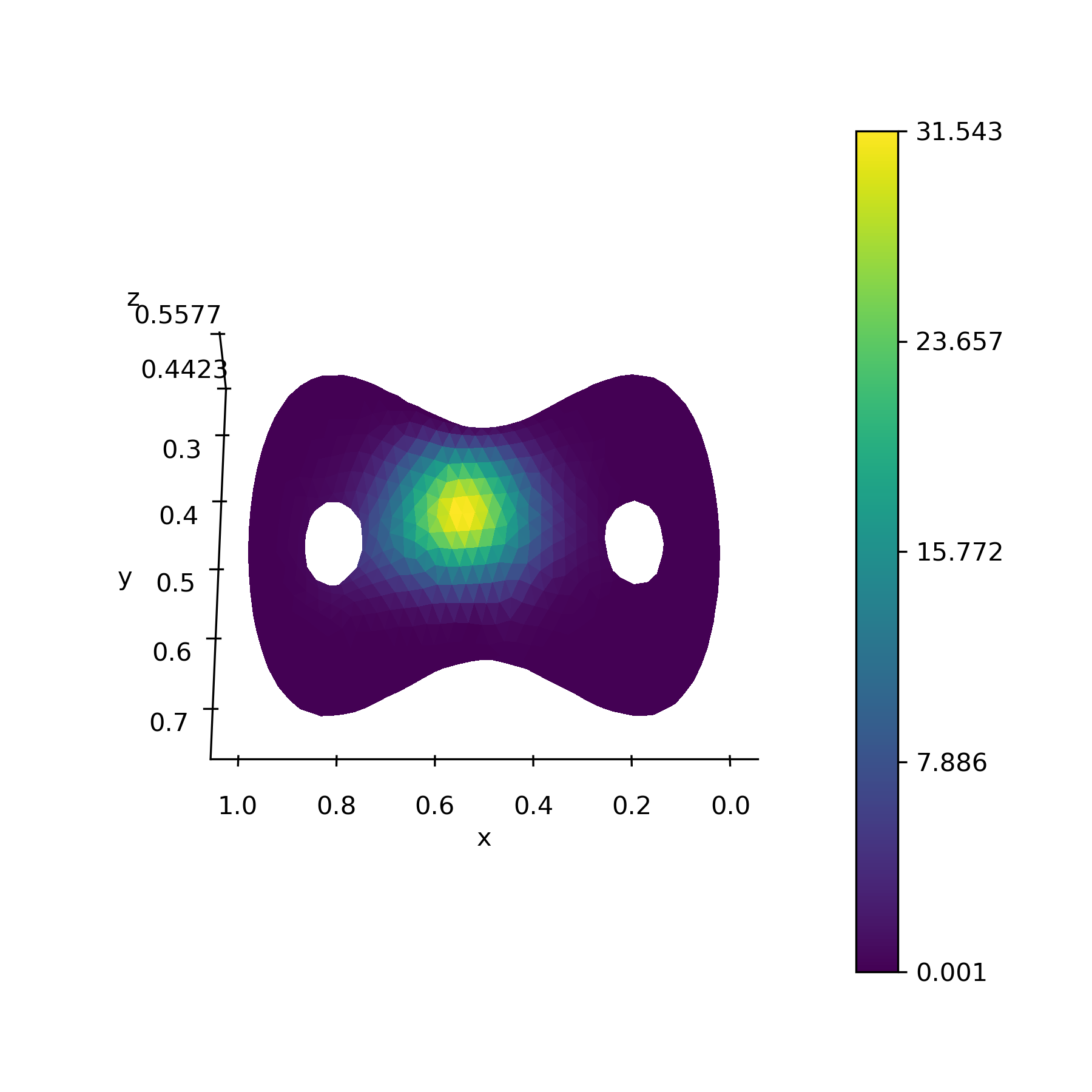}}
    \subfigure[$\rho(0.75, \boldsymbol{x})$]{\includegraphics[width=2.8cm]{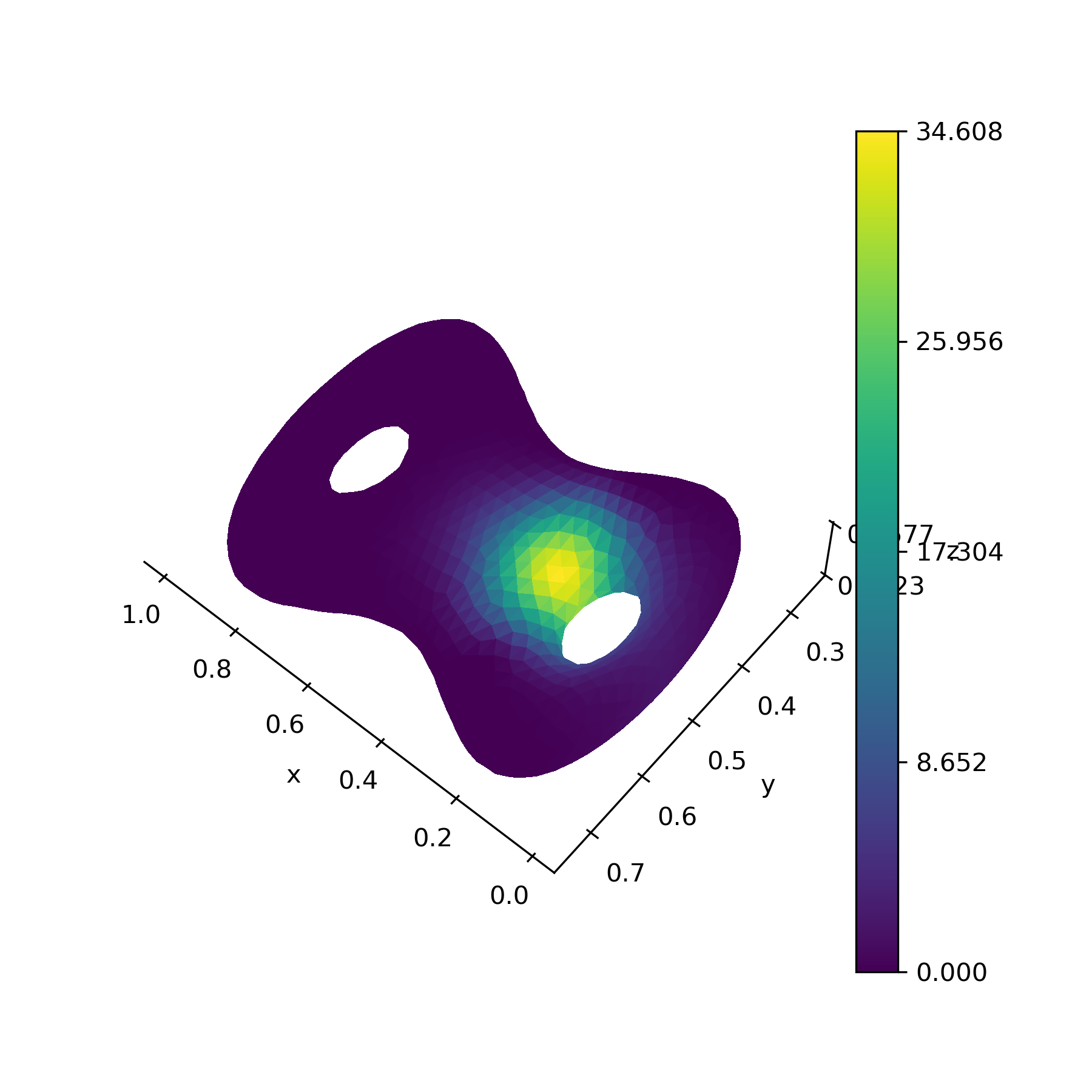}}
    \subfigure[$\rho(1, \boldsymbol{x})$]{\includegraphics[width=2.8cm]{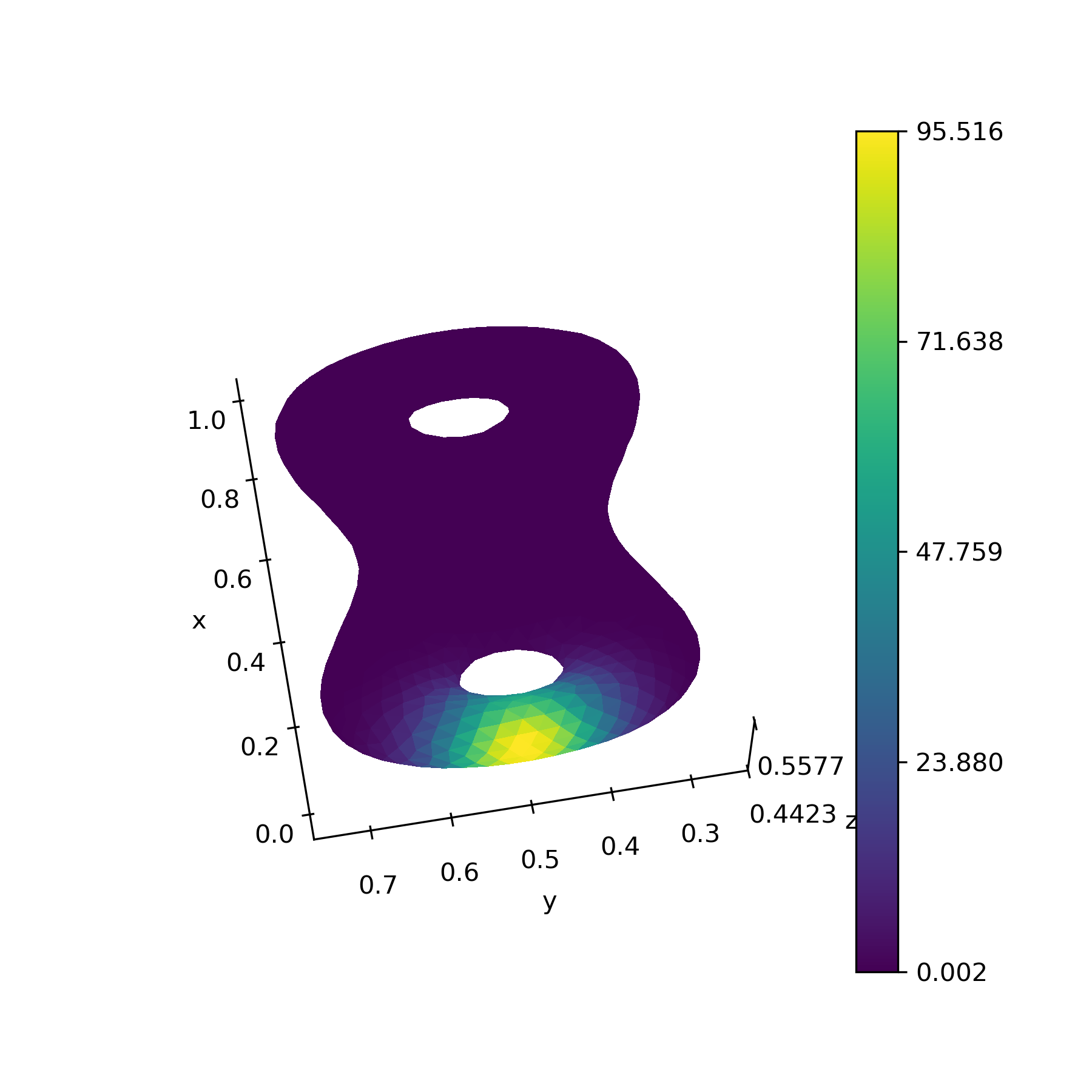}}
    
    \caption{OT tests on point cloud.}
    \label{MOT}
    \end{center}
\end{figure}

Further, we test the Gaussian distribution and mixed Gaussian distribution OT problems on a 2D sphere embedded in four-dimensional space. We directly add a new dimension to the sampled sphere point in three-dimensional and assign it to $0$, that is, $(x, y, z, w)= (x, y, z, 0)$. Then the $zOw$ plane is rotated by $M_R$:
\begin{equation*}
    \begin{aligned}
        M_R = \left(\begin{array}{cccc}
            1 & 0 & 0 & 0 \\
            0 & 1 & 0 & 0 \\
            0 & 0 & \frac{\sqrt{2}}{2} & -\frac{\sqrt{2}}{2} \\
            0 & 0 & \frac{\sqrt{2}}{2} & \frac{\sqrt{2}}{2} \\
        \end{array}\right).
    \end{aligned}
\end{equation*}
Then, the point cloud is obtained. In this experiment, we used the parameter $\lambda_{c}=\lambda_{hj}=10$, $\lambda_{ic}=1000$. In Table \ref{tab:MOT-4}, we give the settings of the initial distribution $\rho_{0}(\boldsymbol{x})$ and the terminating distribution $\rho_{1}(\boldsymbol{x})$ in four-dimensional space.
\begin{table}[htbp]
    \centering
    \caption{Initial distribution $\rho_{0}(\boldsymbol{x})$, target distribution $\rho_{1}(\boldsymbol{x})$ for MOT testing in four-dimensional space.}
    \label{tab:MOT-4}
    \begin{tabular}{l|cc}
        \hline
        Test & $\rho_{0}(\boldsymbol{x})$ & $\rho_{1}(\boldsymbol{x})$\\
        \hline
        S-G4 & $\hat{\rho}_{G}(\boldsymbol{x}, [0.5, 0.5, \frac{\sqrt{2}}{2}, \frac{\sqrt{2}}{2}], 0.05\cdot\mathbf{I})$ & $\hat{\rho}_{G}(\boldsymbol{x}, [0.5, 0.5, 0, 0], 0.05\cdot\mathbf{I})$ \\
        S-MG4 & $\hat{\rho}_{G}(\boldsymbol{x}, [0.5, 0.5, \frac{\sqrt{2}}{2}, \frac{\sqrt{2}}{2}], 0.05\cdot\mathbf{I})$ & $\hat{\rho}_{G}(\boldsymbol{x}, [\frac{1+i}{2}, \frac{1+j}{2}, \frac{\sqrt{2}}{4}, \frac{\sqrt{2}}{4}], 0.05\cdot\mathbf{I}), (|i|+|j|)=1, i,j\in\mathbb{Z}$ \\
        \hline
    \end{tabular}
\end{table}

\begin{figure}[htbp]
    \begin{center}
    \includegraphics[width=2.8cm]{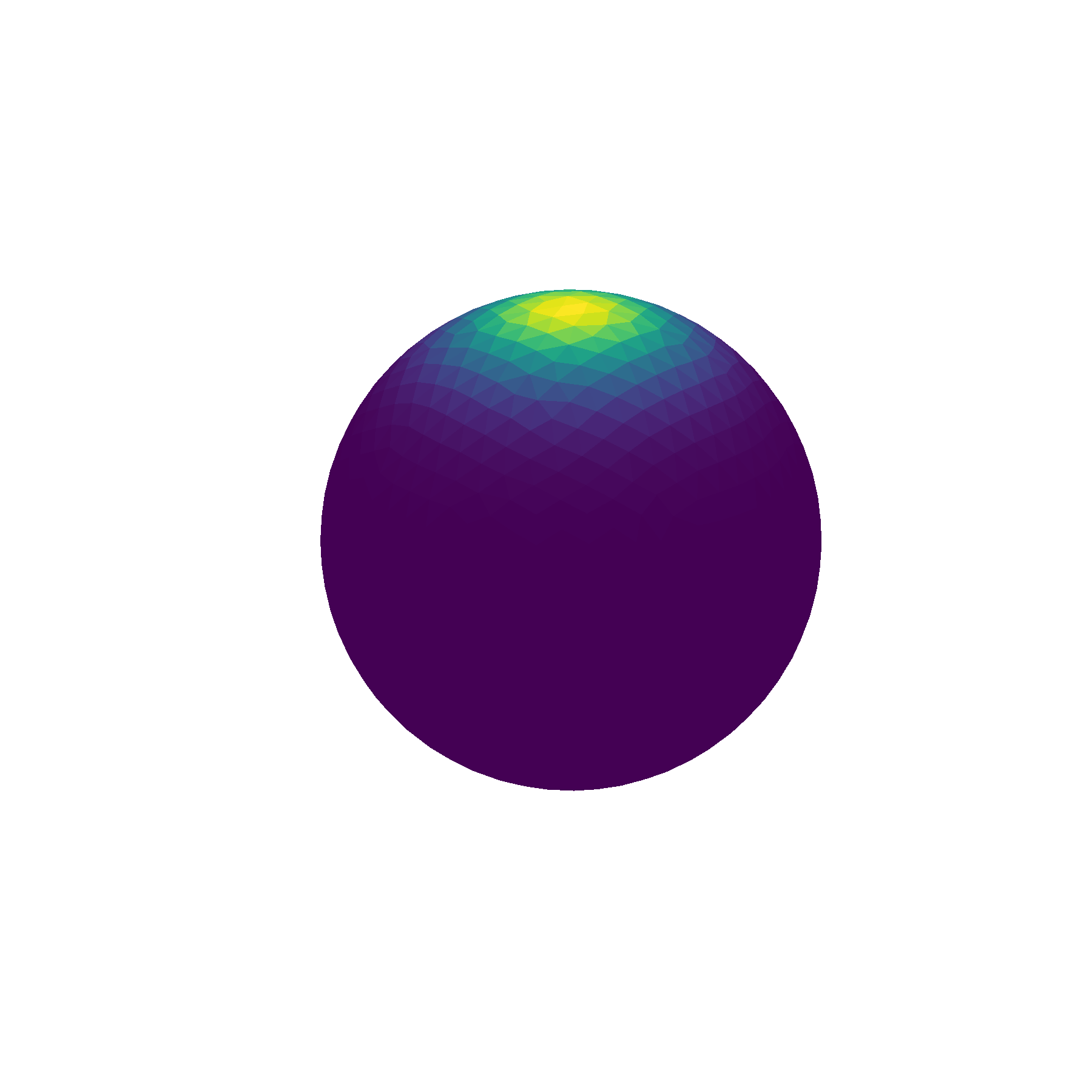}
    \includegraphics[width=2.8cm]{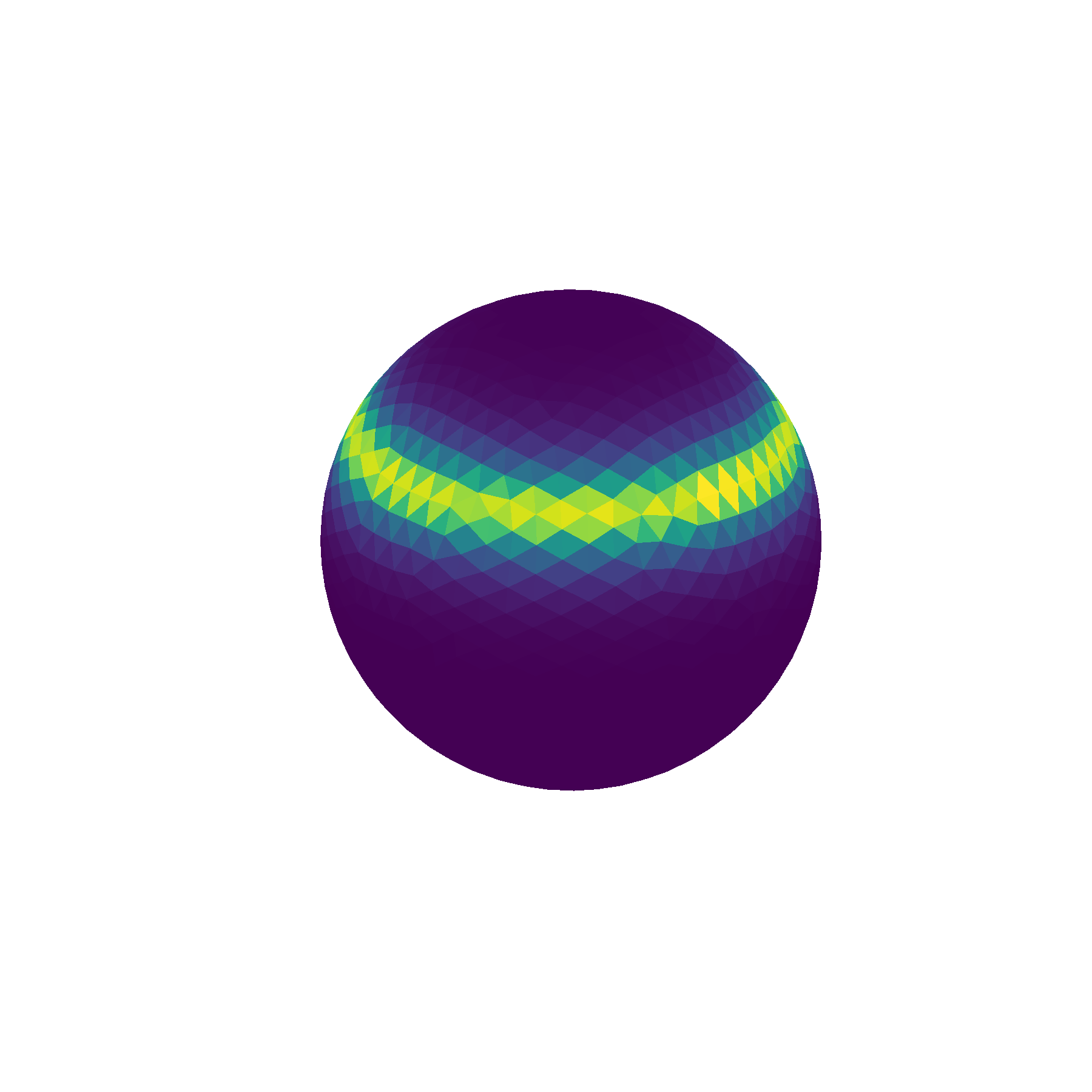}
    \includegraphics[width=2.8cm]{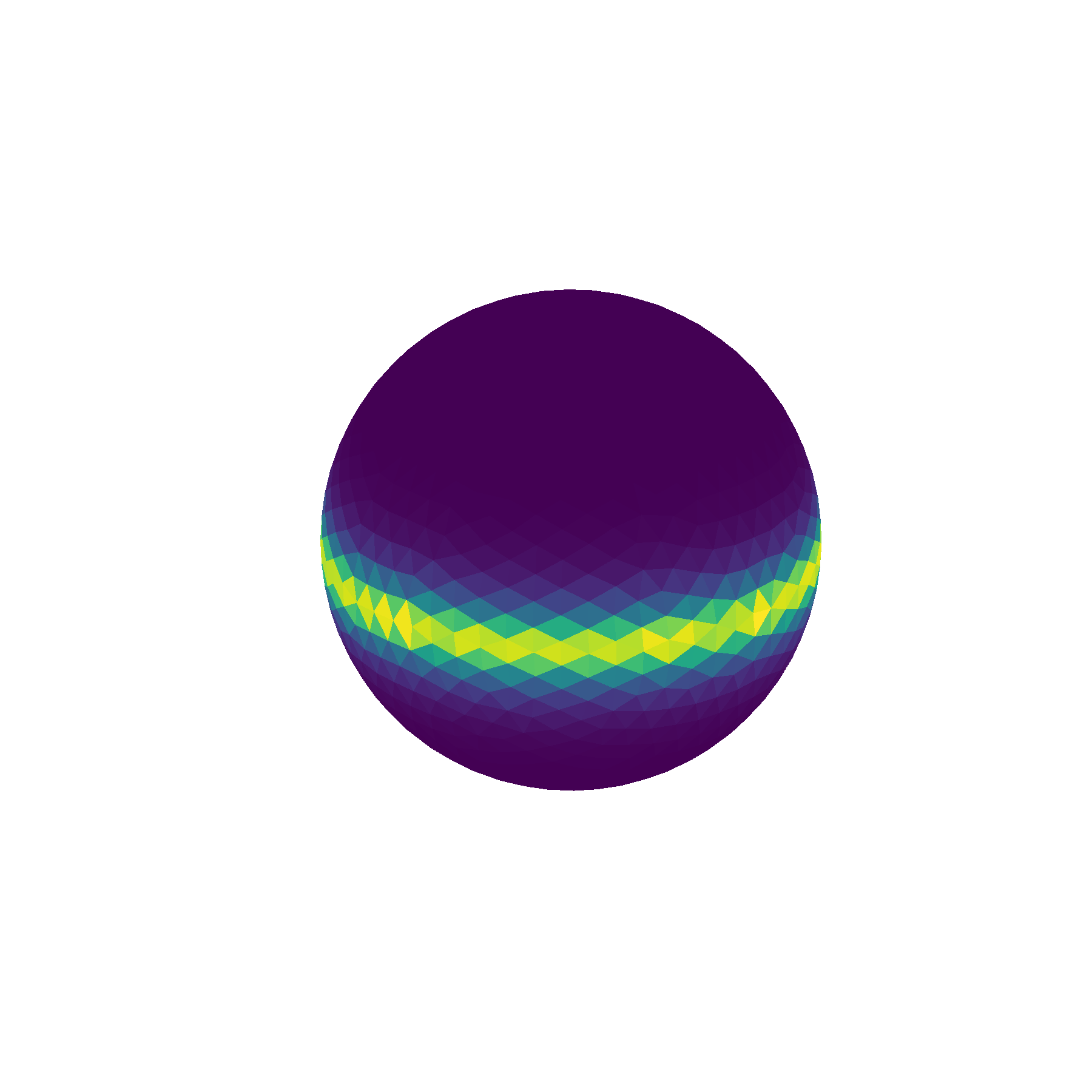}
    \includegraphics[width=2.8cm]{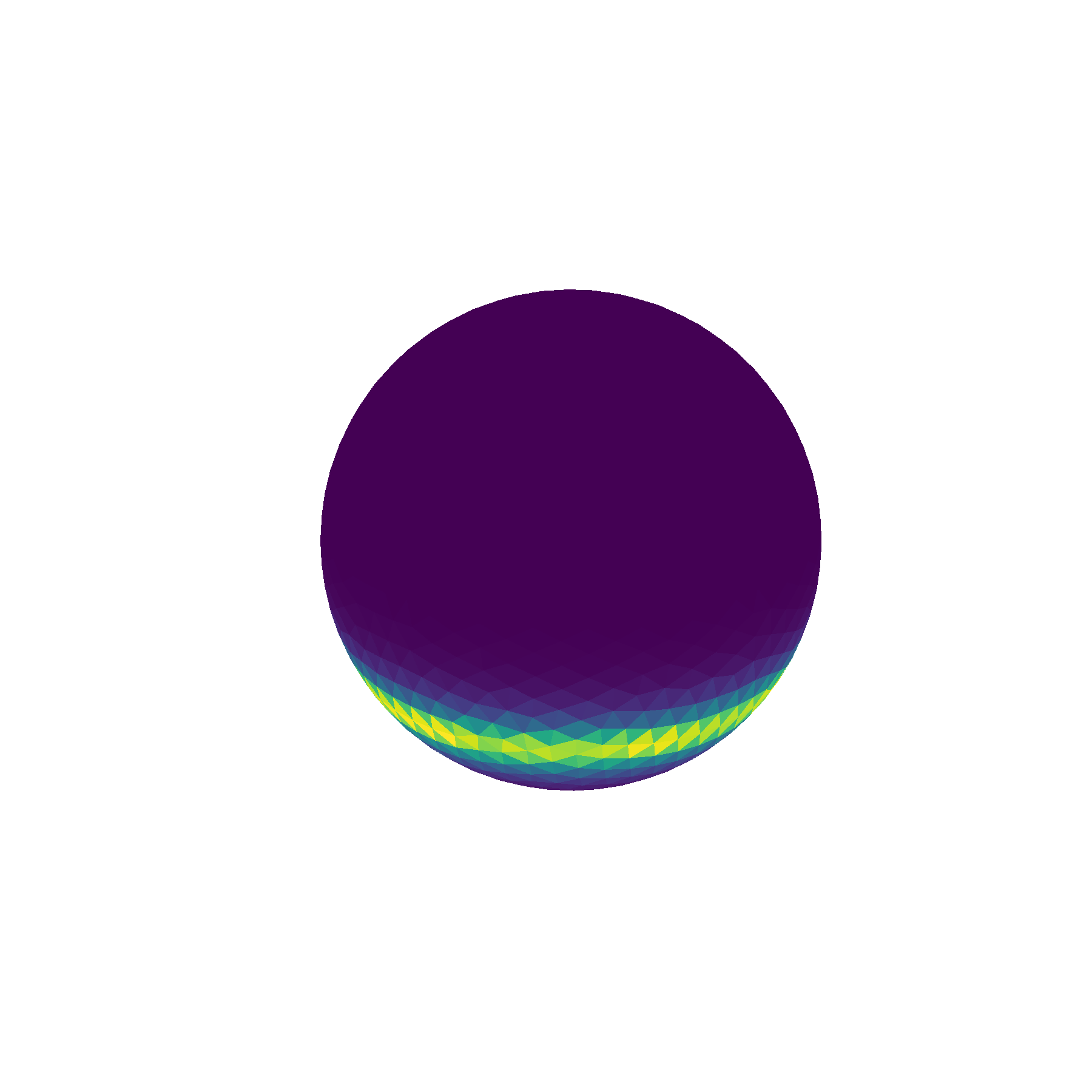}
    \includegraphics[width=2.8cm]{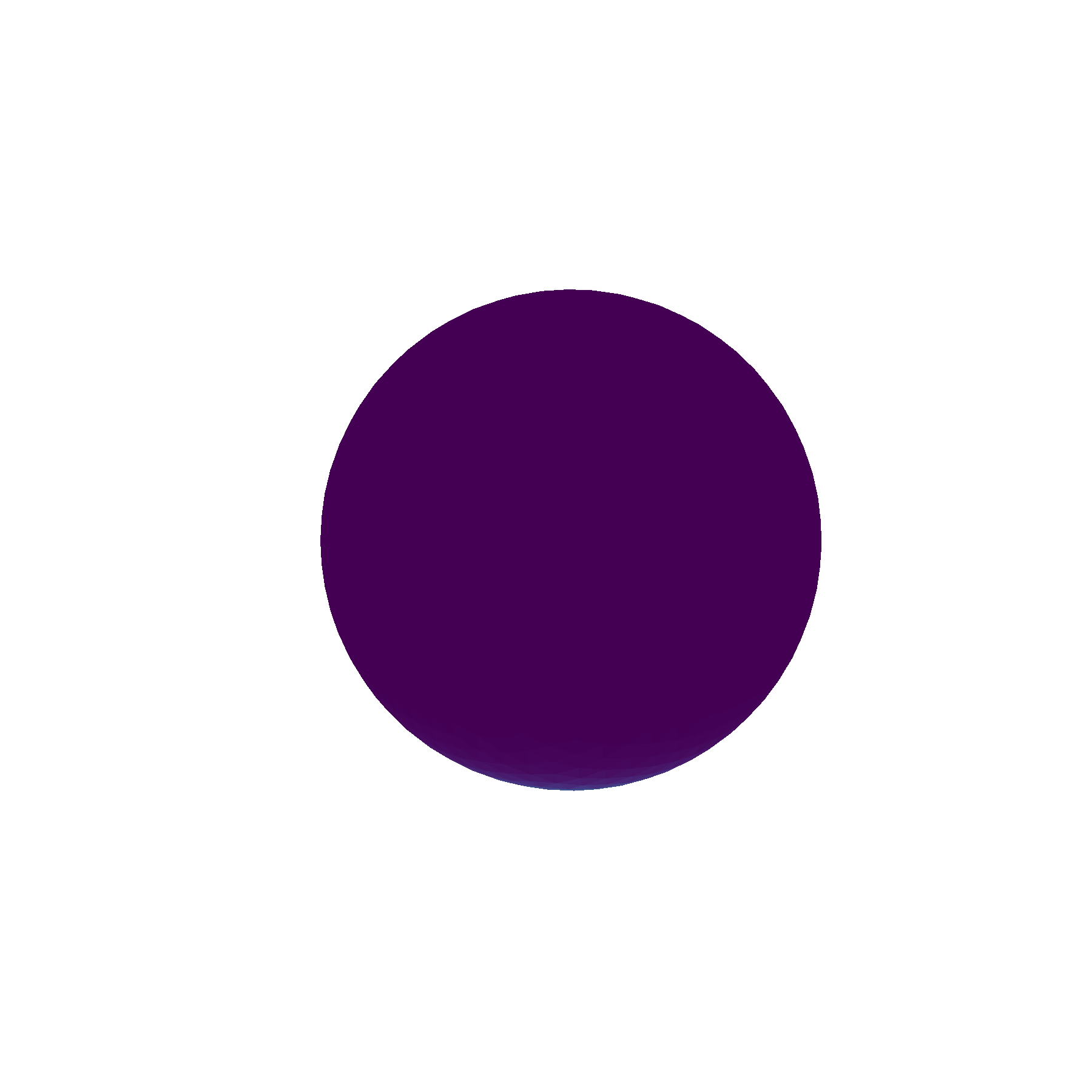}\\
    \vspace{5pt}
    
    \subfigure[$\rho(0, \boldsymbol{x})$]{\includegraphics[width=2.8cm]{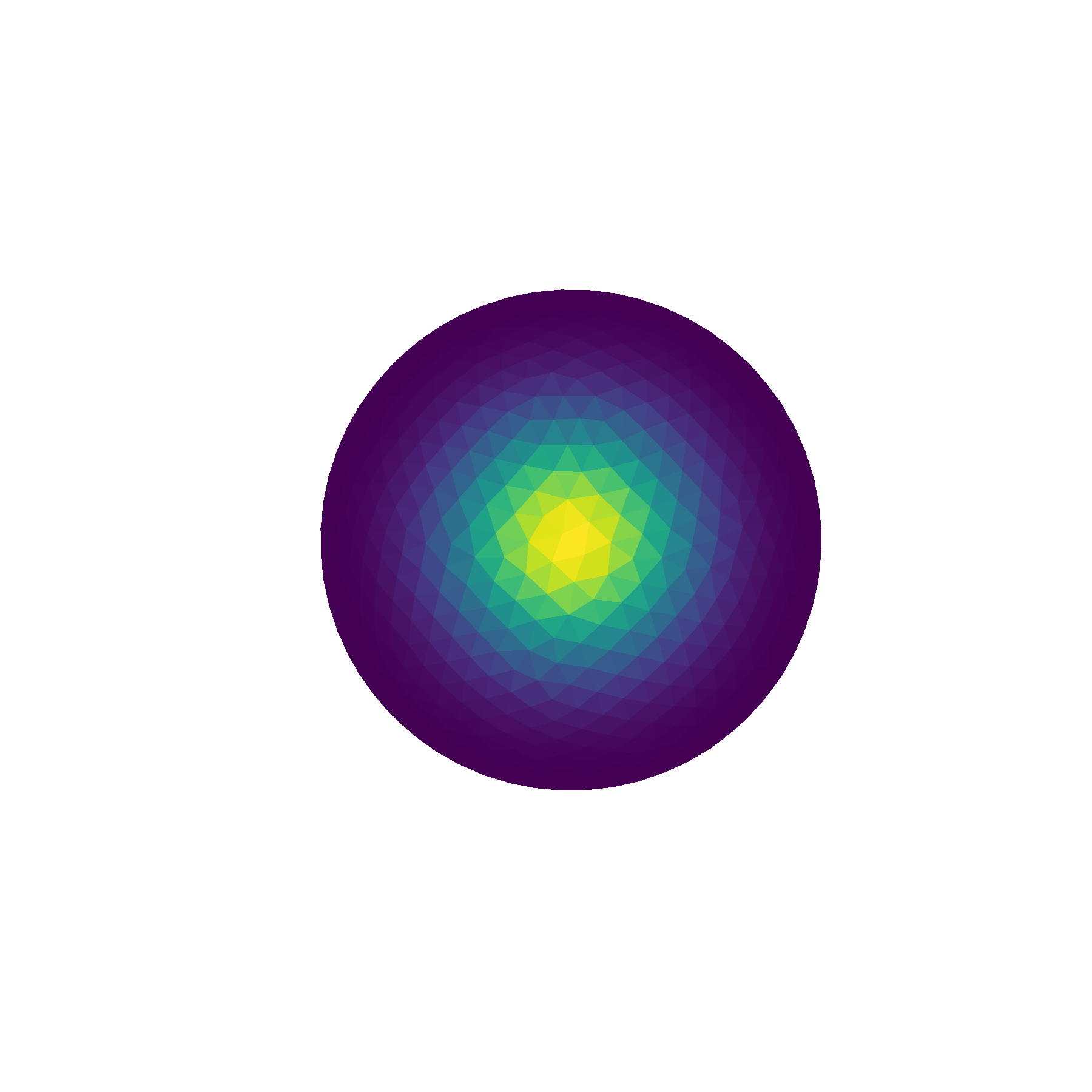}}
    \subfigure[$\rho(0.25, \boldsymbol{x})$]{\includegraphics[width=2.8cm]{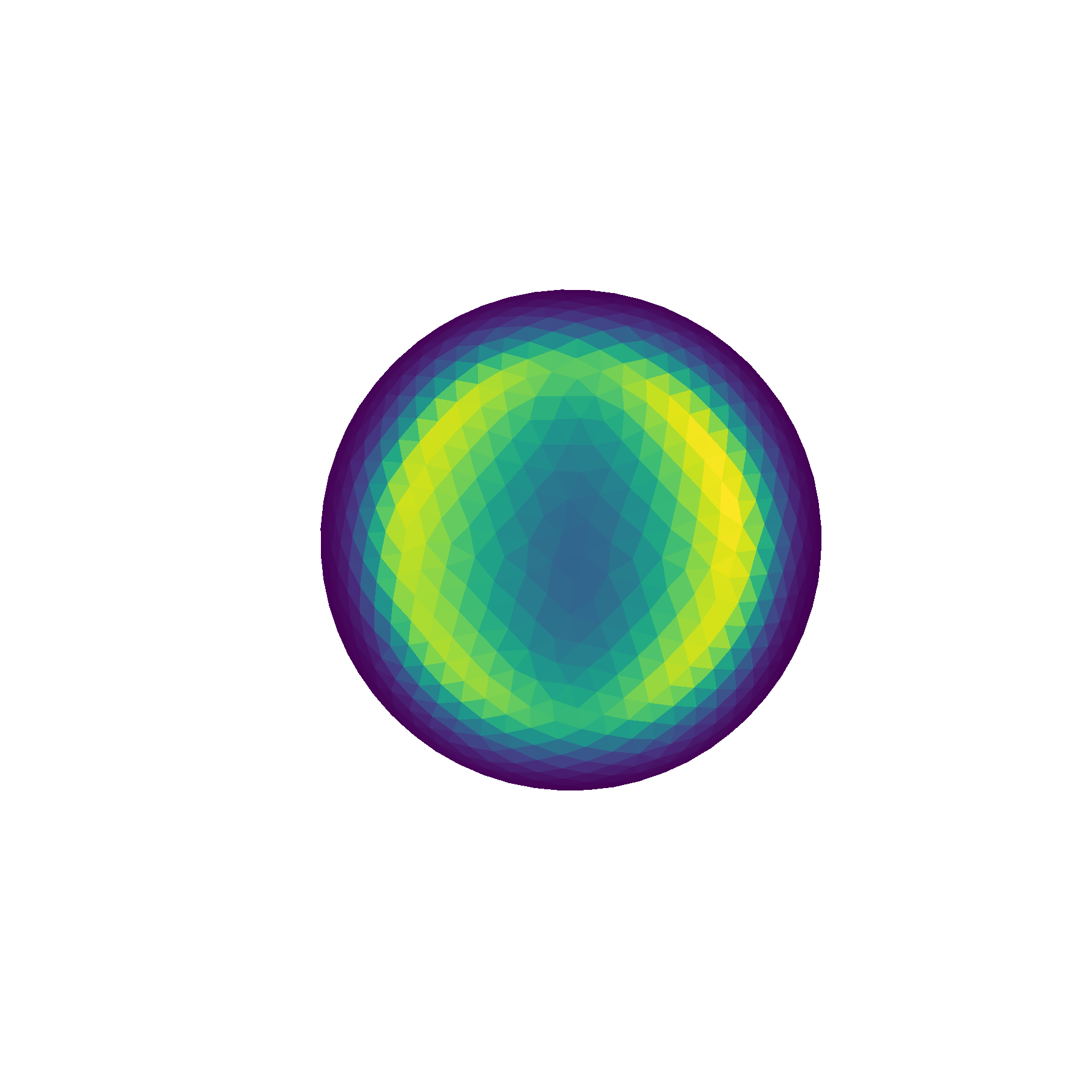}}
    \subfigure[$\rho(0.5, \boldsymbol{x})$]{\includegraphics[width=2.8cm]{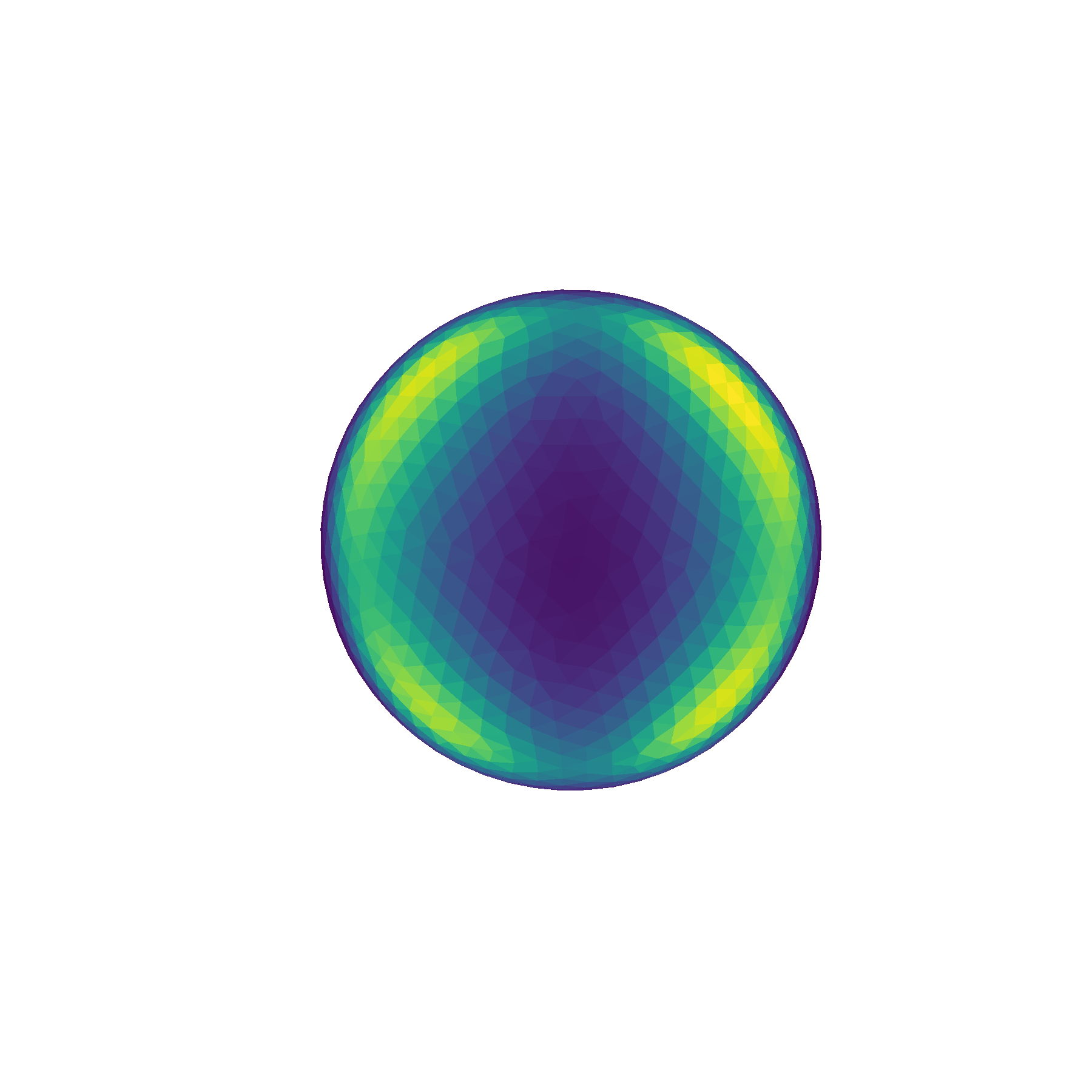}}
    \subfigure[$\rho(0.75, \boldsymbol{x})$]{\includegraphics[width=2.8cm]{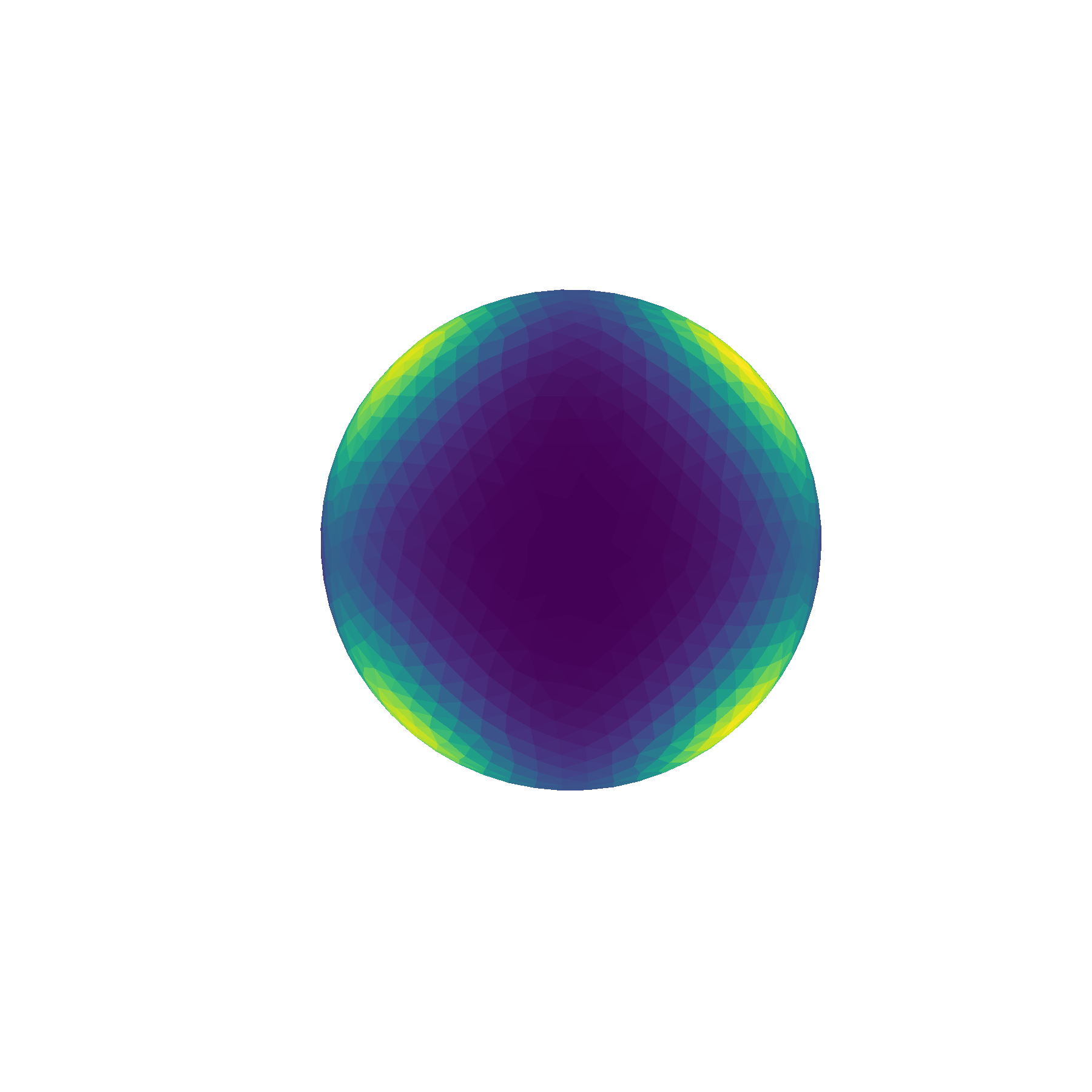}}
    \subfigure[$\rho(1, \boldsymbol{x})$]{\includegraphics[width=2.8cm]{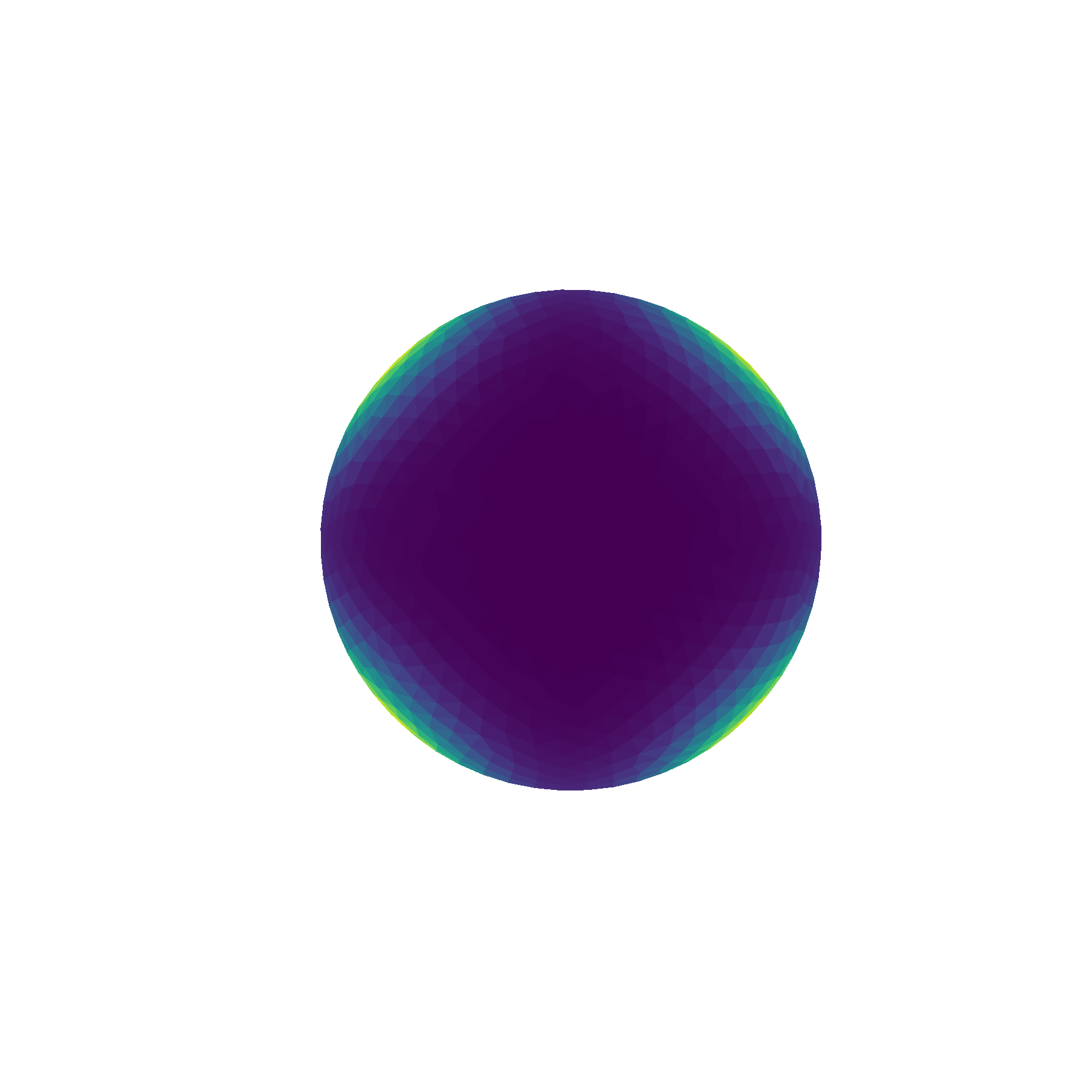}}
    
    \caption{OT tests on point cloud in four-dimensional space.}
    \label{MOT-4}
    \end{center}
\end{figure}
As shown in Figures \ref{MOT-4}, our method can still capture the mass transport trajectory in higher dimensional space. The losses given in Table \ref{tab:suface-loss} also verify the validity of our method.

\subsection{UOT on point cloud}\label{Sec 5.3}

With the success of the MOT described above, we next turn our attention to the MUOT. 
The mixed Gaussian UOT problem on the \textit{sphere} is taken as our main object of study. 
We set the configuration of $\rho_{0}(\boldsymbol{x})$ and $\rho_{1}(\boldsymbol{x})$ in Table \ref{tab:MUOT}. 
And we use uniform point picking to sample $10$ points at time $t$, isosurface method to sample $1158$ points on Sphere (coordinate $(x, y, z)$). 
The main parameters are set by us to $\eta=2$ and $\lambda_{c}=\lambda_{hj}=1$, $\lambda_{ic}=1000$. 
Also back to the mixed Gaussian function as:
\begin{equation}\label{rho_MG}
    \begin{aligned}
        \rho_{MG}(\boldsymbol{x}) = & 
        \frac{1}{2}\hat{\rho}_{G}(\boldsymbol{x}, [0.5, 0, 0.5], 0.01\cdot\mathbf{I}) + 
        \frac{1}{2}\hat{\rho}_{G}(\boldsymbol{x}, [0.5, 1, 0.5], 0.01\cdot\mathbf{I}) \\ 
        & + 
        \frac{1}{2}\hat{\rho}_{G}(\boldsymbol{x}, [0, 0.5, 0.5], 0.01\cdot\mathbf{I}) + 
        \frac{1}{2}\hat{\rho}_{G}(\boldsymbol{x}, [1, 0.5, 0.5], 0.01\cdot\mathbf{I}).
    \end{aligned}
\end{equation}

\begin{table}[htbp]
    \centering
    \caption{Initial distribution $\rho_{0}(\boldsymbol{x})$, target distribution $\rho_{1}(\boldsymbol{x})$ for MUOT testing.}
    \label{tab:MUOT}
    \begin{tabular}{l|cc}
        \hline
        Test & $\rho_{0}(\boldsymbol{x})$ & $\rho_{1}(\boldsymbol{x})$\\
        \hline
        M1 & $\rho_{MG}(\boldsymbol{x})$ & $\hat{\rho}_{G}(\boldsymbol{x}, [0.5, 0.5, 1.0], 0.01\cdot\mathbf{I})$ \\
        M2 & $\hat{\rho}_{G}(\boldsymbol{x}, [0.5, 0.5, 1.0], 0.01\cdot\mathbf{I})$ & $\rho_{MG}(\boldsymbol{x})$ \\
        \hline
    \end{tabular}
\end{table}

In Figure \ref{MUOT}, we give two different examples for the merging and splitting of mixed Gaussian. The initial and terminal distribution have different mass, unbanlanced optimal transport is still capable to match two distributions due to the help of the mass generation term. However, in Figure \ref{MUOT}, merging and splitting process are not exact symmetric which means that the training of neural network may be stacked in a local minimum. 
Furthermore, here we are considering the imbalance problem, in the process of mass merging (M1) and splitting (M2), the increase and decrease of mass is controlled by the source term $\rho g$ in the equation (\ref{dyMUOTC}).
\begin{figure}[htbp]
    \begin{center}
    \includegraphics[width=2.8cm]{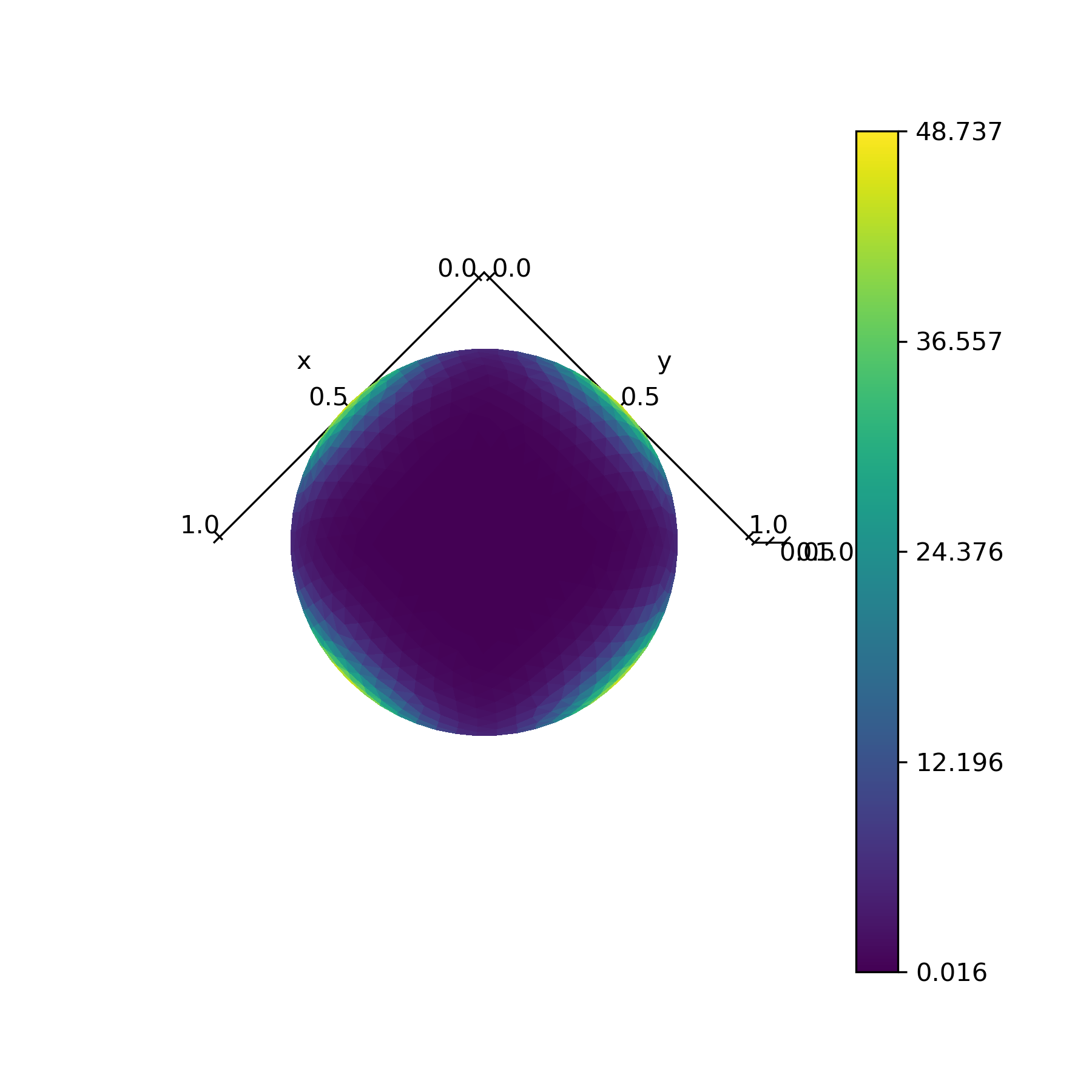}
    \includegraphics[width=2.8cm]{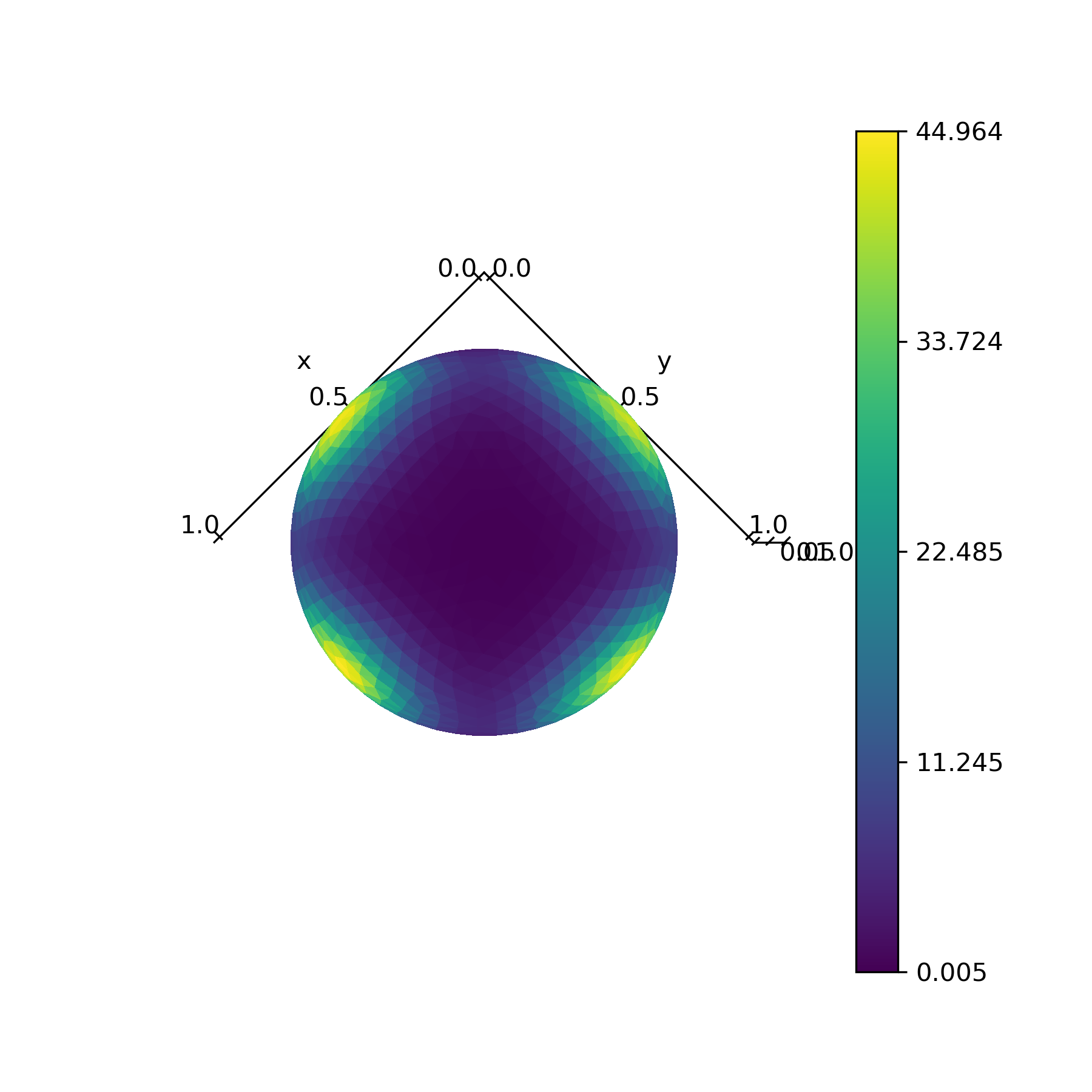}
    \includegraphics[width=2.8cm]{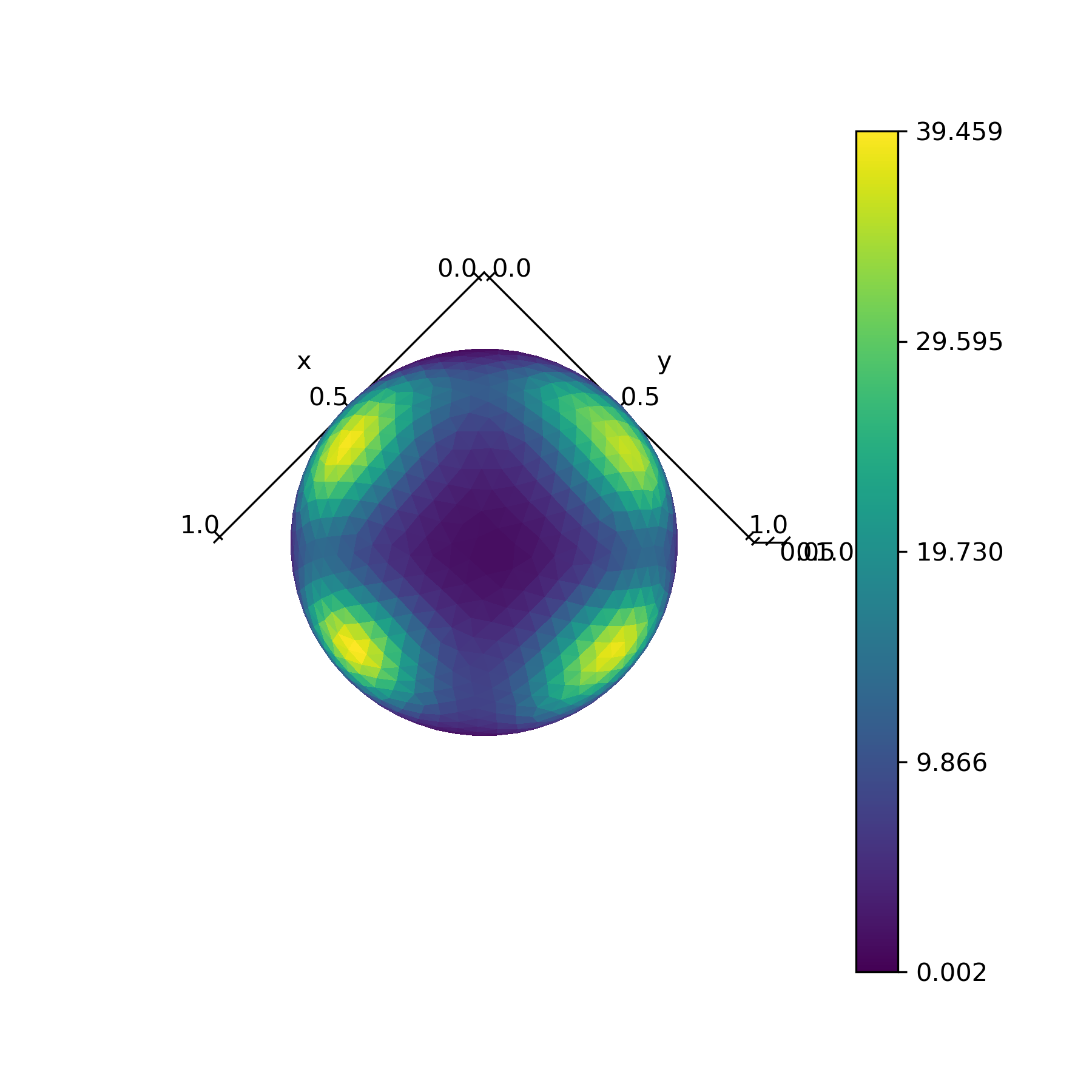}
    \includegraphics[width=2.8cm]{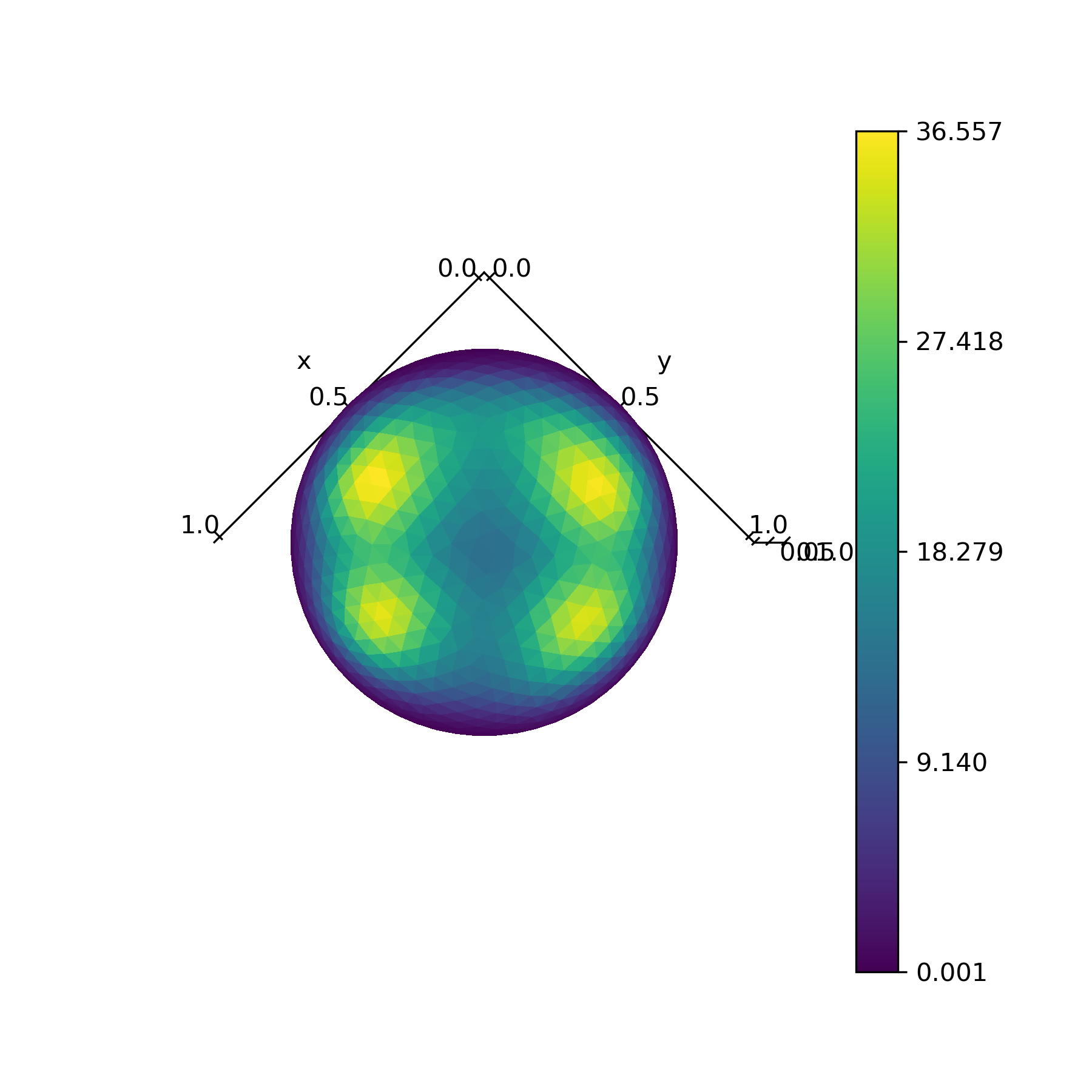}
    \includegraphics[width=2.8cm]{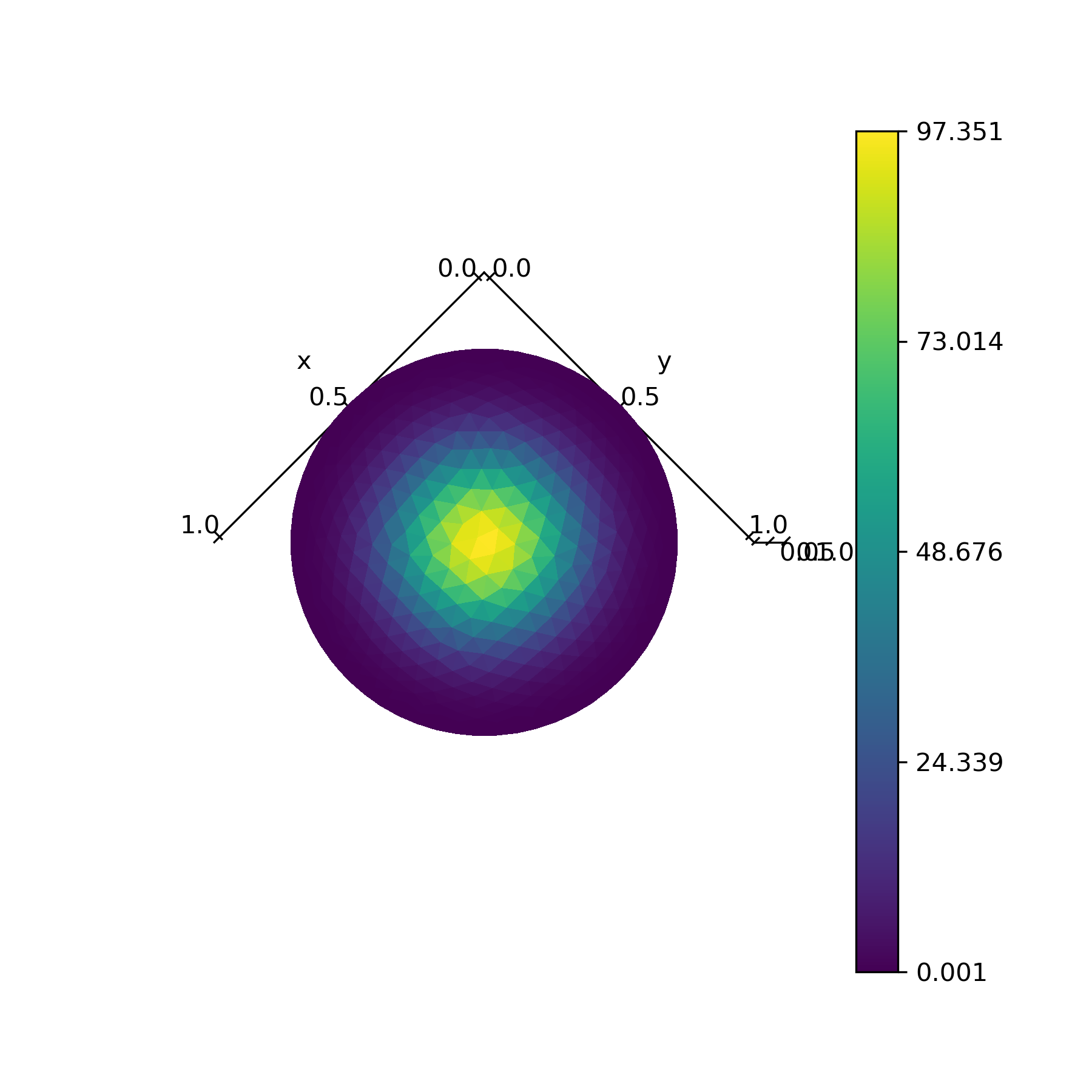}\\
    \vspace{5pt}
    
    \subfigure[$\rho(0, \boldsymbol{x})$]{\includegraphics[width=2.8cm]{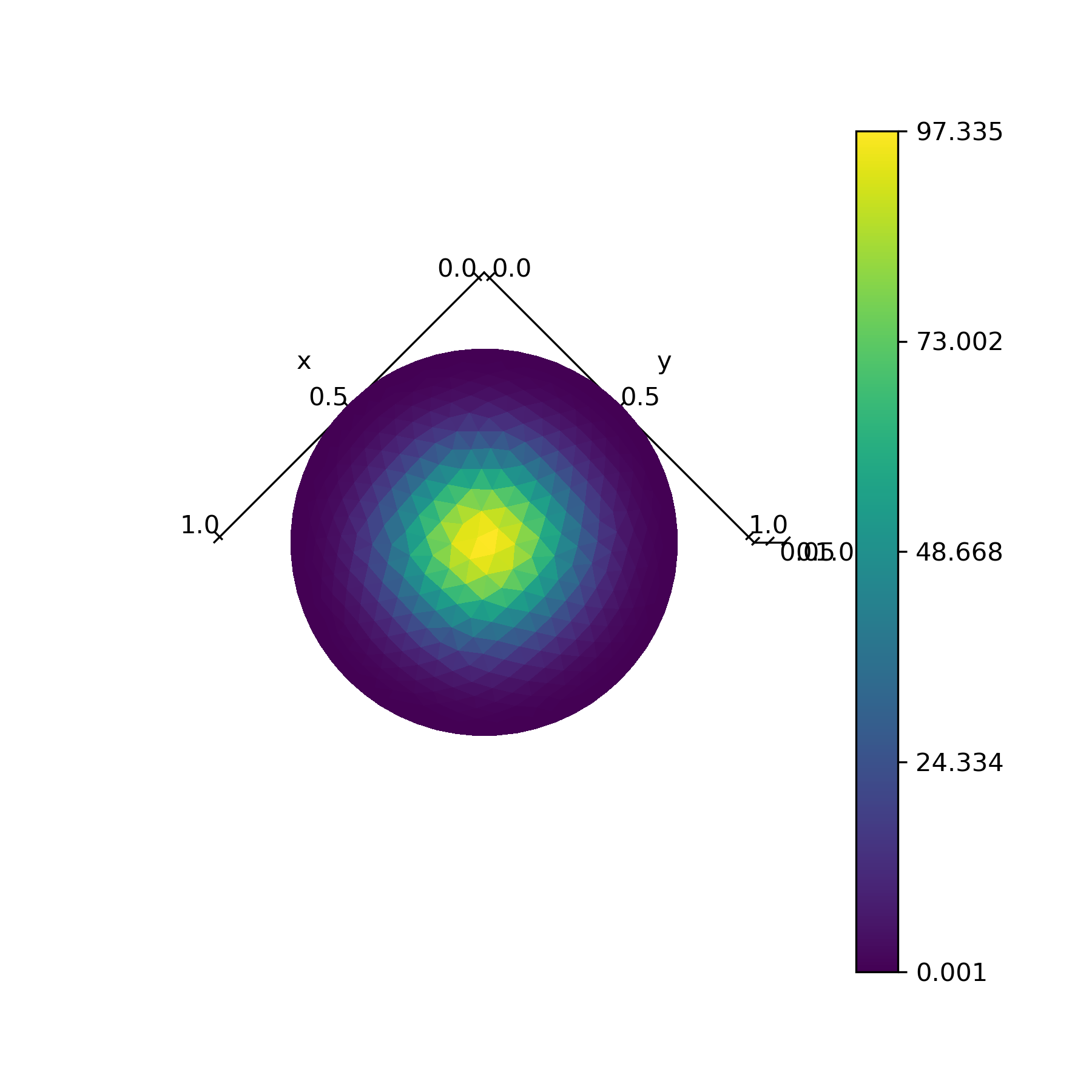}}
    \subfigure[$\rho(0.25, \boldsymbol{x})$]{\includegraphics[width=2.8cm]{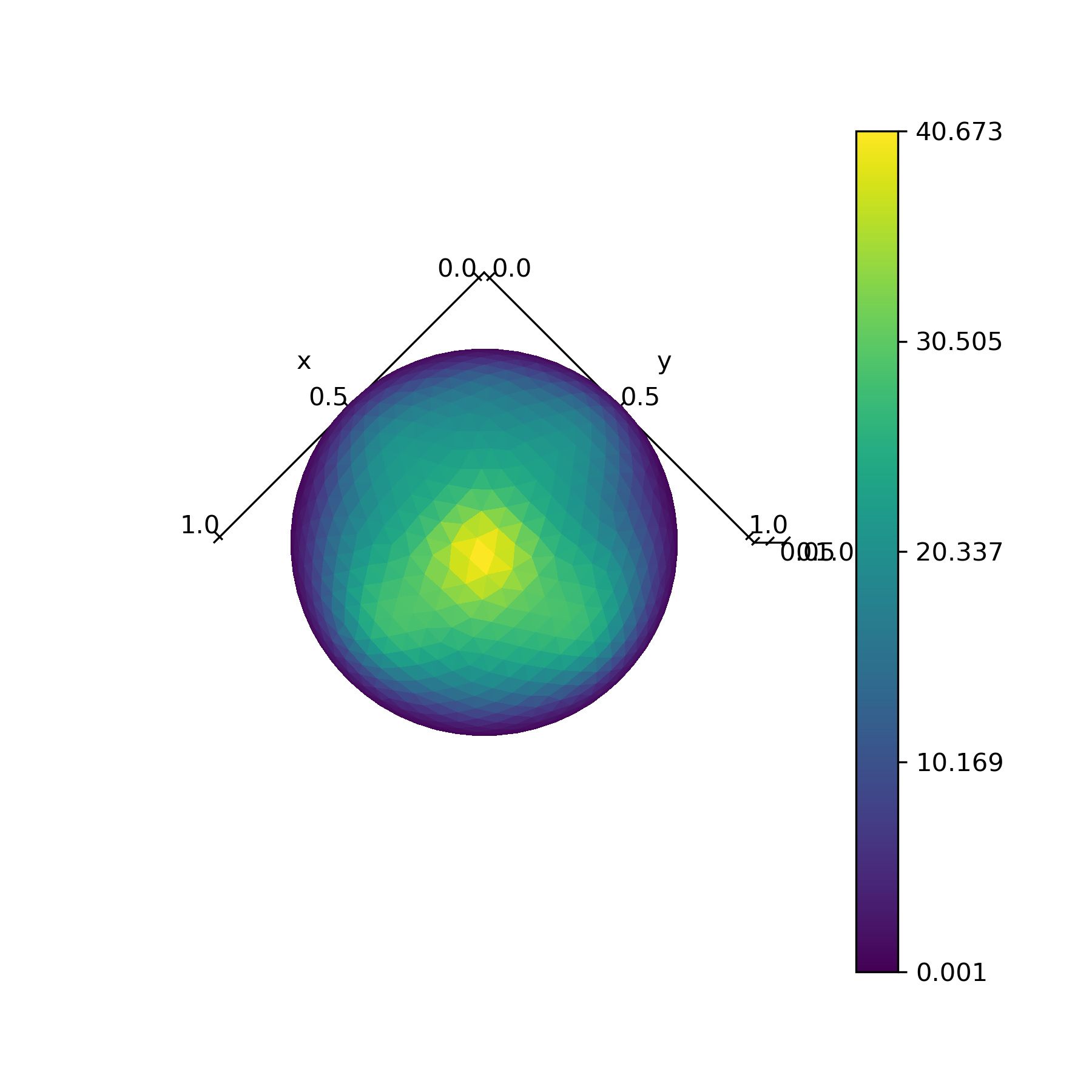}}
    \subfigure[$\rho(0.5, \boldsymbol{x})$]{\includegraphics[width=2.8cm]{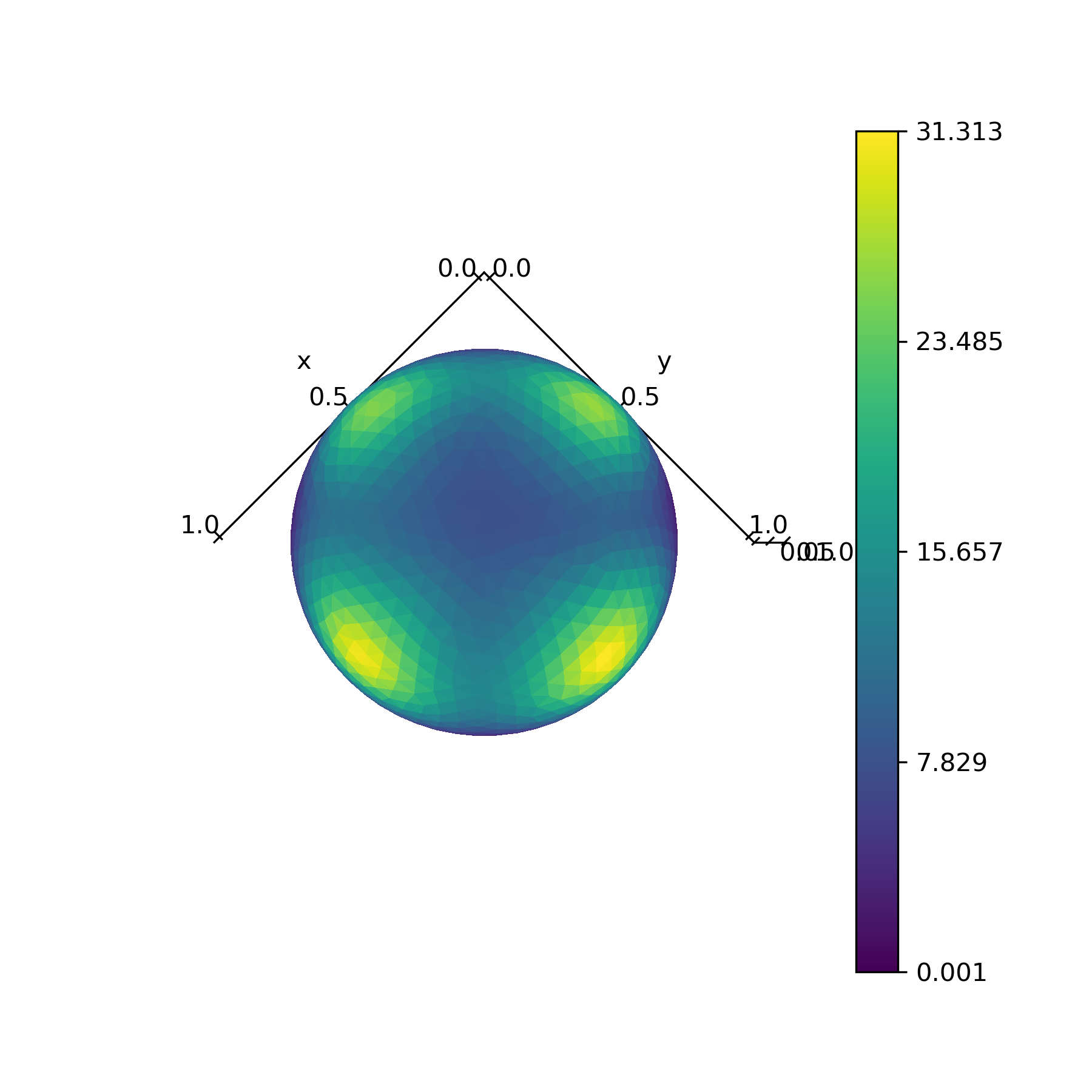}}
    \subfigure[$\rho(0.75, \boldsymbol{x})$]{\includegraphics[width=2.8cm]{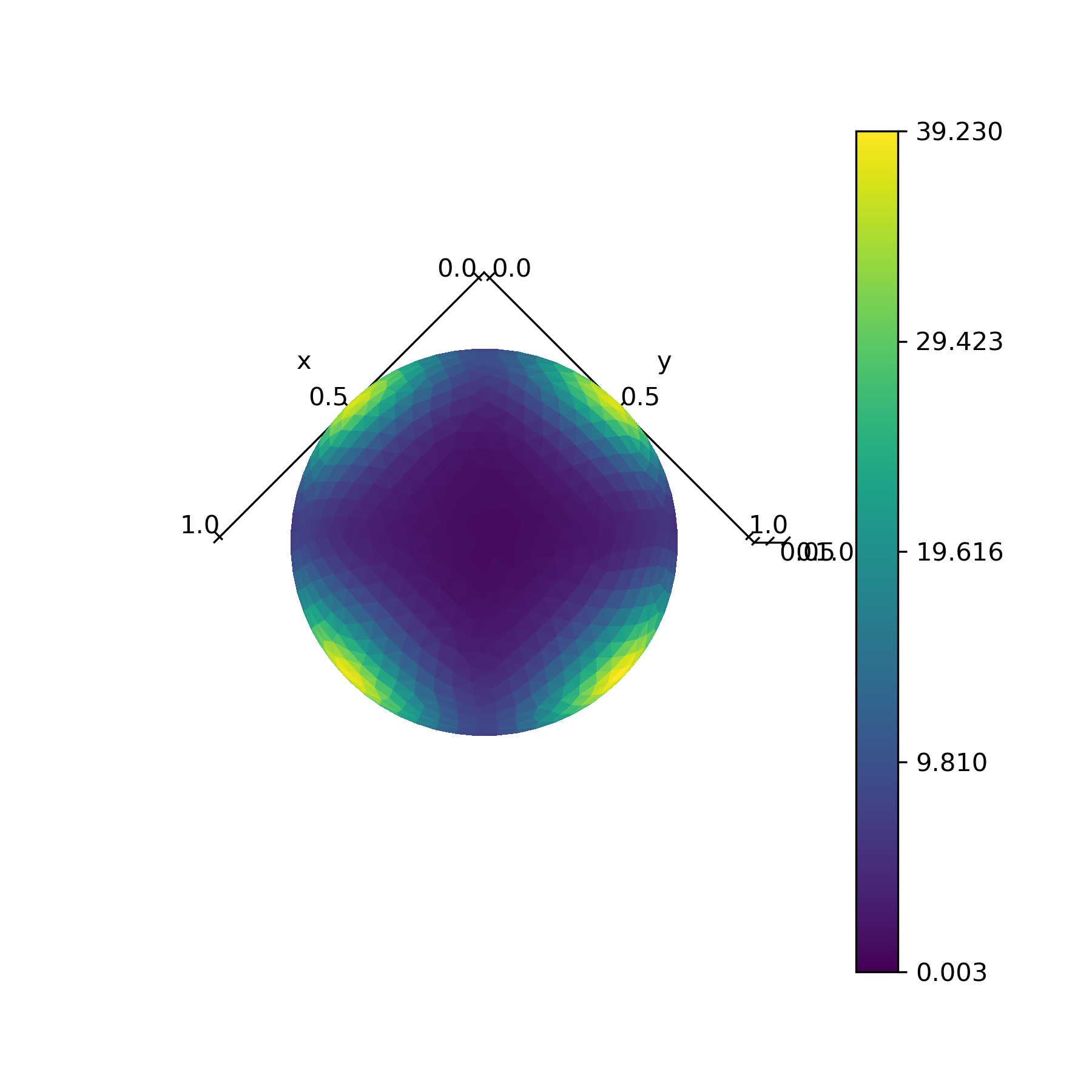}}
    \subfigure[$\rho(1, \boldsymbol{x})$]{\includegraphics[width=2.8cm]{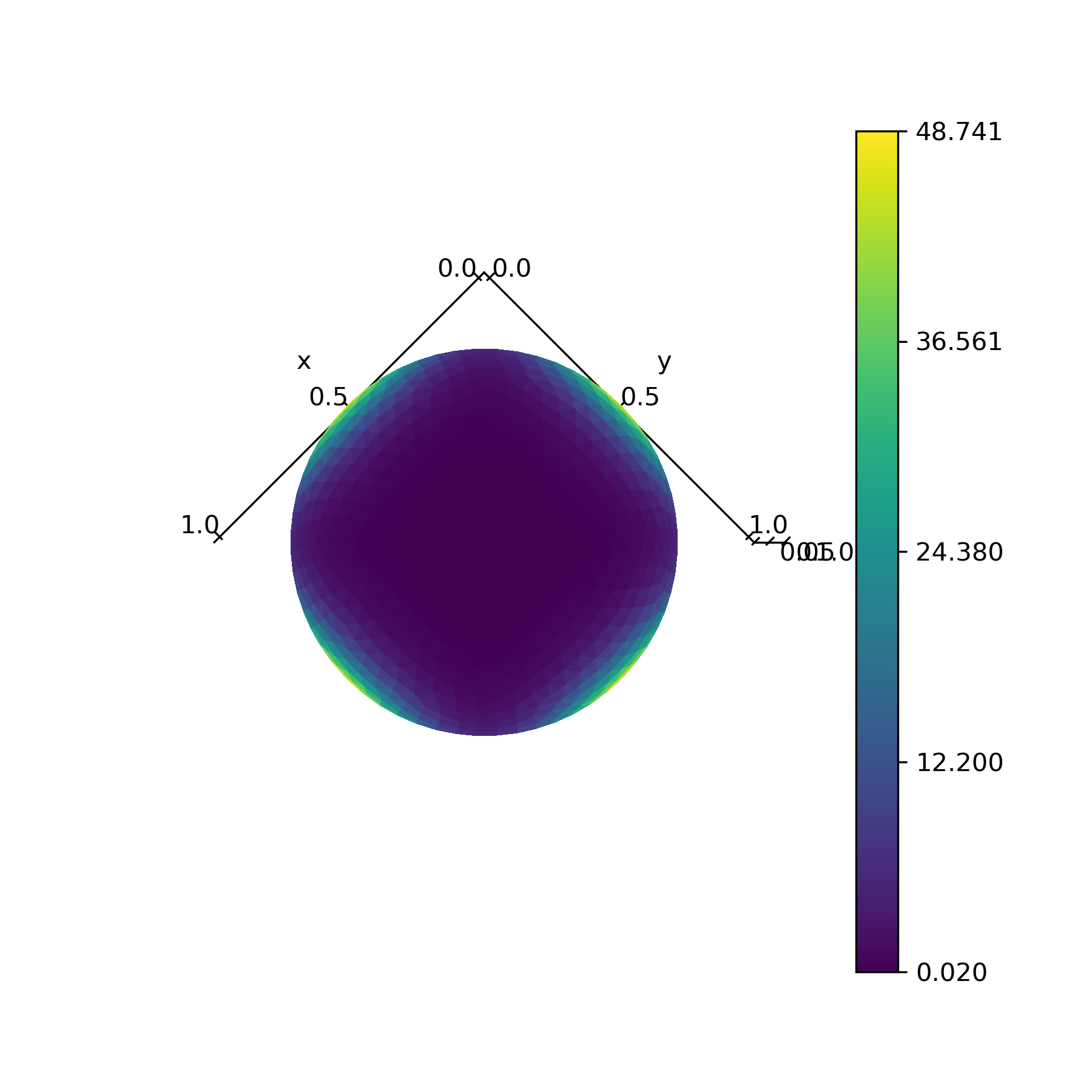}}
    
    \caption{MUOT test on Sphere.}
    \label{MUOT}
    \end{center}
\end{figure}

In addition, we give the values of the residuals of the equations for each loss function in all manifolds instances in Table \ref{tab:suface-loss}. 
In our experiments, we set the stopping threshold, $\delta<0.1$ or the number of iterations is less than 20000. 
In all cases, the residual reduce to a small level. However, for complicate surfaces, the residual may be relatively higher, especially for the residual of the continuous equation. 
\begin{table}[htbp]
    \centering
    \caption{Loss value for all Manifold examples.}
    \label{tab:suface-loss}
    \begin{tabular}{l|cccc}
        \hline
        & \multicolumn{4}{|c}{Manifold Examples} \\
        \hline
        Test & $\mathcal{W}_{M}$ & $\mathcal{L}_{c}$ & $\mathcal{L}_{hj}$ & $\mathcal{L}_{ic}$ \\
        \hline
        Sphere & 1.40e1 & 9.93e-2 & 6.98e-2 & 3.87e-3 \\
        Ellipsoid & 4.75e0 & 4.76e-1 & 9.35e-2 & 1.31e-3 \\
        Peanut & 1.42e0 & 2.98e-1 & 8.64e-2 & 2.21e-2 \\
        Torus & 8.82e0 & 1.21e-1 & 6.48e-2 & 1.06e-3 \\
        Opener & 1.22e0 & 2.95e-1 & 8.62e-2 & 1.81e-3 \\
        \hline
        M1 & 8.97e0 & 1.09e-1 & 8.21e-3 & 5.82e-4 \\
        M2 & 1.08e1 & 9.98e-2 & 7.70e-3 & 4.64e-4 \\
        \hline
        S-G4 & 1.22e1 & 9.76e-2 & 3.66e-2 & 2.21e-2 \\
        S-MG4 & 5.91e0 & 9.98e-2 & 1.00e-2 & 1.22e-2 \\
        \hline
    \end{tabular}
\end{table}

\subsection{Noisy Manifolds}\label{Sec 5.4}
Now, we consider adding Gaussian noise to the point cloud to test the robustness of our algorithm. 
First, we add Gaussian noise to normals. The variance of noise is $\omega_{\boldsymbol{n}}\sigma_{\boldsymbol{n}}$ and $\sigma_{\boldsymbol{n}}$ is the variance of the normals. The perturbed normals are given in Figure \ref{MOT-Noise-Sphere-nomral}. The associate results are shown in Figure \ref{MOT-Noise-sphere-nomral}.
When noise level is $1\%$ and $5\%$, the transport phenomenon can still be captured. 
But when the noise is greater than $10\%$, our algorithm cannot continue to maintain this success. 
\begin{figure}[htbp]
    \begin{center}
    \subfigure[$\omega_{\boldsymbol{n}}=0$]{\includegraphics[width=2.8cm]{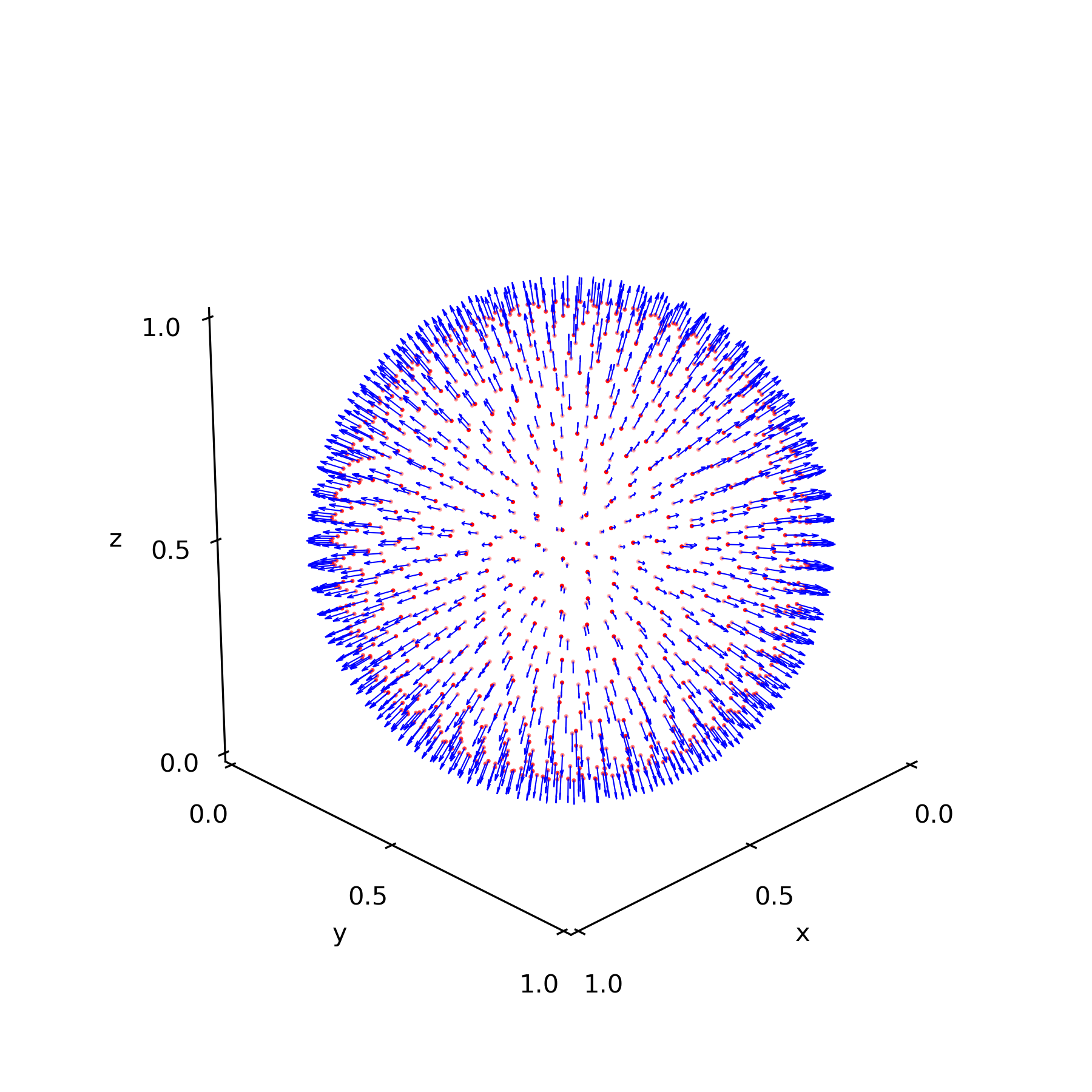}}
    \subfigure[$\omega_{\boldsymbol{n}}=1\%$]{\includegraphics[width=2.8cm]{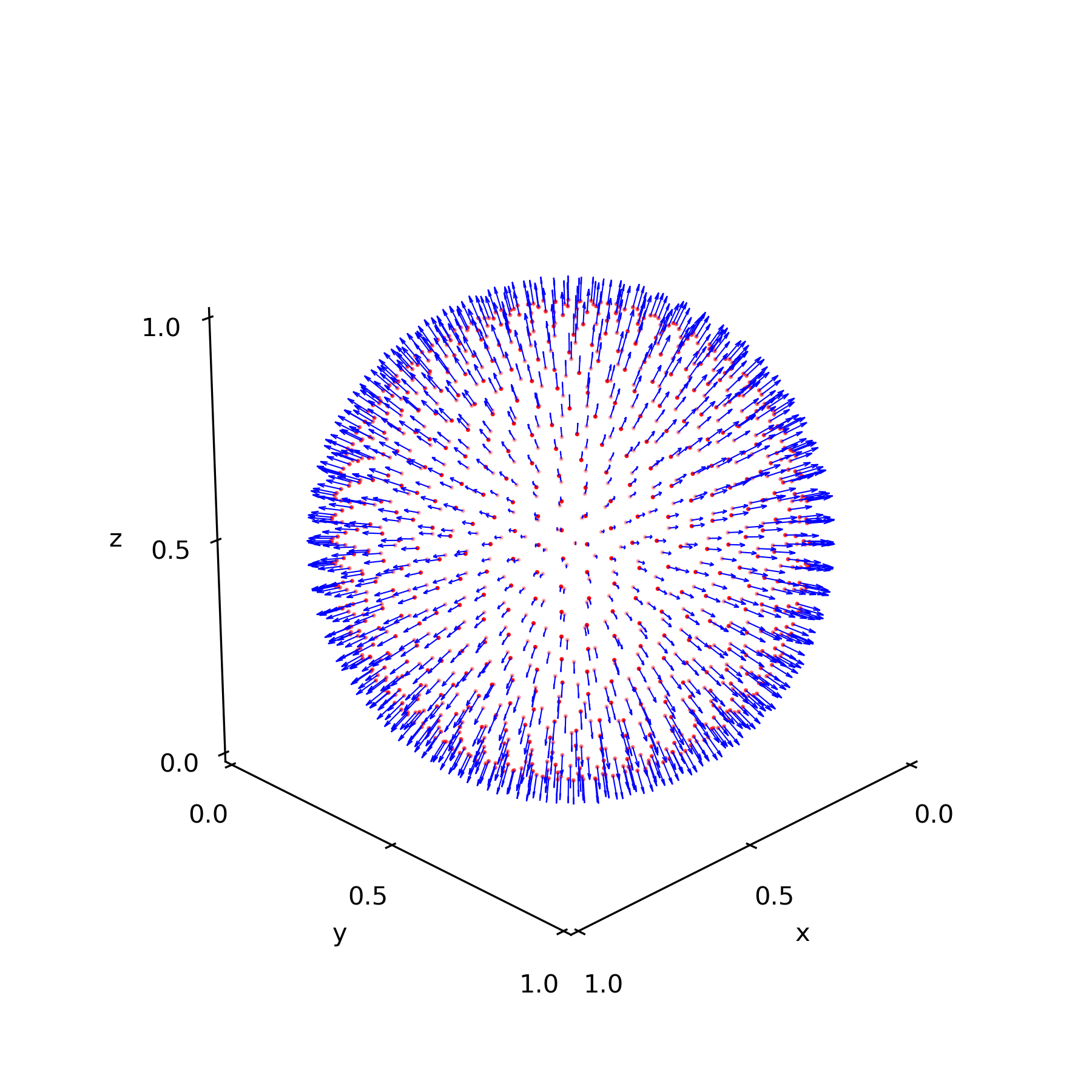}}
    \subfigure[$\omega_{\boldsymbol{n}}=5\%$]{\includegraphics[width=2.8cm]{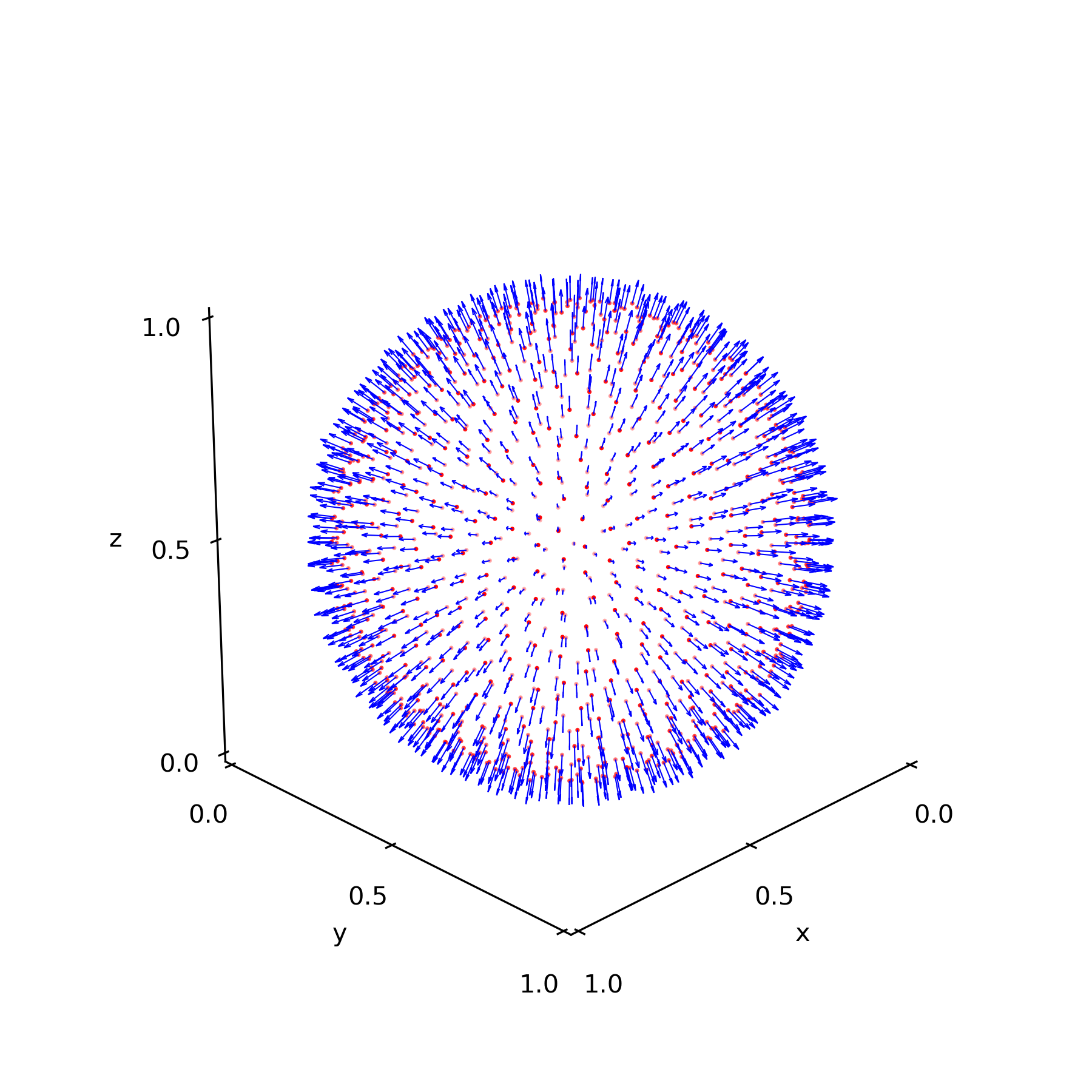}}
    \subfigure[$\omega_{\boldsymbol{n}}=10\%$]{\includegraphics[width=2.8cm]{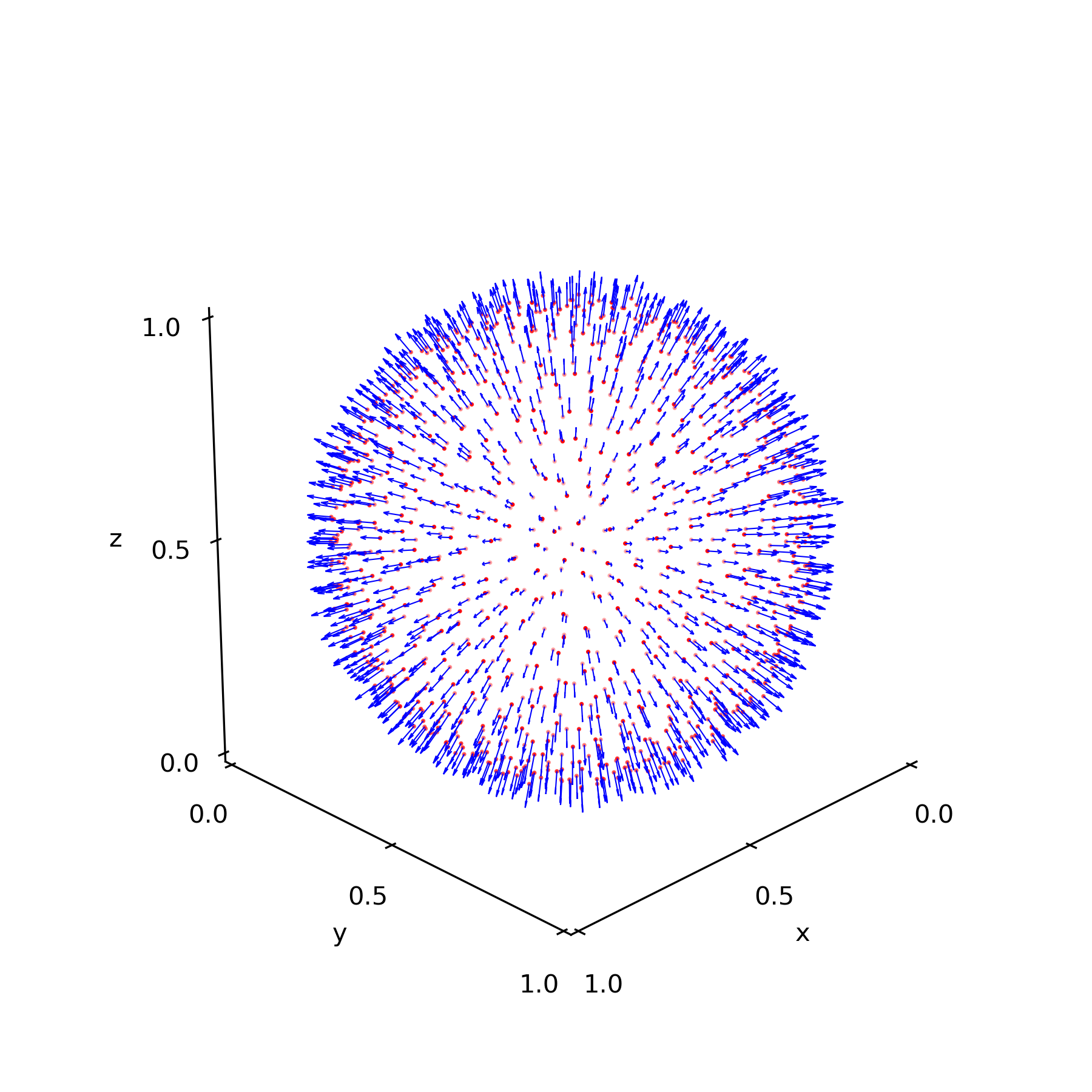}}
    \caption{Normal vector with different level of noise.}
    \label{MOT-Noise-Sphere-nomral}
    \end{center}
\end{figure}

\begin{figure}[htbp]
    \begin{center}
    \rotatebox{90}{$~~~~~~\omega_{\boldsymbol{n}}=1\%$}
    \includegraphics[width=2.8cm]{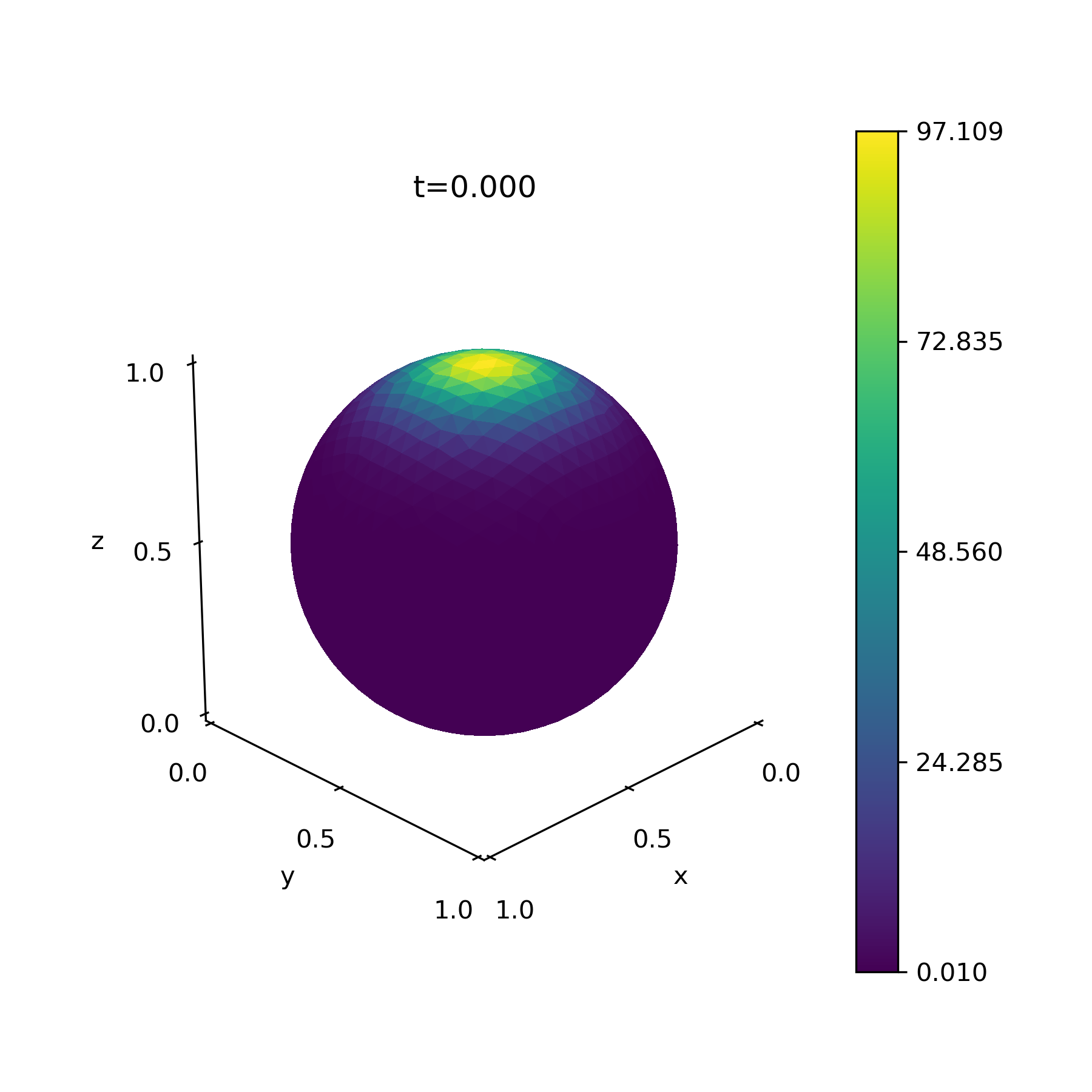}
    \includegraphics[width=2.8cm]{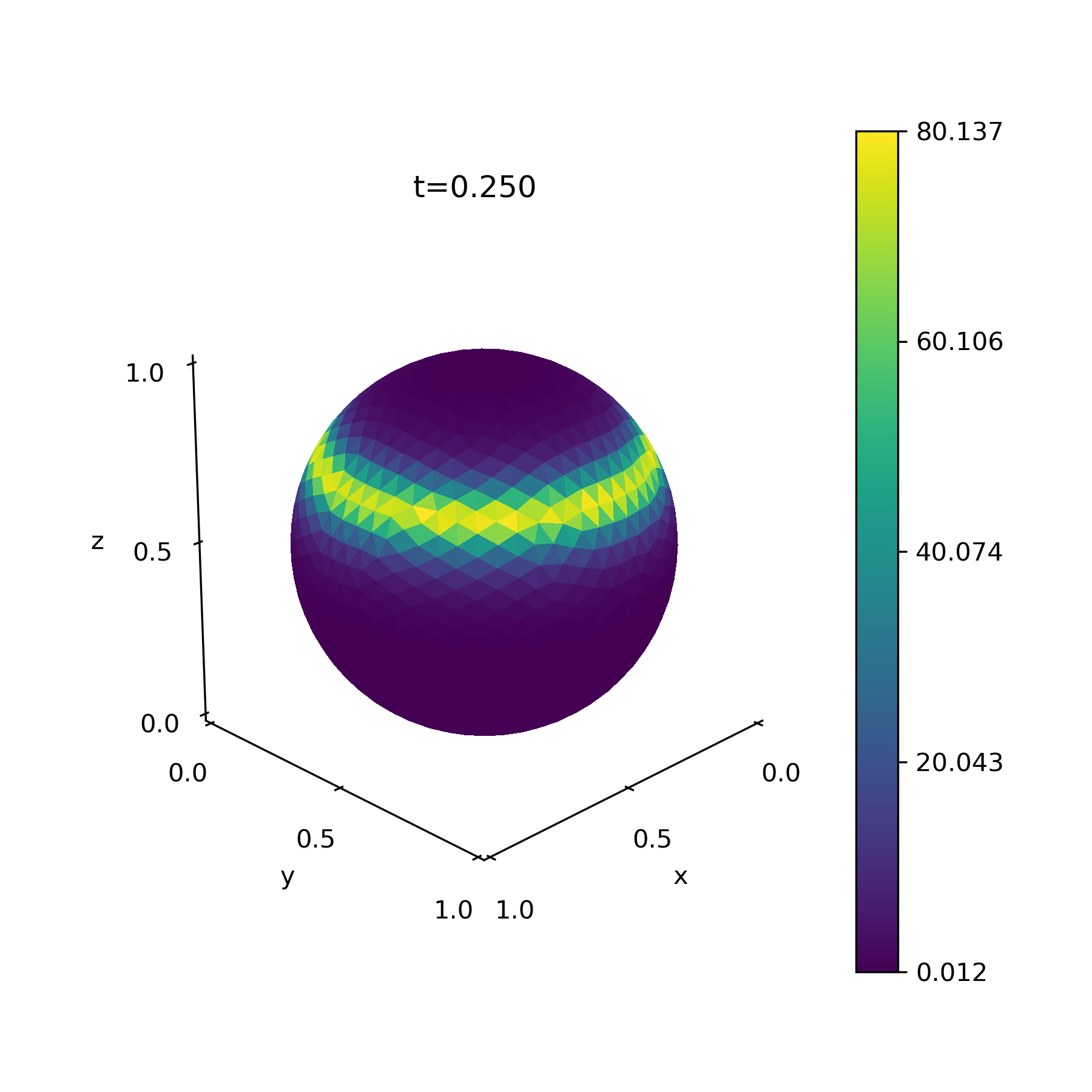}
    \includegraphics[width=2.8cm]{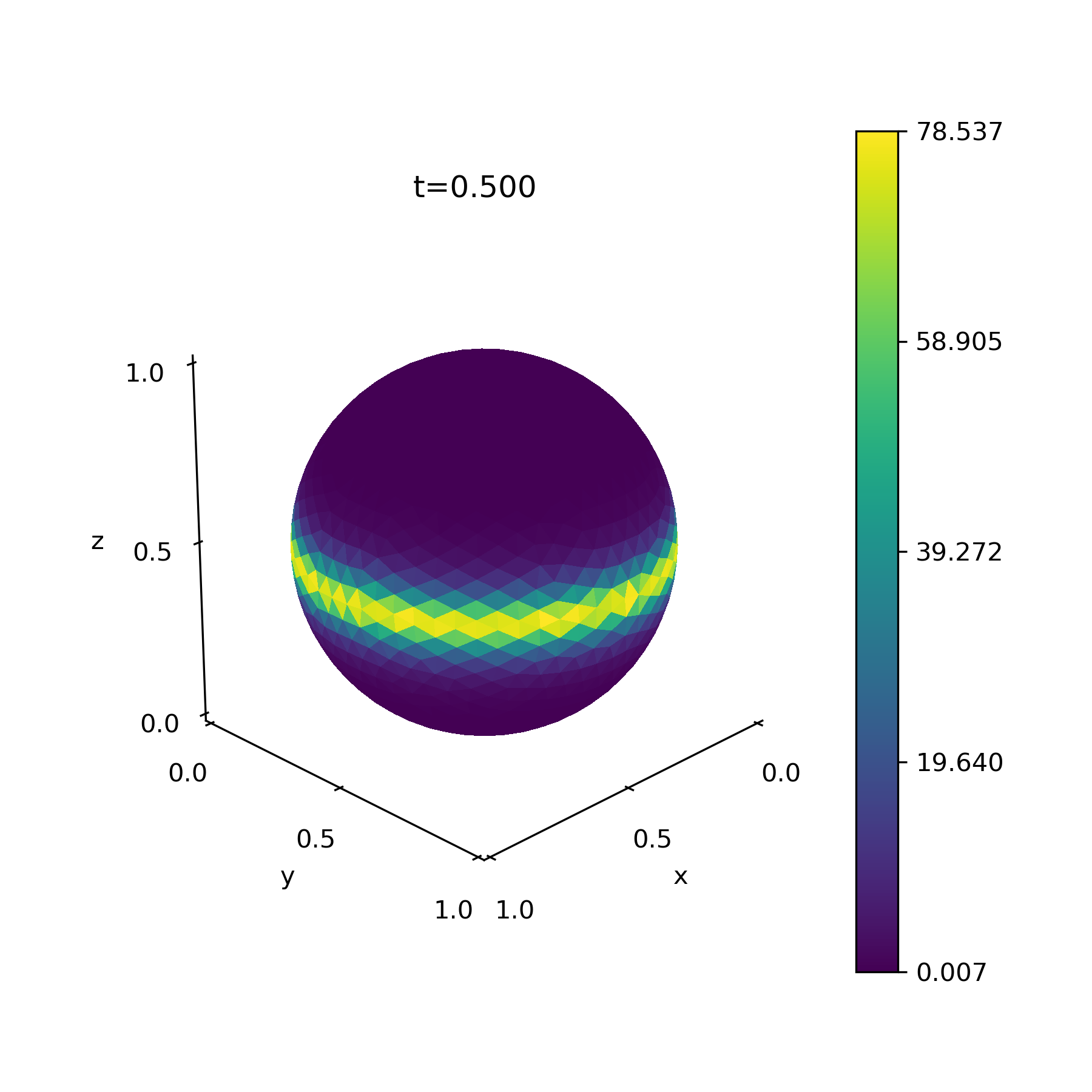}
    \includegraphics[width=2.8cm]{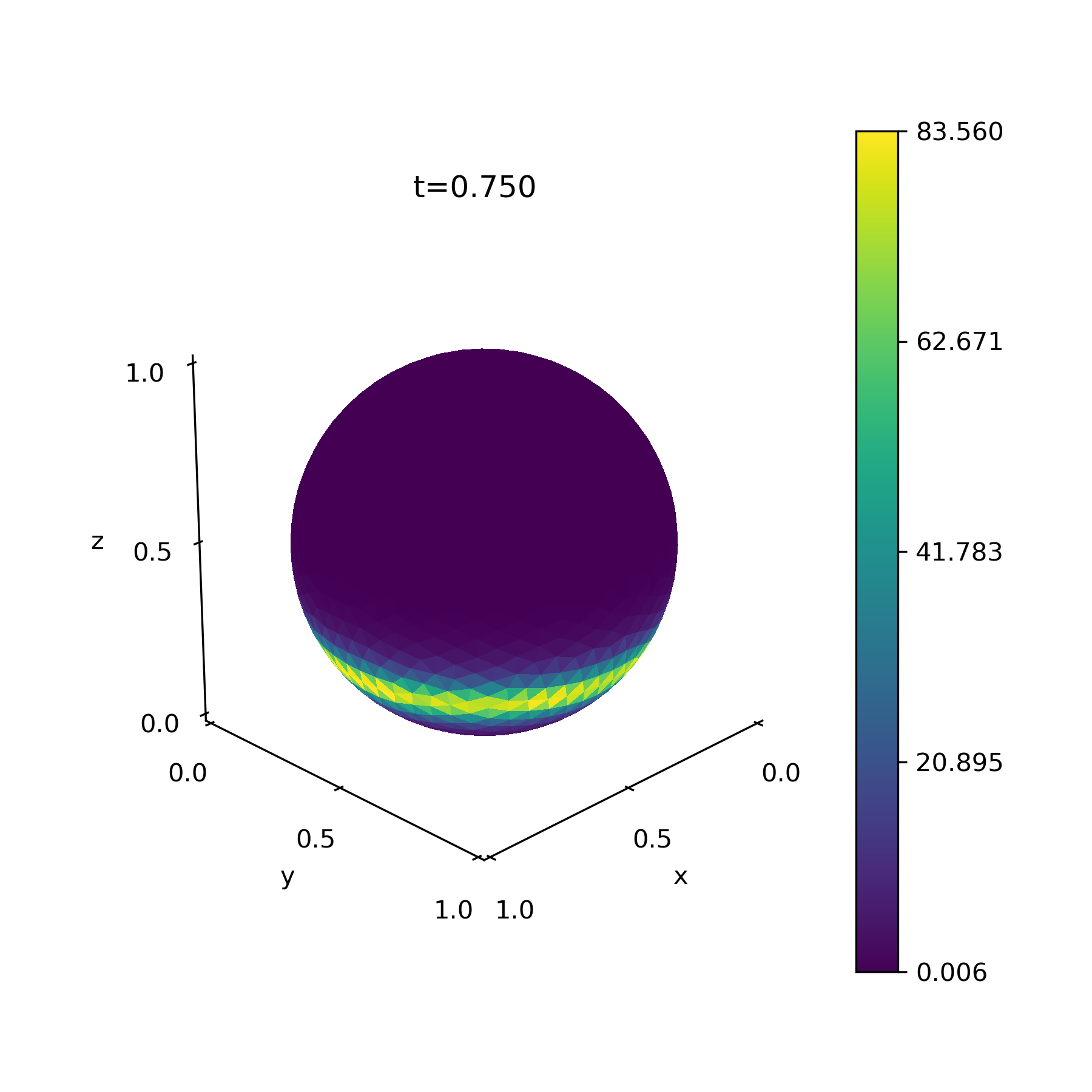}
    \includegraphics[width=2.8cm]{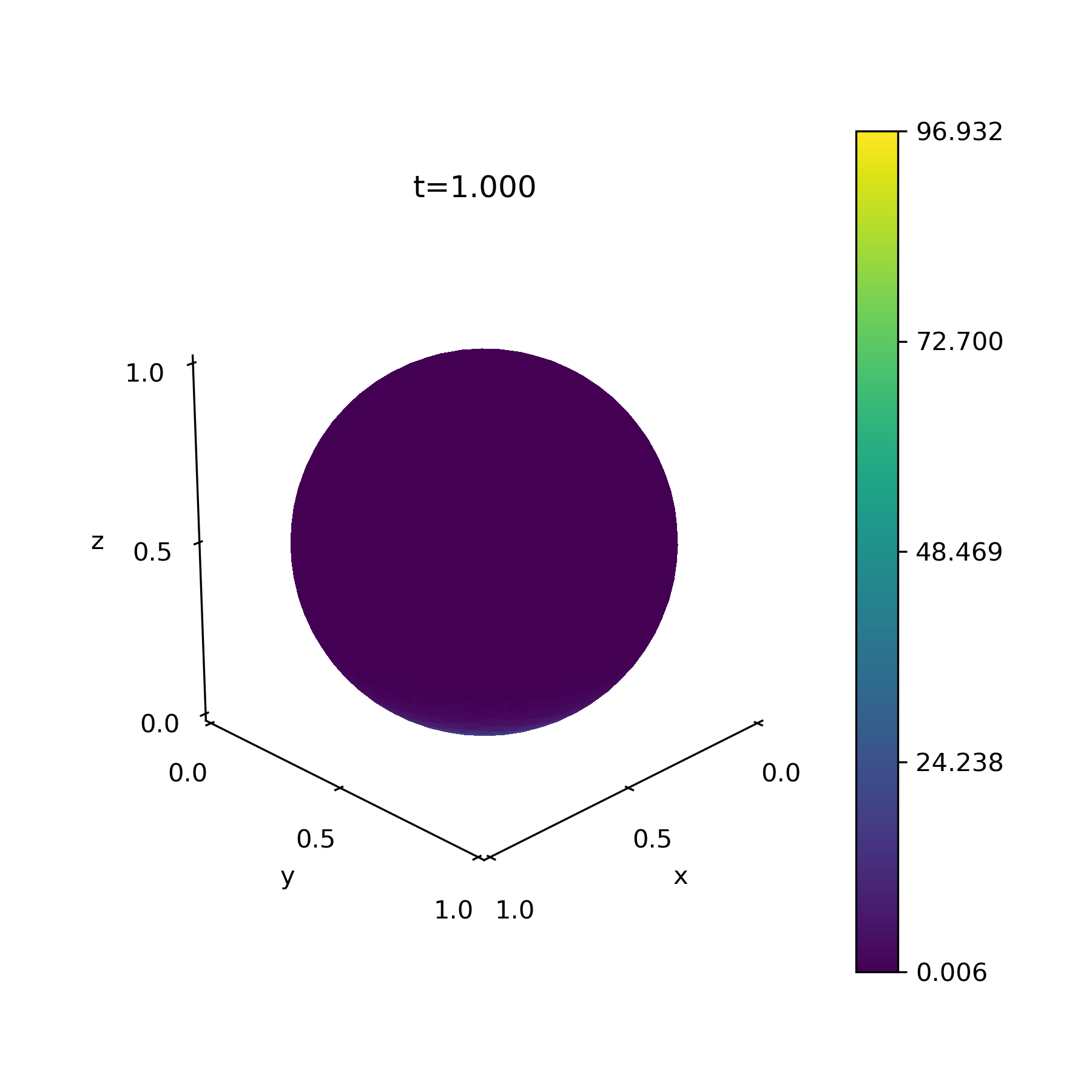}\\
    \vspace{5pt}
    
    \rotatebox{90}{$~~~~~~\omega_{\boldsymbol{n}}=5\%$}
    \includegraphics[width=2.8cm]{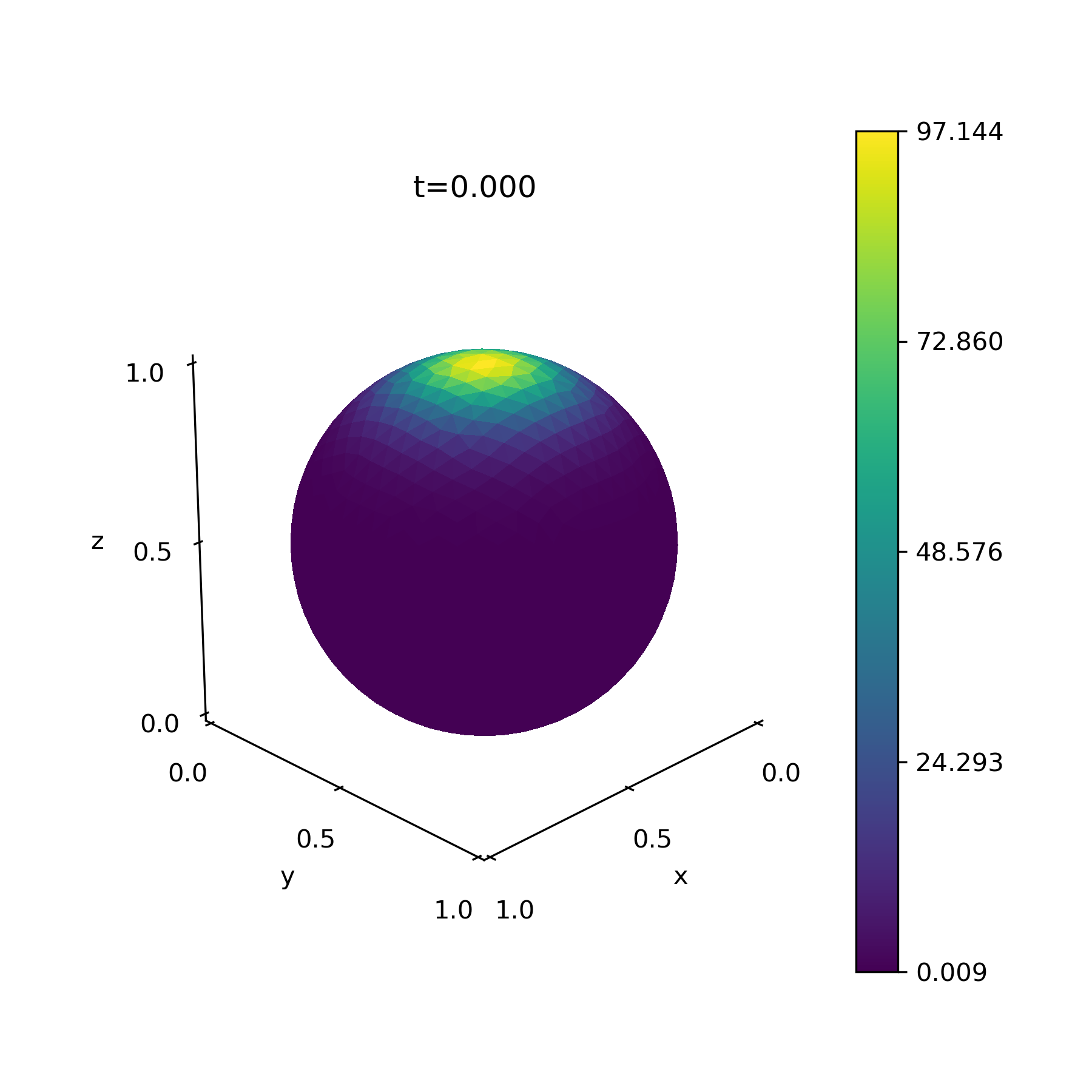}
    \includegraphics[width=2.8cm]{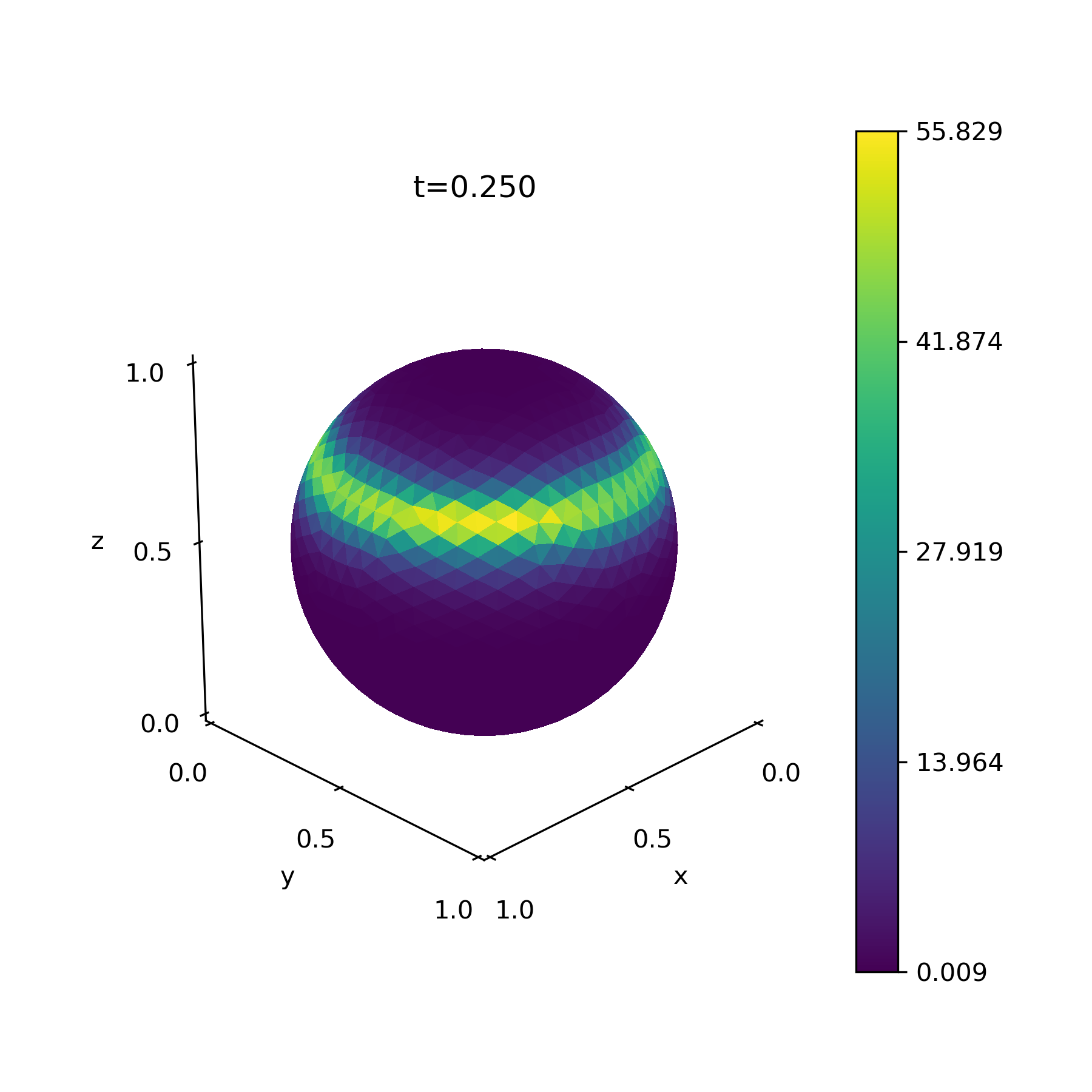}
    \includegraphics[width=2.8cm]{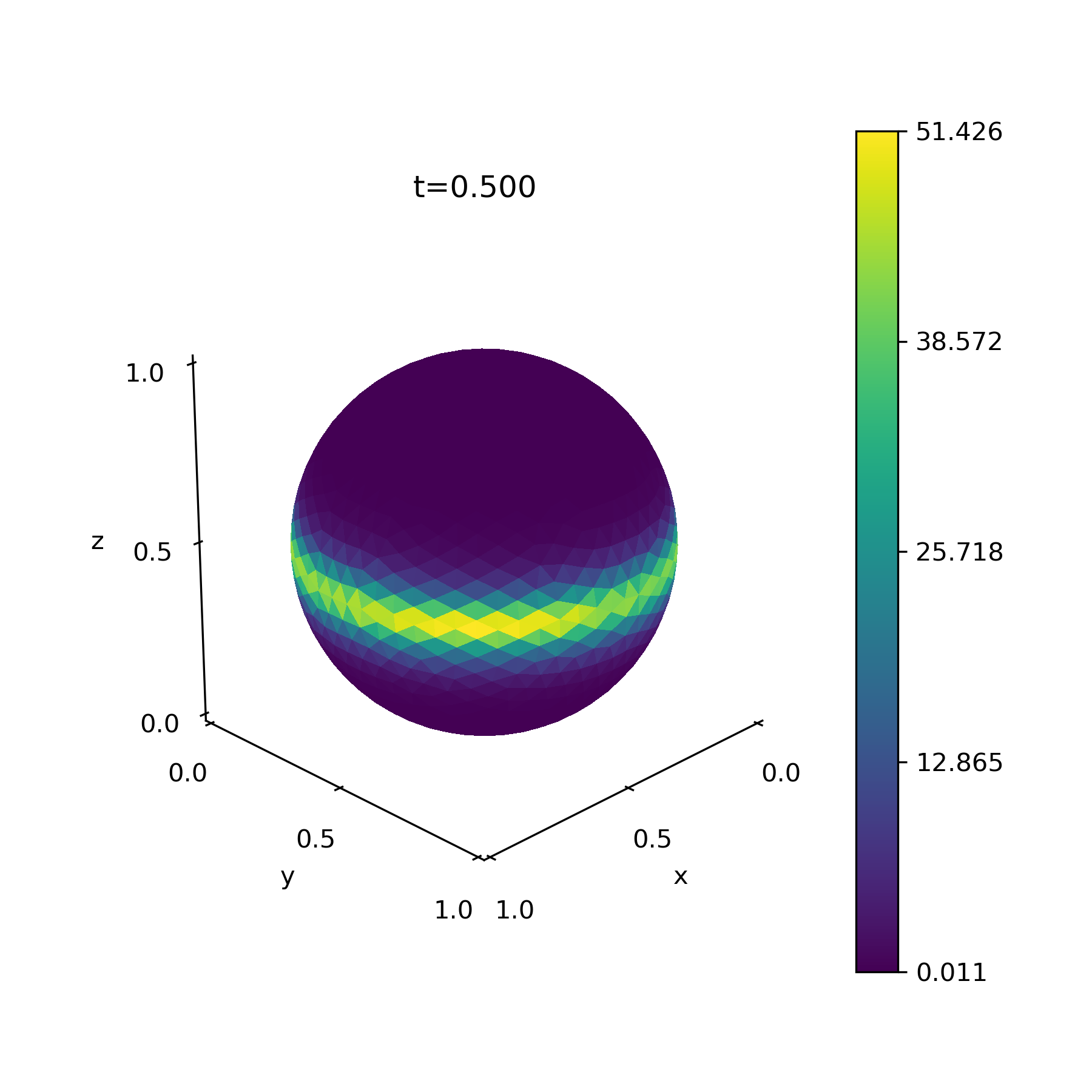}
    \includegraphics[width=2.8cm]{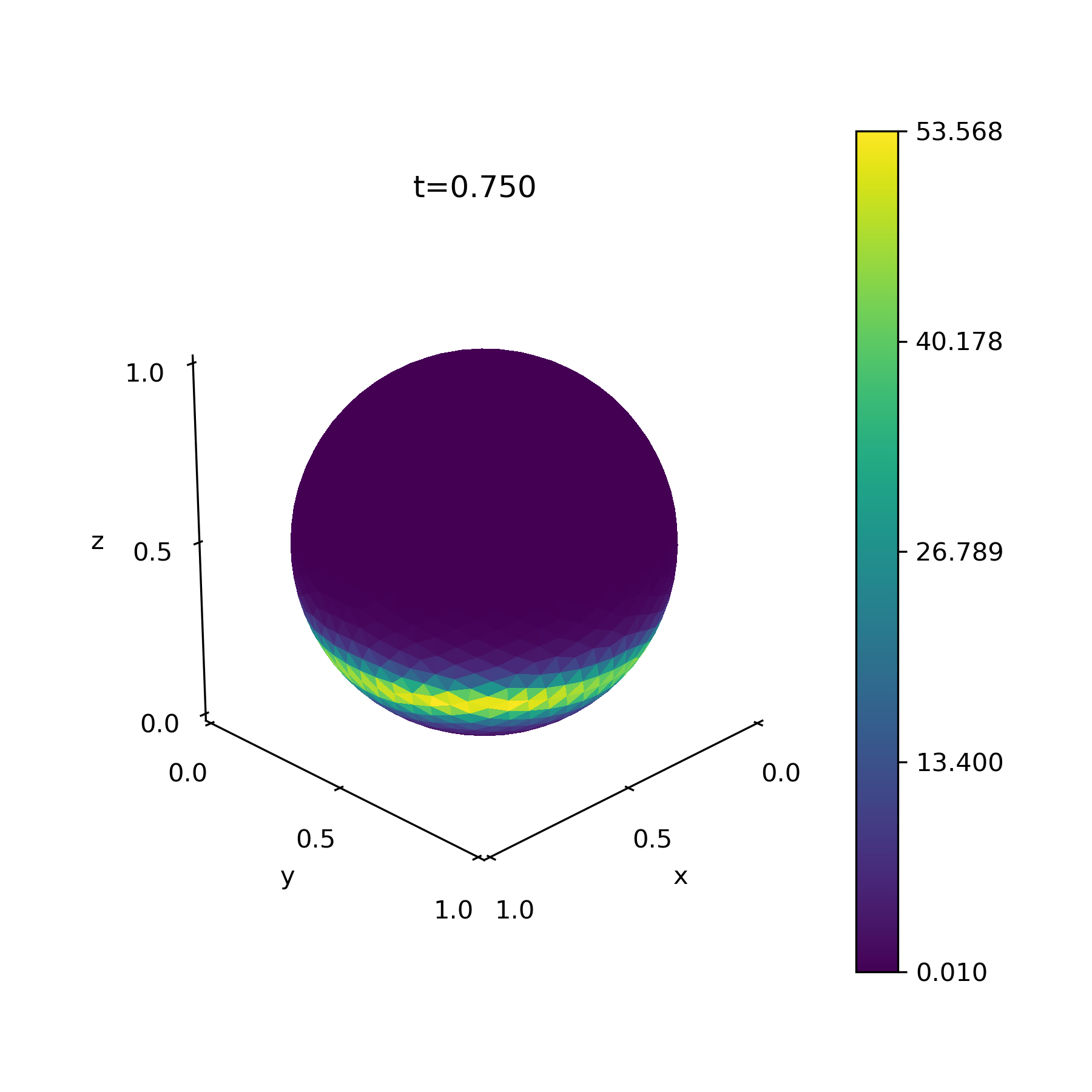}
    \includegraphics[width=2.8cm]{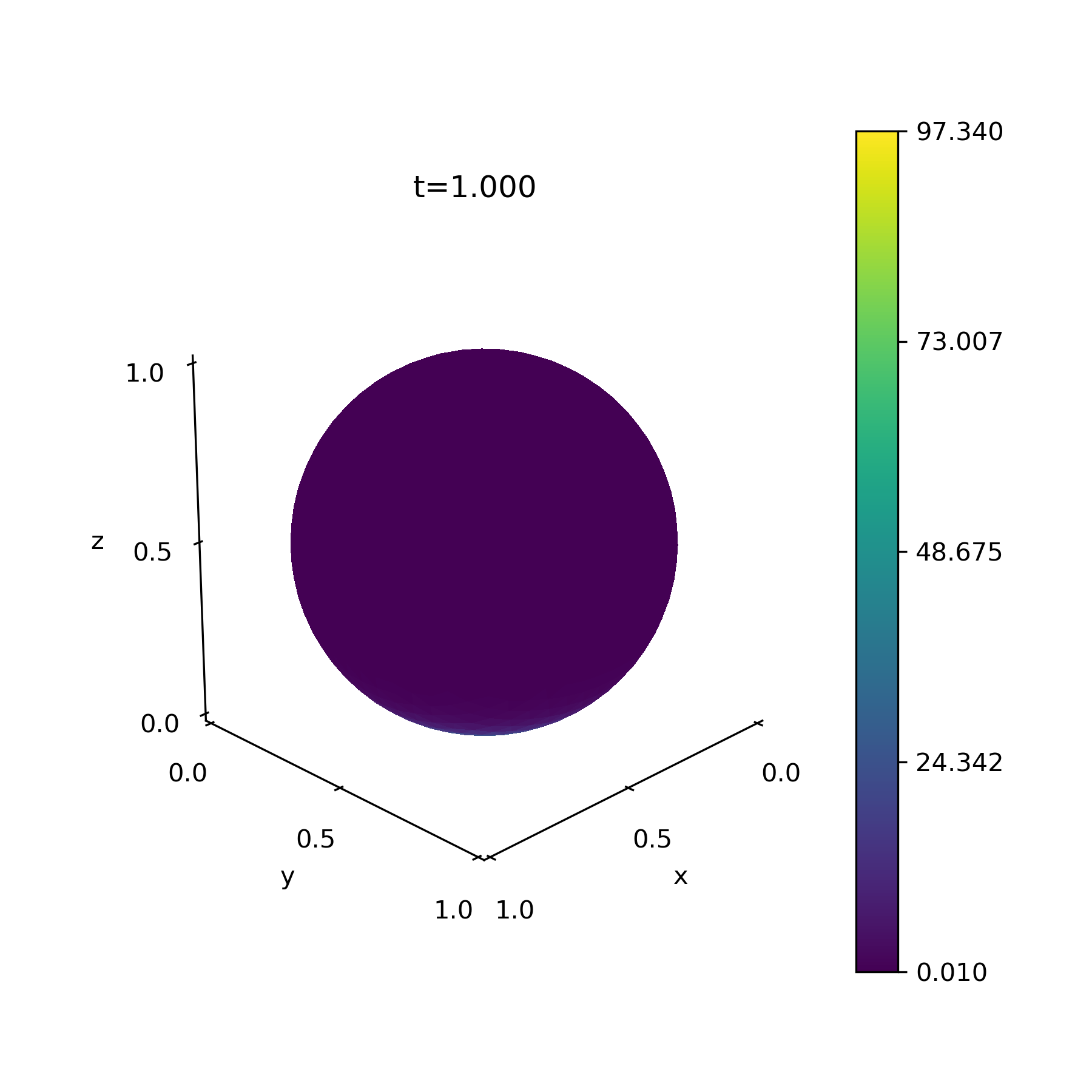}\\
    \vspace{5pt}
    
    \rotatebox{90}{$~~~~~~\omega_{\boldsymbol{n}}=10\%$}
    \subfigure[$\rho(0, \boldsymbol{x})$]{\includegraphics[width=2.8cm]{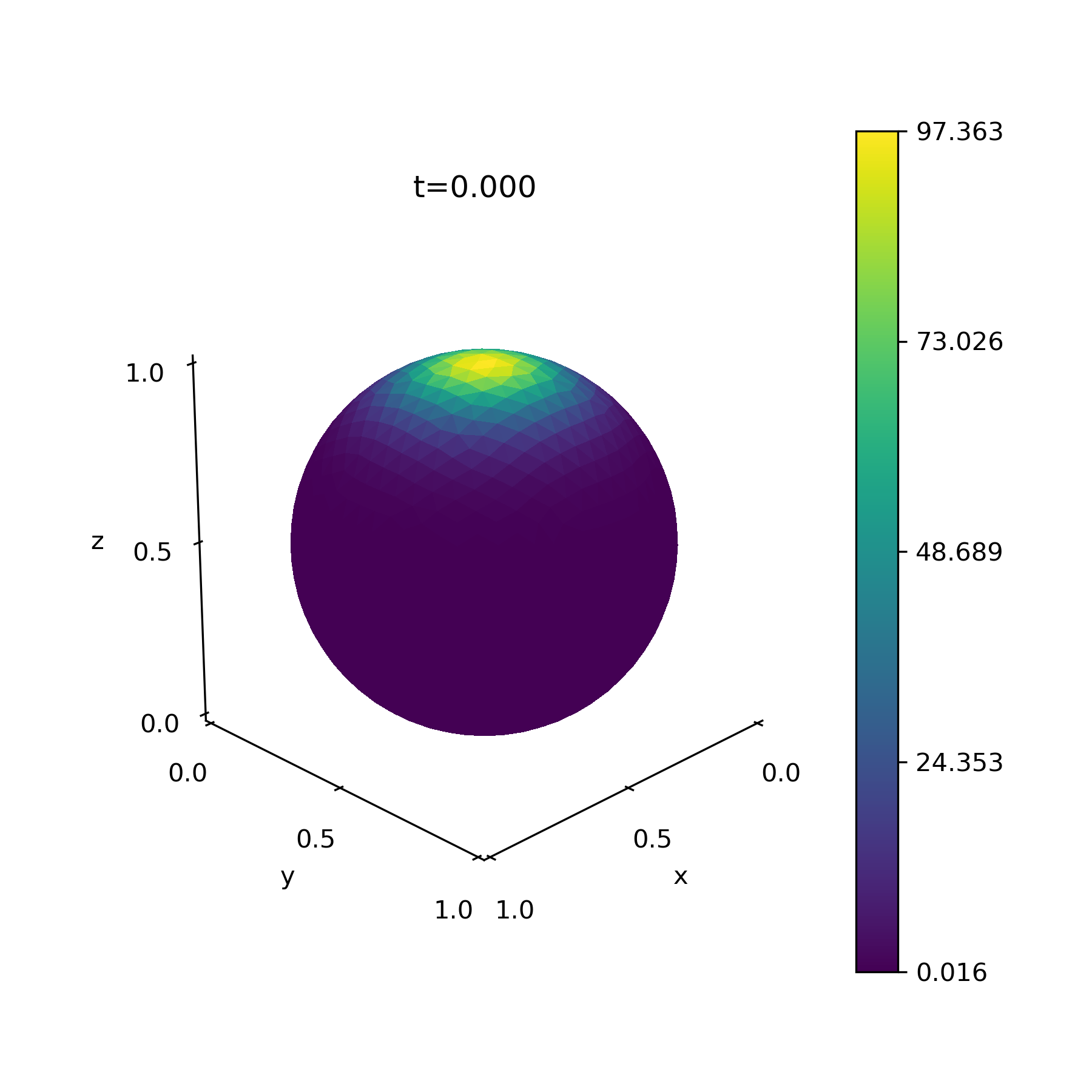}}
    \subfigure[$\rho(0.25, \boldsymbol{x})$]{\includegraphics[width=2.8cm]{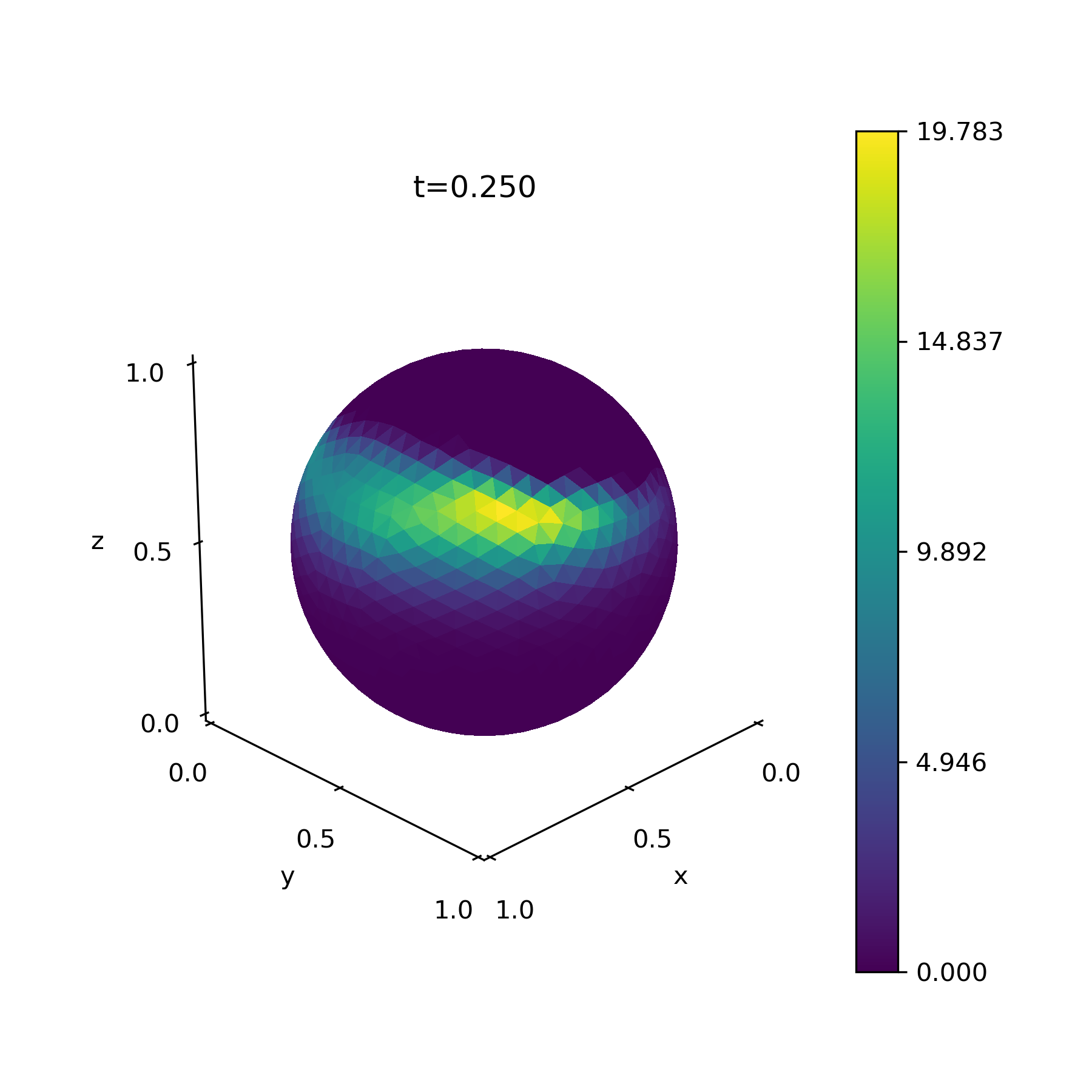}}
    \subfigure[$\rho(0.5, \boldsymbol{x})$]{\includegraphics[width=2.8cm]{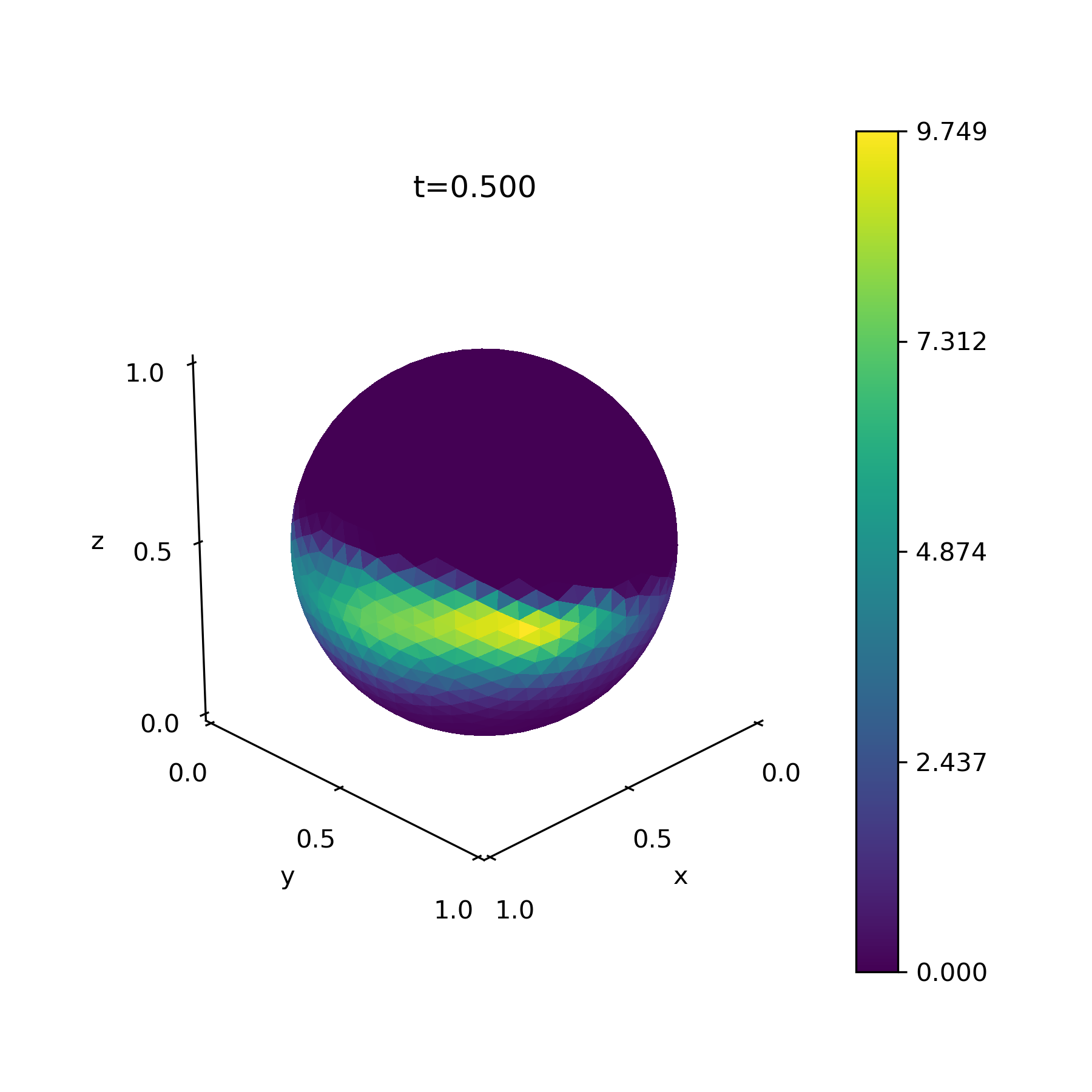}}
    \subfigure[$\rho(0.75, \boldsymbol{x})$]{\includegraphics[width=2.8cm]{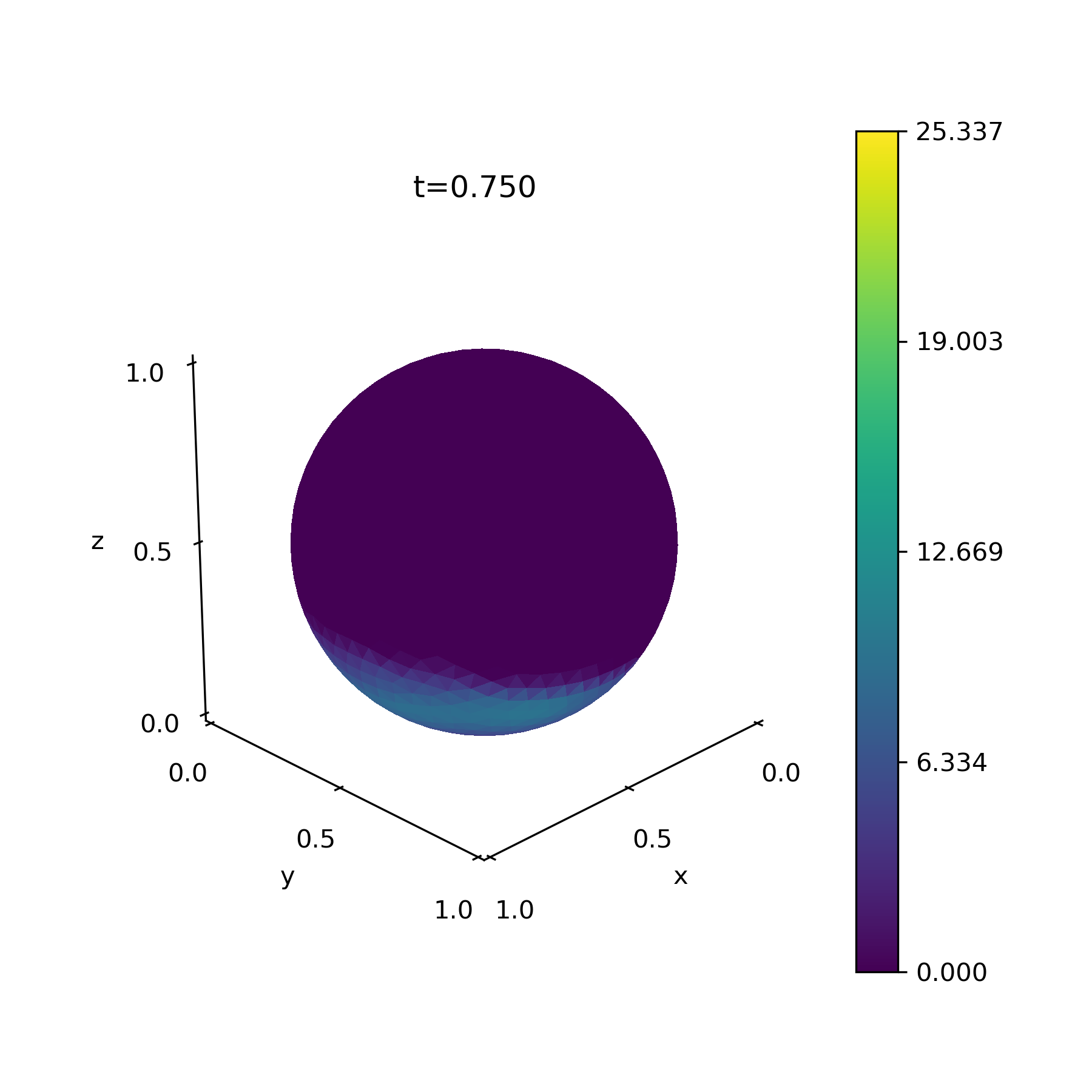}}
    \subfigure[$\rho(1, \boldsymbol{x})$]{\includegraphics[width=2.8cm]{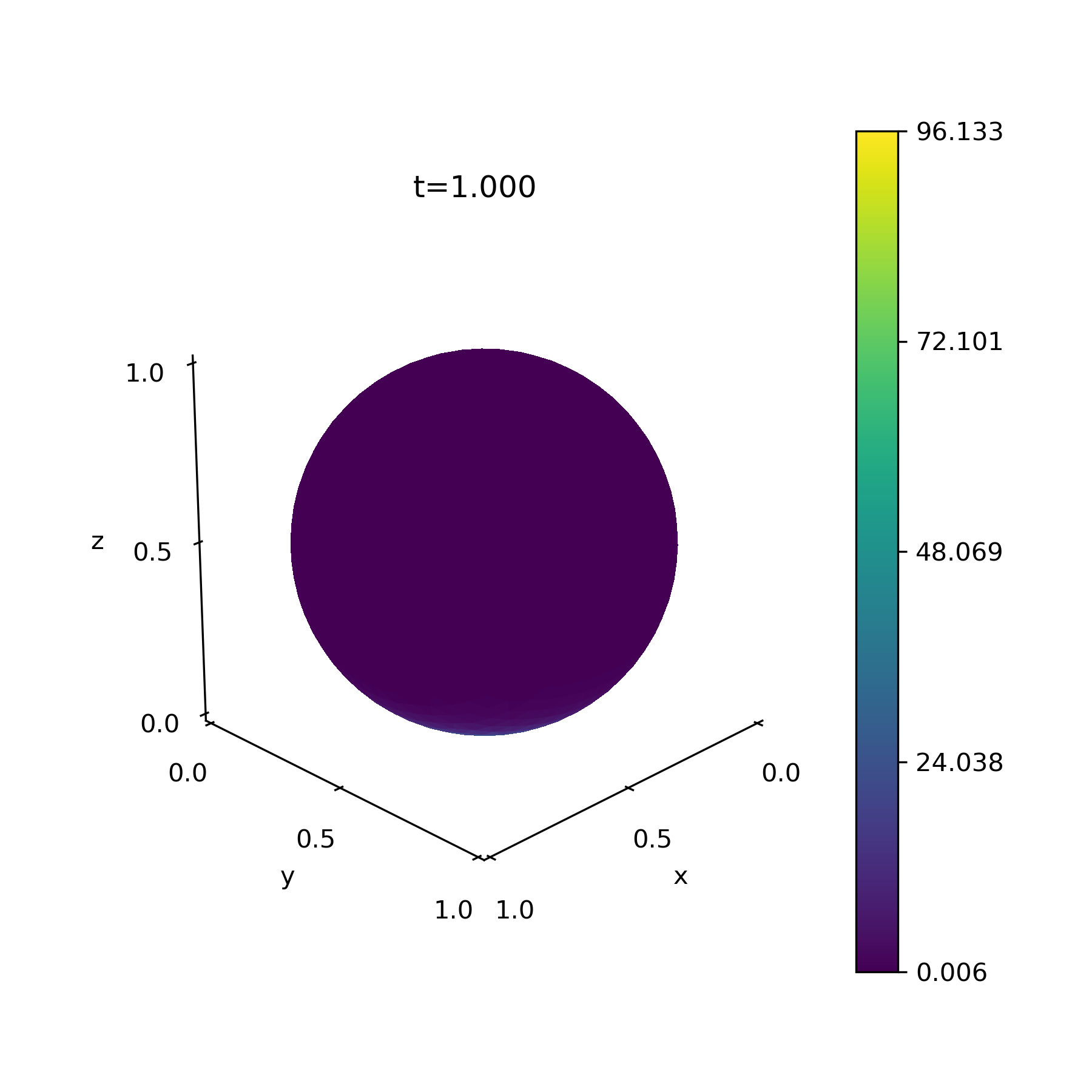}}
    
    \caption{MOT test on sphere with different noise on normal vector.}
    \label{MOT-Noise-sphere-nomral}
    \end{center}
\end{figure}

\begin{figure}[htbp]
    \begin{center}
    \subfigure[$\omega_{\boldsymbol{x}}=0$]{\includegraphics[width=2.8cm]{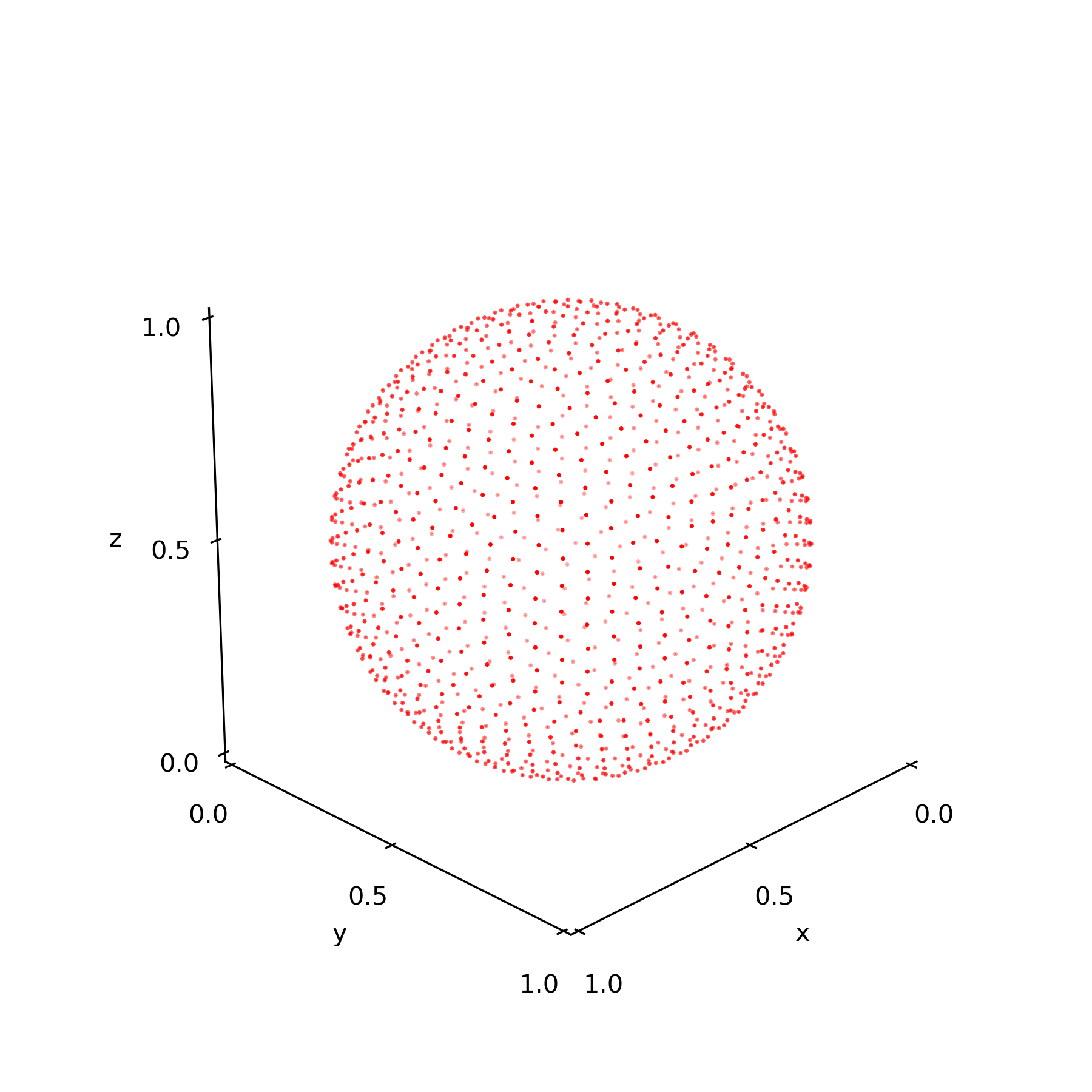}}
    \subfigure[$\omega_{\boldsymbol{x}}=1\%$]{\includegraphics[width=2.8cm]{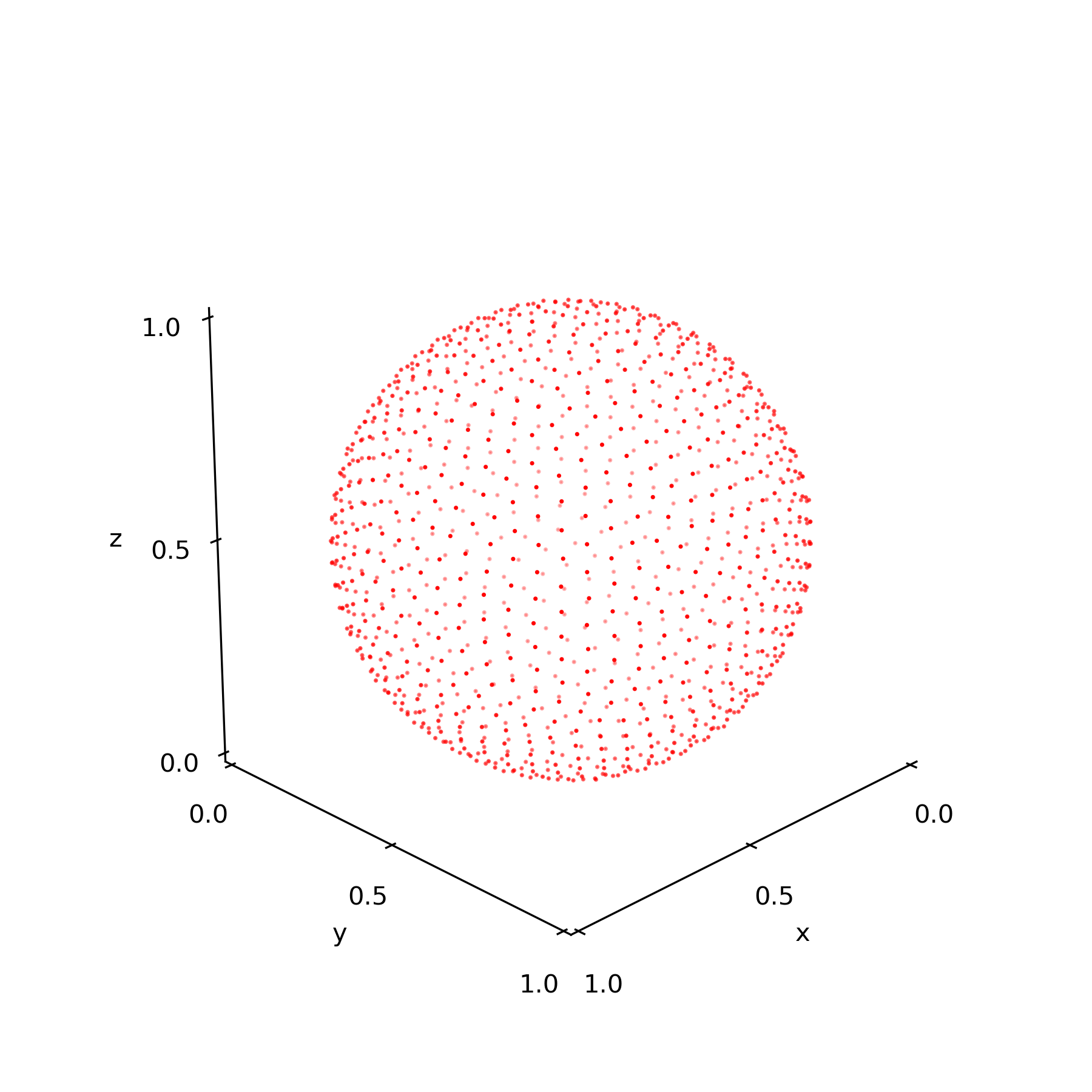}}
    \subfigure[$\omega_{\boldsymbol{x}}=5\%$]{\includegraphics[width=2.8cm]{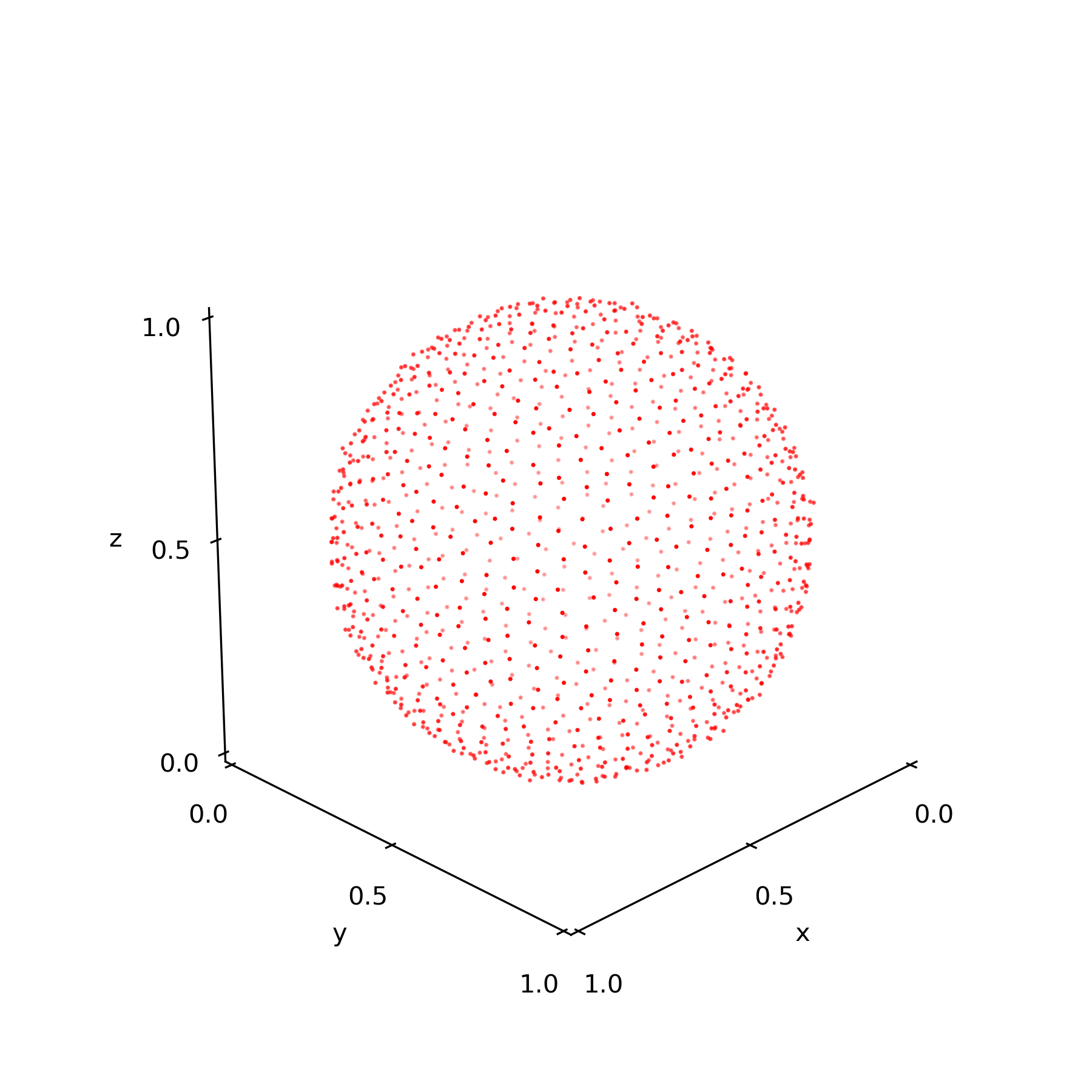}}
    \subfigure[$\omega_{\boldsymbol{x}}=10\%$]{\includegraphics[width=2.8cm]{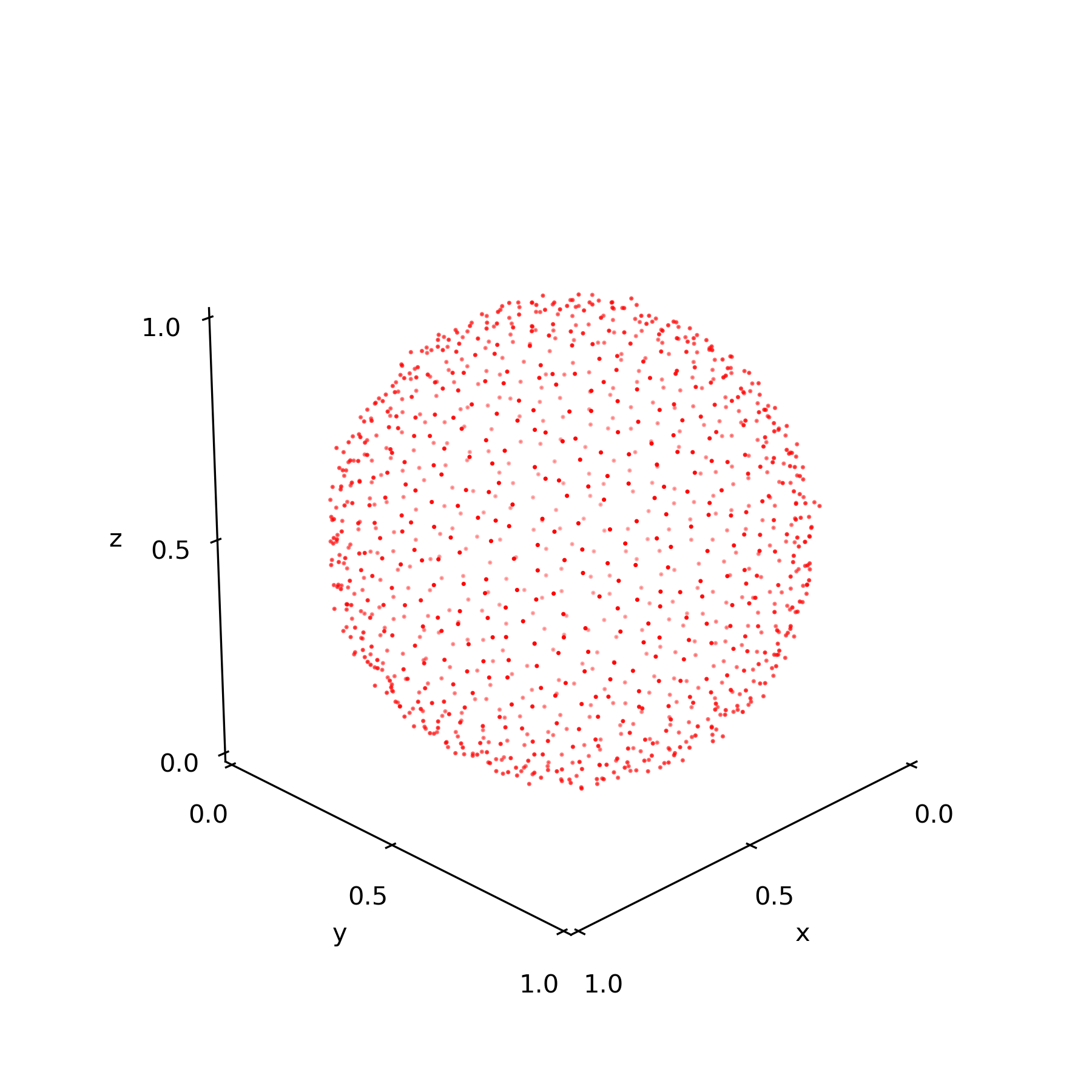}}
    \caption{point cloud with different level of noise.}
    \label{MOT-Noise-Sphere-nodes}
    \end{center}
\end{figure}

\begin{figure}[htbp]
    \begin{center}
    \rotatebox{90}{$~~~~~~\omega_{\boldsymbol{x}}=1\%$}
    \includegraphics[width=2.8cm]{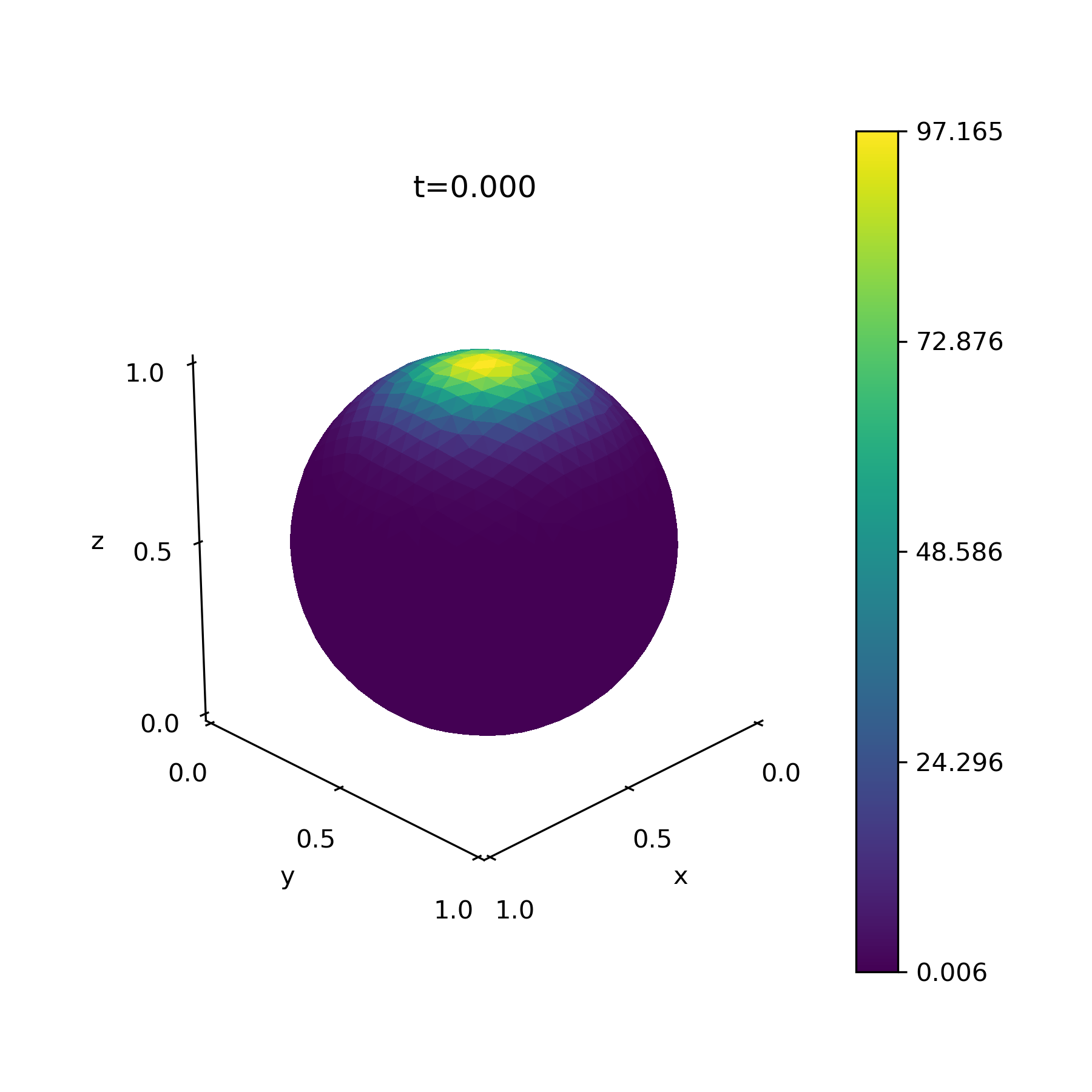}
    \includegraphics[width=2.8cm]{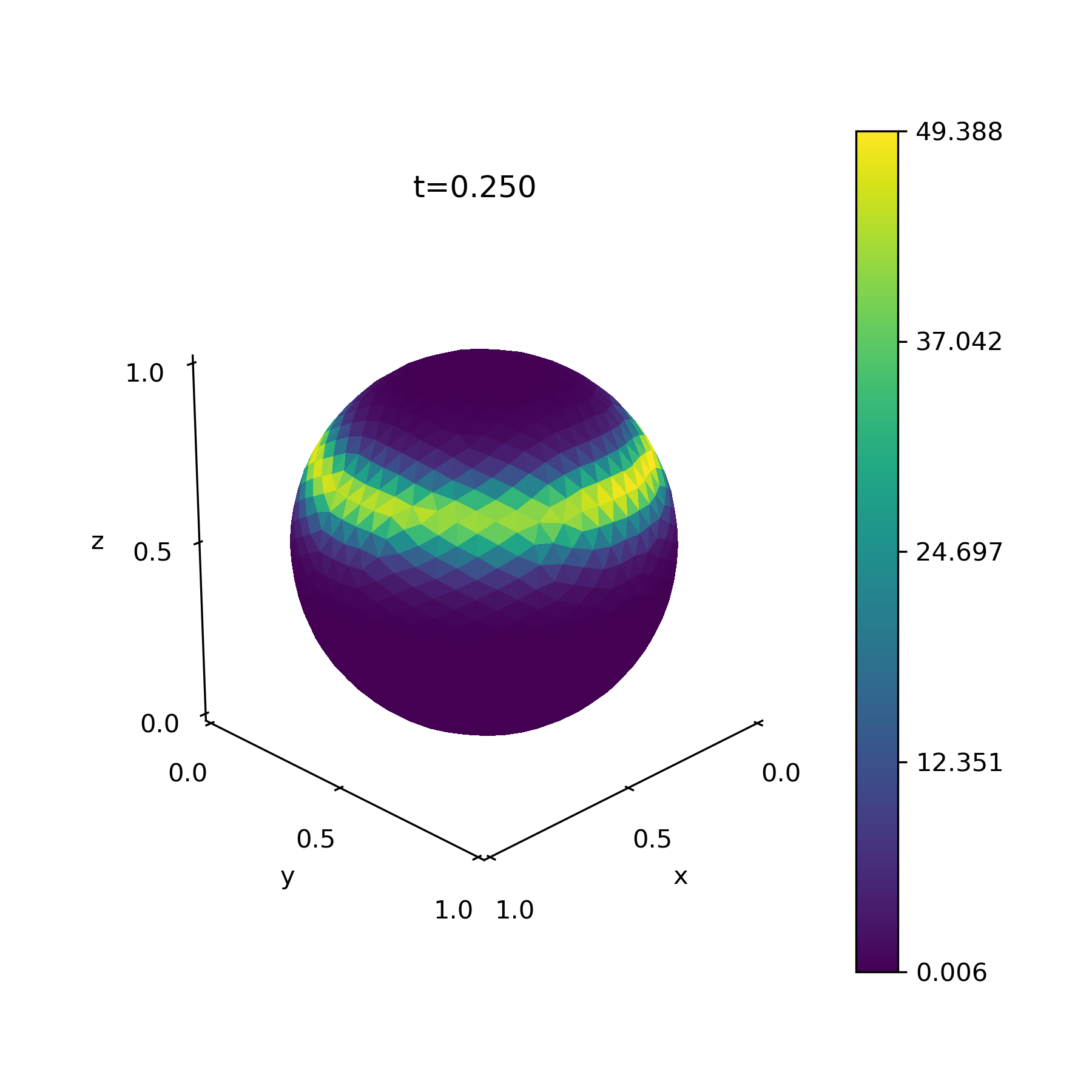}
    \includegraphics[width=2.8cm]{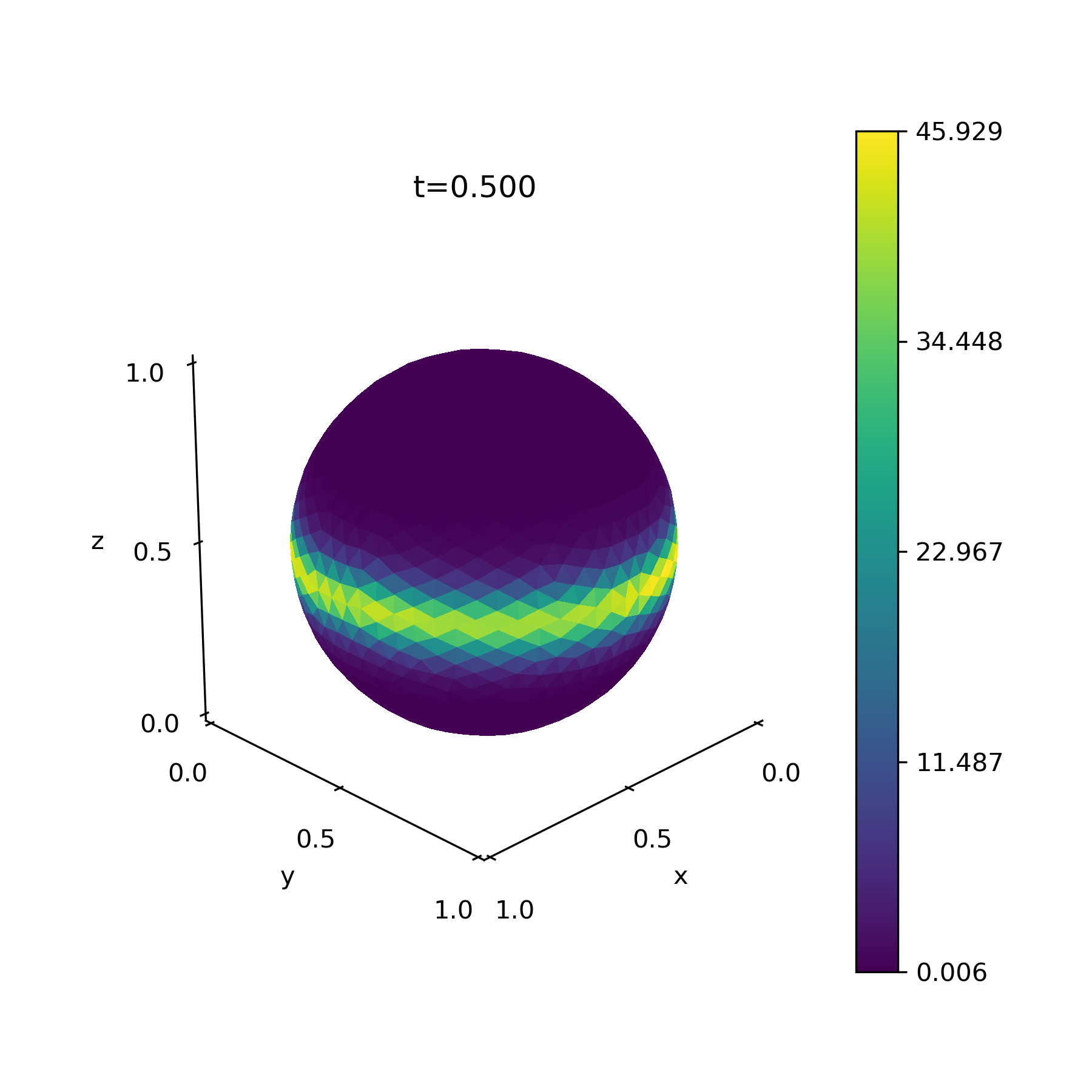}
    \includegraphics[width=2.8cm]{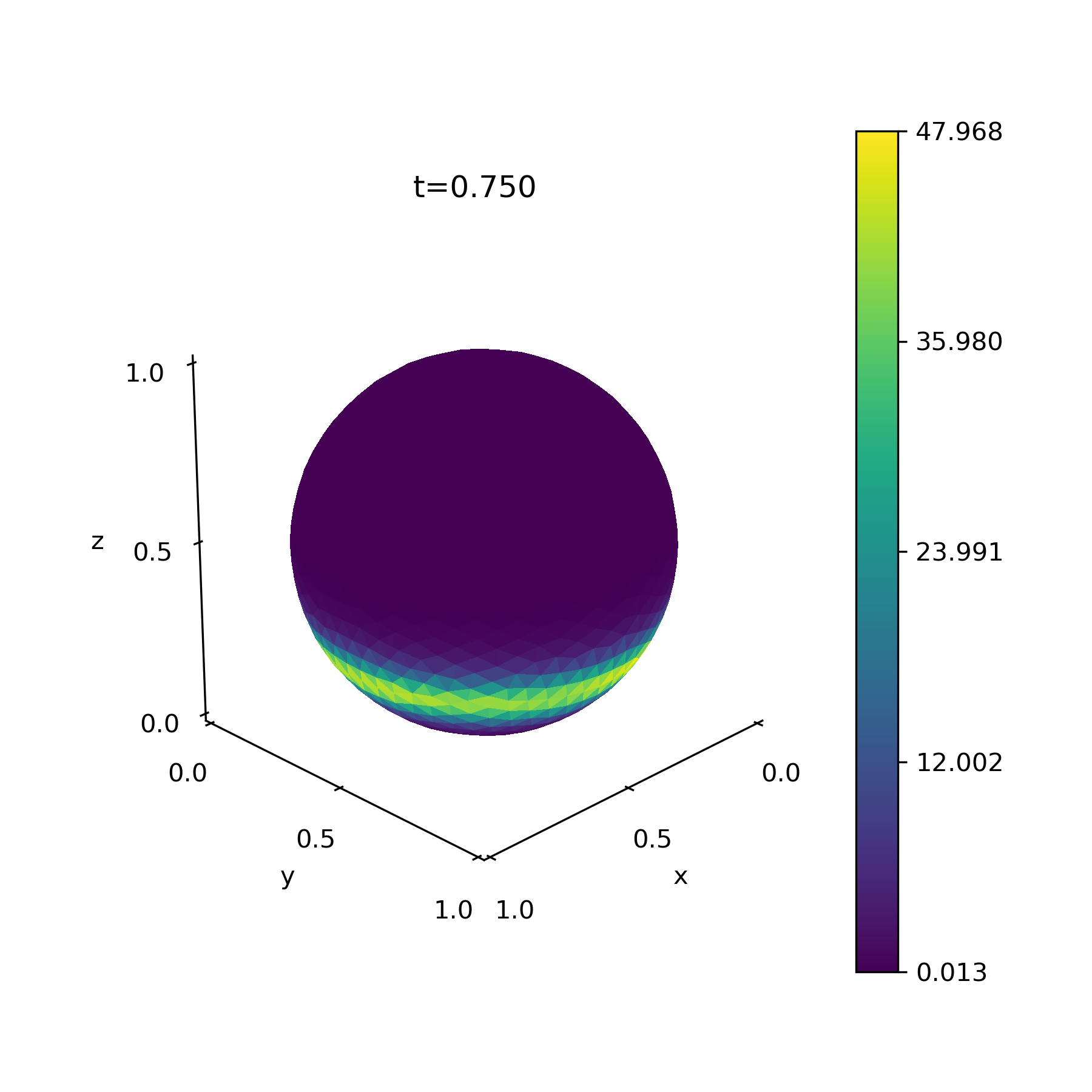}
    \includegraphics[width=2.8cm]{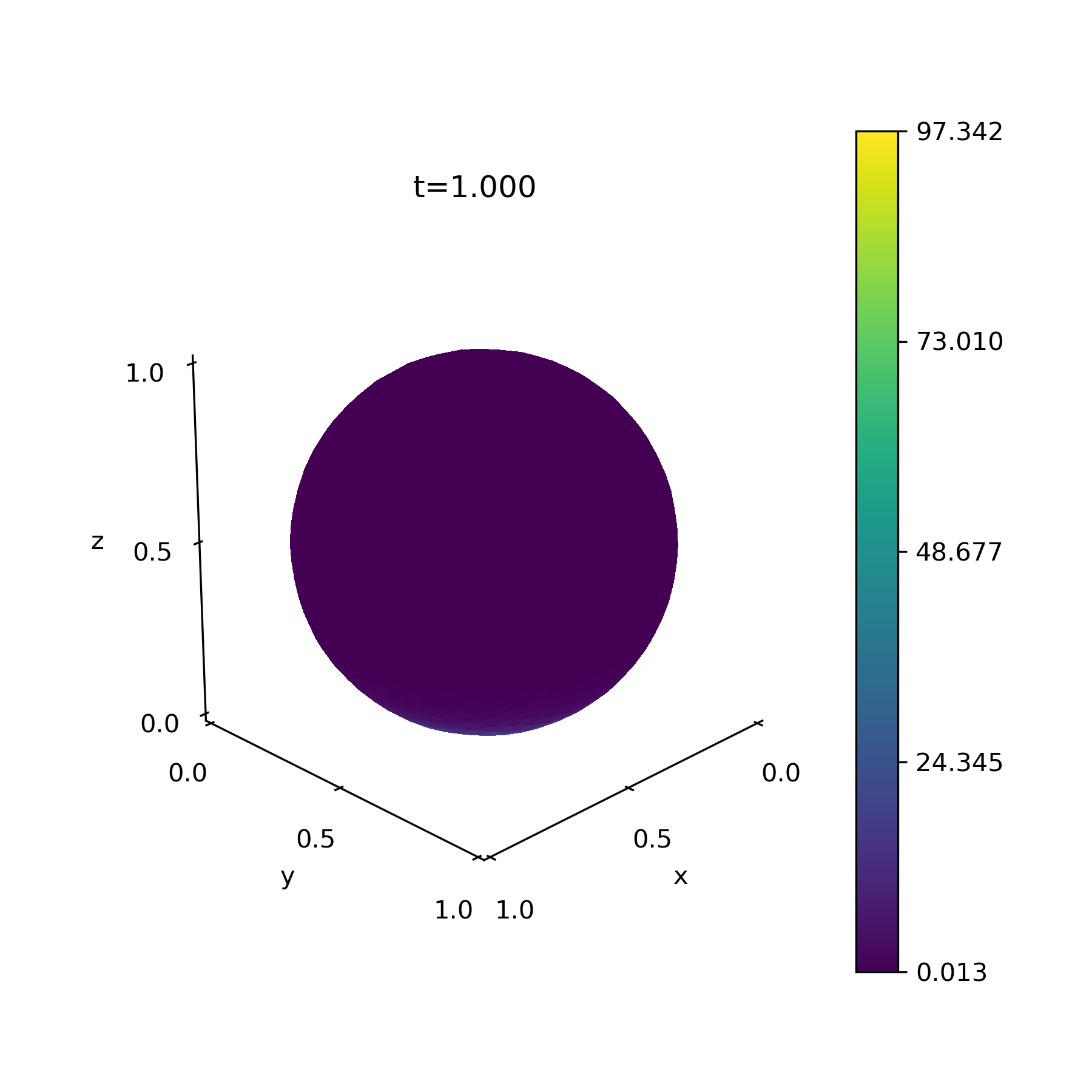}\\
    \vspace{5pt}
    
    \rotatebox{90}{$~~~~~~\omega_{\boldsymbol{x}}=5\%$}
    \includegraphics[width=2.8cm]{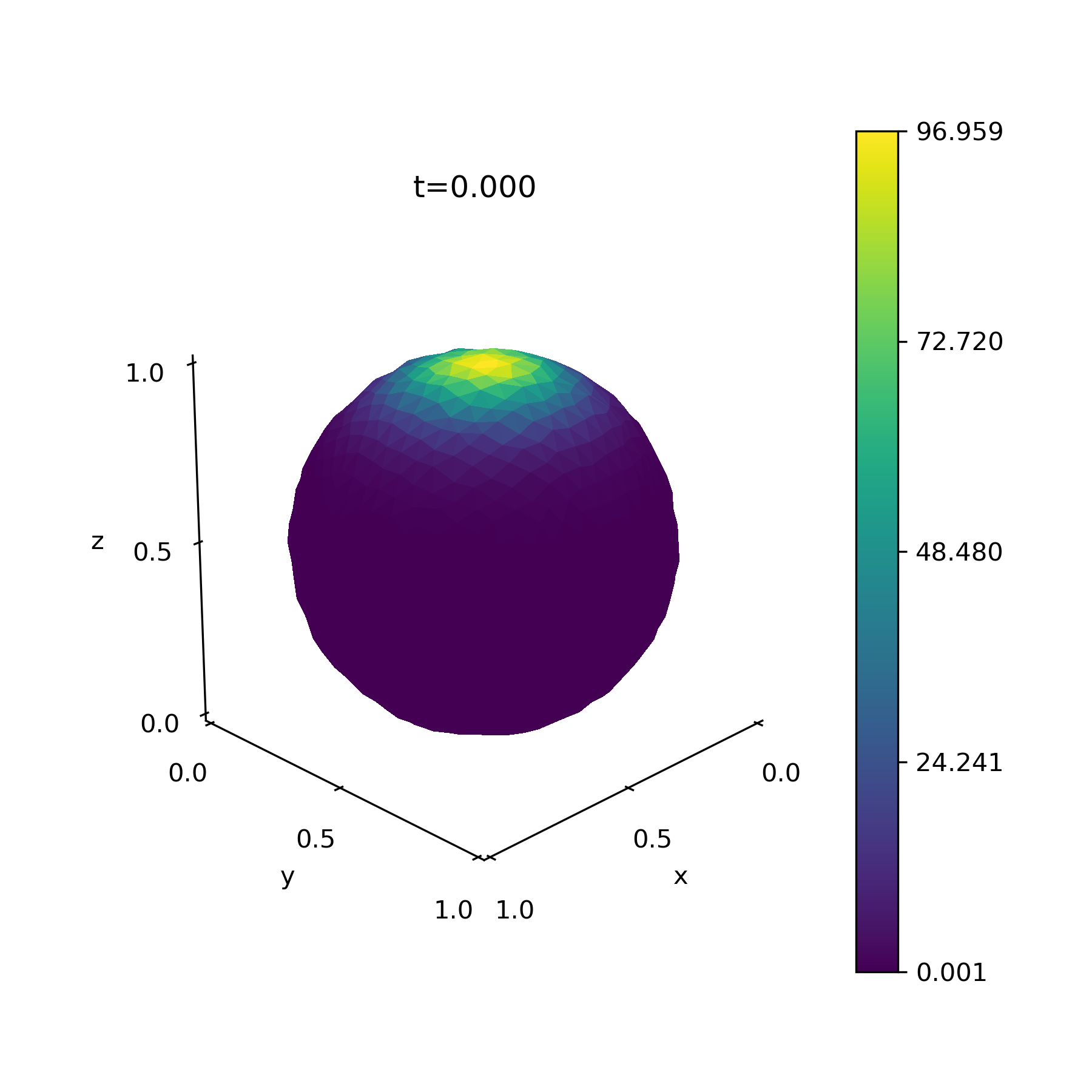}
    \includegraphics[width=2.8cm]{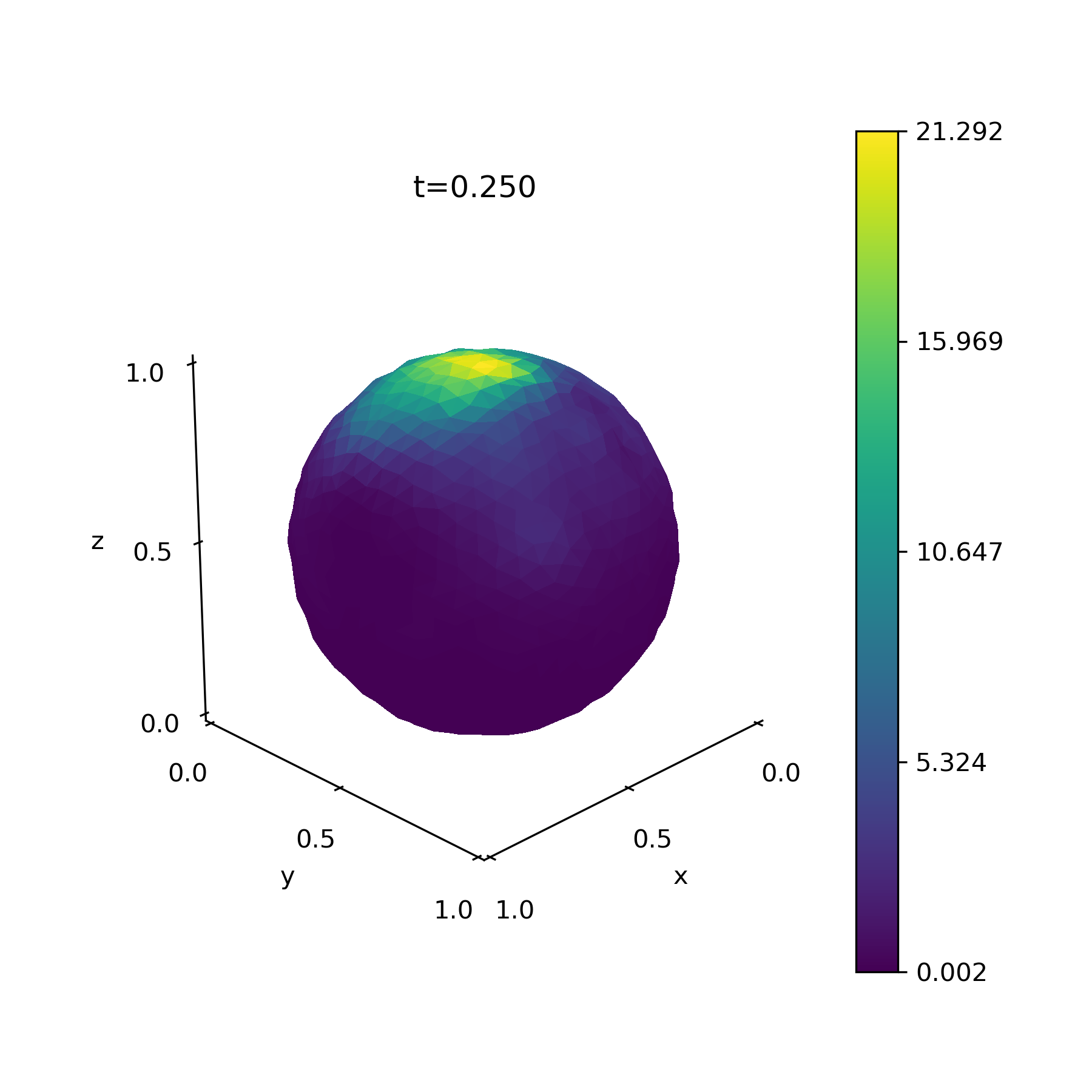}
    \includegraphics[width=2.8cm]{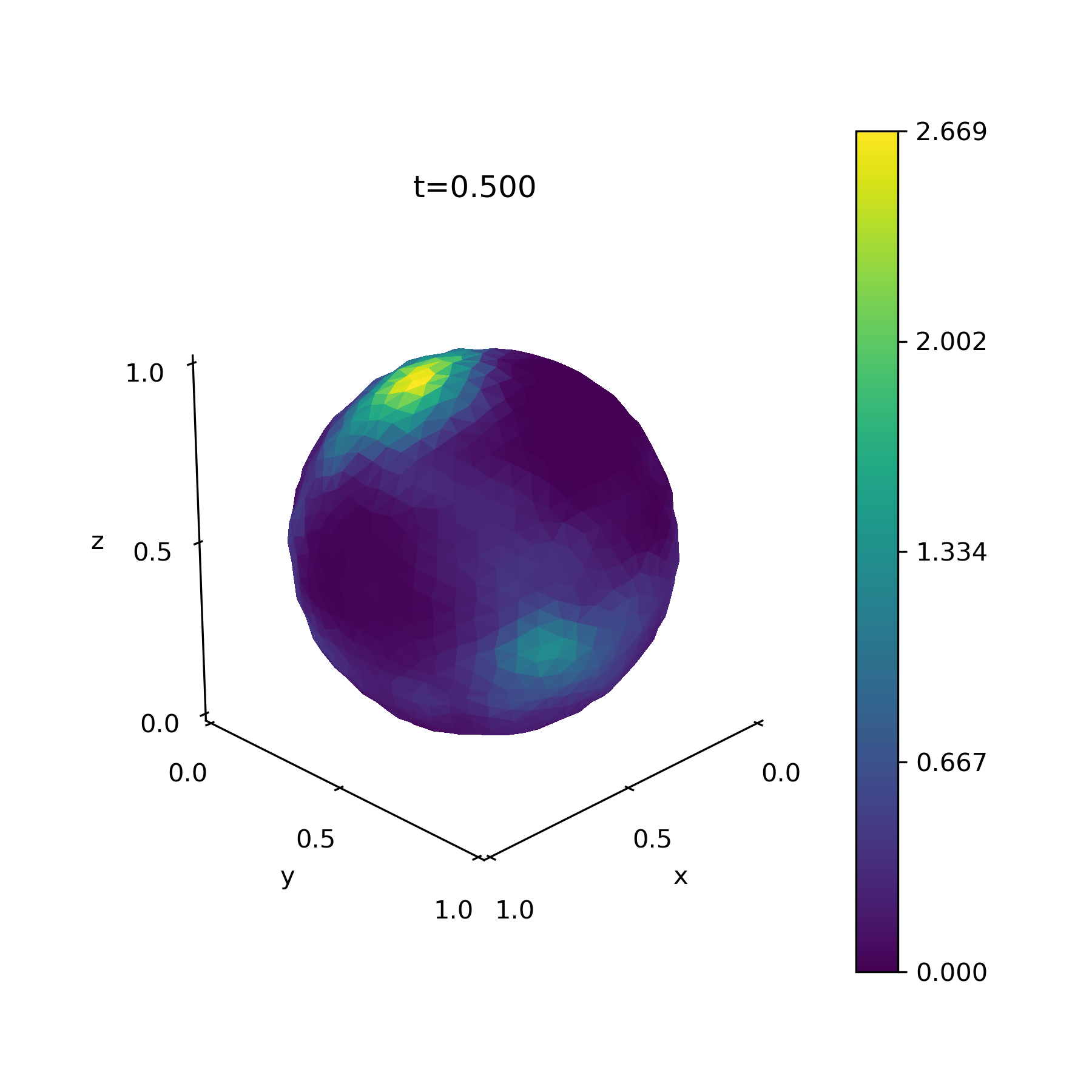}
    \includegraphics[width=2.8cm]{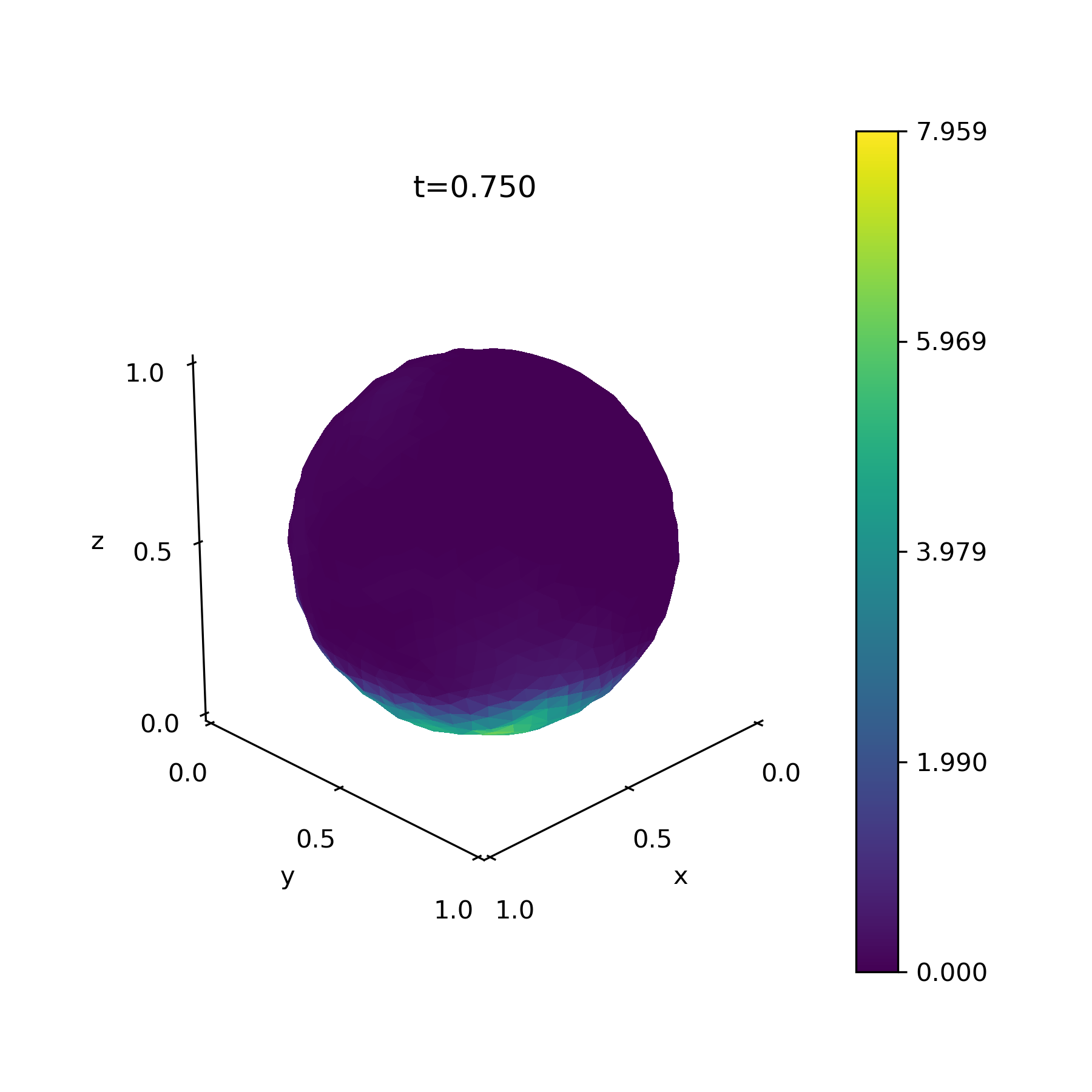}
    \includegraphics[width=2.8cm]{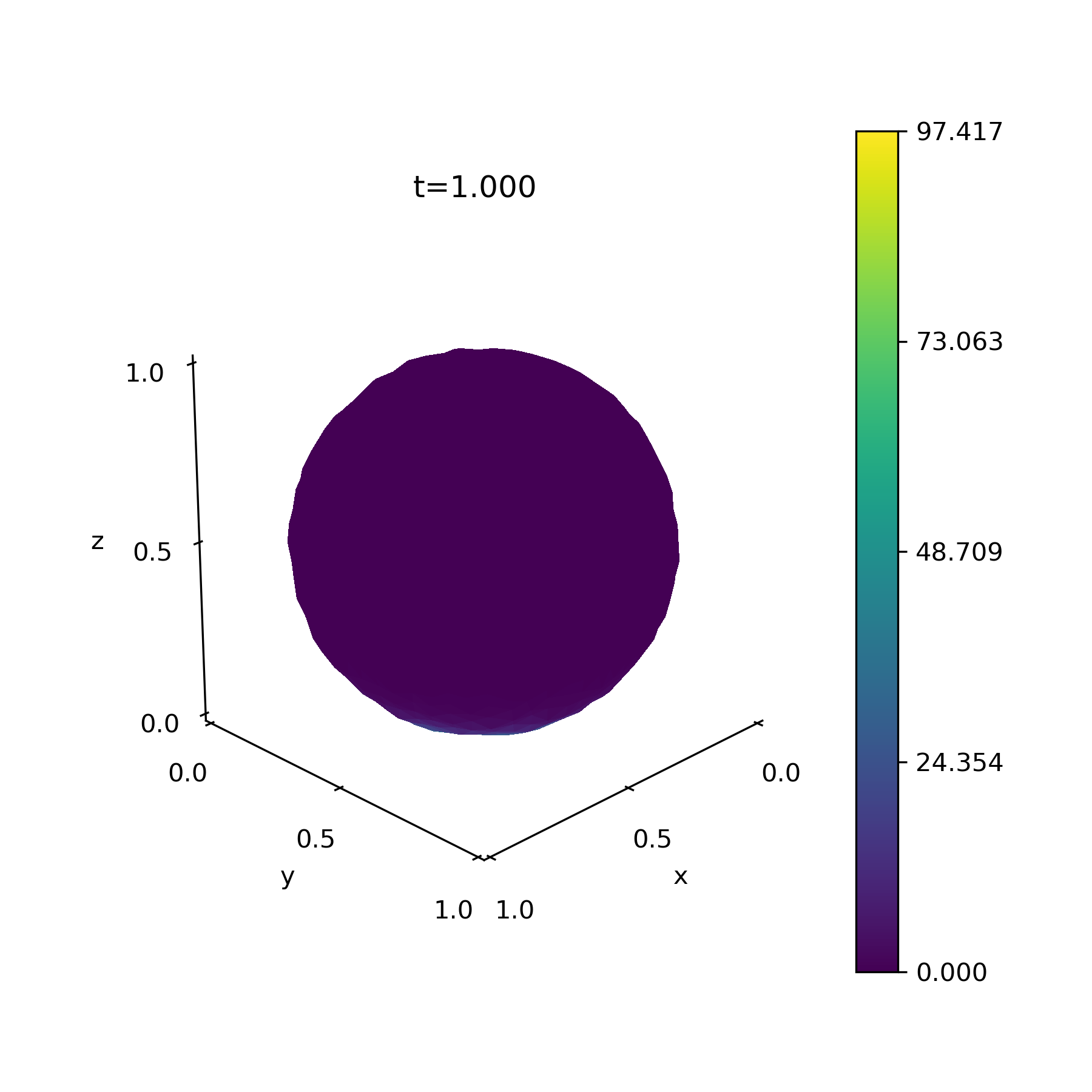}\\
    \vspace{5pt}
    
    \rotatebox{90}{$~~~~~~\omega_{\boldsymbol{x}}=10\%$}
    \subfigure[$\rho(0, \boldsymbol{x})$]{\includegraphics[width=2.8cm]{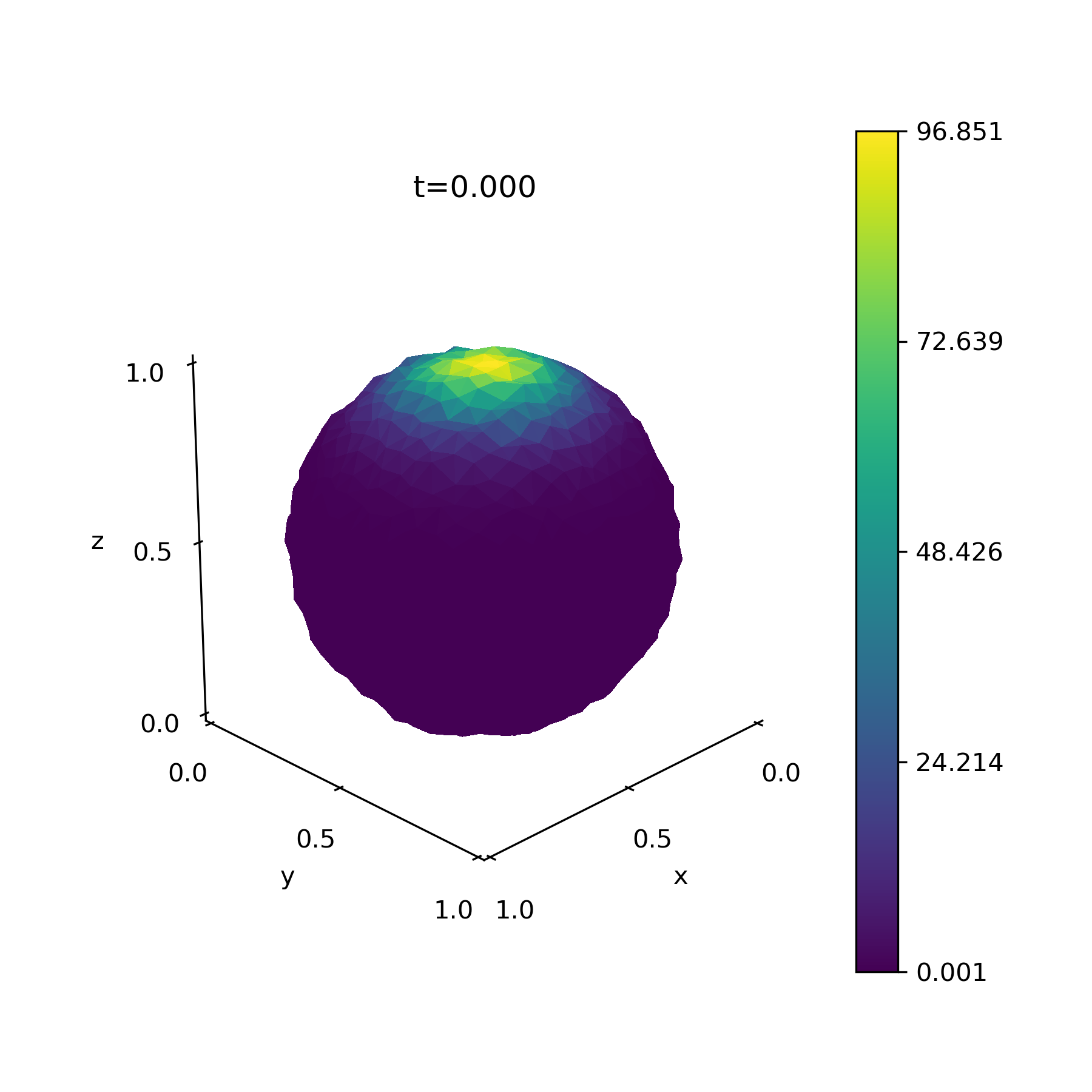}}
    \subfigure[$\rho(0.25, \boldsymbol{x})$]{\includegraphics[width=2.8cm]{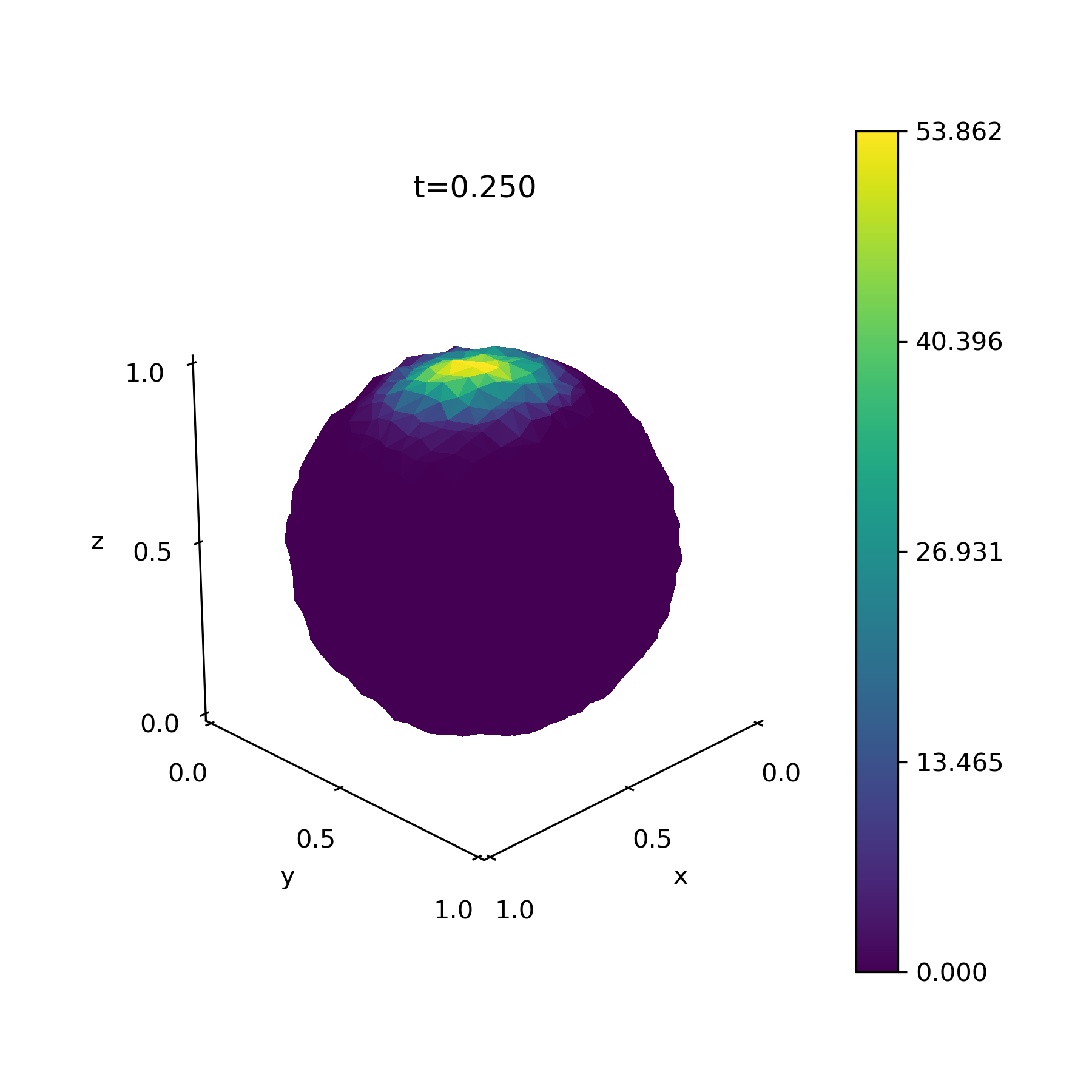}}
    \subfigure[$\rho(0.5, \boldsymbol{x})$]{\includegraphics[width=2.8cm]{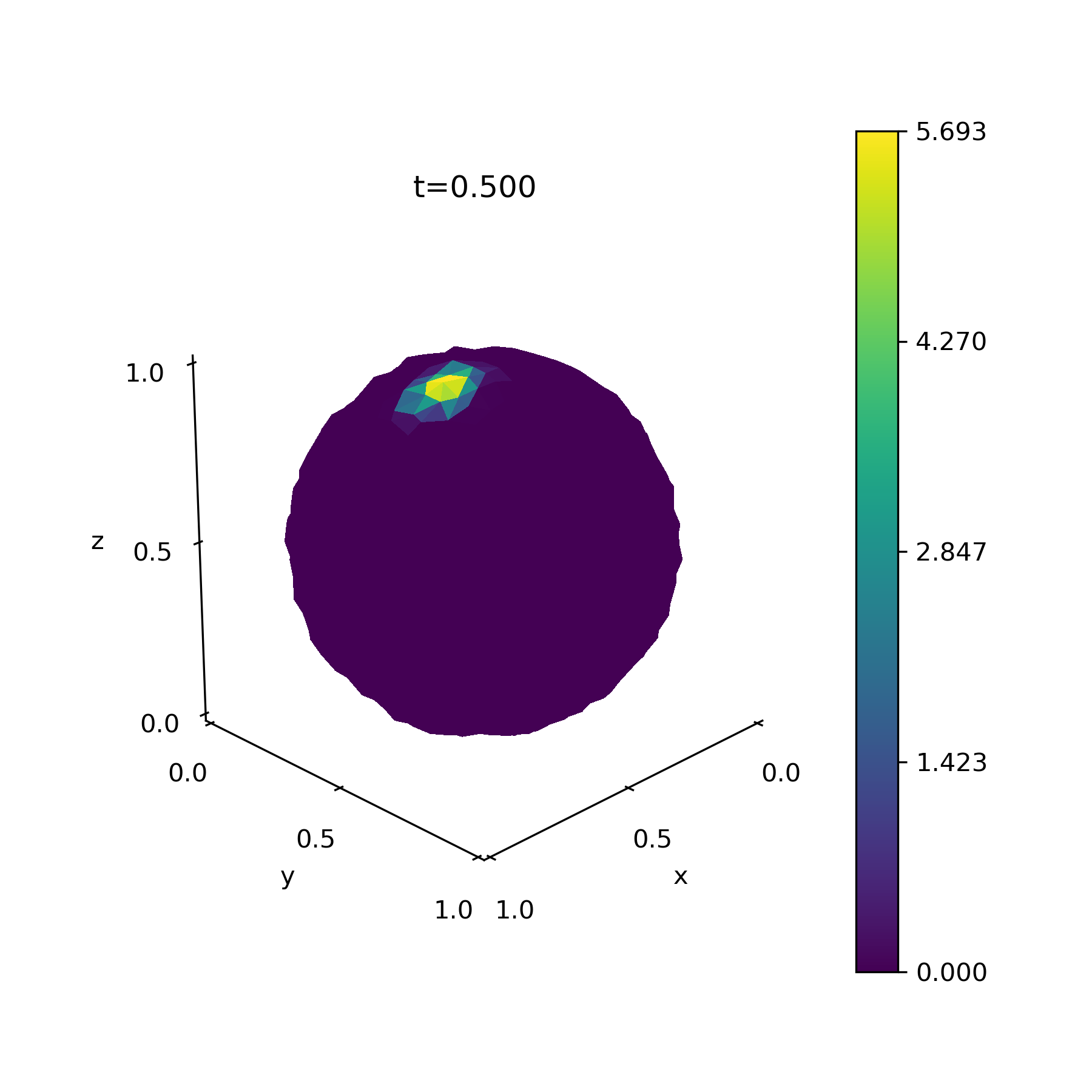}}
    \subfigure[$\rho(0.75, \boldsymbol{x})$]{\includegraphics[width=2.8cm]{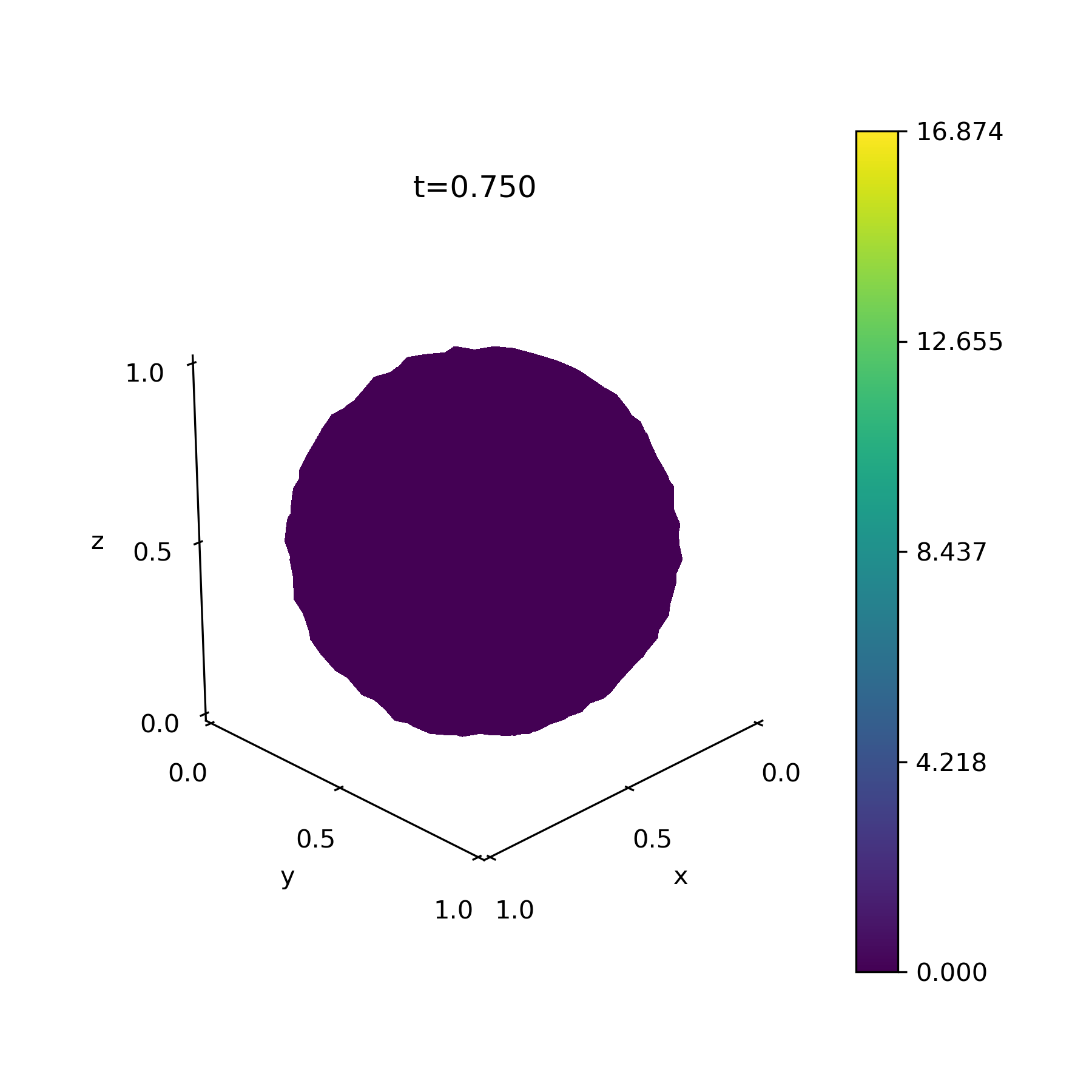}}
    \subfigure[$\rho(1, \boldsymbol{x})$]{\includegraphics[width=2.8cm]{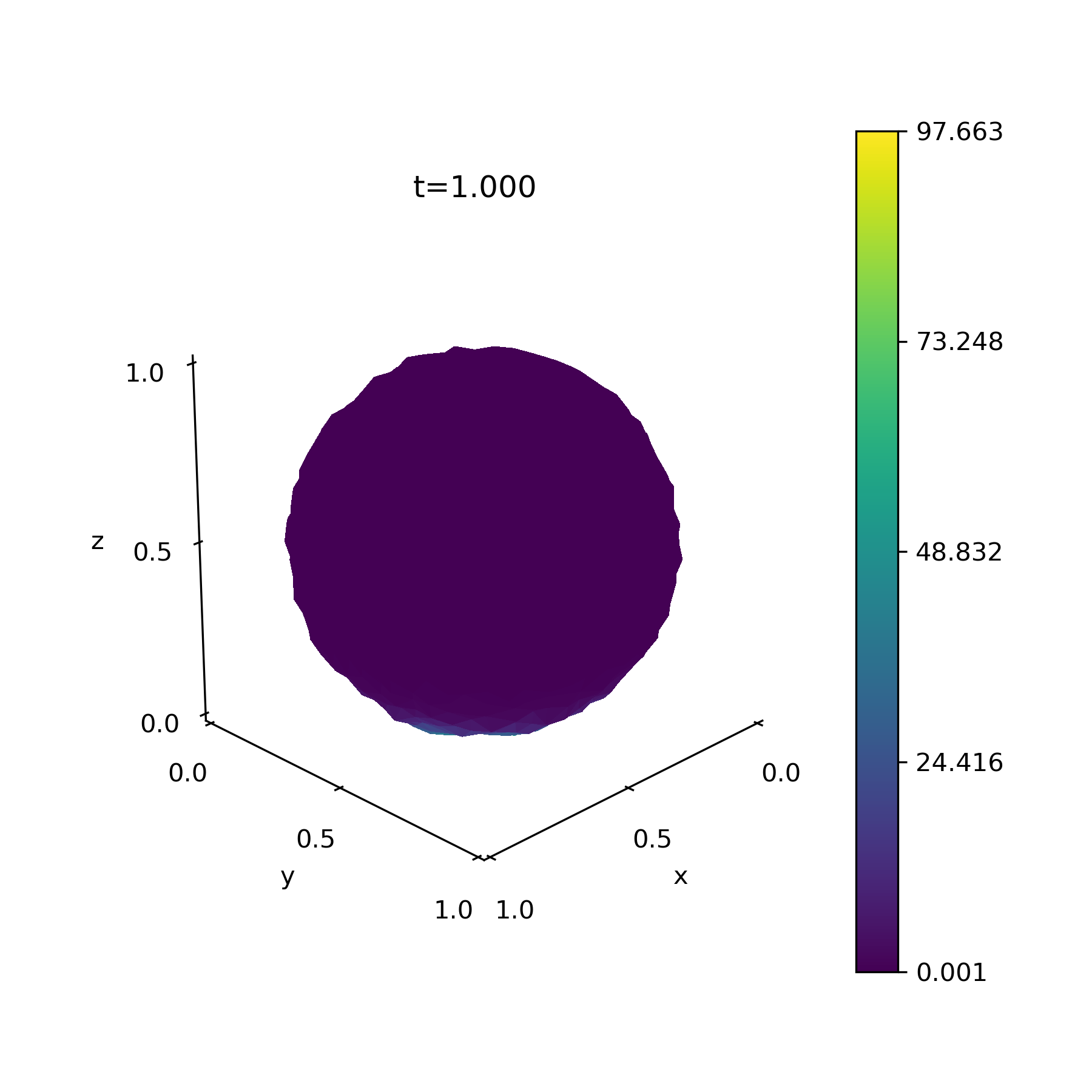}}
    
    \caption{MOT test on sphere with different noise on point cloud.}
    \label{MOT-Noise-sphere-nodes}
    \end{center}
\end{figure}

\begin{figure}[htbp]
    \begin{center}
    \subfigure[$\omega_{\boldsymbol{x}, \boldsymbol{n}}=0$]{\includegraphics[width=2.8cm]{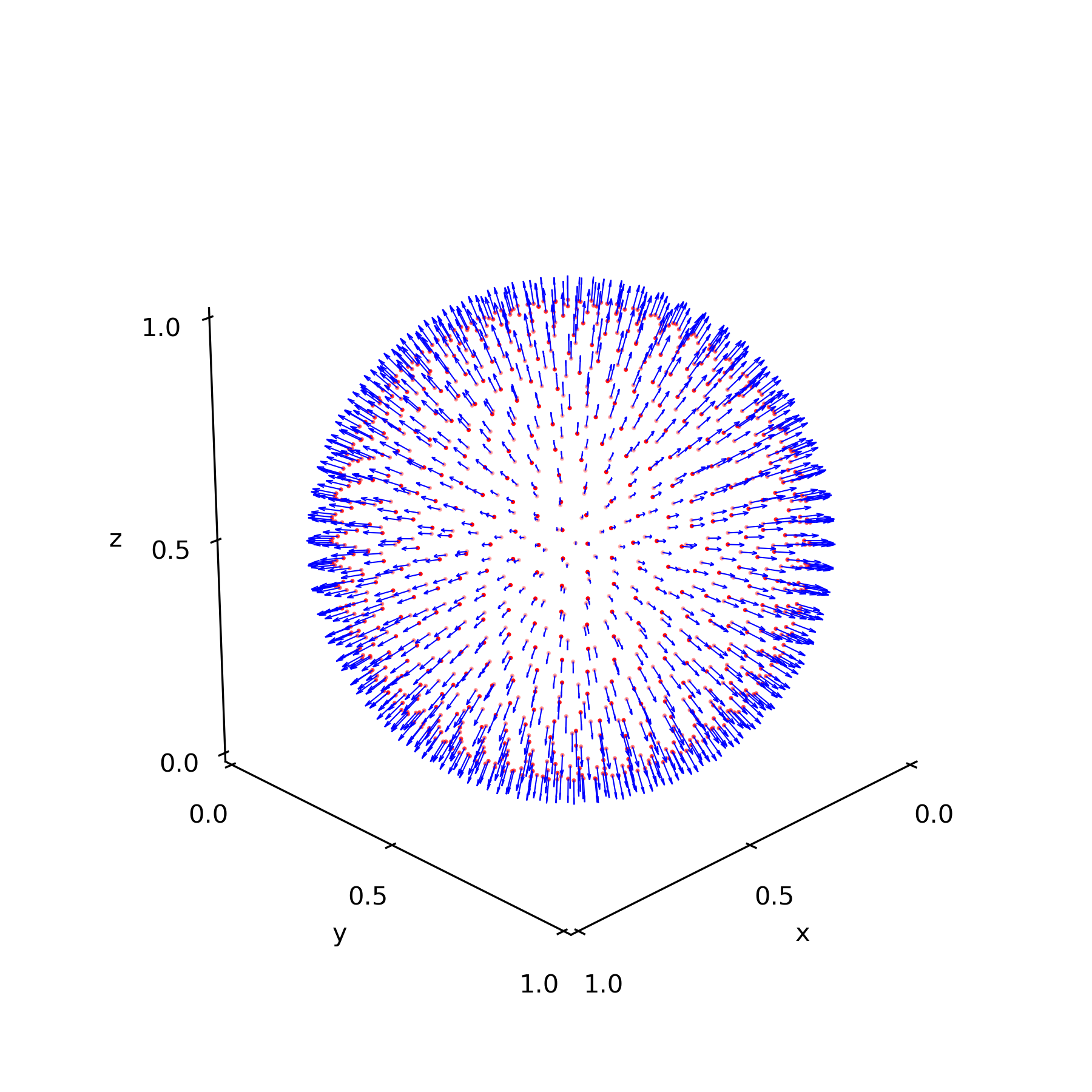}}
    \subfigure[$\omega_{\boldsymbol{x}, \boldsymbol{n}}=1\%$]{\includegraphics[width=2.8cm]{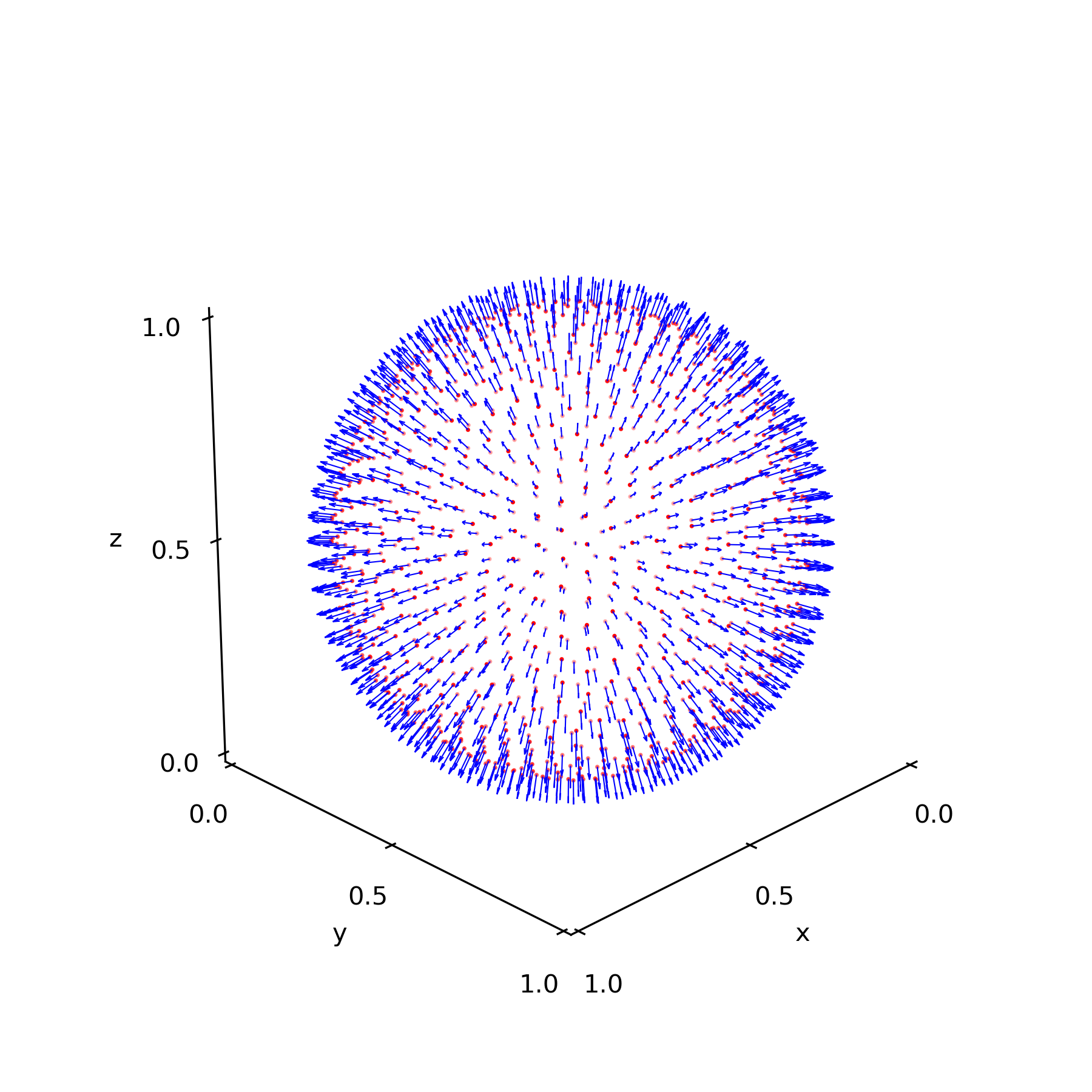}}
    \subfigure[$\omega_{\boldsymbol{x}, \boldsymbol{n}}=5\%$]{\includegraphics[width=2.8cm]{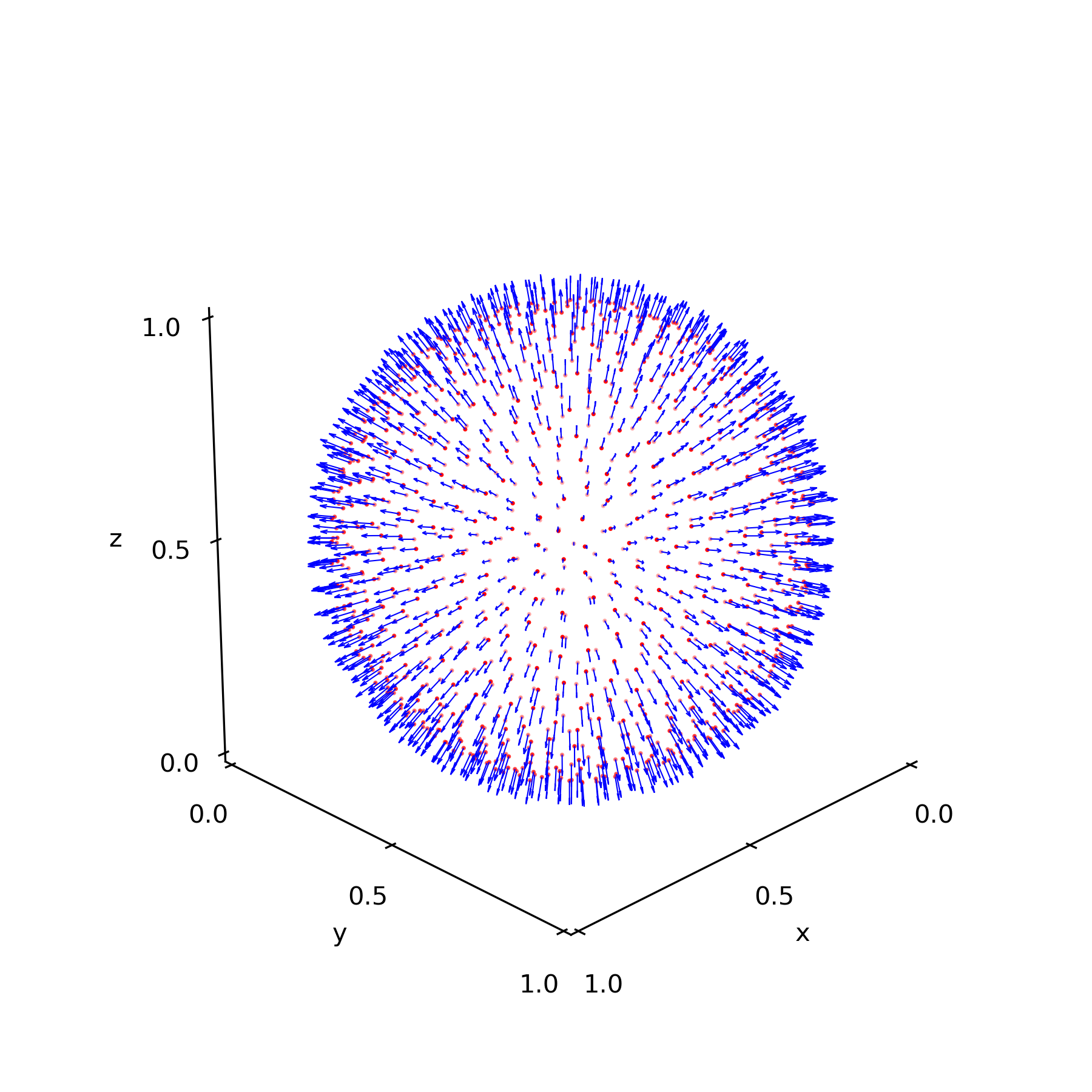}}
    \subfigure[$\omega_{\boldsymbol{x}, \boldsymbol{n}}=10\%$]{\includegraphics[width=2.8cm]{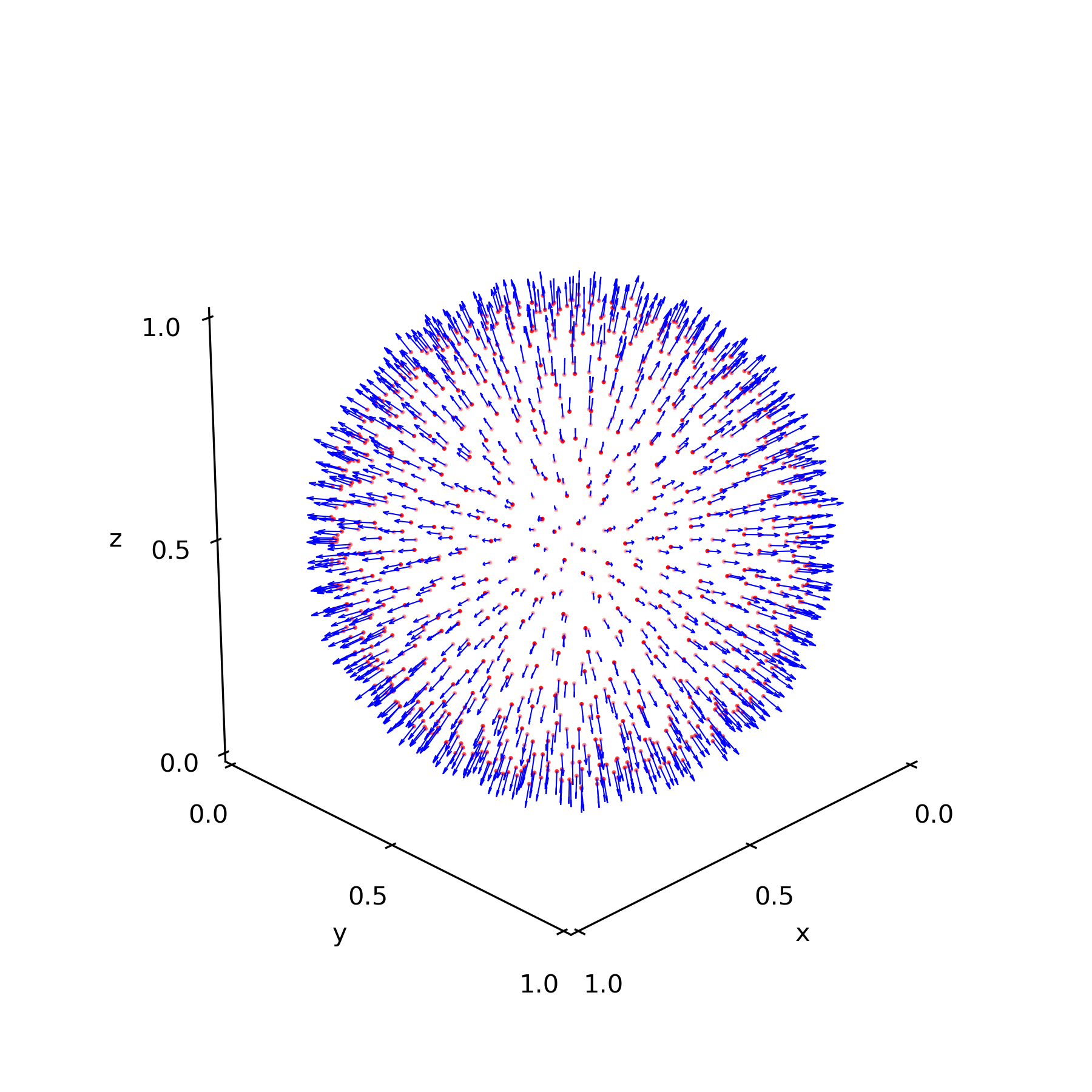}}
    \caption{points and normal vector with different level of noise added to the sphere.}
    \label{MOT-Noise-Sphere-nodes-nomral}
    \end{center}
\end{figure}

\begin{figure}[htbp]
    \begin{center}
    \rotatebox{90}{$~~~~~~\omega_{\boldsymbol{x}, \boldsymbol{n}}=1\%$}
    \includegraphics[width=2.8cm]{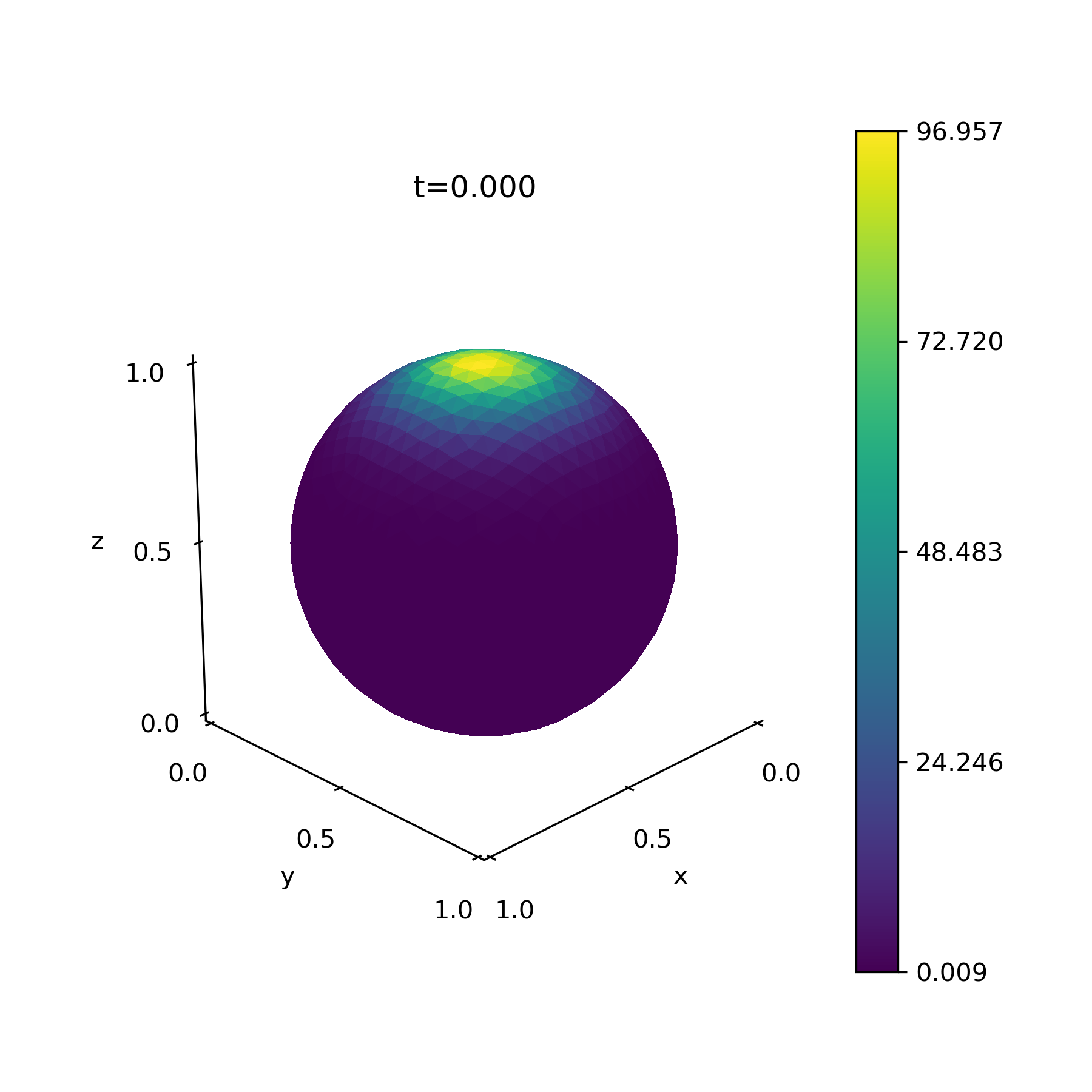}
    \includegraphics[width=2.8cm]{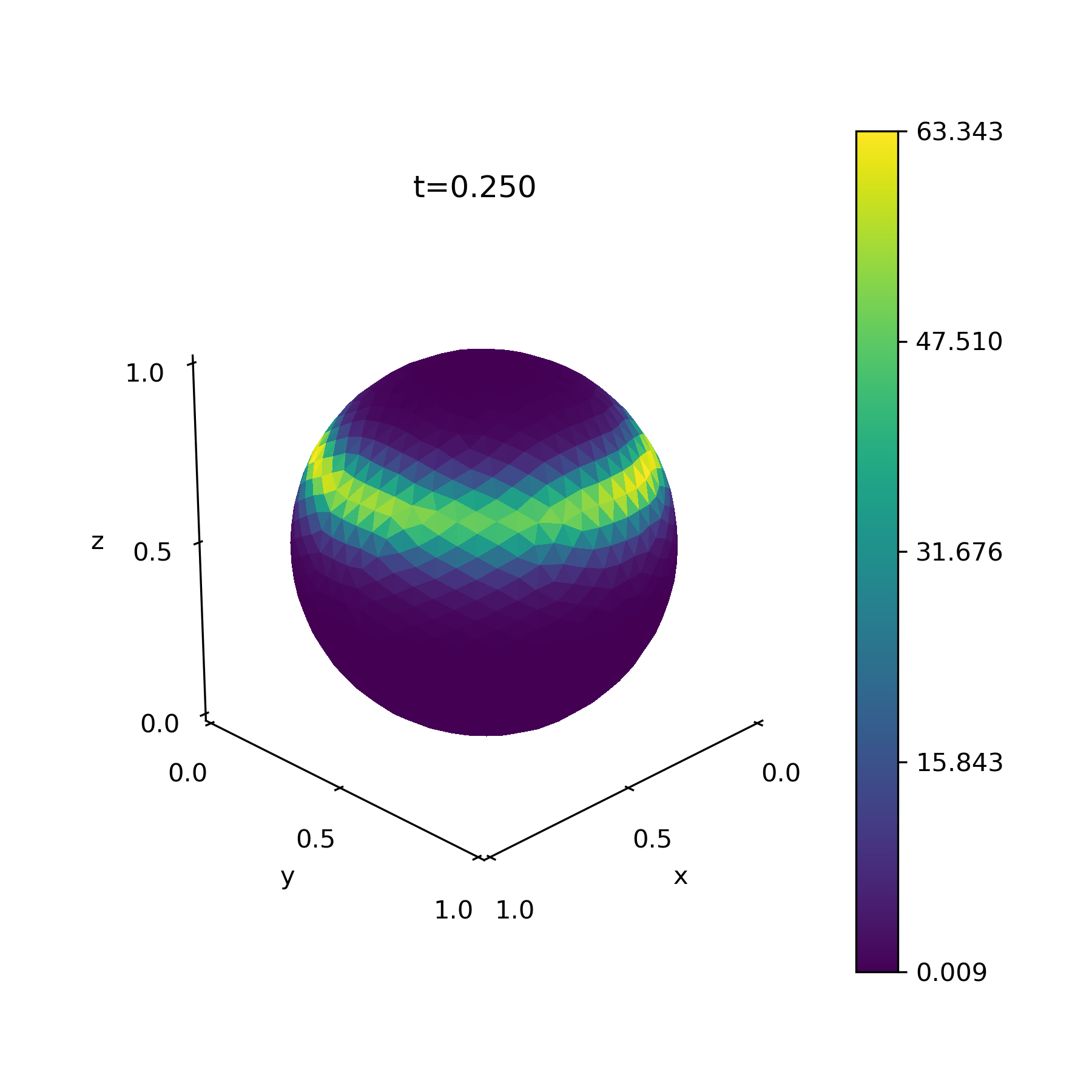}
    \includegraphics[width=2.8cm]{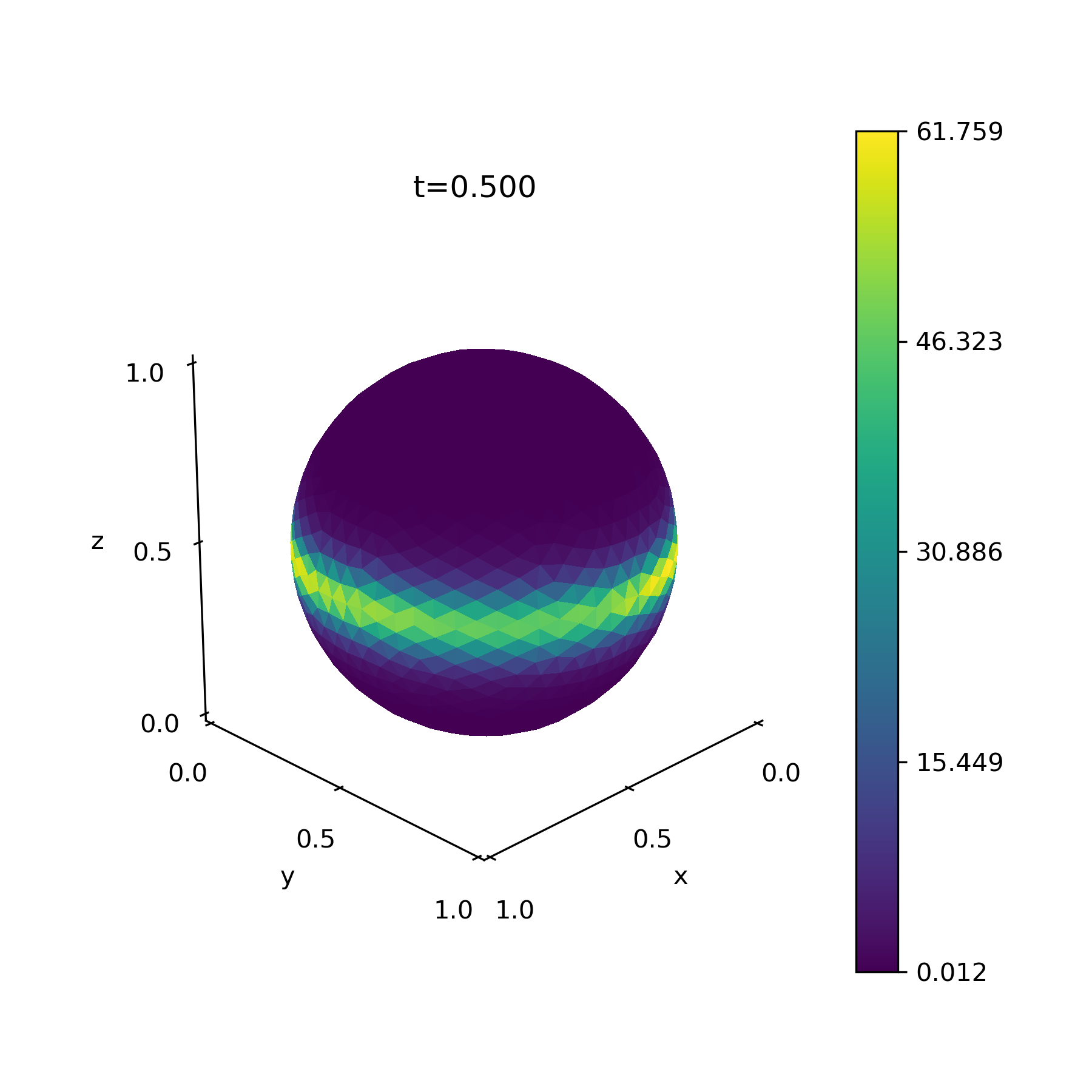}
    \includegraphics[width=2.8cm]{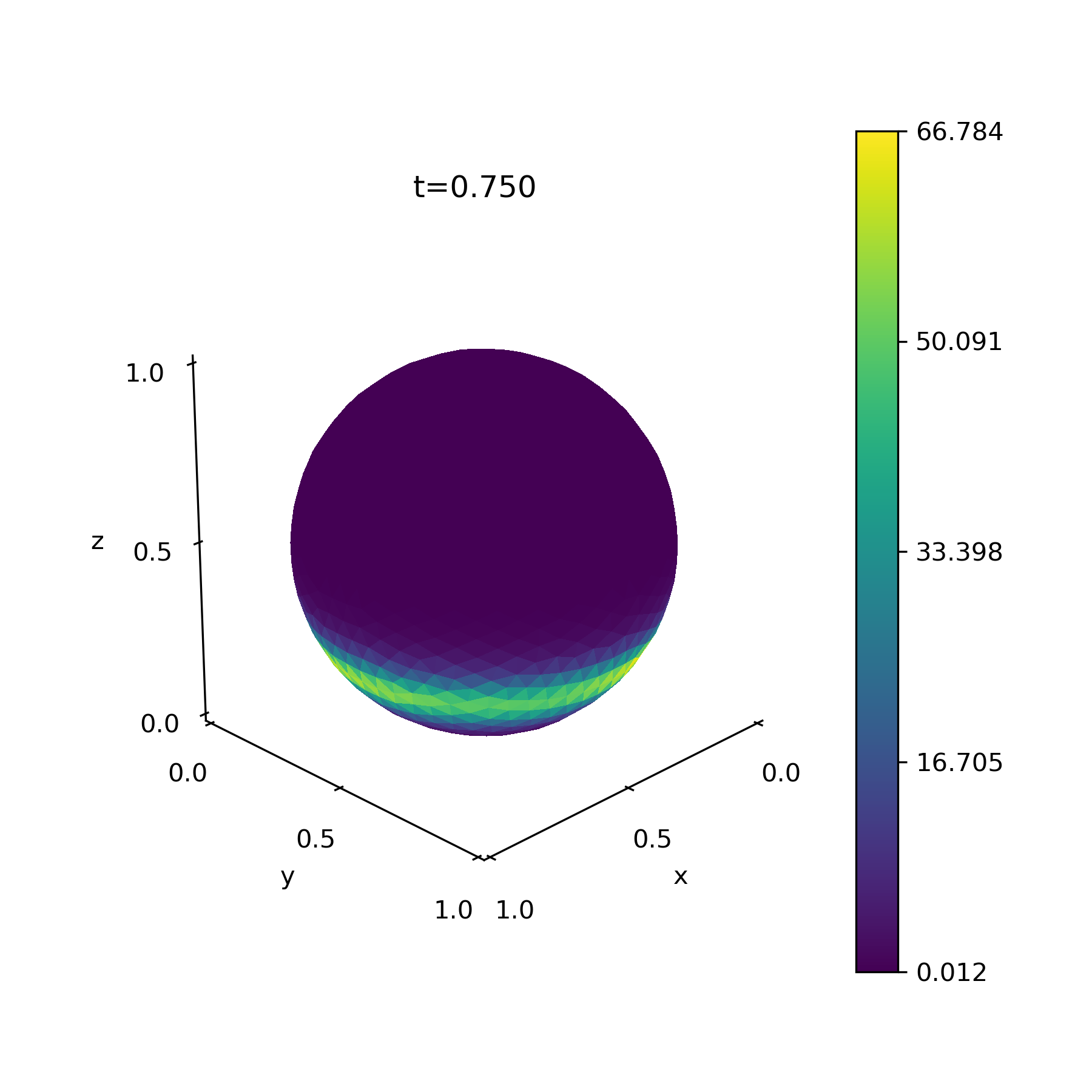}
    \includegraphics[width=2.8cm]{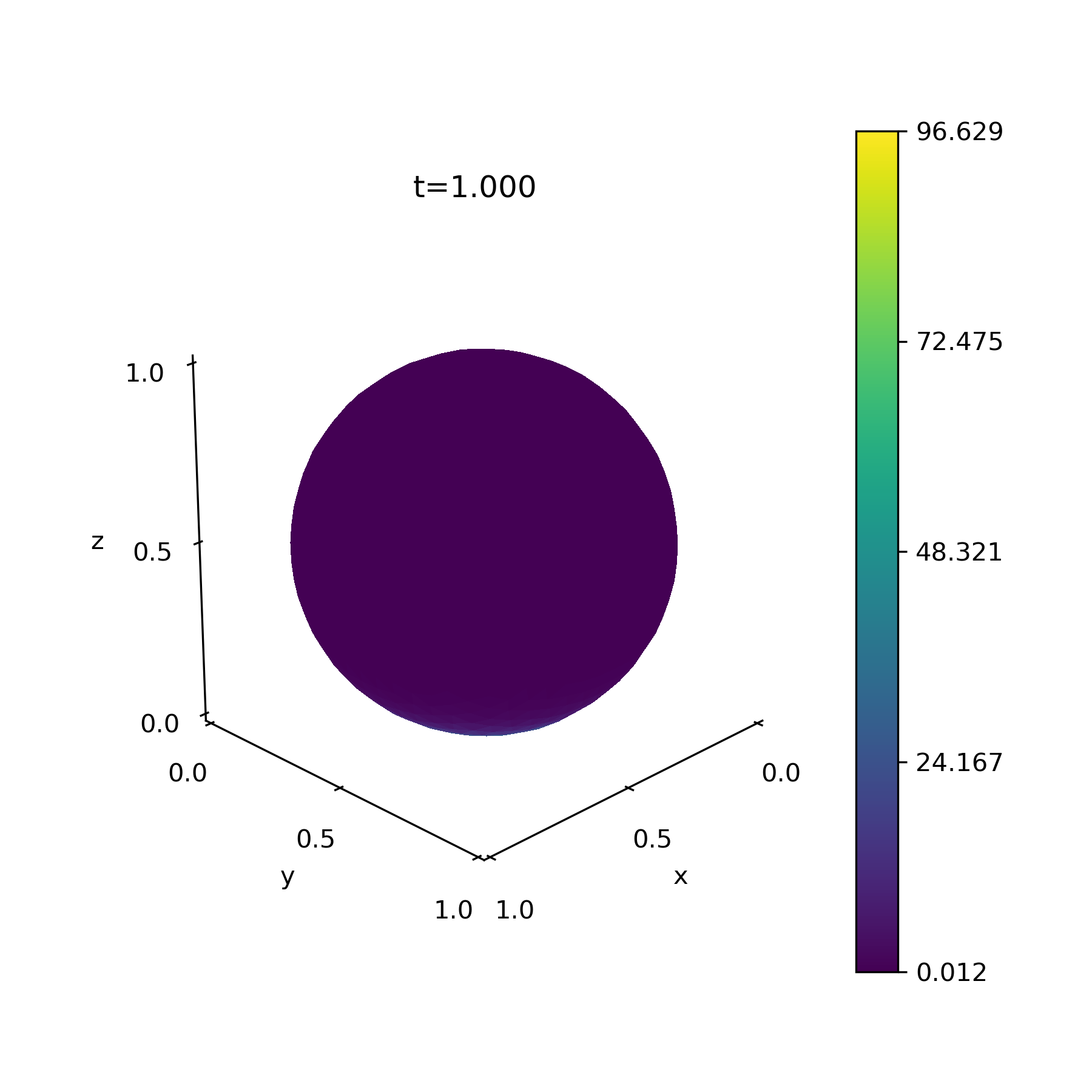}\\
    \vspace{5pt}
    
    \rotatebox{90}{$~~~~~~\omega_{\boldsymbol{x}, \boldsymbol{n}}=5\%$}
    \includegraphics[width=2.8cm]{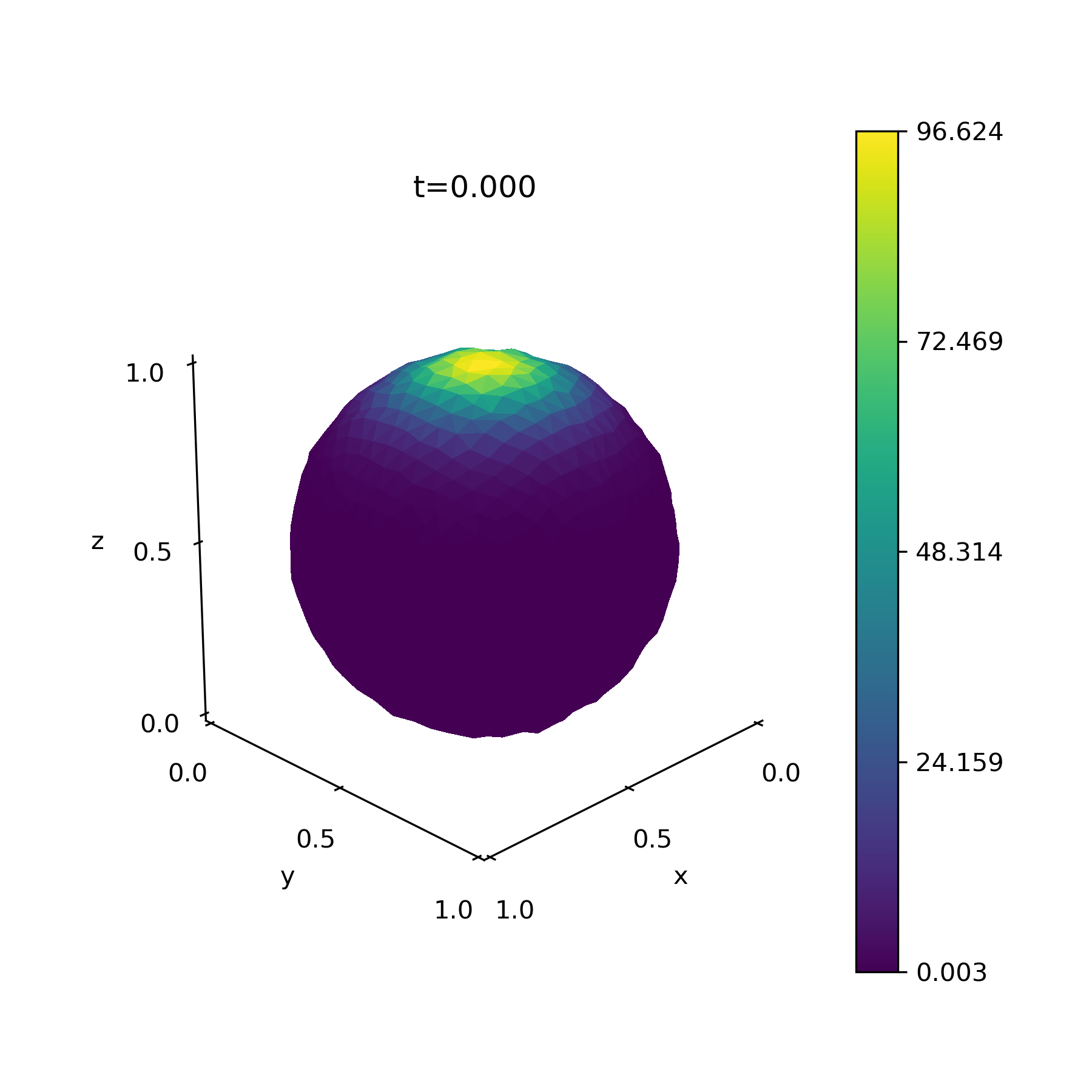}
    \includegraphics[width=2.8cm]{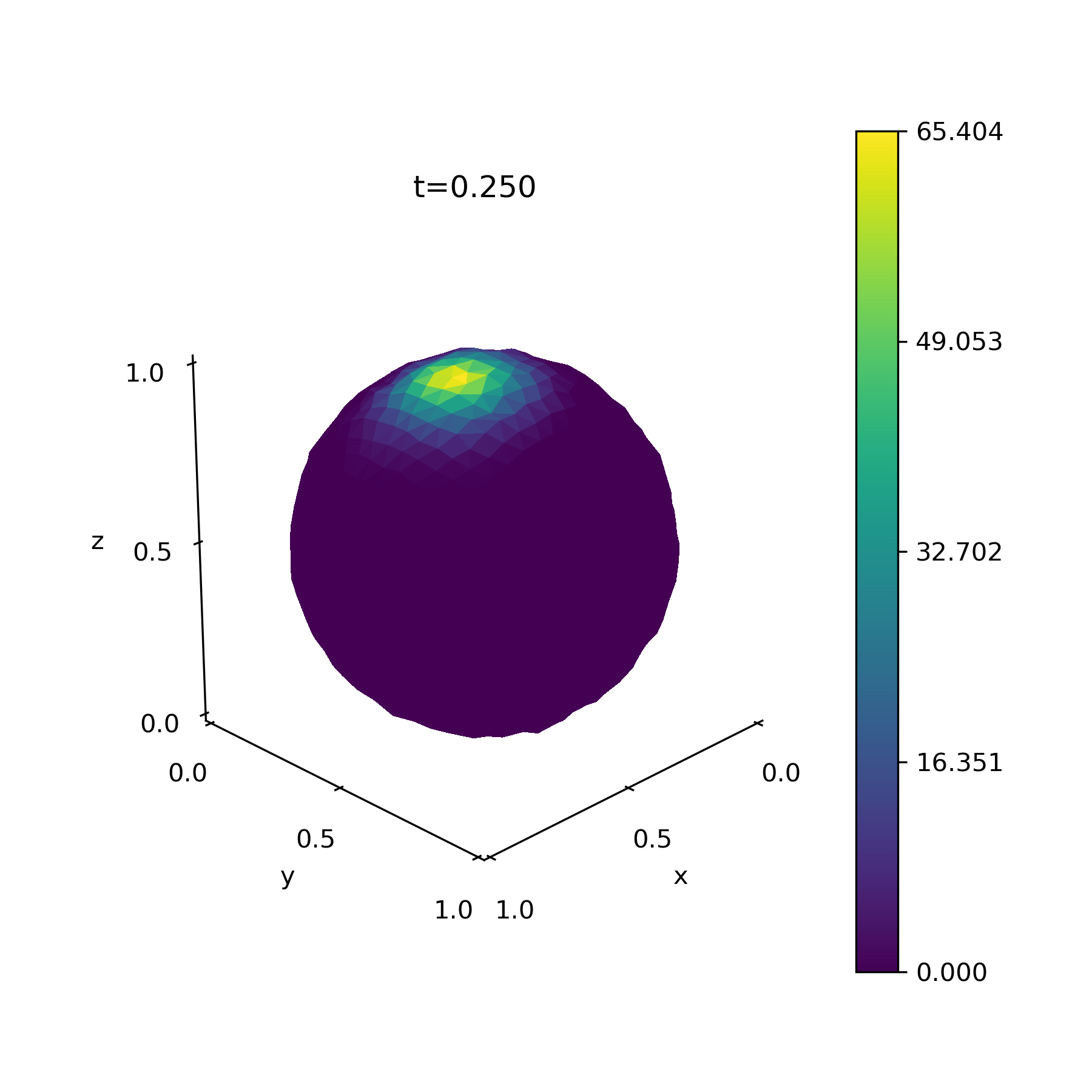}
    \includegraphics[width=2.8cm]{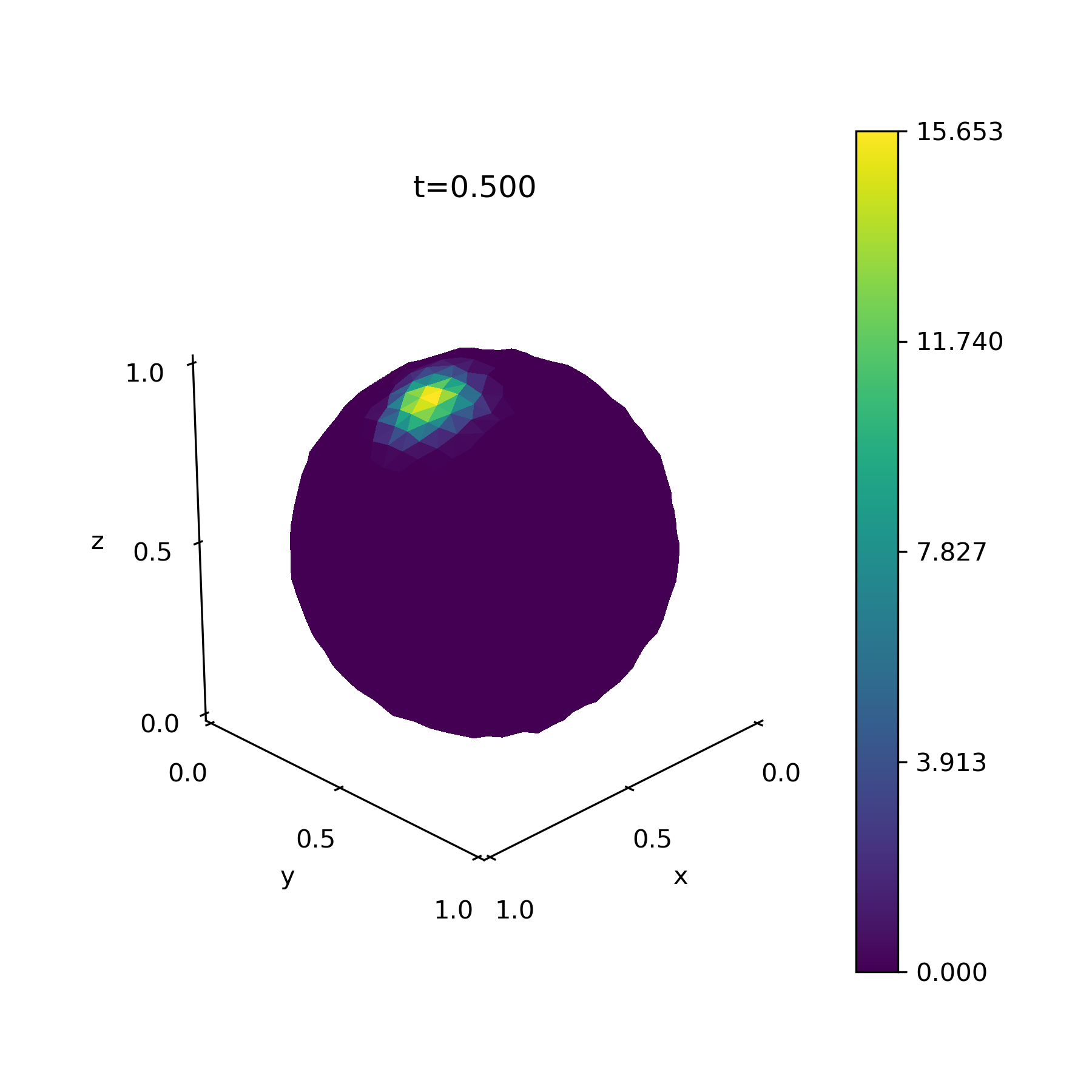}
    \includegraphics[width=2.8cm]{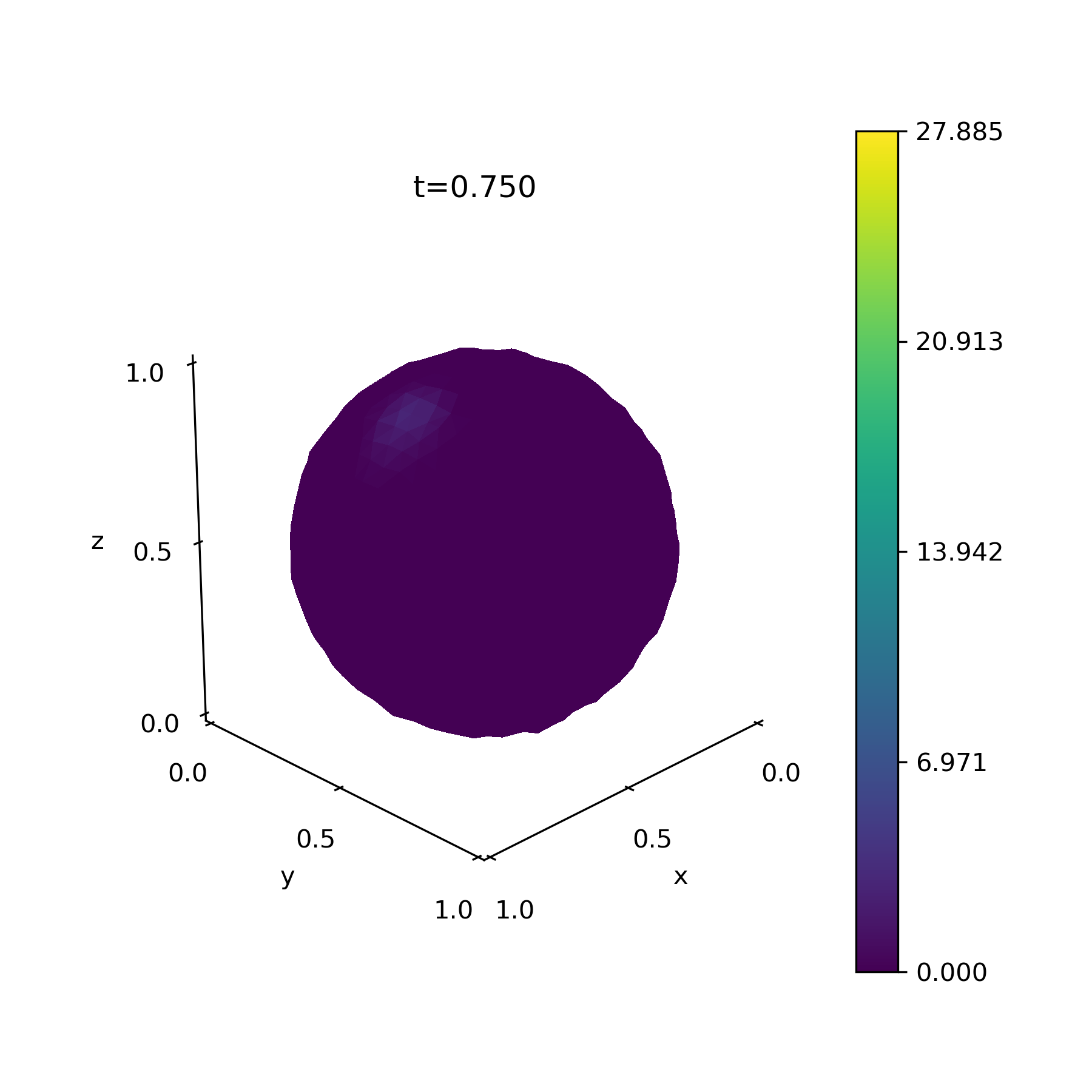}
    \includegraphics[width=2.8cm]{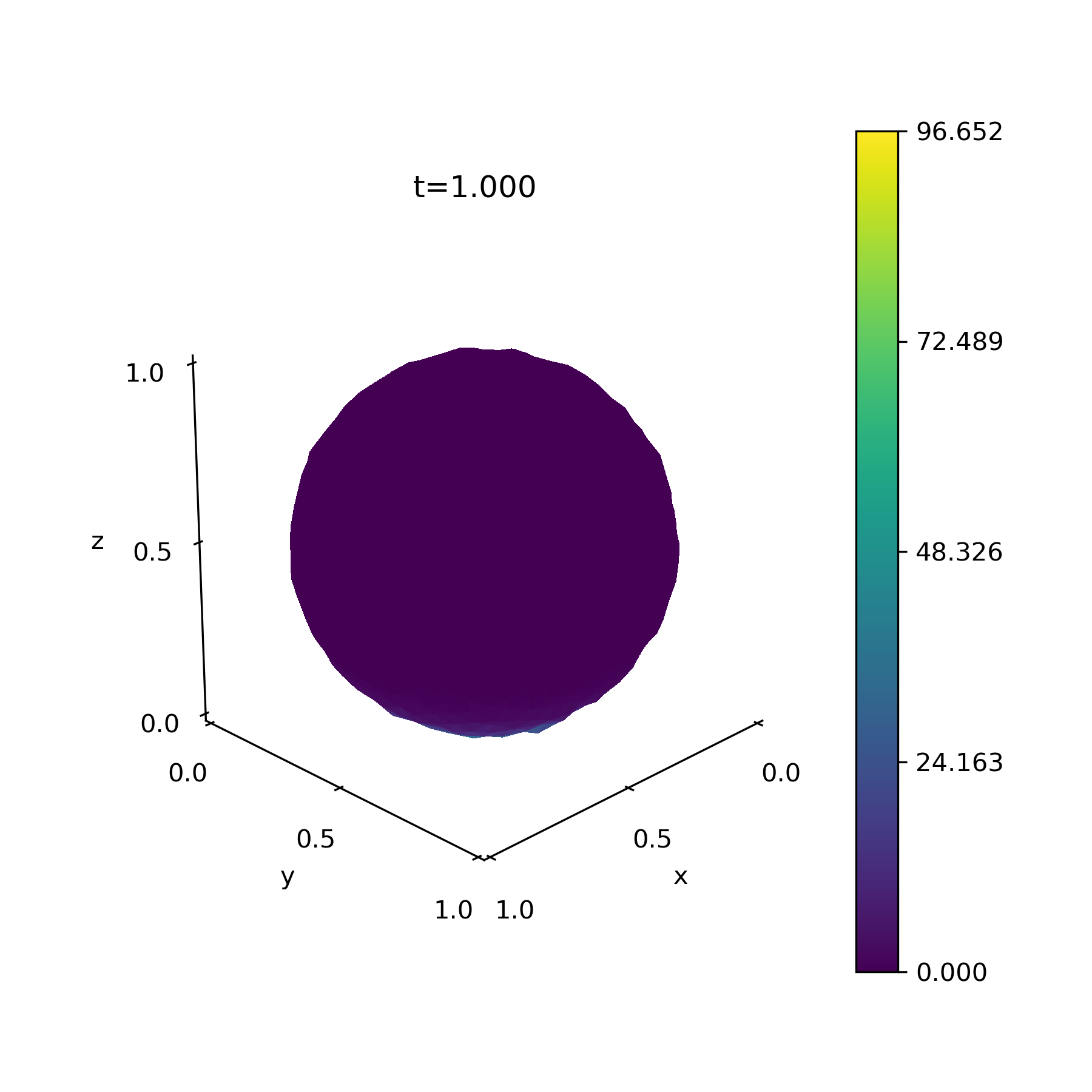}\\
    \vspace{5pt}
    
    \rotatebox{90}{$~~~~~~\omega_{\boldsymbol{x}, \boldsymbol{n}}=10\%$}
    \subfigure[$\rho(0, \boldsymbol{x})$]{\includegraphics[width=2.8cm]{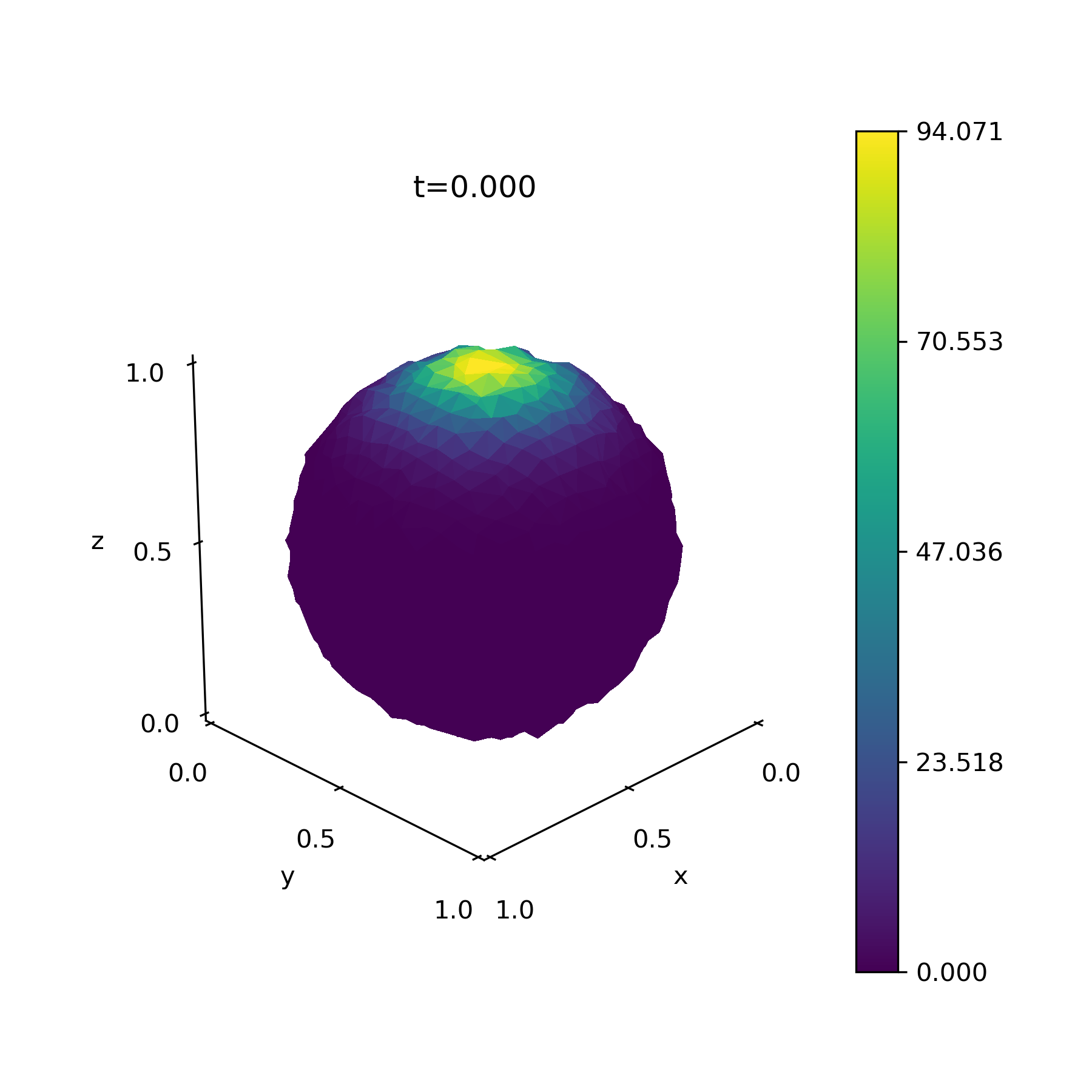}}
    \subfigure[$\rho(0.25, \boldsymbol{x})$]{\includegraphics[width=2.8cm]{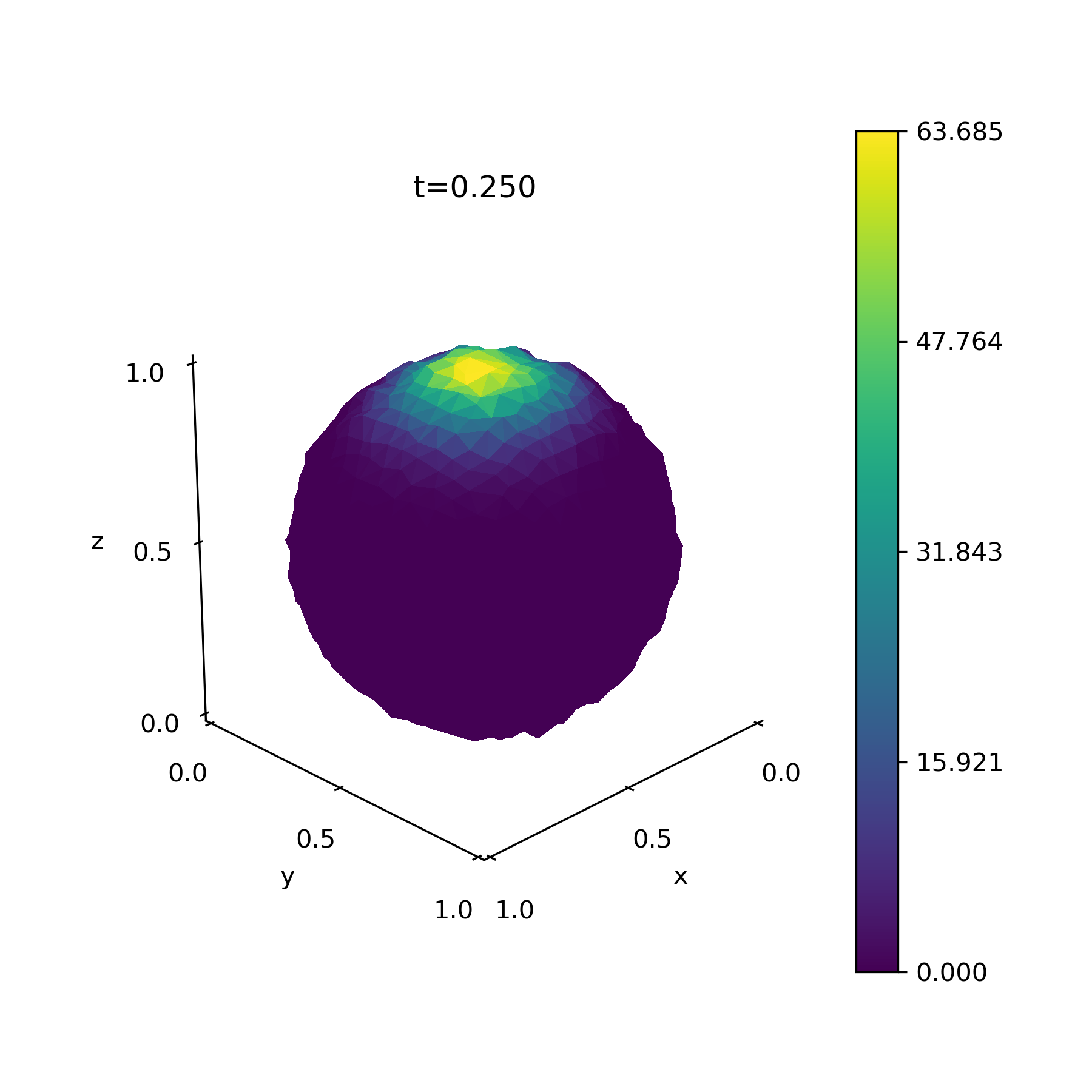}}
    \subfigure[$\rho(0.5, \boldsymbol{x})$]{\includegraphics[width=2.8cm]{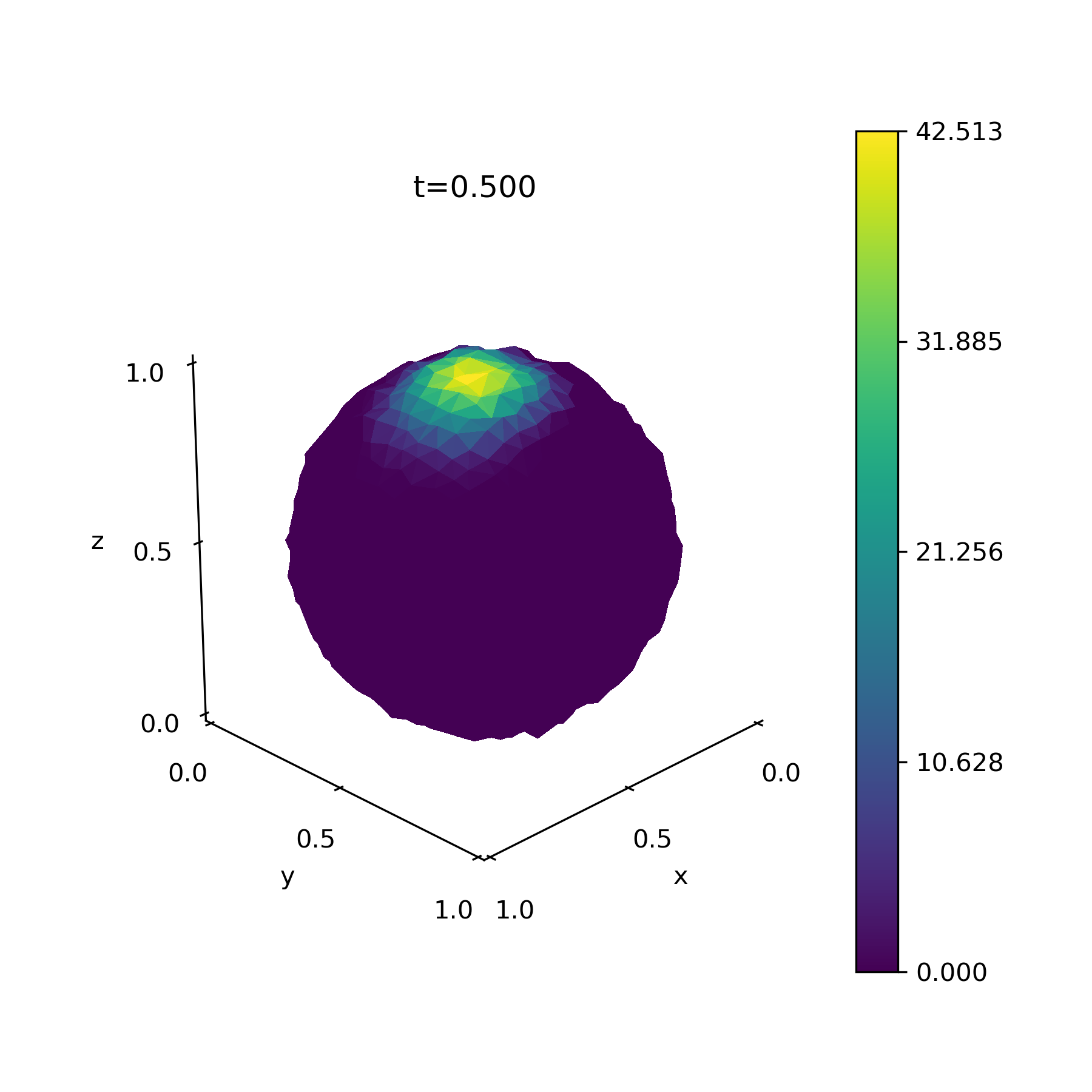}}
    \subfigure[$\rho(0.75, \boldsymbol{x})$]{\includegraphics[width=2.8cm]{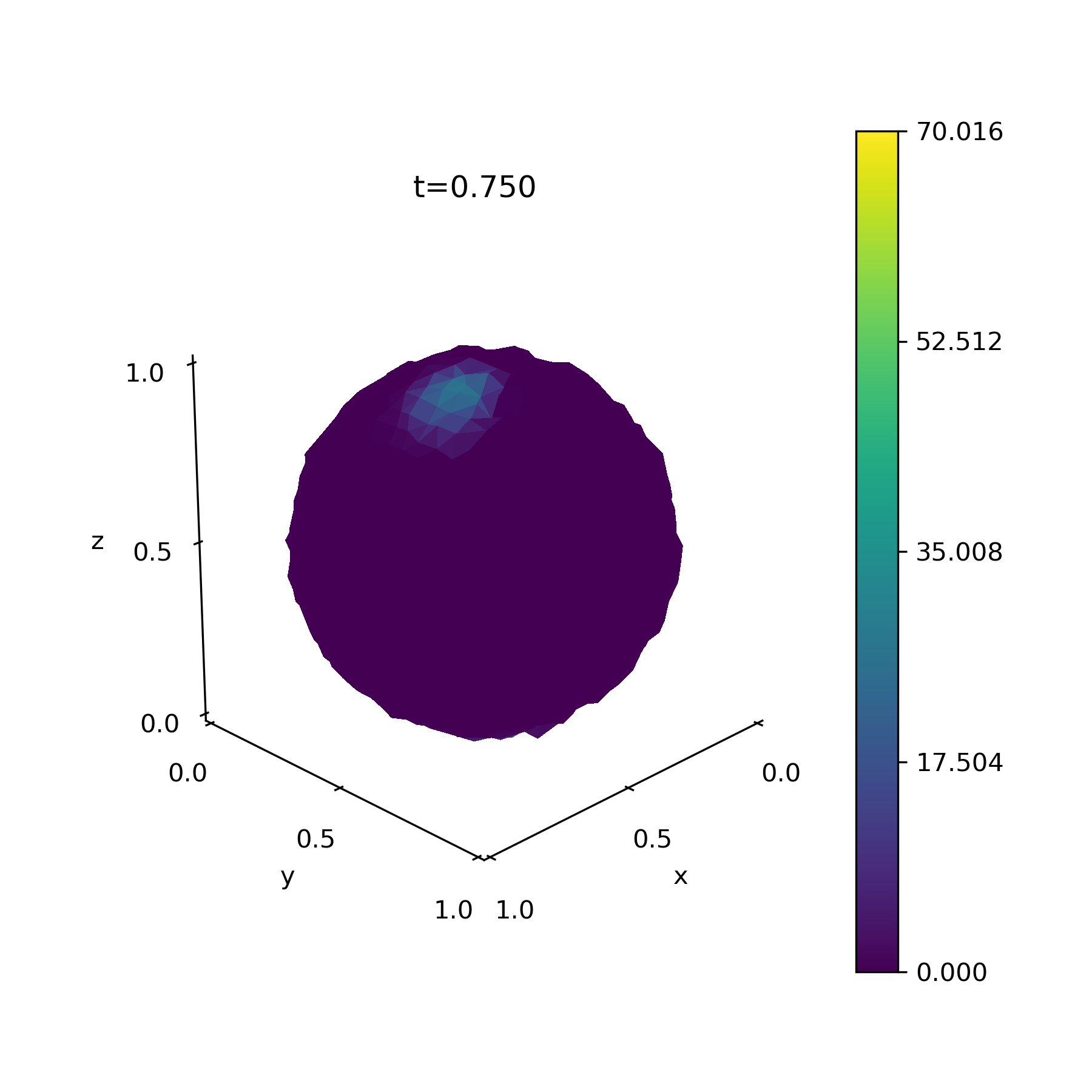}}
    \subfigure[$\rho(1, \boldsymbol{x})$]{\includegraphics[width=2.8cm]{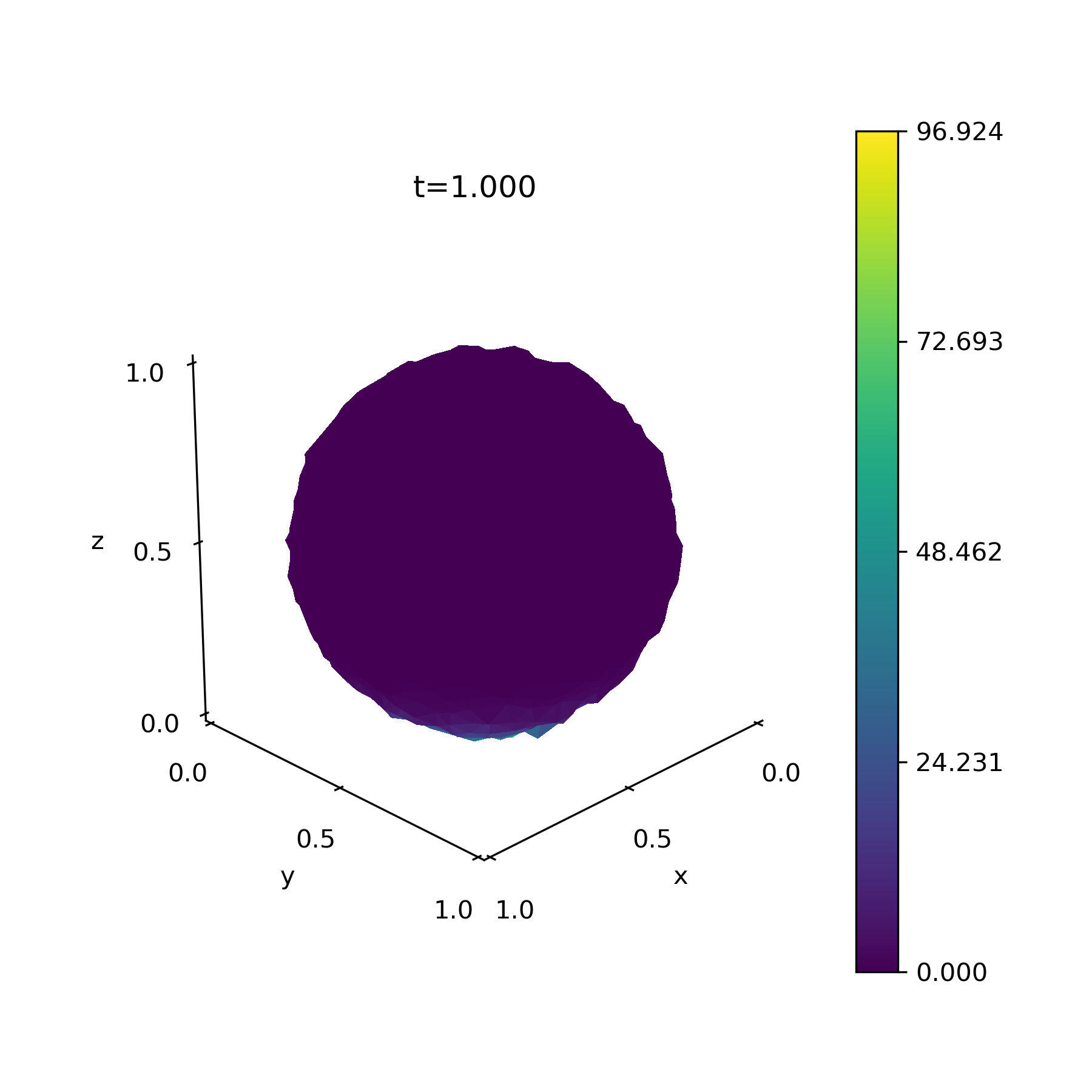}}
    
    \caption{MOT test on sphere with different noise on both points and normal vectors.}
    \label{MOT-Noise-sphere-nodes-nomral}
    \end{center}
\end{figure}

The method is more sensitive to the perturbation on points. As shown in Figure \ref{MOT-Noise-sphere-nodes}, 5\% noise is enough to make the method fail to give the true solution. 

Finally, we add noise to both points and normal vectors, Figure \ref{MOT-Noise-sphere-nodes-nomral}. In this case, the performance is similar to points perturbed only case,  only with $1\%$ noise, the result is stable.
\begin{table}[htbp]
    \centering
    \caption{Loss value for different noise on sphere.}
    \label{tab:suface-noise-loss}
    \begin{tabular}{cc|ccc}
        \hline
           &   & 1\% & 5\% & 10\%  \\
        \hline
        \multirow{4}{*}{$\omega_{\boldsymbol{n}}$} & $\mathcal{W}_{M}$ & 2.48e1 & 1.55e1 & 5.48e0 \\
                         & $\mathcal{L}_{c}$ & 3.67e0 & 5.43e0 & 2.29e0 \\ 
                         & $\mathcal{L}_{hj}$ & 7.29e-1 & 2.37e-1 & 7.49e-1 \\ 
                         & $\mathcal{L}_{ic}$ & 4.81e-3 & 2.46e-3 & 1.67e-2 \\ 
        \hline
        \multirow{4}{*}{$\omega_{\boldsymbol{x}}$} & $\mathcal{W}_{M}$ & 1.56e1 & 1.43e0 & 1.09e0 \\ 
                         & $\mathcal{L}_{c}$ & 5.40e-1 & 9.87e-1 & 1.80e-1 \\ 
                         & $\mathcal{L}_{hj}$ & 1.94e-1 & 8.82e-2 & 2.06e-2 \\ 
                         & $\mathcal{L}_{ic}$ & 1.99e-2 & 2.05e-2 & 5.51e-3 \\ 
        \hline
        \multirow{4}{*}{$\omega_{\boldsymbol{x}, \boldsymbol{n}}$} & $\mathcal{W}_{M}$ & 1.94e1 & 1.43e0 & 7.11e-1 \\ 
                 & $\mathcal{L}_{c}$ & 4.30e0 & 1.05e0 & 2.10e0 \\ 
                 & $\mathcal{L}_{hj}$ & 3.55e-1 & 1.69e-1 & 8.24e-2 \\ 
                 & $\mathcal{L}_{ic}$ & 8.05e-2 & 2.95e-2 & 1.59e-2 \\ 
        \hline
    \end{tabular}
\end{table}
Finally, we give the value of loss function in Table \ref{tab:suface-noise-loss}. 
With different noise level, $1\%$, $5\%$, $10\%$, the loss is comparable. But the results are completely change in high noise cases, due to the perturbation of points and normal vectors.

\subsection{Shape Transfer}\label{Sec 5.5}
Finally, we apply our method on shape transfer problem. In shape transfer problem, the initial and terminal shape may have different mass, so unbanlanced optimal transport provide a good model for this problem.  
We give three examples of shape transfer in Figure \ref{ST-all}. 
All images are $28\times28$ in size. 
In our experiments, we choose the imbalance parameter $\eta=2$, $\lambda_{c}=\lambda_{hj}=\lambda_{ic}=1000$. 
And we use uniform point picking to sample $10$ points at time $t$ and $784$ points on space (coordinate $(x, y)$).

For the first shape transfer example, the initial shape is a disc and we want to transfer it to a hexagram, as shown in the first row of Figure \ref{ST-all}. 
In the second row, we show the result of transfering a triangle to a smiley face. 
The last example is the transfer between digit numbers which are extracted from MNIST
dataset. In all tests, our method gives reasonable results to transfer shapes which demonstrate the effectiveness of the proposed method. 
\begin{figure}[htbp]
    \begin{center}
    \includegraphics[width=1.4cm, cframe=red!10!red 0.1mm]{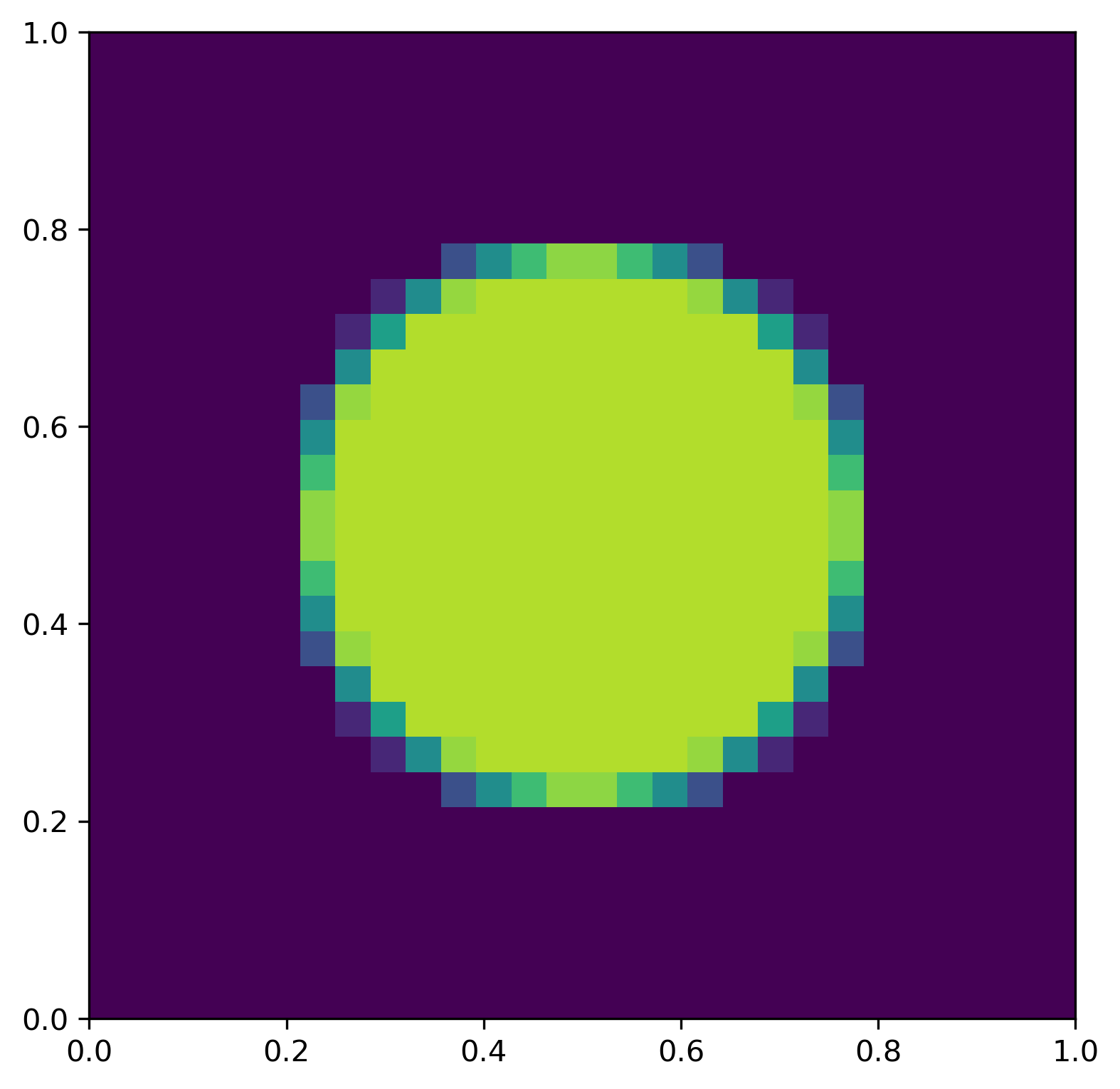}
    \includegraphics[width=1.4cm]{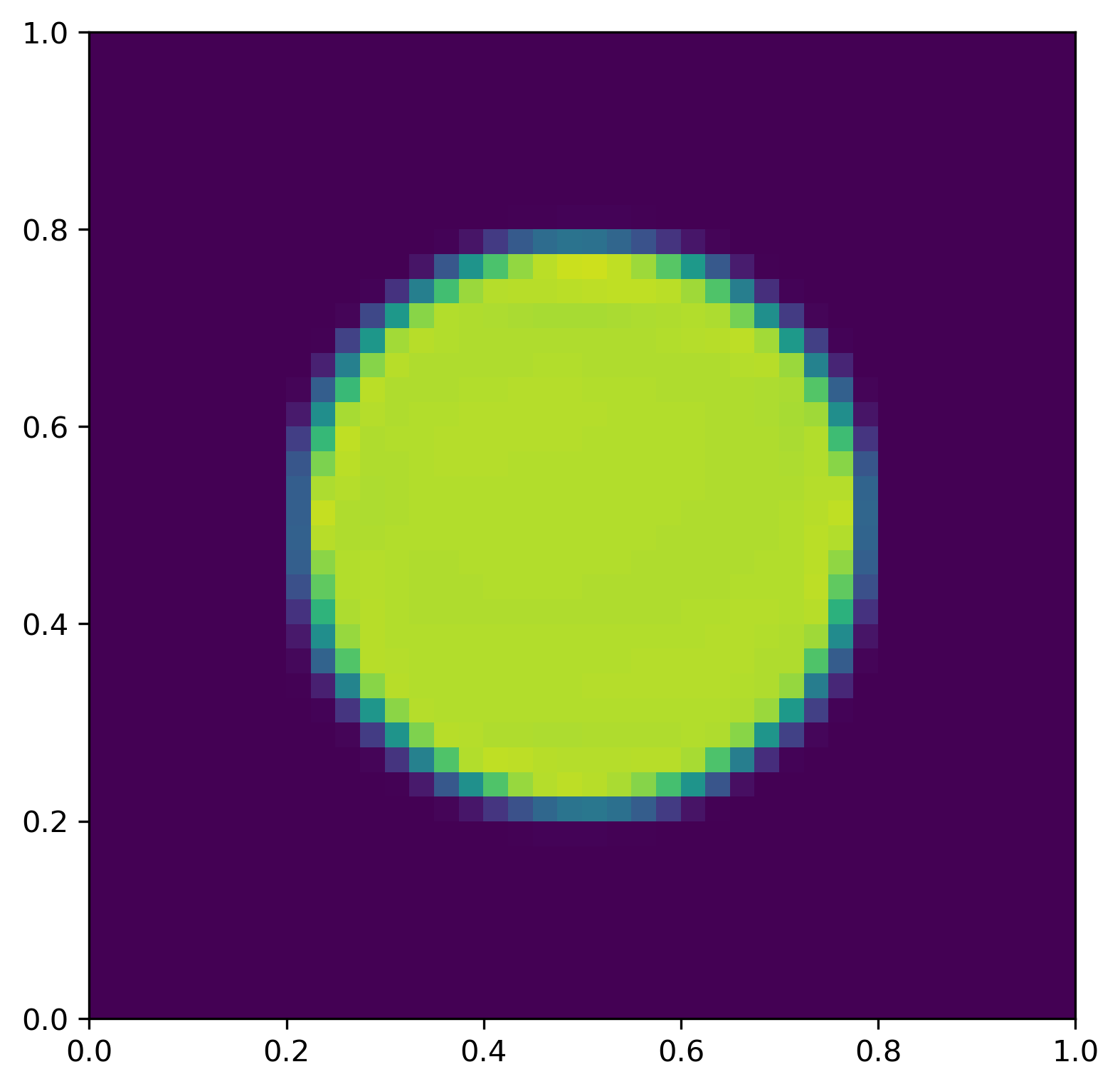}
    \includegraphics[width=1.4cm]{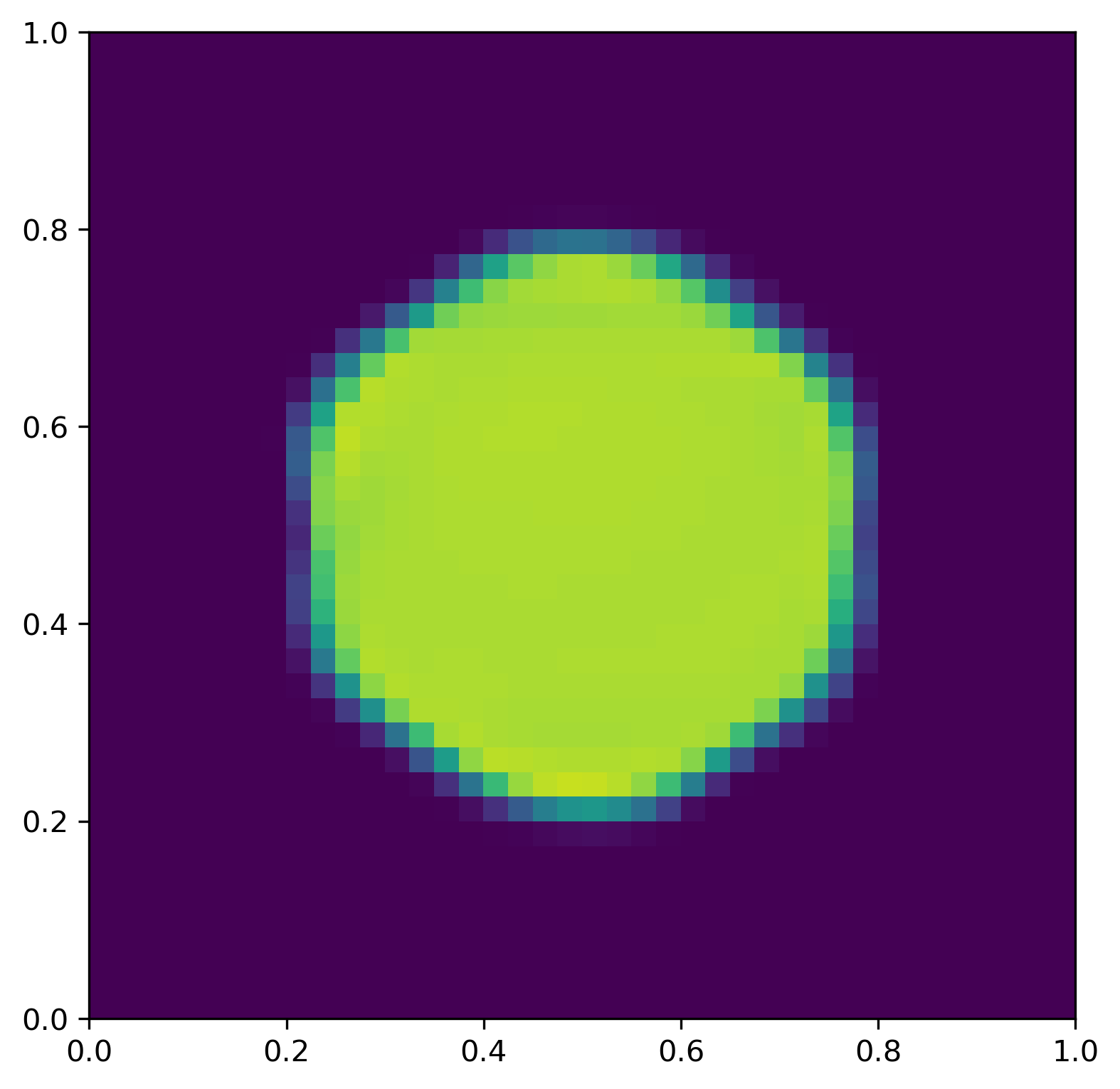}
    \includegraphics[width=1.4cm]{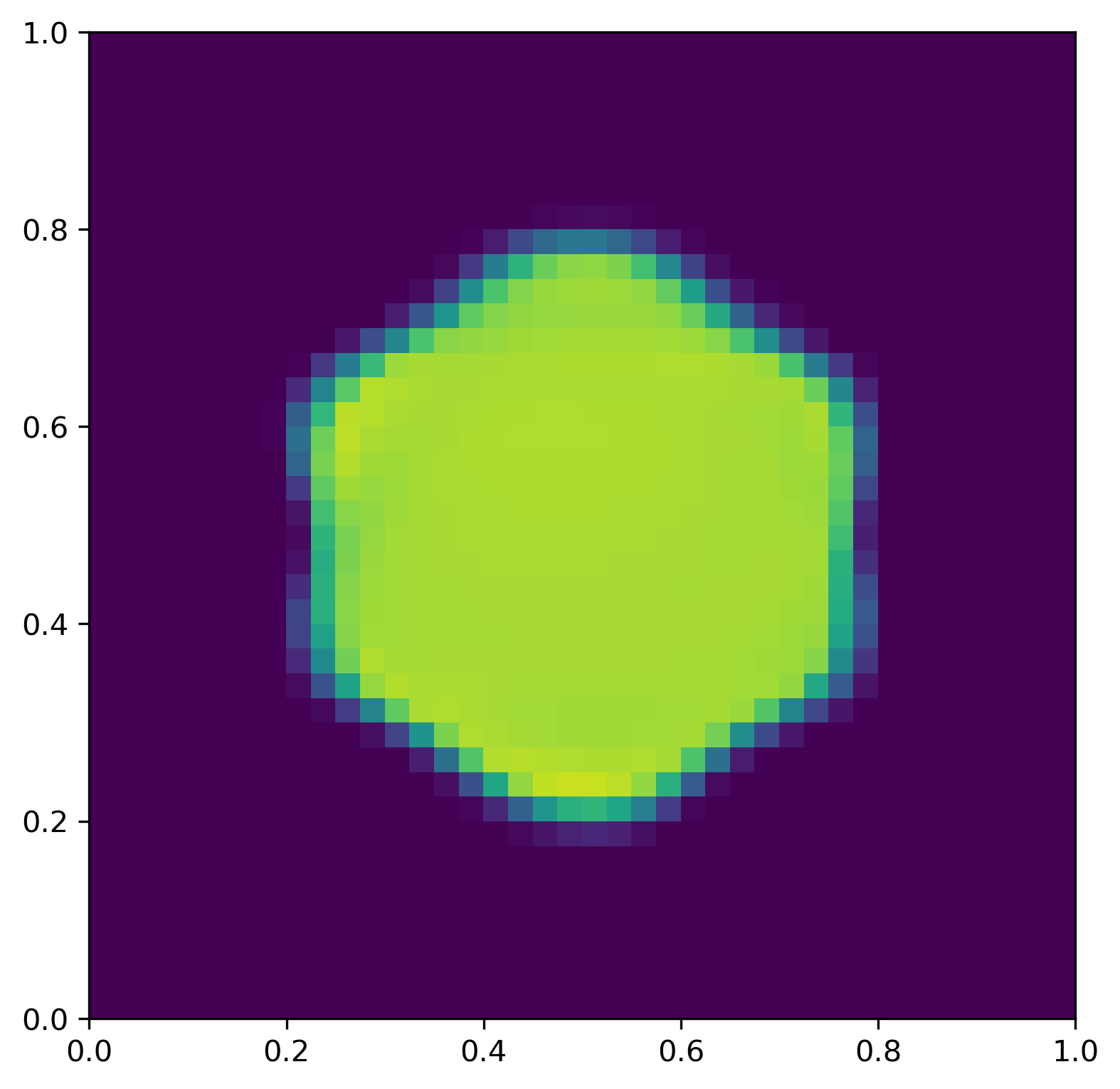}
    \includegraphics[width=1.4cm]{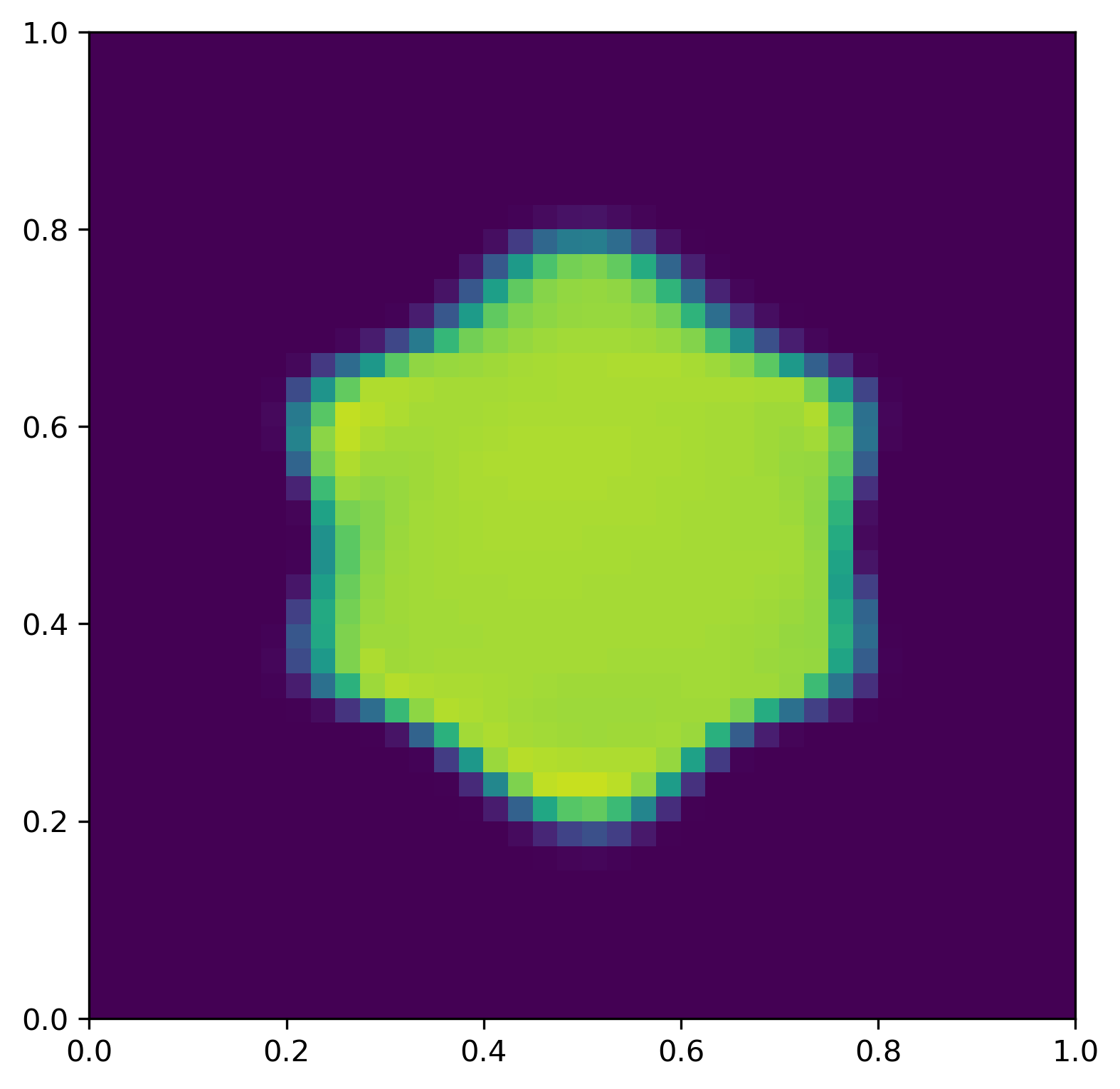}
    \includegraphics[width=1.4cm]{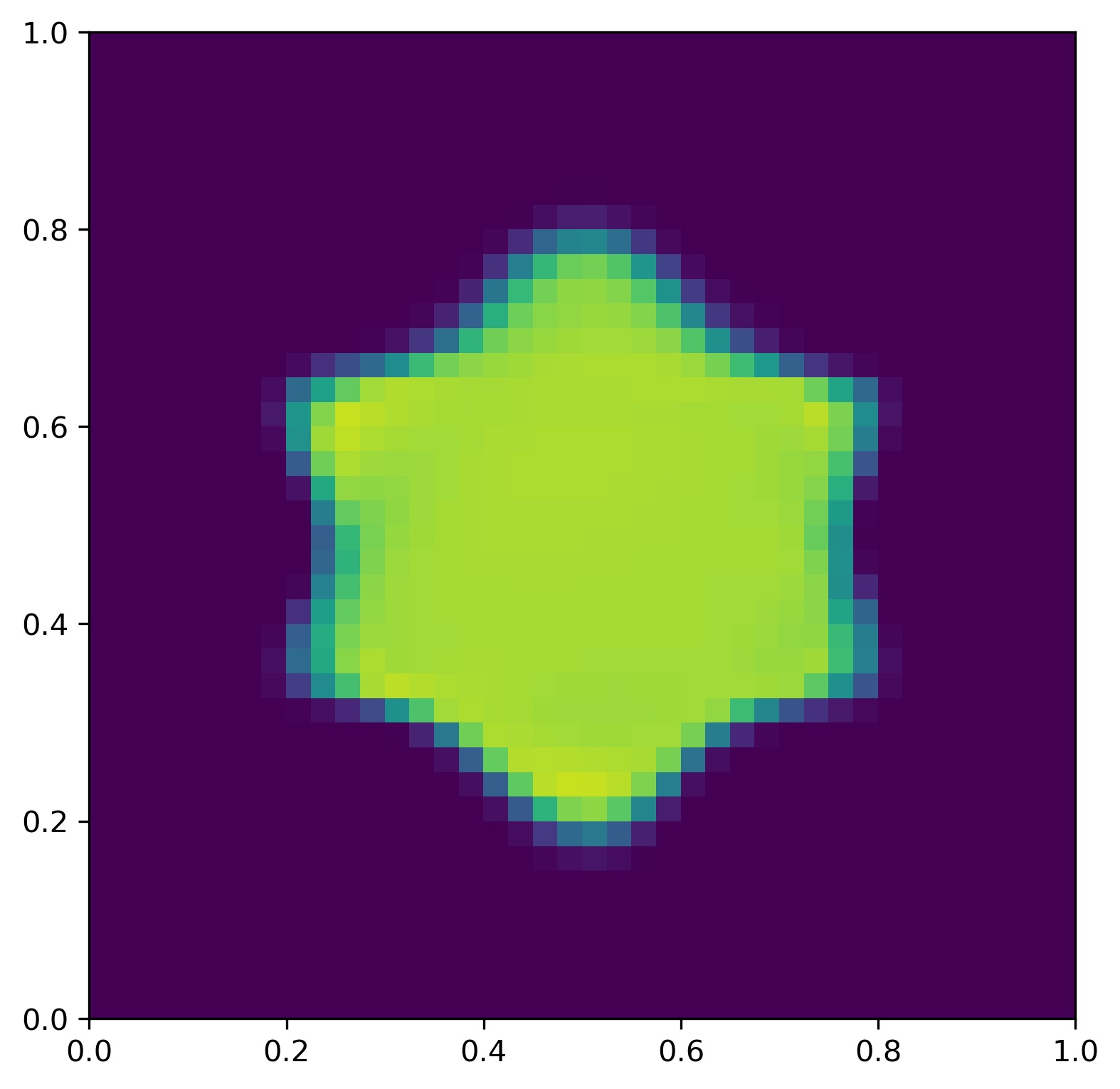}
    \includegraphics[width=1.4cm]{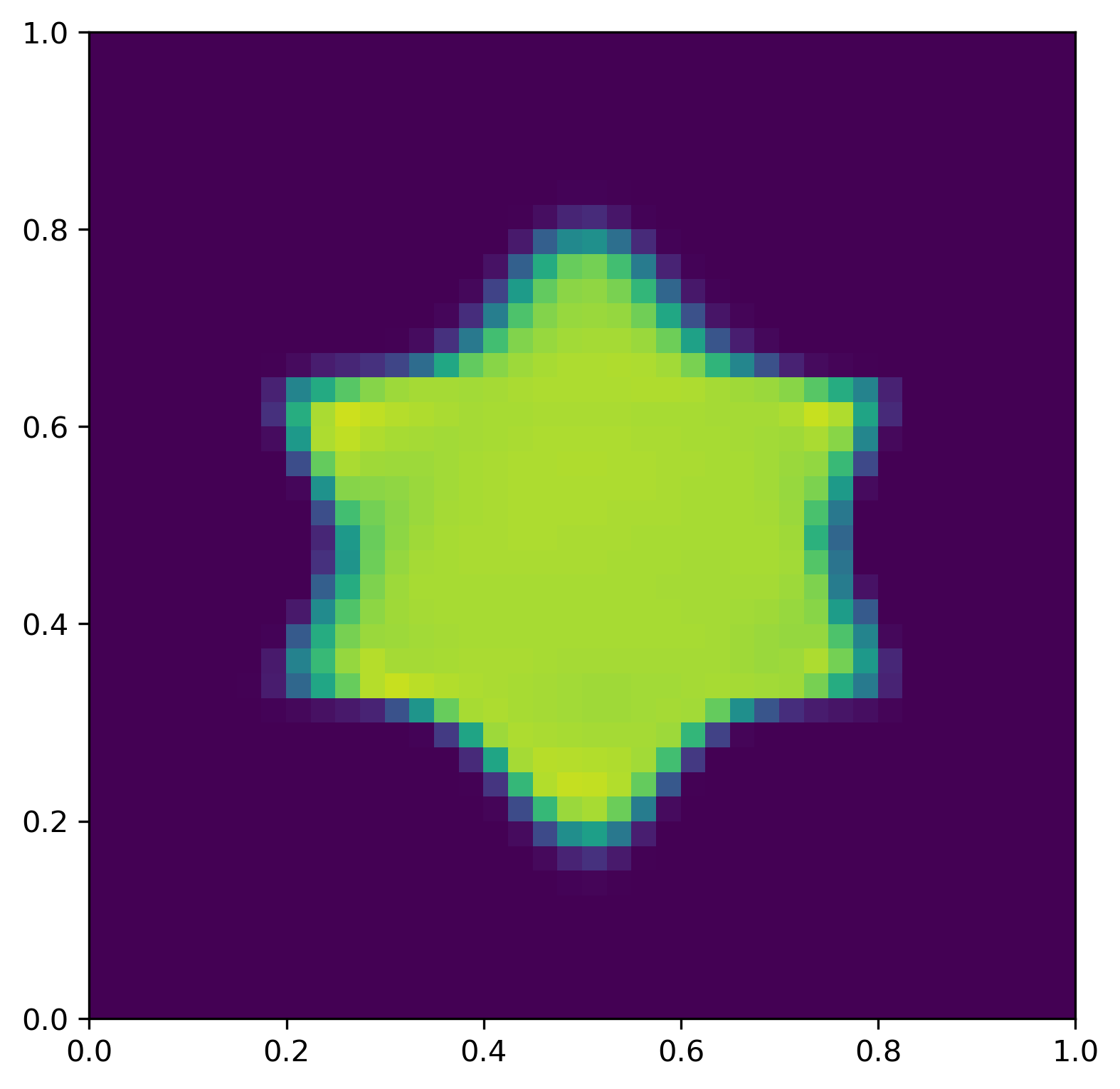}
    \includegraphics[width=1.4cm]{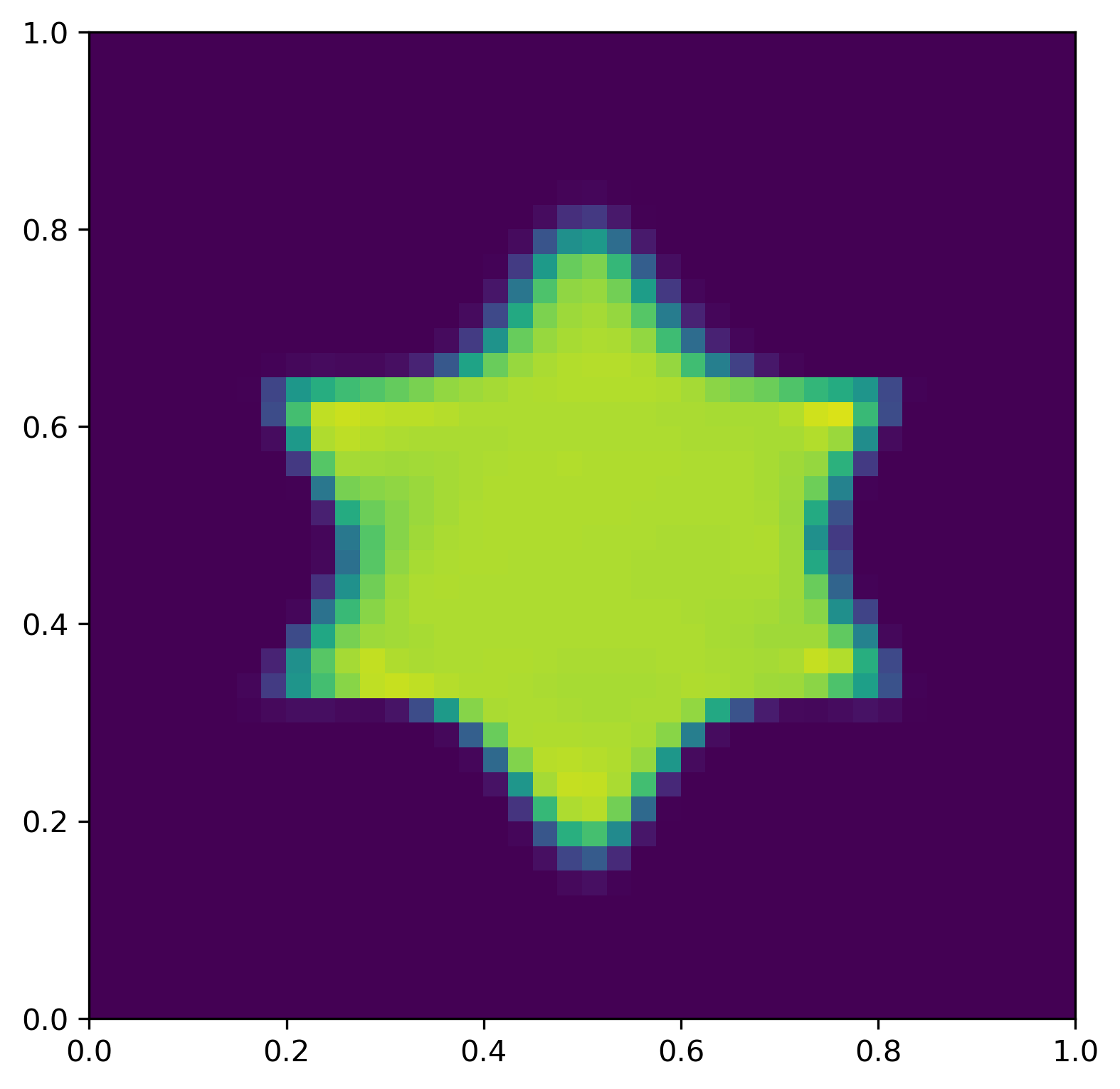}
    \includegraphics[width=1.4cm]{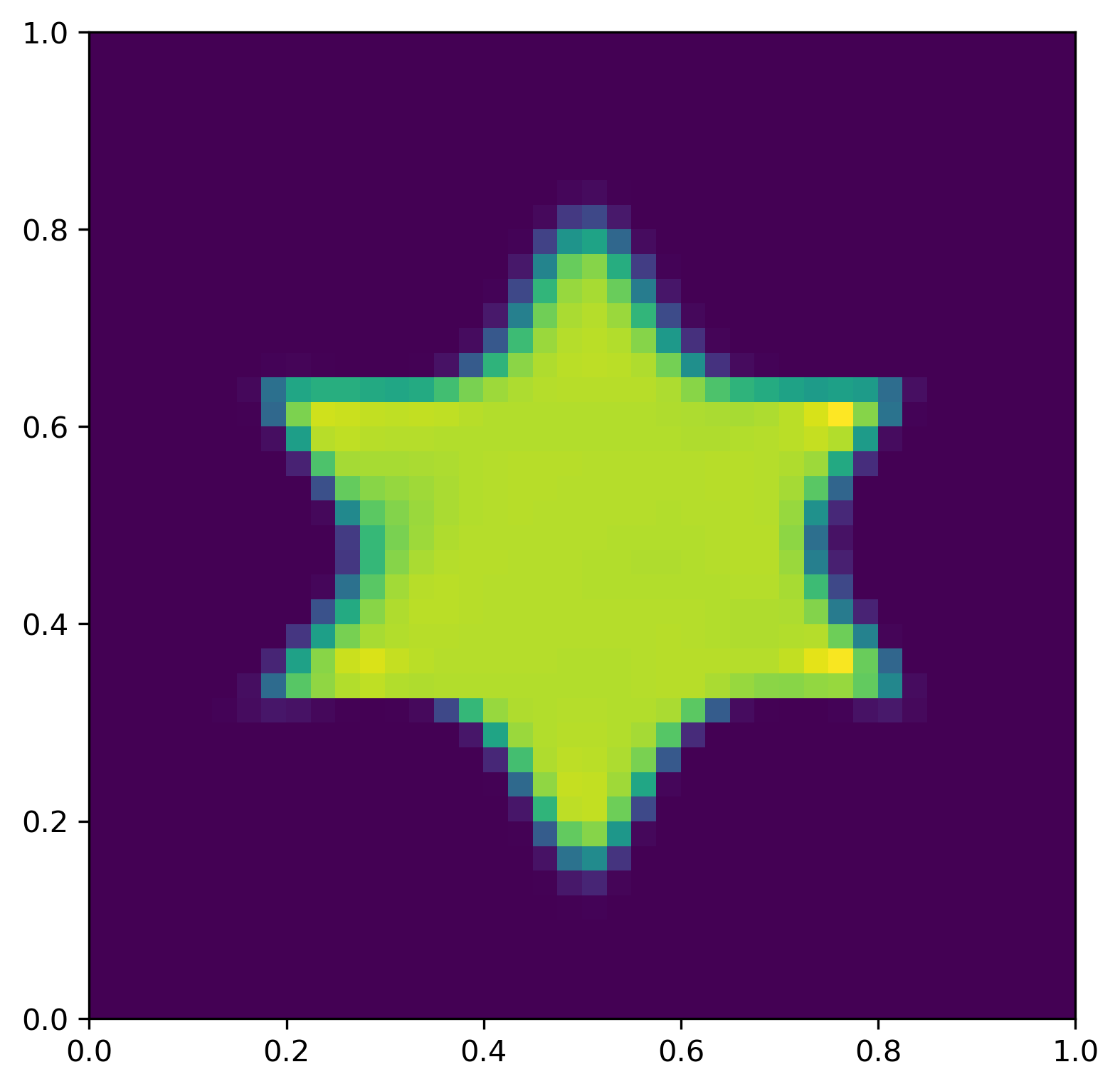}
    \includegraphics[width=1.4cm, cframe=red!10!red 0.1mm]{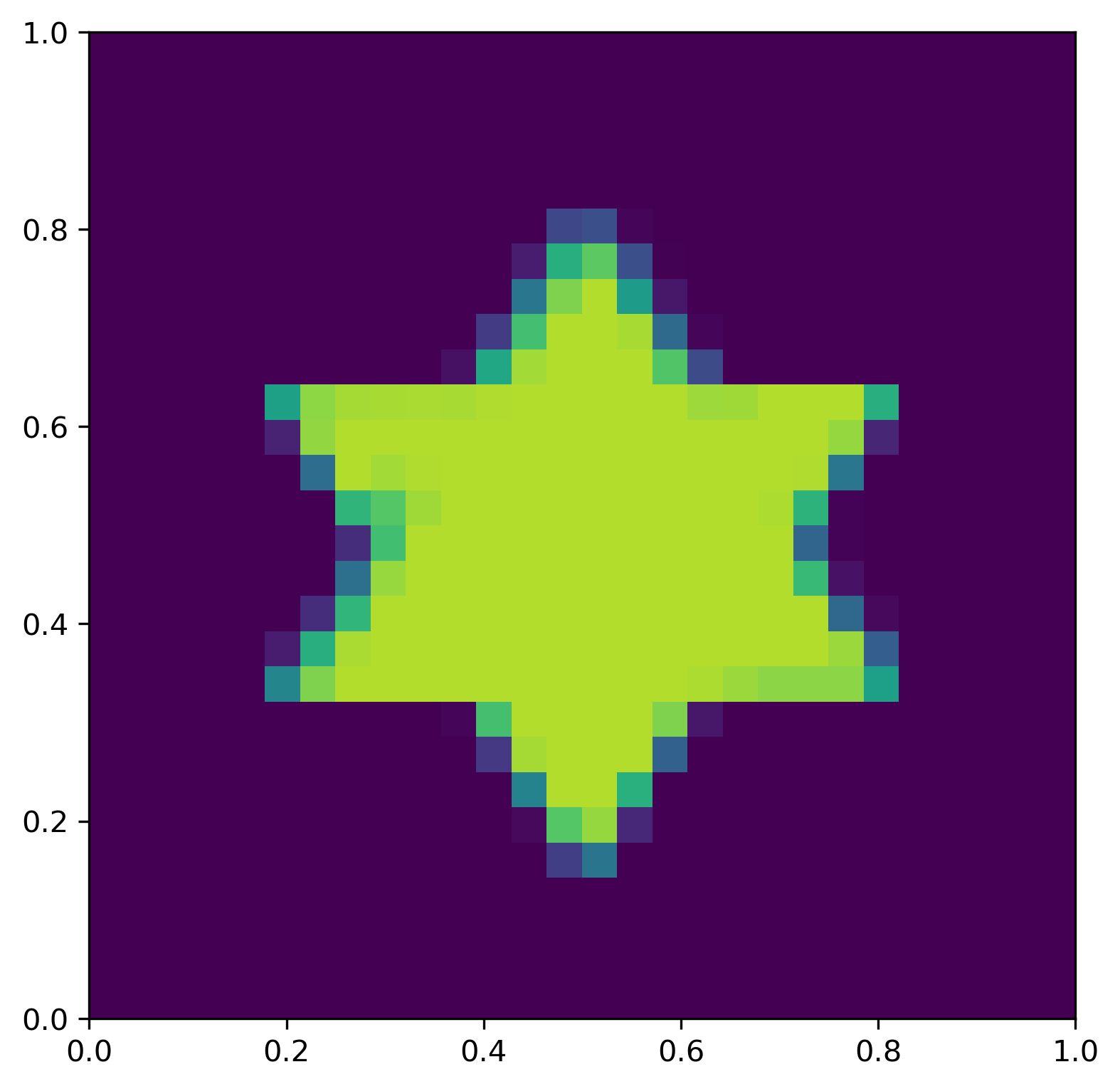}\\
    
    \includegraphics[width=1.4cm, cframe=red!10!red 0.1mm]{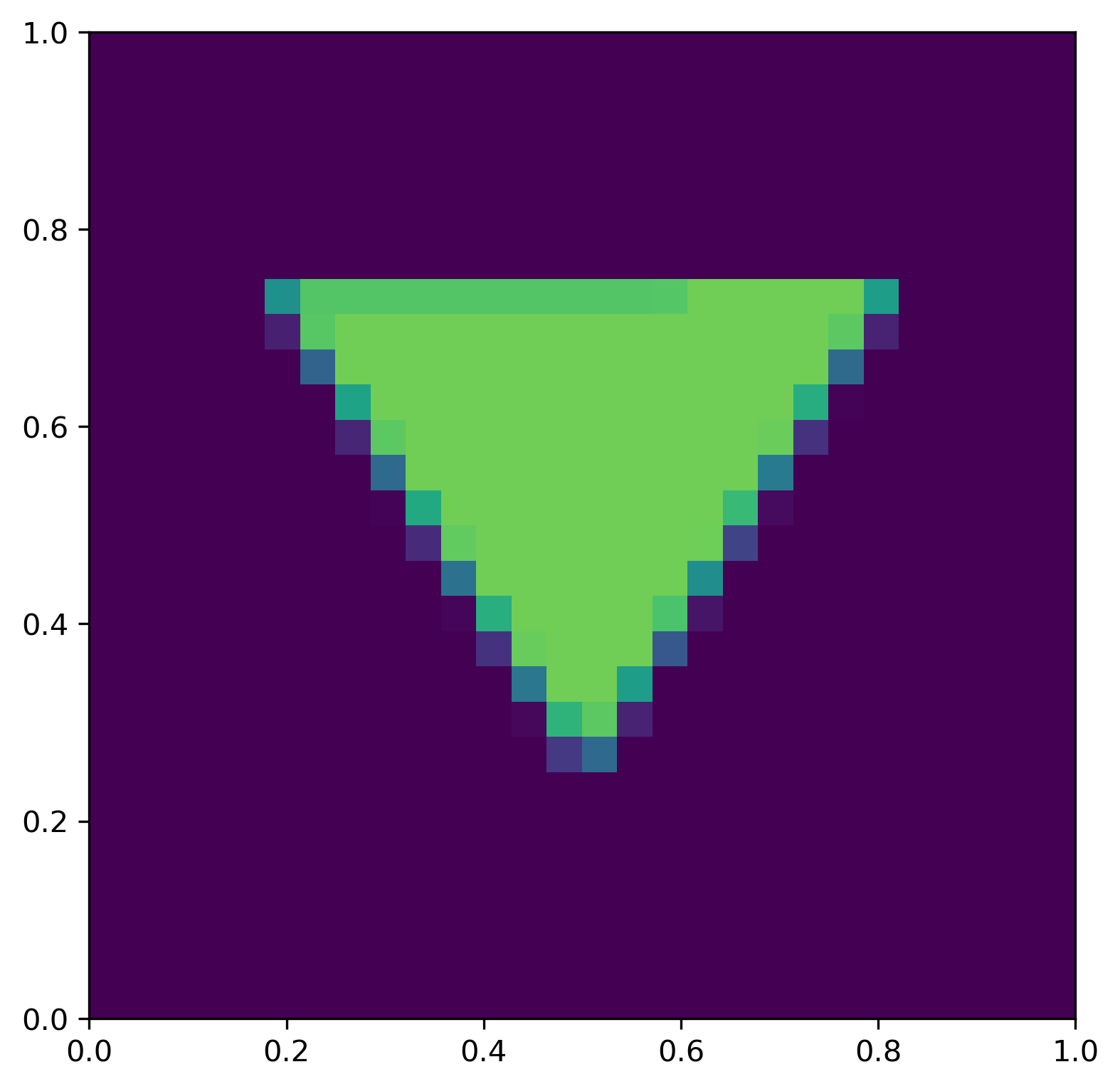}
    \includegraphics[width=1.4cm]{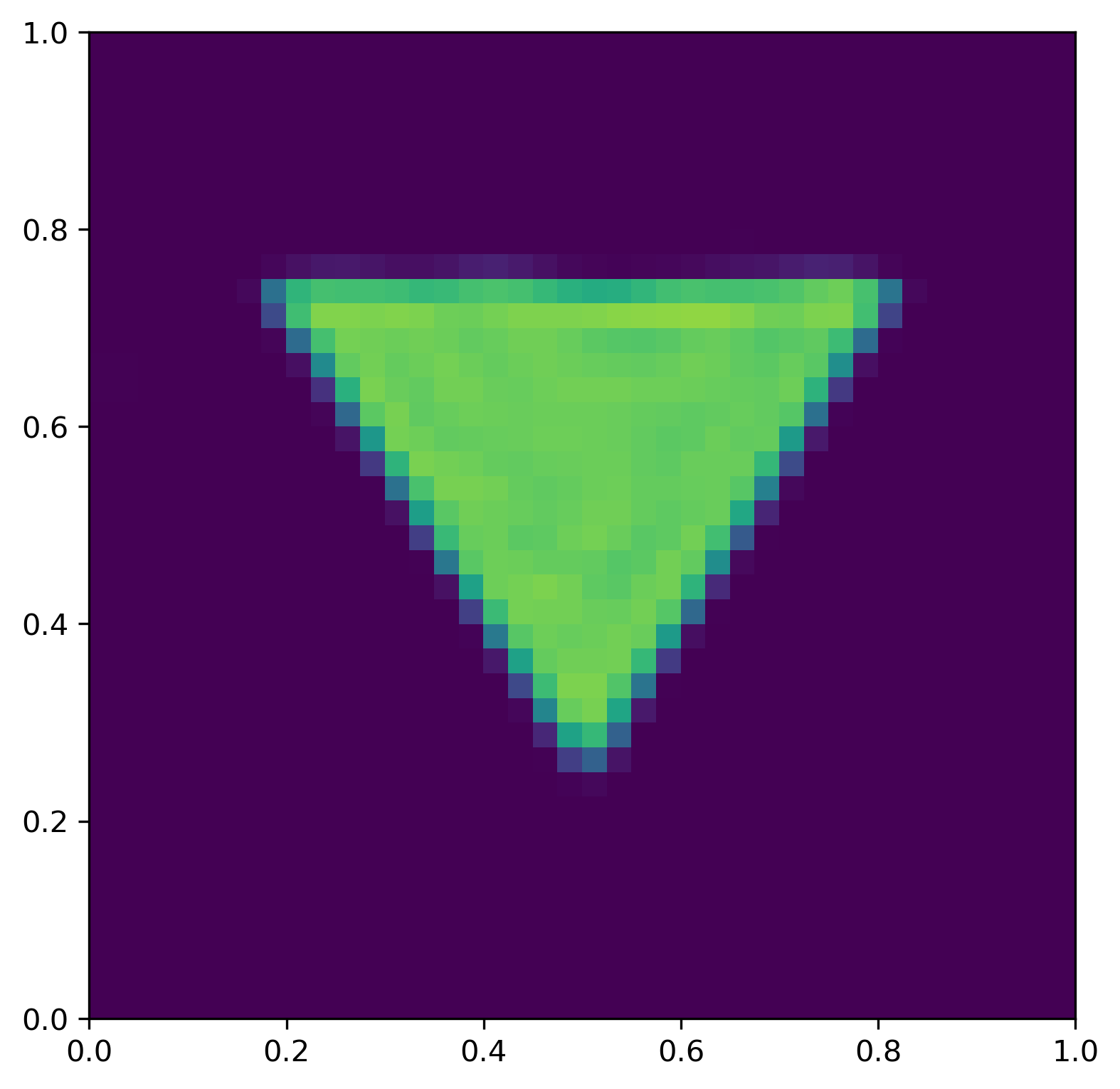}
    \includegraphics[width=1.4cm]{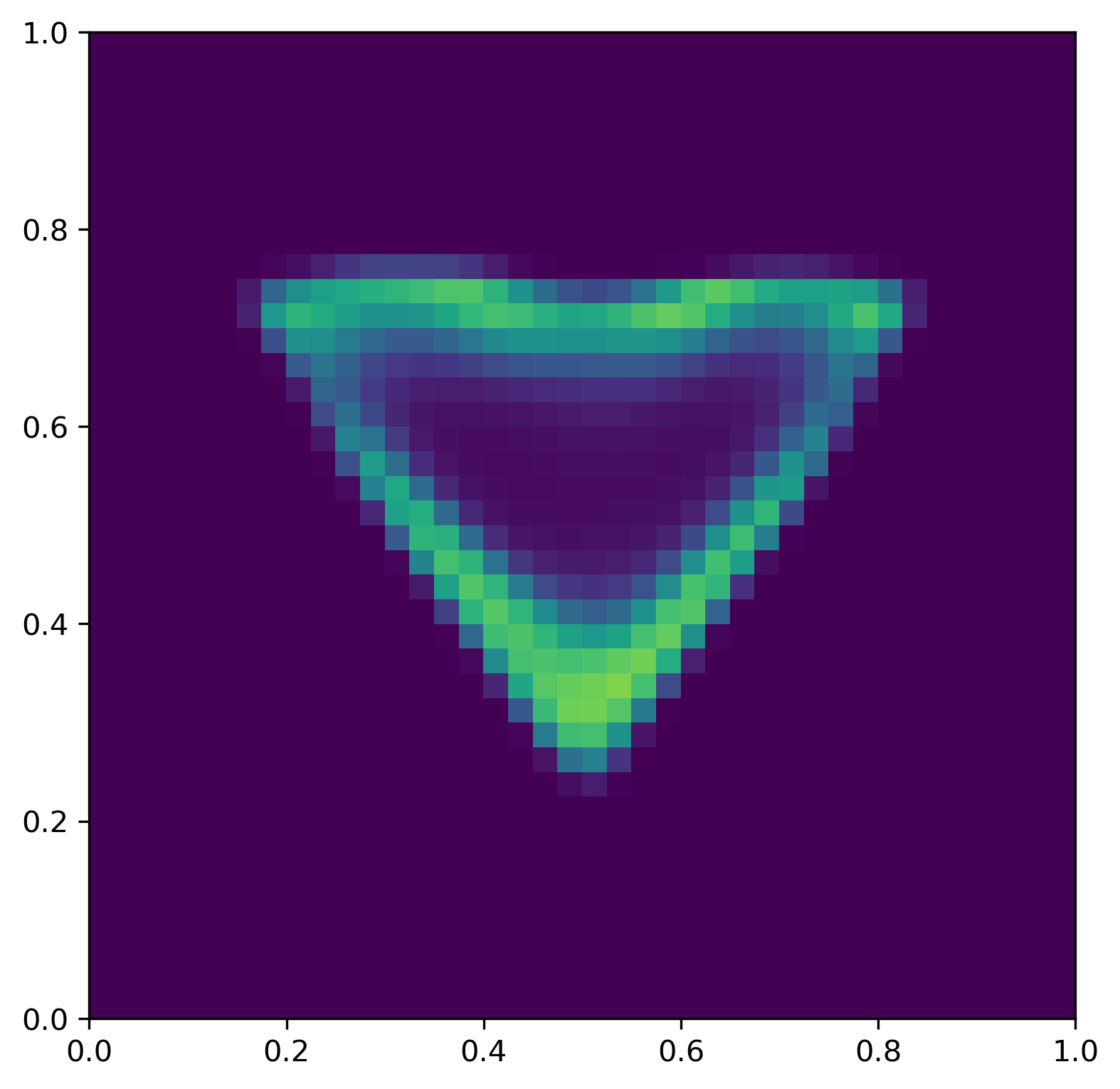}
    \includegraphics[width=1.4cm]{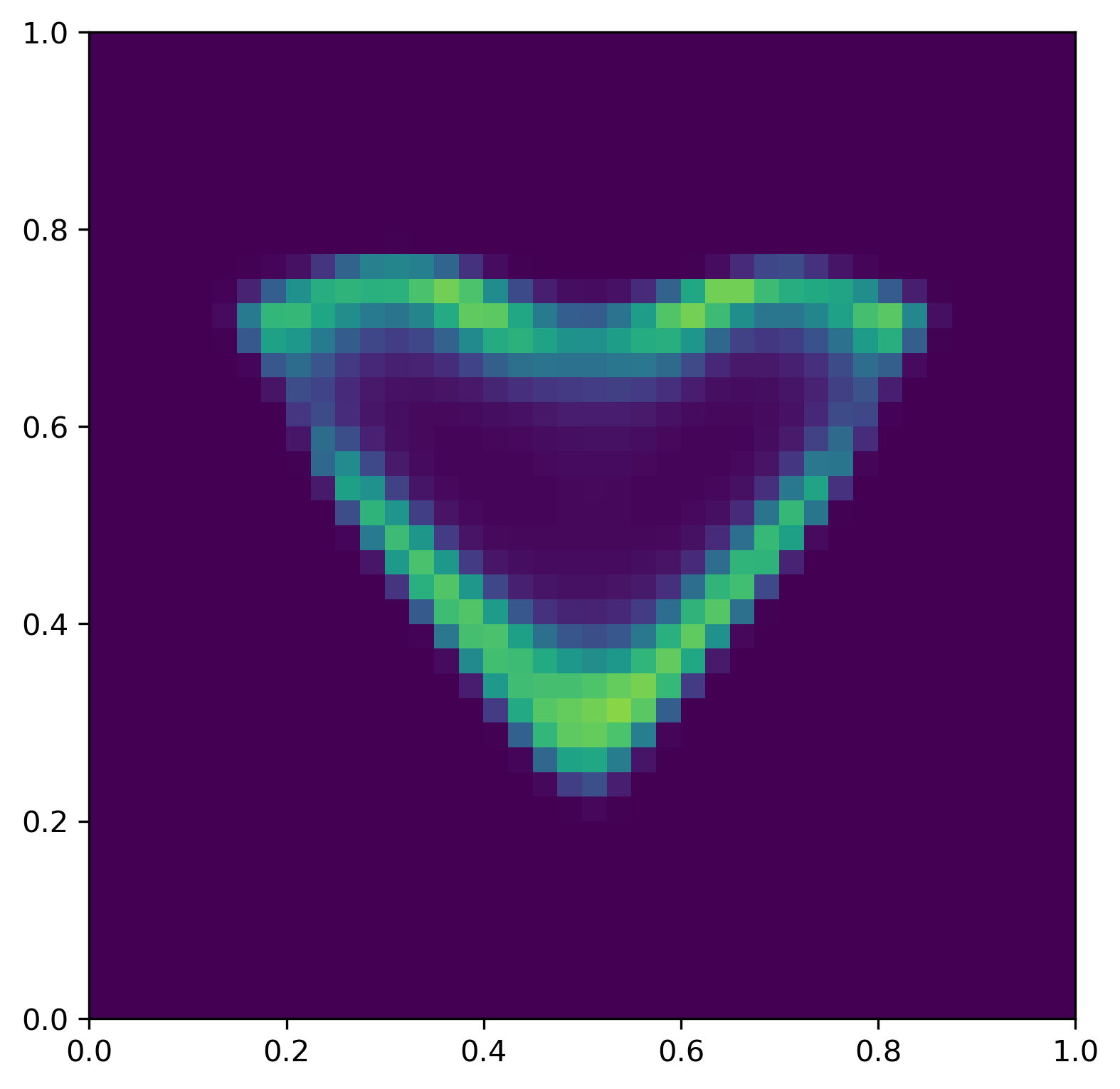}
    \includegraphics[width=1.4cm]{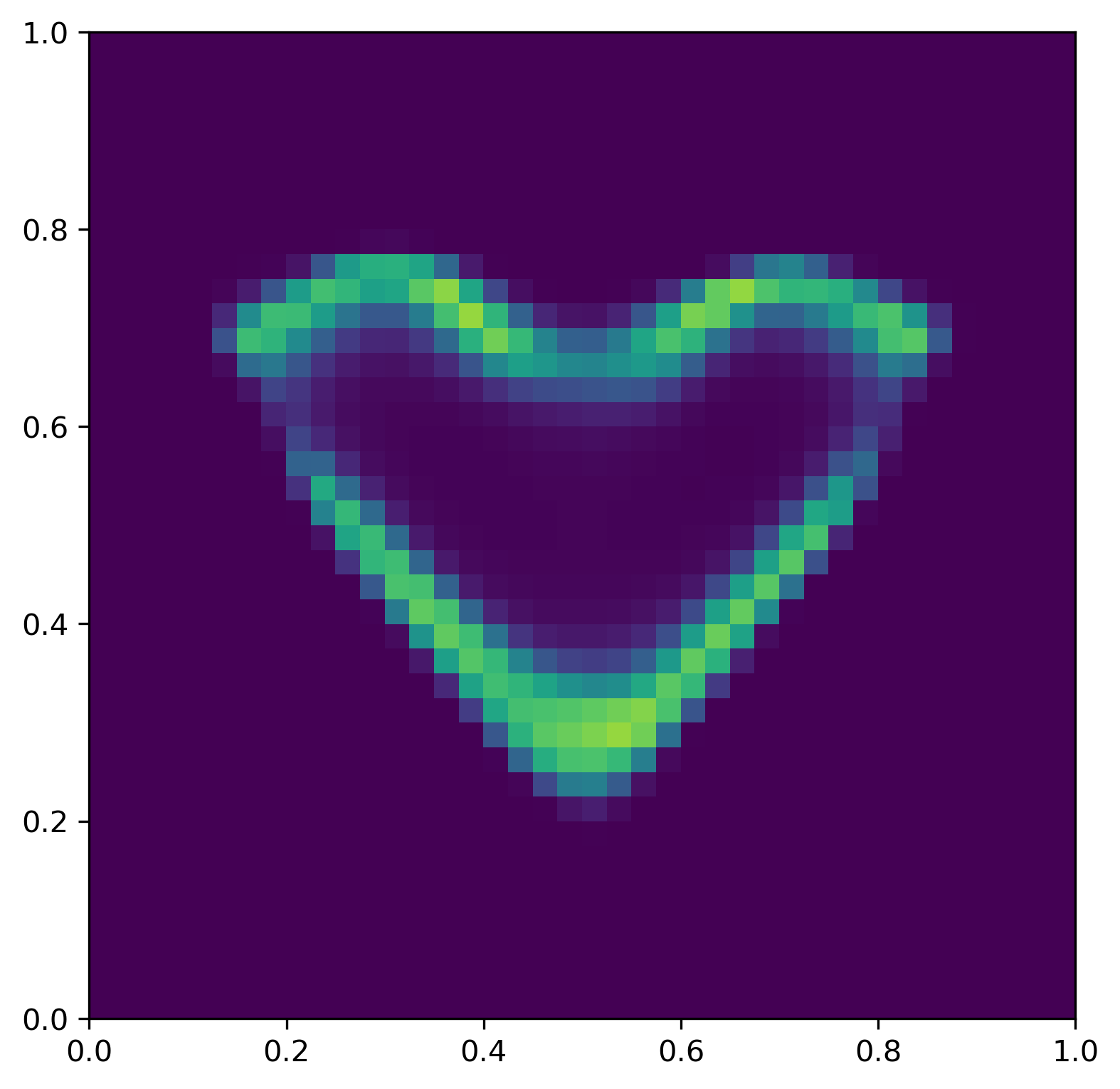}
    \includegraphics[width=1.4cm]{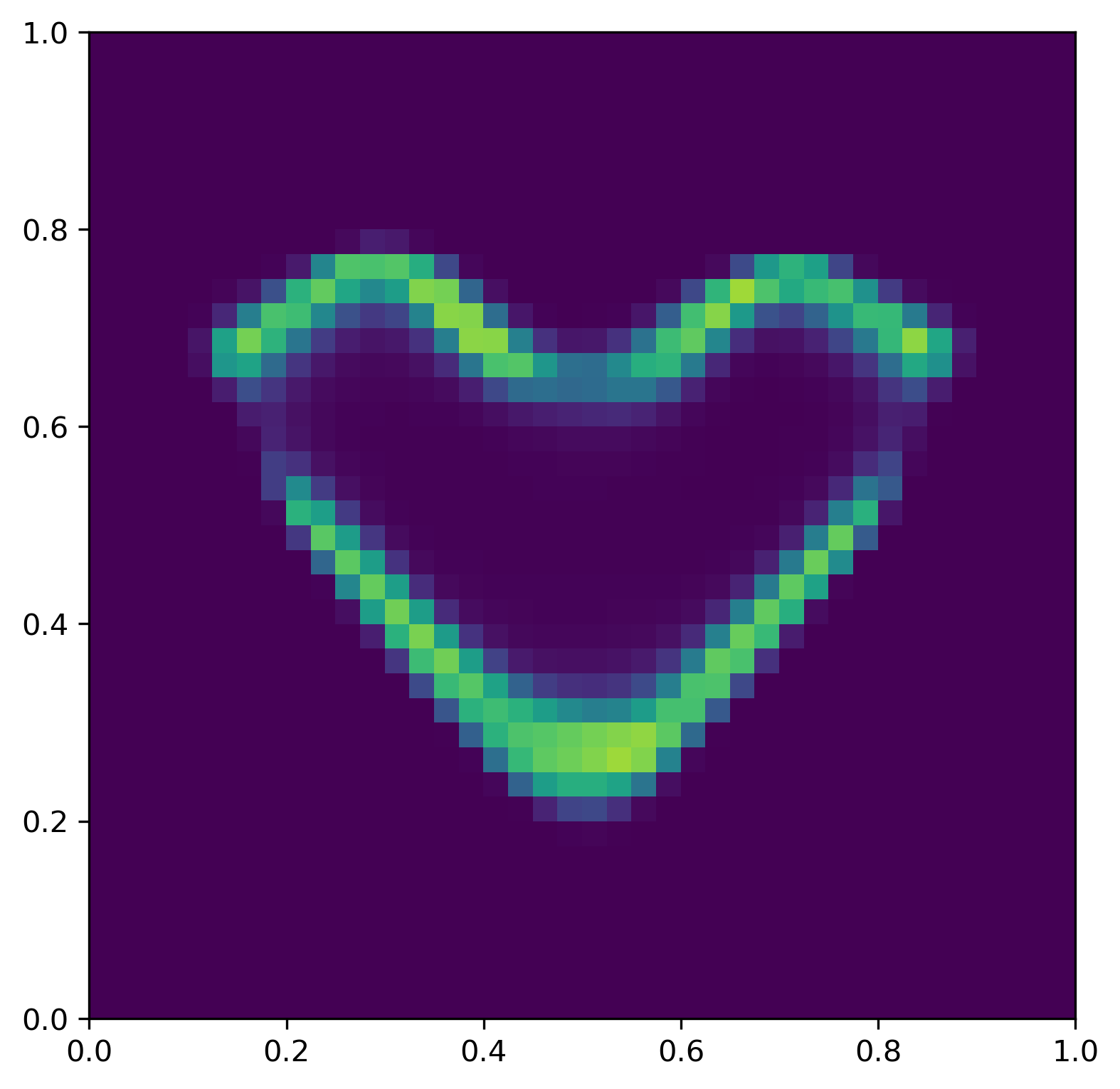}
    \includegraphics[width=1.4cm]{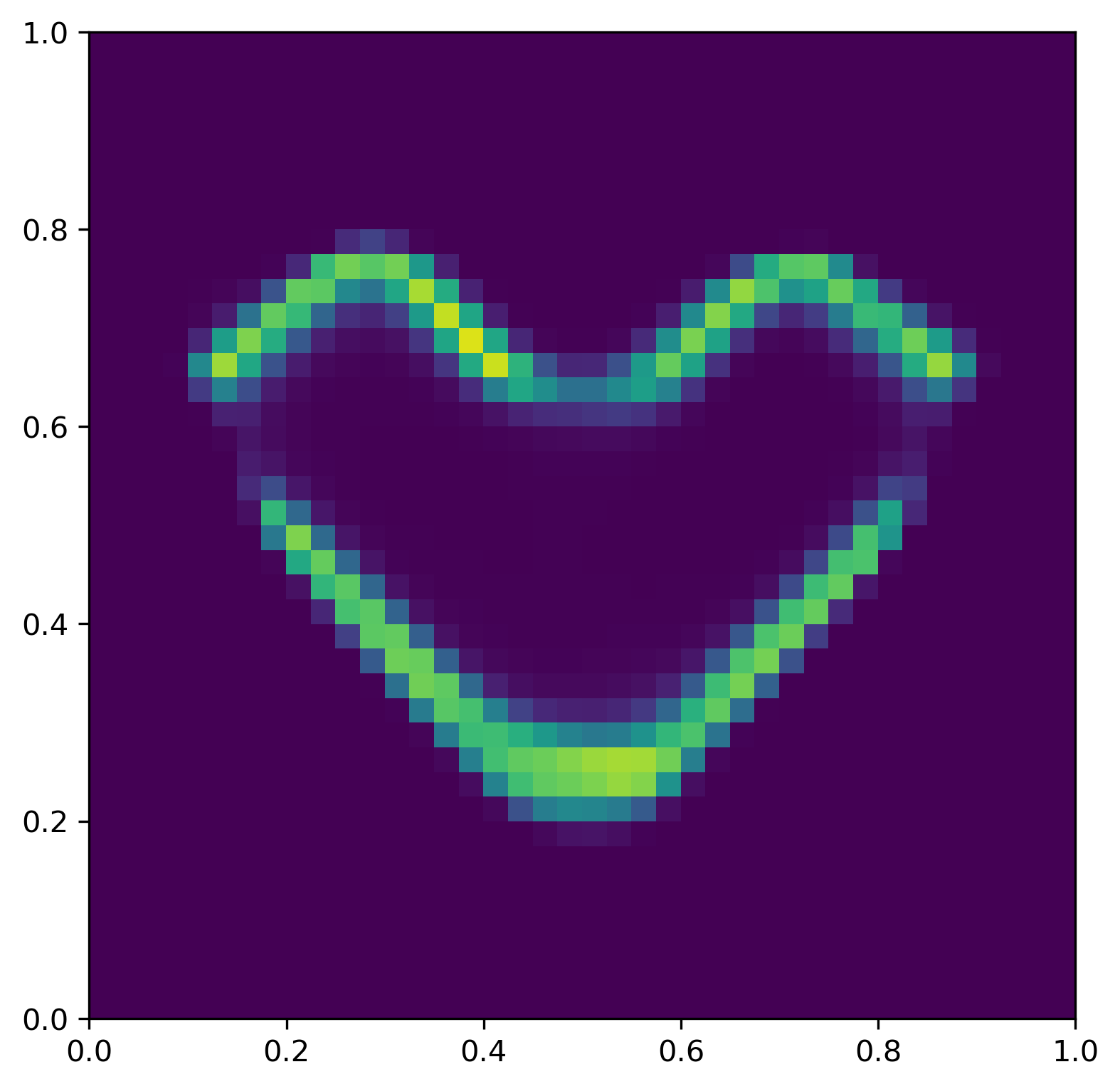}
    \includegraphics[width=1.4cm]{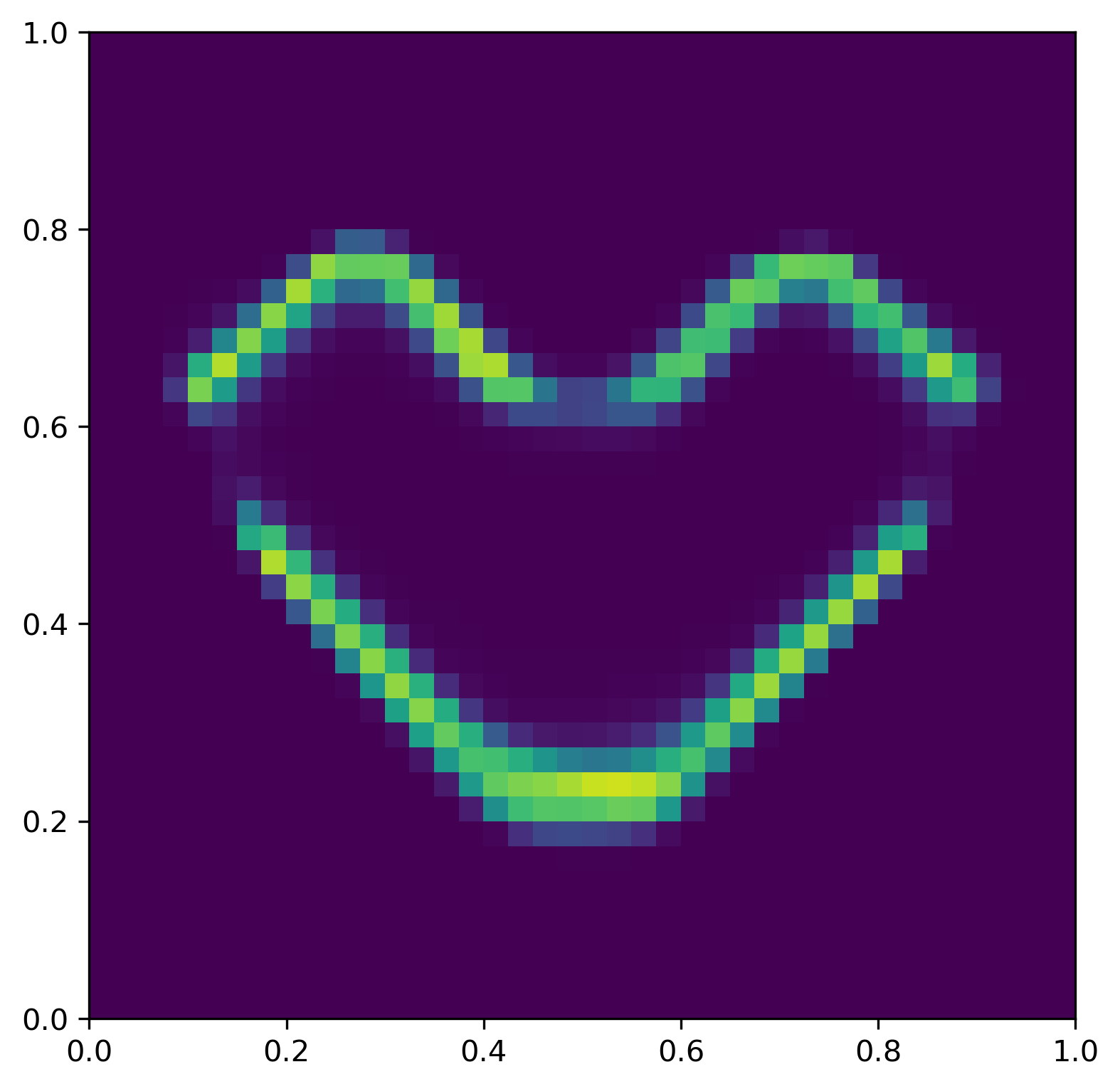}
    \includegraphics[width=1.4cm]{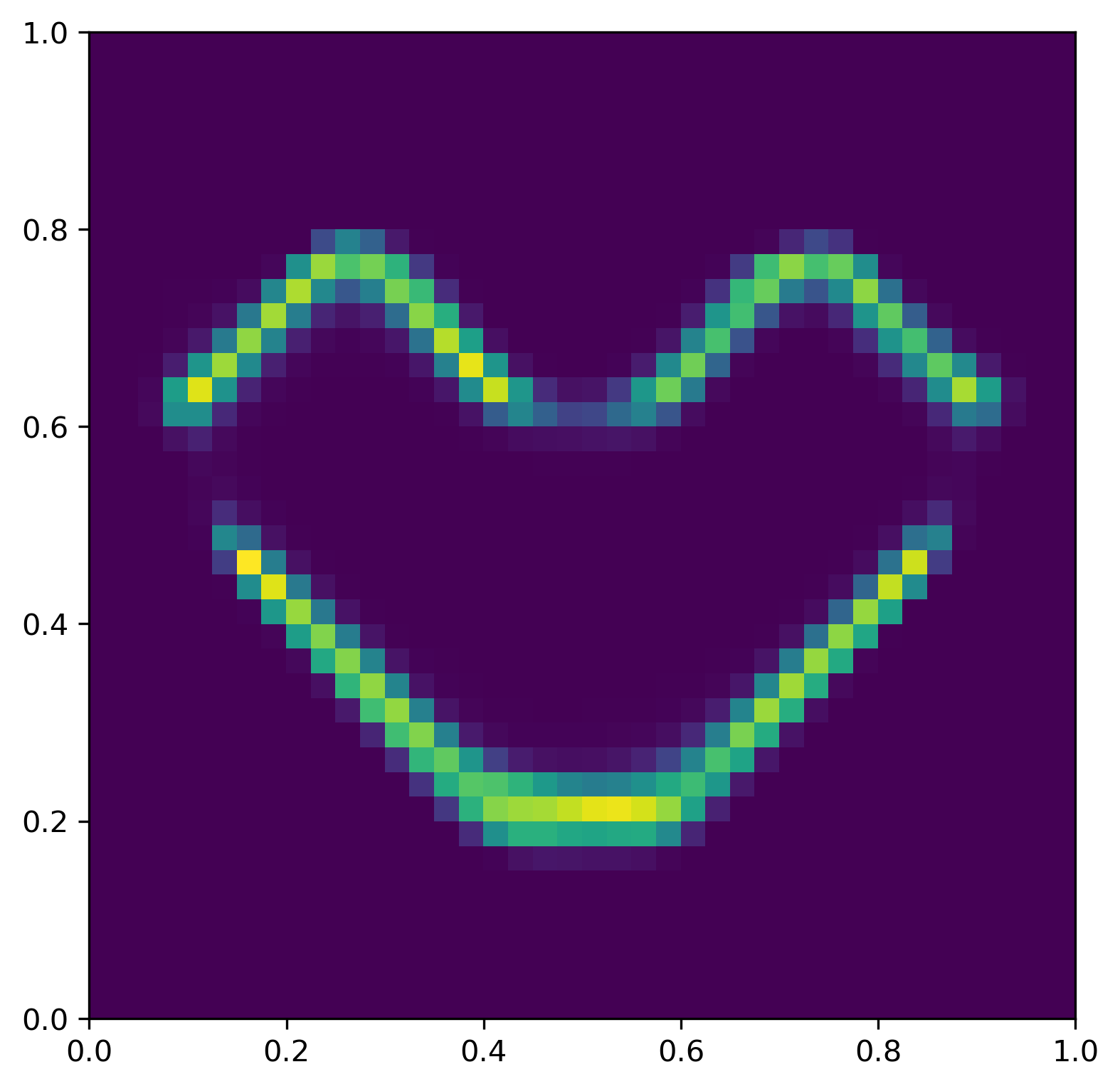}
    \includegraphics[width=1.4cm, cframe=red!10!red 0.1mm]{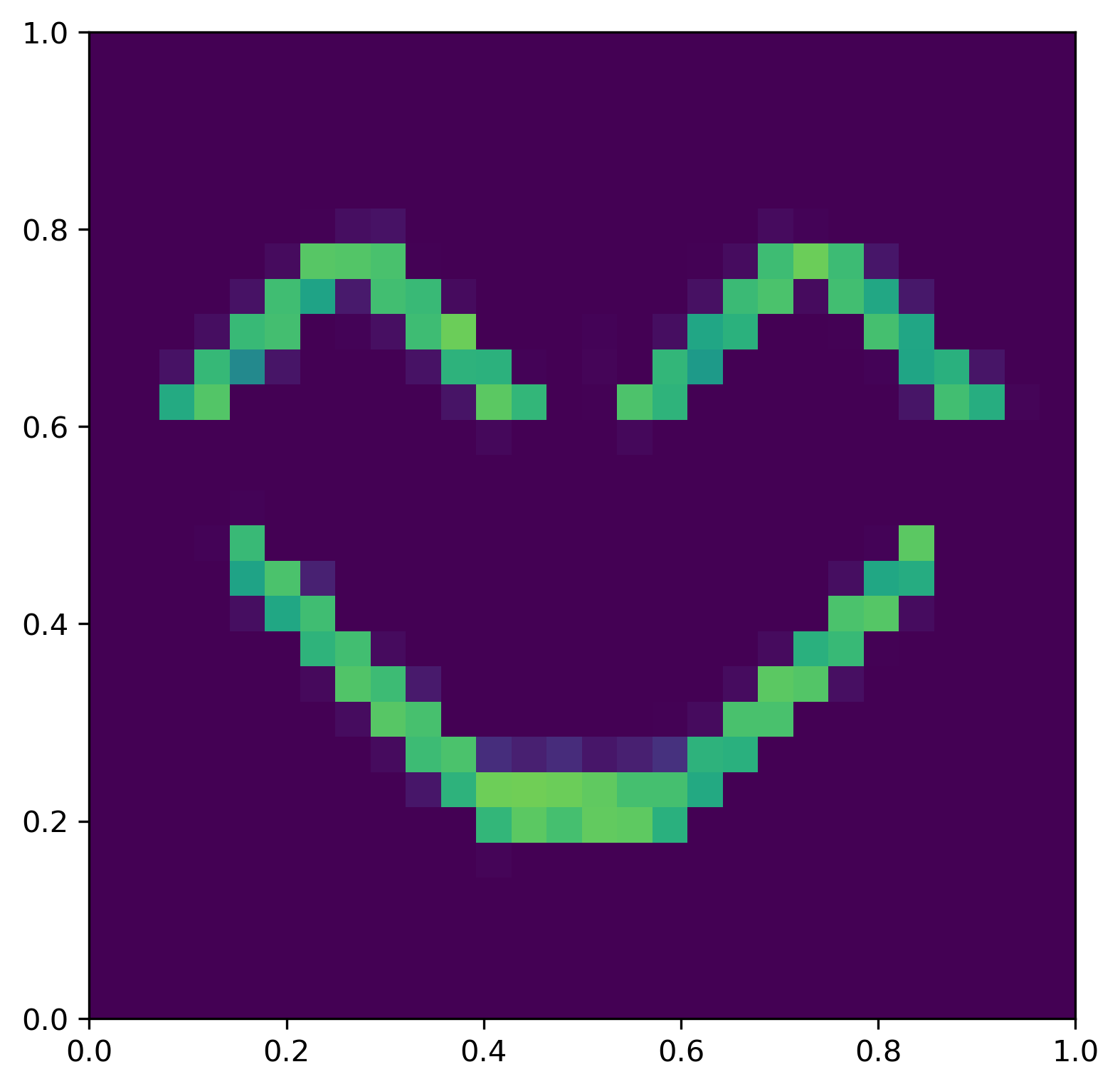}\\
    
    \subfigure[$\rho_{0}(\boldsymbol{x})$]{\includegraphics[width=1.4cm, cframe=red!10!red 0.1mm]{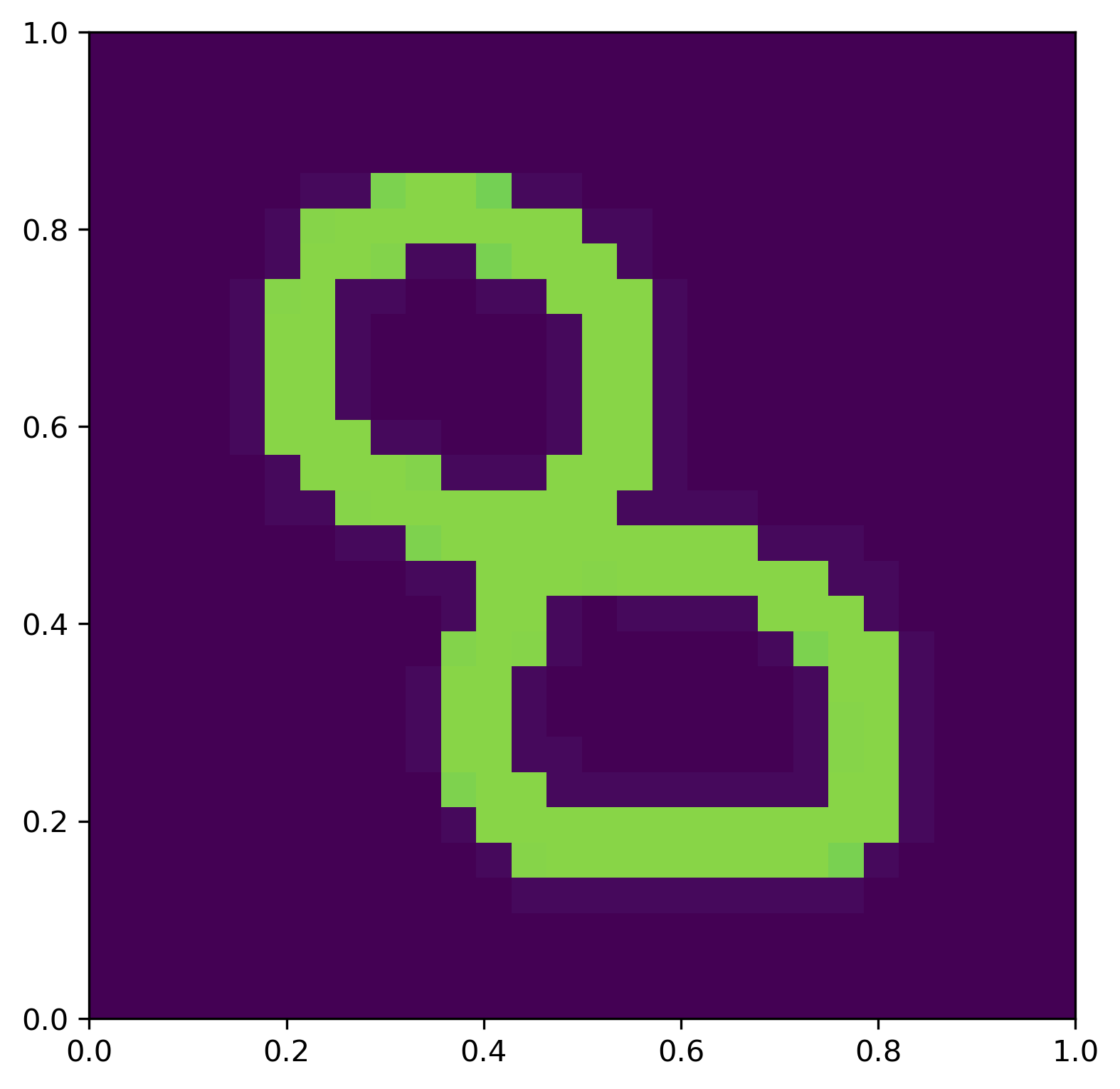}}
    \subfigure[$\rho(t, \boldsymbol{x})$]{
    \includegraphics[width=1.4cm]{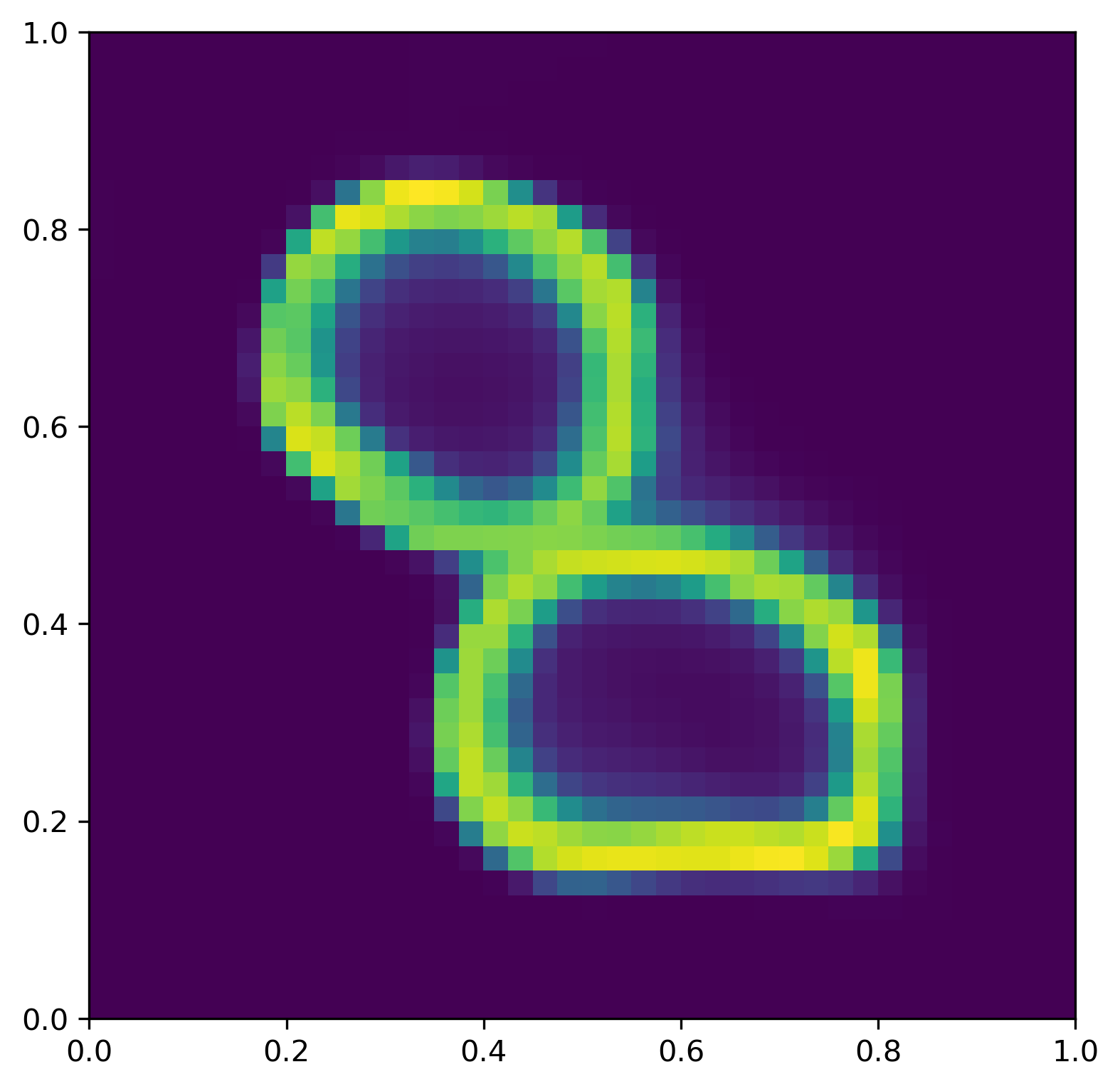}
    \includegraphics[width=1.4cm]{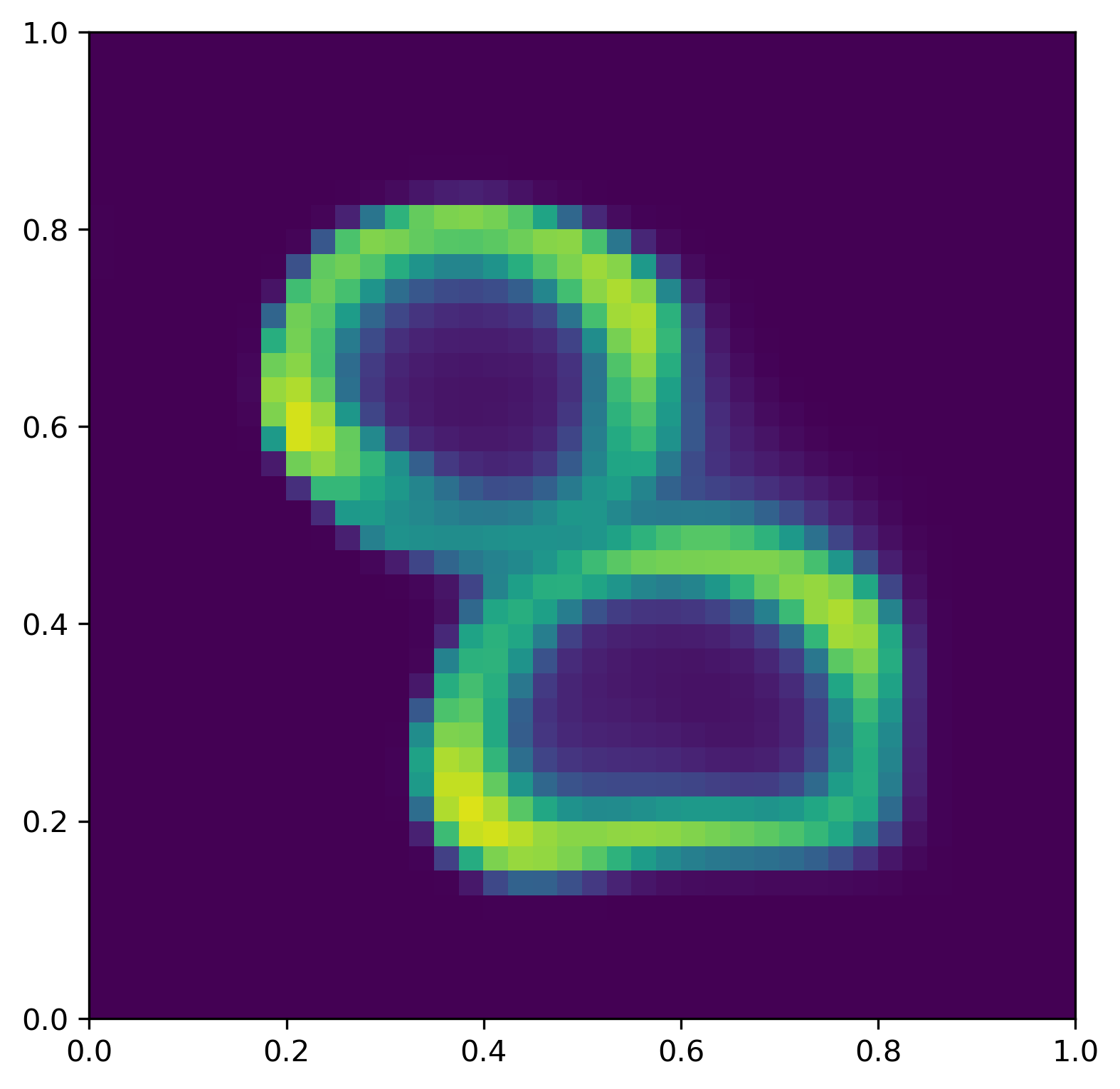}
    \includegraphics[width=1.4cm]{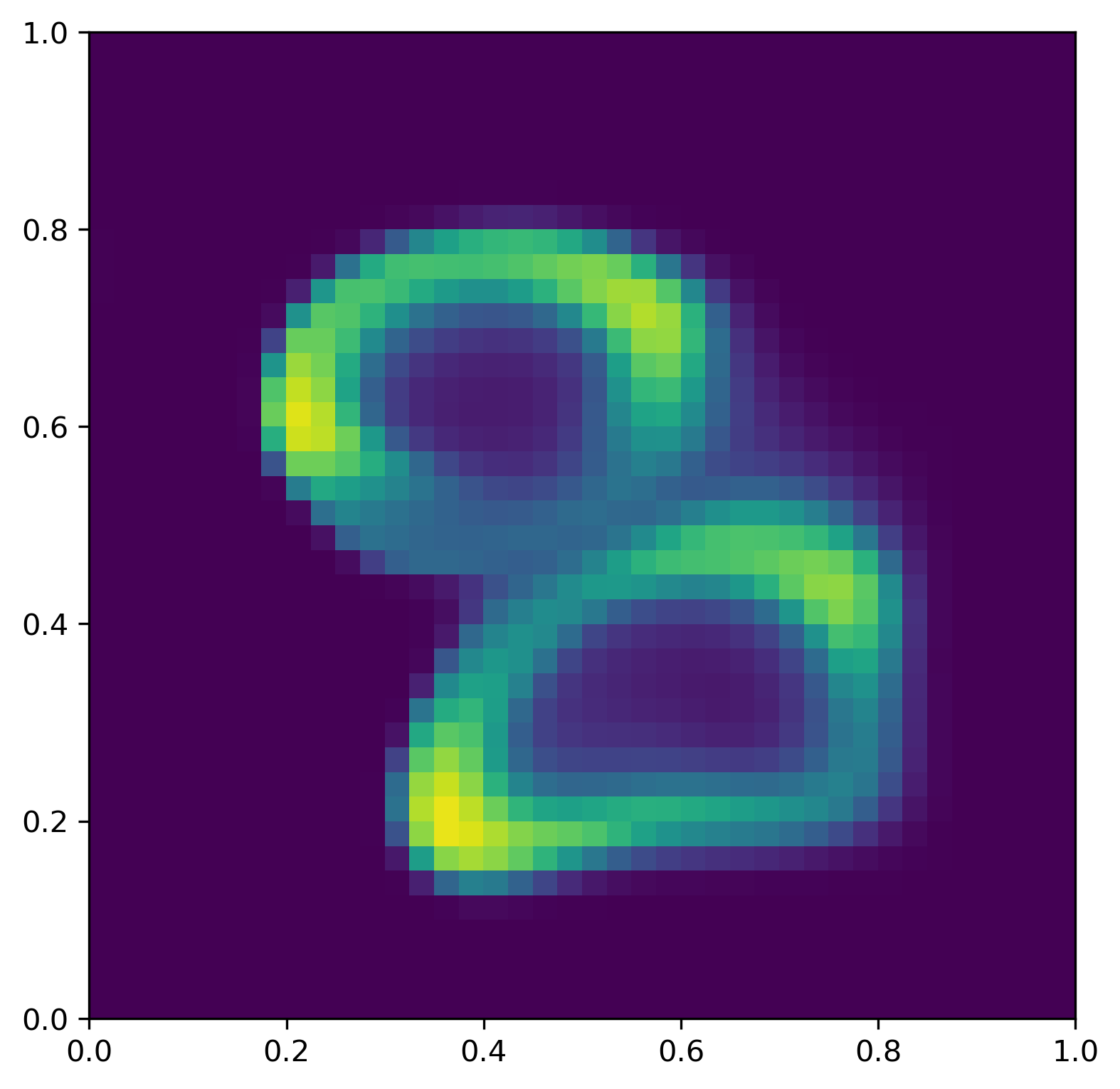}
    \includegraphics[width=1.4cm]{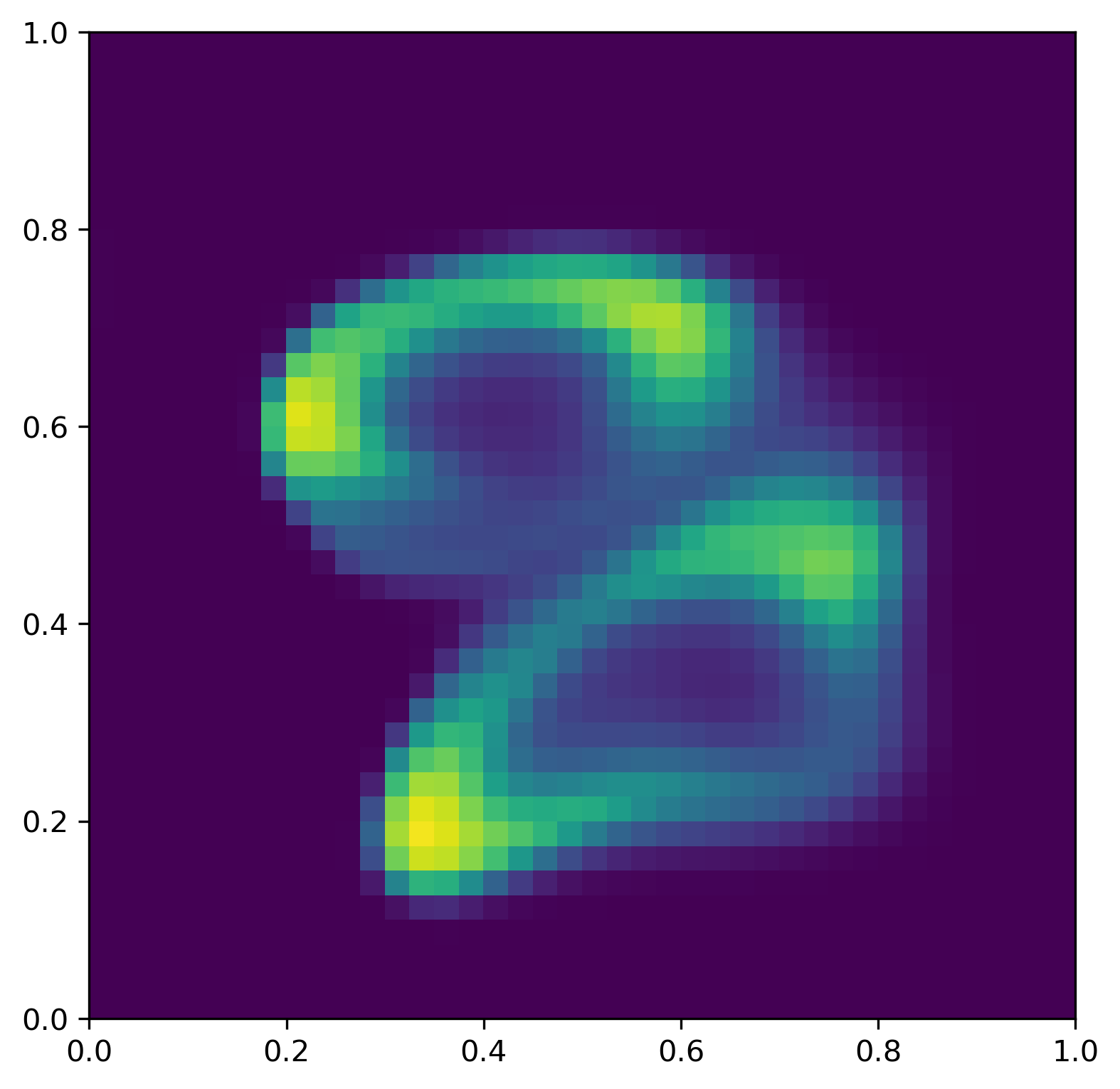}
    \includegraphics[width=1.4cm]{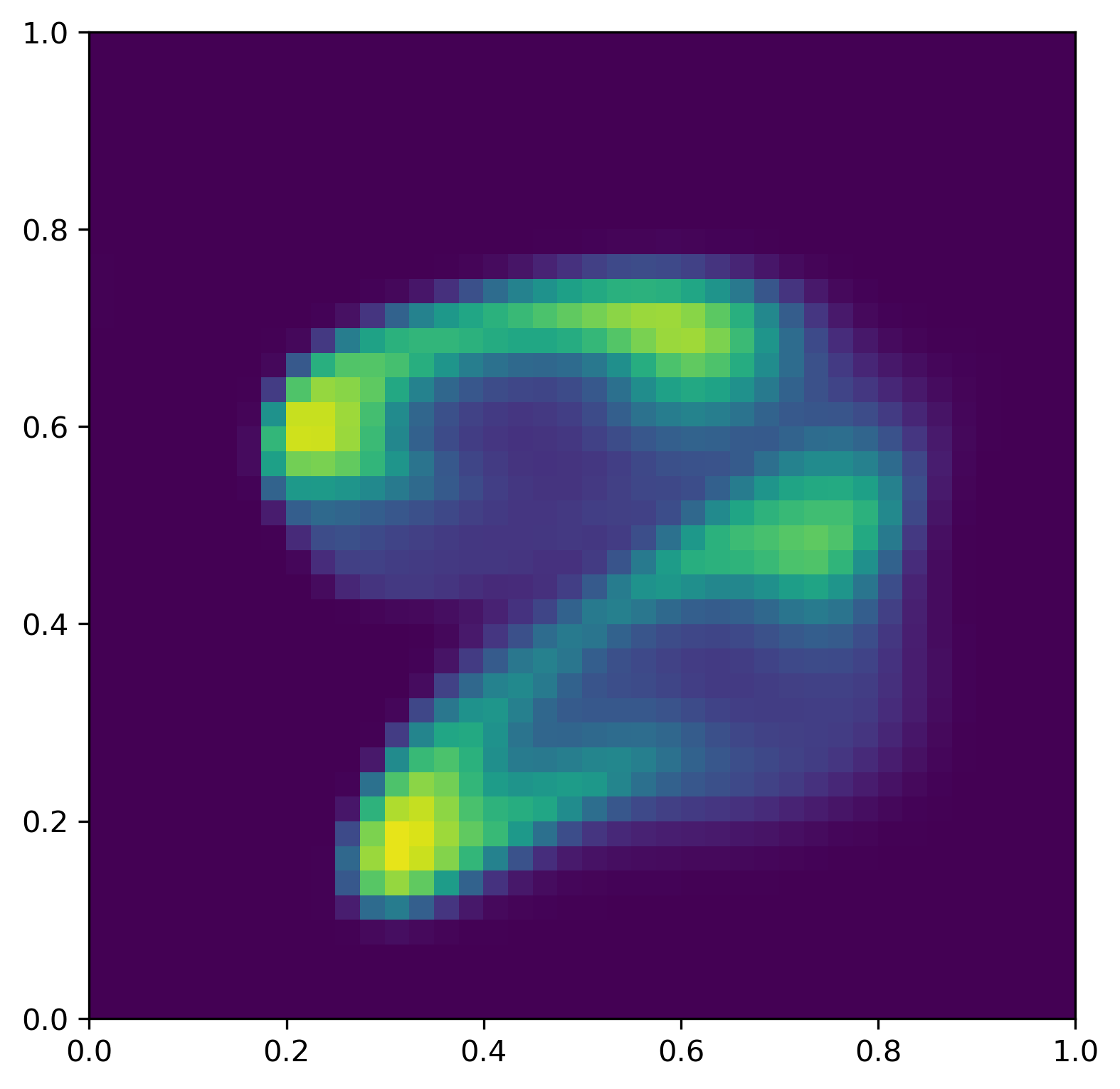}
    \includegraphics[width=1.4cm]{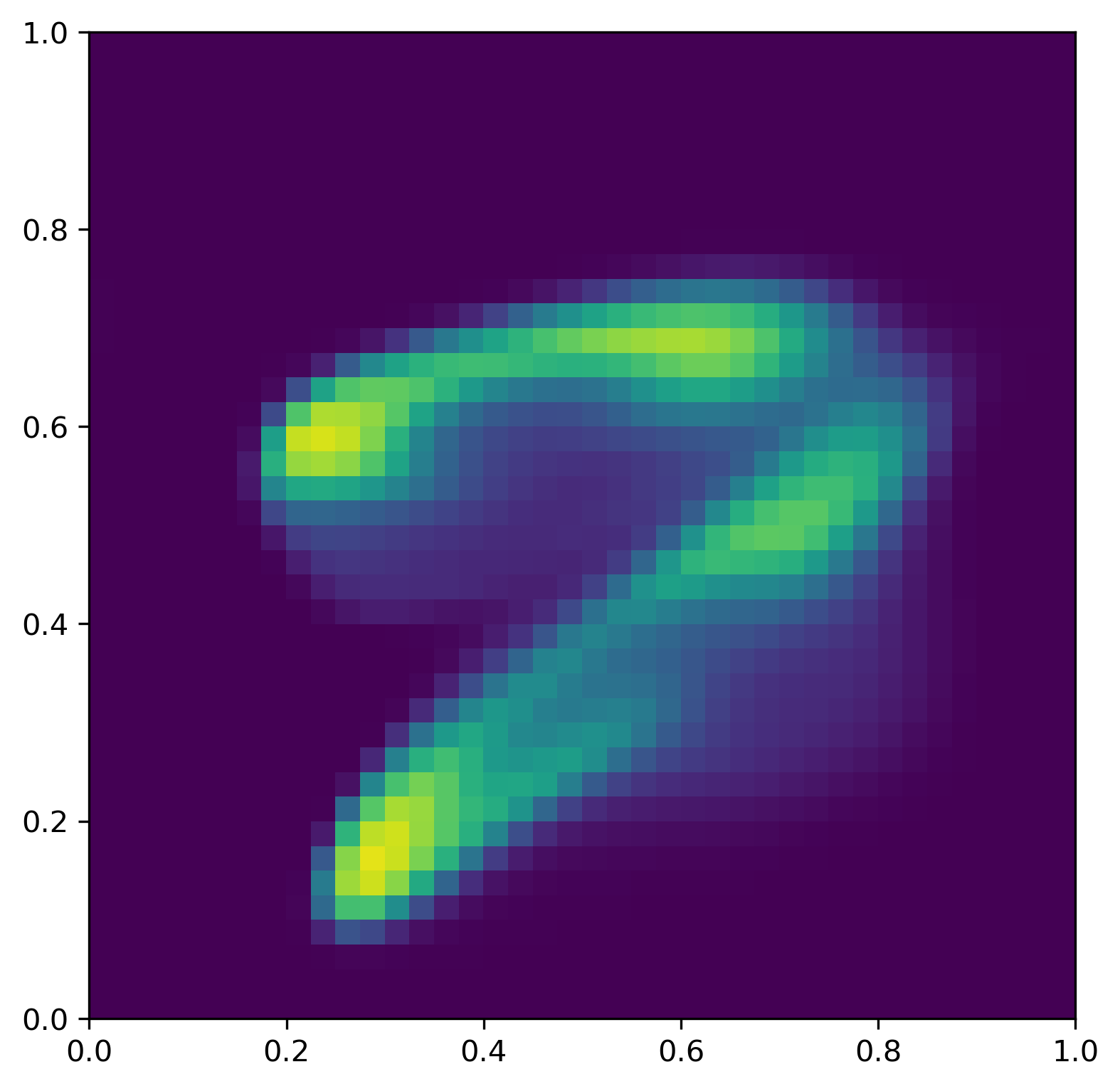}
    \includegraphics[width=1.4cm]{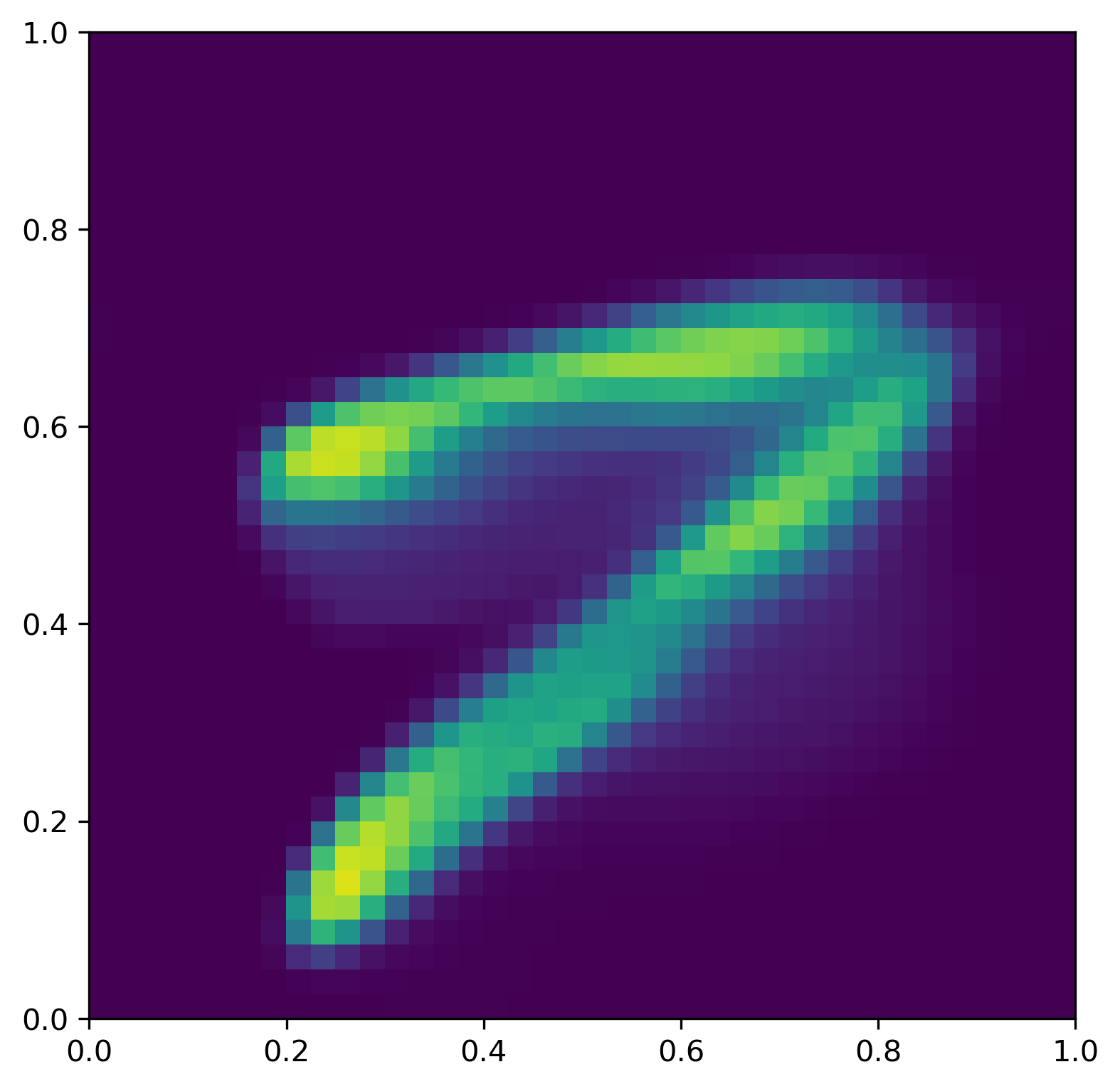}
    \includegraphics[width=1.4cm]{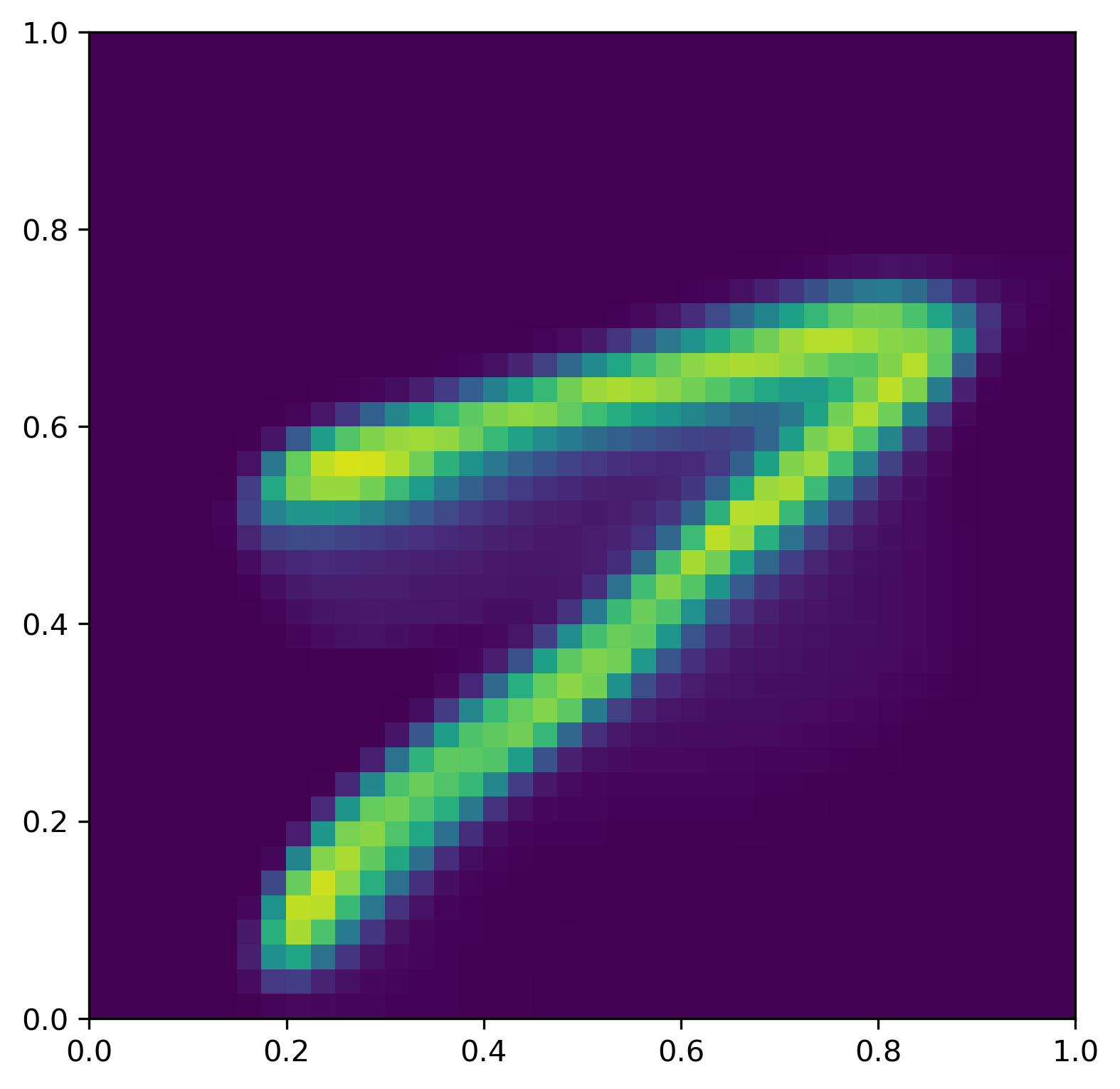}}
    \subfigure[$\rho_{1}(\boldsymbol{x})$]{\includegraphics[width=1.4cm, cframe=red!10!red 0.1mm]{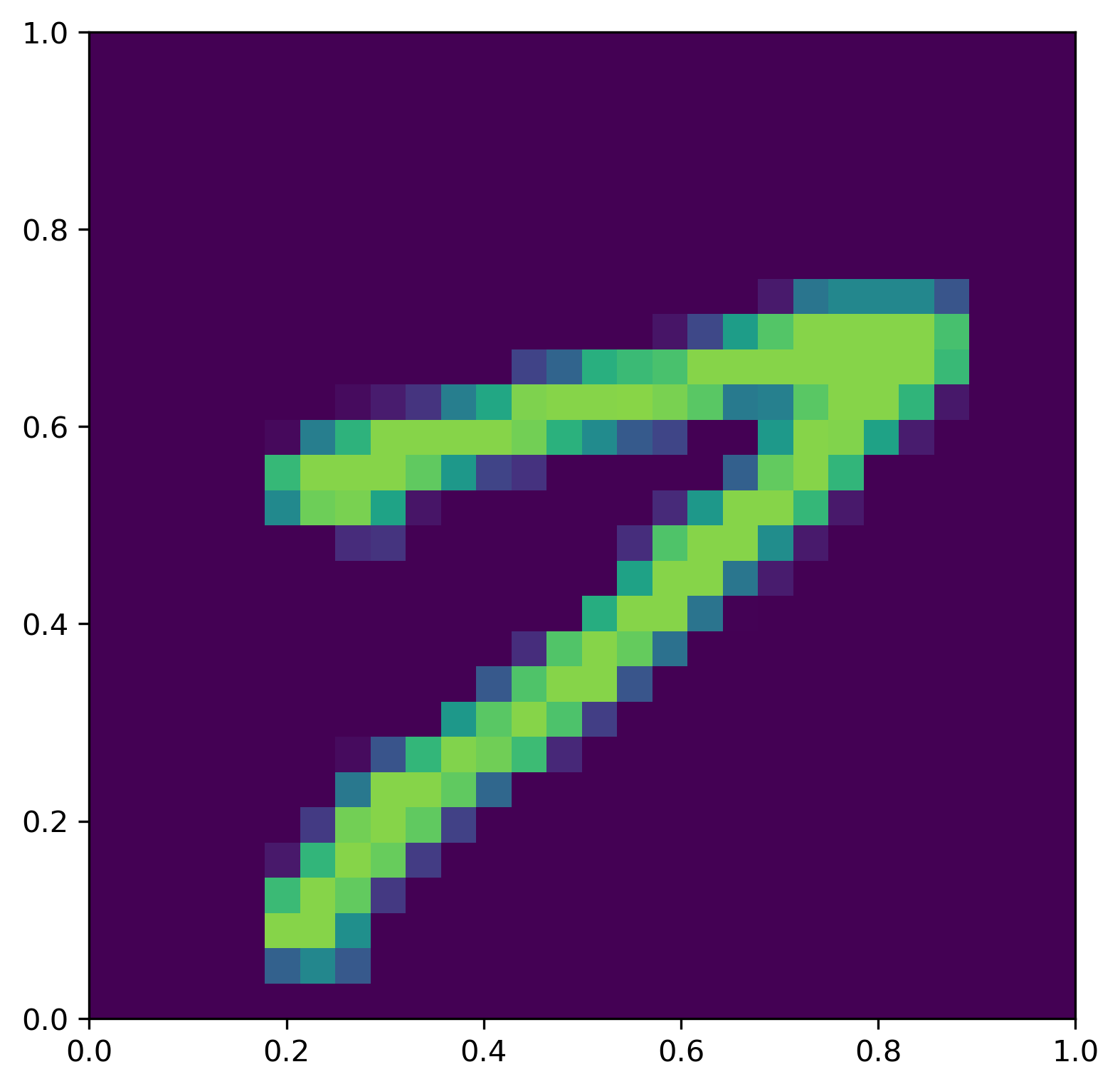}}\\
    
    \caption{Shape transfer at different moments.}
    \label{ST-all}
    \end{center}
\end{figure}

\section{Conclusion}\label{Sec 6}
We propose a neural network-based method to solve UOT problem on point cloud. 
This method has good versatility and can save computational costs while avoiding mesh generation. 
Numerical experiments show that our method can effectively deal with MUOT problem. 
In current algorithm, the out normals of point cloud is also required. In the subsequent work, we will study the numerical method without explicitly given out normals.

\section*{Acknowledgments}\label{Sec 7}
This work is supported by National Natural Science Foundation of China (NSFC) No. 92370125 and No. 12301538 and the National Key R\&D Program of China No. 2021YFA1001300.

\end{document}